\newtheorem{definition}{Definition}
\newtheorem{theorem}{Theorem}
\newtheorem{proposition}{Proposition}
\newtheorem{lemma}{Lemma}
\newtheorem{remark}{Comment}
\newcommand{\Tgood}{\mbox{Tx}_1}
\newcommand{\Tbad}{\mbox{Tx}_2}
\newcommand{\E}{\mathds{E}}
\newcommand{\cov}{\mbox{cov}}
\newcommand{\opt}{\mbox{\tiny Opt}}
\newcommand{\Amat}{\mathds{A}}
\newcommand{\Bmat}{\mathds{B}}
\newcommand{\Cmat}{\mathds{C}}
\newcommand{\Dmat}{\mathds{D}}
\newcommand{\Kmat}{\mathds{K}}
\newcommand{\Hmat}{\mathds{H}}
\newcommand{\Omat}{\mathds{O}}
\newcommand{\hmat}{\mathds{h}}
\newcommand{\Imat}{\mathds{I}}
\newcommand{\Jmat}{\mathds{J}}
\newcommand{\Tmat}{\mathds{T}}
\newcommand{\Vmat}{\mathds{V}}
\newcommand{\Umat}{\mathds{U}}
\newcommand{\Qmat}{\mathds{Q}}
\newcommand{\Smat}{\mathds{S}}
\newcommand{\Lmat}{\mathds{L}}
\newcommand{\Gmat}{\mathds{G}}
\newcommand{\Zmat}{\mathds{Z}}
\newcommand{\dsE}{\mathds{E}}
\newcommand{\Yvec}{\mathbf{Y}}
\newcommand{\bY}{\bar{Y}}
\newcommand{\Qvec}{\mathbf{Q}}
\newcommand{\yvec}{\mathbf{y}}
\newcommand{\bXvec}{\bar{\mathbf{X}}}
\newcommand{\Xvec}{\mathbf{X}}
\newcommand{\xvec}{\mathbf{x}}
\newcommand{\Zvec}{\mathbf{Z}}
\newcommand{\bZ}{\bar{Z}}
\newcommand{\bD}{\bar{D}}
\newcommand{\zvec}{\mathbf{z}}
\newcommand{\Wvec}{\mathbf{W}}
\newcommand{\wvec}{\mathbf{w}}
\newcommand{\Vvec}{\mathbf{V}}
\newcommand{\Hvec}{\mathbf{H}}
\newcommand{\Svec}{\mathbf{S}}
\newcommand{\fnsz}{\mbox{\footnotesize z}}
\newcommand{\avec}{\mathbf{a}}
\newcommand{\Cset}{\mathfrak{C}}
\newcommand{\Sset}{\mathcal{S}}
\newcommand{\Rset}{\mathfrak{R}}
\newcommand{\hvec}{\tilde{h}}
\newcommand{\bX}{\bar{X}}
\newcommand{\oG}{\bar{G}}
\newcommand{\oX}{\bar{X}}
\newcommand{\oW}{\bar{W}}
\newcommand{\oV}{\bar{V}}
\newcommand{\oN}{\bar{N}}
\renewcommand{\th}{\tilde{h}}
\newcommand{\uth}{\underline{\tilde{h}}}
\newcommand{\utH}{\tilde{\underline{H}}}
\newcommand{\CN}{\mathcal{CN}}
\newcommand{\N}{\mathcal{N}}
\newcommand{\M}{\mathcal{M}}
\newcommand{\typ}{\mathcal{A}^{(n)}_\epsilon}
\newcommand{\eig}{\makebox{eig}}
\newcommand{\tH}{\tilde{H}}
\newcommand{\thvec}{\mathbf{\tilde{h}}}
\newcommand{\var}{\makebox{var}}
\newcommand{\tr}{\makebox{tr}}
\newcommand{\SNR}{\makebox{\sffamily SNR}}
\newcommand{\SNRA}{\makebox{\sffamily  SNR}}
\newcommand{\CORR}{\upsilon}
\newcommand{\CORRN}{\tilde{\upsilon}}
\renewcommand{\P}{\makebox{P}}
\newcommand{\Real}{\mathfrak{Re}}
\newcommand{\Img}{\mathfrak{Im}}
\newcommand{\RealS}{\Real}
\newcommand{\ImagS}{\Img}
\long\def\symbolfootnote[#1]#2{\begingroup\def\thefootnote{\fnsymbol{footnote}}\footnote[#1]{#2}\endgroup}
\newcommand{\tend}{\hfill$\blacksquare$}
\newif\ifextended
\title{On Cooperation and Interference in the Weak Interference Regime\\ {\huge (Full Version with Detailed Proofs)}
\thanks{
\noindent
This work was supported by the Israeli science foundation under grant 396/11. Parts of this work were presented at IEEE International Symposium on Information Theory (ISIT) 2013 in Istanbul, Turkey, and at ISIT 2014 in Honolulu, HI.}
}
\author{Daniel~Zahavi and Ron~Dabora,~\IEEEmembership{Senior~Member,~IEEE}\\Department of Electrical and Computer Engineering \\
Ben-Gurion University, Israel}
\begin{document}

\maketitle

\begin{abstract}
   Handling interference is one of the main challenges in the design of wireless networks. In this paper we study the application of cooperation for interference management in the weak interference (WI) regime,
focusing on the Z-interference channel with a causal relay (Z-ICR),
 in which the channel coefficients are subject to ergodic phase fading, all transmission powers are finite, and the relay is full-duplex. The phase fading model represents many practical communications systems in which the transmission path impairments mainly affect the phase of the signal, such as
non-coherent wireless communications and  fiber optic channels.
In order to provide a comprehensive understanding of the benefits of cooperation in the WI regime,
we characterize, for the first time, two major performance measures for the ergodic phase fading Z-ICR in the WI regime:
The sum-rate capacity and the maximal generalized degrees-of-freedom (GDoF).
In the capacity analysis, we obtain conditions on the channel coefficients, subject to which the sum-rate capacity of the ergodic phase fading Z-ICR is achieved by treating interference as noise at each receiver, and explicitly state the corresponding sum-rate capacity. In the GDoF analysis, we derive conditions on the exponents of the magnitudes of the channel coefficients, under which treating interference as noise achieves the maximal GDoF, which is explicitly characterized as well.
It is shown that under certain conditions on the channel coefficients, {\em relaying strictly increases}
 both the sum-rate capacity and the maximal GDoF of the ergodic phase fading Z-interference channel in the WI regime.
Our results demonstrate {\em for the  first time} the gains from relaying in the presence of interference,
{\em when interference is weak and the relay power is finite}, both in increasing the sum-rate capacity and in increasing the maximal GDoF, compared
to the channel without a relay.
\end{abstract}

\section{Introduction}
\label{sec:intro1}
The interference channel (IC) \cite{Shannon:61} models communications scenarios in which two source-destination pairs communicate over a shared medium. The capacity region of the IC is generally unknown, but capacity characterizations exist for some special scenarios. For example, the capacity region of the IC with additive white Gaussian noise (AWGN), for the scenario in which the interference between the communicating pairs is very strong, was characterized in \cite{Carleial:75}, and the capacity region for the case of strong interference (SI) was characterized in \cite{Sato:81}. In both works it was shown that in order to achieve capacity, each receiver should
{\em decode} both the interfering message as well as the desired message. Additional performance measures commonly used for characterizing the performance of ICs are the degrees-of-freedom (DoF) and the generalized DoF (GDoF). The GDoF for the IC was first analyzed in \cite{etkin:08}, where it was also shown that in the very strong interference regime, the maximal GDoF of the Gaussian IC is achieved by letting each receiver decode both the interfering message as well as the intended message. It thus follows that when interference is sufficiently strong, jointly decoding both messages at each receiver is the optimal strategy from both the sum-rate and the GDoF perspectives.

\phantomsection
\label{phn:state11}
The weak interference (WI) regime is the opposite regime to the SI regime. In this regime, since the interference is weak, then decoding the interfering message cannot be done without constraining the rates of the desired information at each receiver.
In \cite{etkin:08} it was shown that when interference is sufficiently weak, treating interference as noise at the receivers achieves the maximal GDoF of the Gaussian IC in the WI regime; In \cite{Shang:09}-\cite{Motahari:09} it was shown that this strategy is also
		sum-rate optimal in the WI regime for finite SNRs.
As treating interference as noise is implemented via a low complexity, simple, point-to-point (PtP) decoding strategy, there is a strong motivation for identifying additional scenarios in which treating interference as noise at the receivers carries optimality.

In this work, we study the impact of cooperation on the communications performance in the WI regime by considering the IC with an additional relay node (ICR). The objective of the relay node in the general ICR is to simultaneously assist communications from both sources to their corresponding destinations \cite{Zahavi:12}, \cite{Zahavi:15}. The optimal transmission strategy for the relay node in this channel is not known in general. One of the main difficulties in the design of transmission schemes is that when the relay assists one pair, it may degrade the performance of the other pair. In \cite{Sahin:07}, the authors derived an achievable rate region for Gaussian ICRs by using the rate
splitting technique (see, e.g., \cite{Han:81}) at the sources, and by employing the decode-and-forward (DF) strategy at the relay. Additional inner bounds and outer bounds on the capacity region of the ICR were derived in \cite{Sahin:09} and \cite{Maric:12}. The capacity region of ergodic fading ICRs in the strong interference (SI) regime was studied in \cite{Dabora:12}, for both Rayleigh fading and phase fading scenarios. In \cite{Dabora:12} it was shown that when relay reception is good and the interference is strong, then, similarly to the IC, the optimal strategy at each receiver is to jointly decode both the desired message and the interfering message, while the optimal strategy at the relay node is to employ the DF scheme. The sum-rate capacity of the Gaussian IC with a {\em potent} relay in the WI regime was characterized in \cite{Tian:11}, in which it was shown that in such a scenario, compress-and-forward (CF) at the relay together with treating interference as noise at the destinations is sum-rate optimal. The sum-rate capacity of the ICR in the WI regime when all nodes have finite powers {\em remains unknown to date}. The ergodic sum-rate capacity of interference networks without relays, subject to phase fading, was studied in \cite{Jafar:2011}, and explicit sum-capacity expressions based on ergodic interference alignment (which requires channel state information (CSI) at the transmitters) were derived for networks with a finite number of users. The work \cite{Jafar:2011}
also derived an asymptotic sum-rate capacity expression when the number of users increases to infinity.
ICs with time-varying/frequency-selective channel coefficients, in which global CSI is available at all nodes, and in addition, the magnitudes of all links have the same exponential scaling as a function of the signal-to-noise ratio (SNR), were studied in \cite{Cadambe:09}. Under these conditions, \cite{Cadambe:09} showed that adding a relay {\em does not increase the DoF region}, and that the achievable DoF for each pair in the ICR is upper bounded by $1$. On the other hand, it was shown in \cite{Chaaban:12} that relaying can increase the {\em GDoF} for symmetric Gaussian ICRs. This follows since differently from the DoF analysis, in GDoF analysis the magnitudes of different links may have different SNR scaling exponents. In \cite{Chaaban:12}, several GDoF upper bounds were derived for Gaussian ICRs by using the cut-set theorem and the genie-aided approach, for the case in which the source-destination, source-relay, and relay-destination links scale differently as a function of the SNR. Additionally, \cite{Chaaban:12} showed that in the WI regime, {\em when the source-relay links are weaker than the interfering links} in the sense that their SNR scaling exponent is smaller, then the Han-Kobayashi (HK) scheme \cite{Han:81} achieves the maximal GDoF. The complementing scenario, i.e., {\em GDoF analysis when the interfering links are weaker than the source-relay links,} was considered in \cite{Chaaban:15}. The GDoF analysis in \cite{Chaaban:15} was based on deriving upper bounds on the sum-rate capacity of the {\em linear deterministic} ICR. Lastly, we note that the GDoF of the Gaussian IC with a broadcasting relay, in which the relay-destination links are {\em noiseless}, finite-capacity links, which are {\em orthogonal} to the other links in the channel, was studied in \cite{Zhou:13}. From the GDoF characterization, \cite{Zhou:13} concludes that in the WI regime, each bit per channel use transmitted by the relay can improve the sum-rate capacity by $2$ bits per channel use.

To date, there has been no work that characterized the sum-rate capacity and the maximal GDoF of  ergodic phase fading ICs with a causal relay in the WI regime, for scenarios in which the power of the relay is finite. In this work, we partially fill this gap by considering a special case of the ergodic phase fading ICR, in which one of the interfering links is missing, e.g., as a result of shadowing in the channel. Furthermore, we consider the scenario in which the relay node receives transmissions from only {\em one of the two sources}, but is received at both destinations. We refer to this channel configuration as {\em Z-interference channel with a relay} (Z-ICR).

\subsection*{Main Contributions}
\phantomsection
\label{phn:maincont}
In this paper, we characterize for the first time the sum-rate capacity (i.e., finite-SNR performance) and the maximal GDoF
(i.e., asymptotically high SNR performance) of the ergodic phase fading Z-ICR in the WI regime, when the relay is causal, has a finite transmission power,
and operates in full-duplex mode. Performance gain from cooperation in the WI regime is demonstrated in both the sum-rate capacity and the maximal GDoF.
In contrast to \cite{Tian:11}, which showed the optimality of CF for memoryless ICRs with AWGN and time-invariant link coefficients, we study the {\em ergodic phase fading} (also referred to as
fast phase fading) scenario and demonstrate the optimality of DF. Throughout this paper it is assumed that the nodes have {\em causal} CSI only on their {\em incoming links} (Rx-CSI); no transmitter CSI (Tx-CSI) is assumed. The links are all subject to i.i.d. phase fading (see, e.g., \cite[Section II]{Jafar:2011} and \cite[Section VII]{Kramer:05}) which can be applied to modeling many practical scenarios. One such example is non-coherent wireless communication \cite{Katz:2004}, in which phase fading occurs due to the lack of perfect frequency synchronization between the oscillators at the transmitter and at the receiver. Phase fading channel models also apply to systems which use dithering to decorrelate signals, as well as to optic fiber channels \cite{Katz:2004}. In this work, it is assumed that the relay receives transmissions from only one of the sources, while relay transmissions are received at both destinations. Thus, differently from previous works, {\em the relay cannot forward desired information to one of the destinations}.

Our main contributions are summarized as follows:
\begin{itemize}
\phantomsection
\label{phn:maincont-item1}
    \item We derive an upper bound on the achievable sum-rate of the ergodic phase fading Z-ICR by using the genie-aided approach.
					The upper bound requires a novel design of the genie signals as well as  the introduction of novel tools for proving that the
					bound is maximized by mutually independent, i.i.d., complex Normal channel inputs.
\phantomsection
\label{phn:maincont-item2}
    \item We derive a lower bound on the achievable sum-rate of the ergodic phase fading Z-ICR by using DF at the relay, and by treating interference as noise at each receiver.
    We also identify conditions on the {\em magnitudes of the channel coefficients} under which the sum-rate of our lower bound coincides with the sum-rate upper bound.
				This results in {\em the characterization of the sum-rate capacity of the ergodic phase fading Z-ICR in the WI regime.
				This is the first time capacity is characterized for a cooperative interference network in the WI regime, when all powers are finite.}
\phantomsection
\label{phn:maincont-item4}
    \item We derive two upper bounds on the achievable GDoF of the ergodic phase fading Z-ICR, as well, as a lower bound on the achievable GDoF.\\
				Note that while capacity analysis is common for fading scenarios, GDoF analysis was previously applied only to {\em time-invariant} AWGN
				channels. This follows since, when the channel coefficients vary with time (e.g., a fading channel), then the GDoF
				generally becomes a random variable.
				For the ergodic phase fading model, however, as the squared magnitude of each channel
				coefficient is a {\em constant}, then GDoF analysis is relevant despite the random temporal nature of the channel coefficients.
				To the best of our knowledge, this is the {\em first time} that GDoF analysis is carried out for a fading scenario.

    \item We identify conditions on the {\em scaling of the links' magnitudes} (i.e., SNR exponents) under which our GDoF lower bound coincides with the GDoF upper bound. {\em This characterizes the maximal GDoF of the phase fading Z-ICR in the WI regime}.
\end{itemize}
\phantomsection
\label{phn:maincont-item6}
Our results show that when certain conditions on the channel coefficients are satisfied, then adding a relay to the ergodic phase
			fading Z-IC {\em strictly increases} both the sum-rate capacity and the maximal GDoF of the channel in the WI regime.
We note that the sum-rate capacity analysis in this paper has two major differences from the work of \cite{Tian:11}: First, we consider a fading scenario while \cite{Tian:11} considered the time-invariant AWGN case, and second, we assume that the power of the relay is {\em finite} while \cite{Tian:11} considered a {\em potent} relay.

In the GDoF analysis, similarly to \cite{etkin:08}, \cite{Zahavi:15}, \cite{Chaaban:12}-\cite{Zhou:13}, we consider a general setup in which the different links scale differently as a function of the SNR, which facilitates characterizing the impact of the relative link strengthes on the SNR scaling of the sum-rate. The GDoF analysis in this paper has several fundamental differences from the works \cite{Chaaban:12}-\cite{Zhou:13}: First, note that \cite{Chaaban:12}-\cite{Zhou:13} studied the common time-invariant Gaussian channel while we consider an ergodic fading channel; Second, unlike \cite{Chaaban:12}-\cite{Zhou:13}, the channel configuration studied in this work is {\em not symmetric} and the relay cannot forward desired information to one of the destinations. We further note that, unlike \cite{Chaaban:12}-\cite{Zhou:13}, GDoF optimality in the present work is achieved only in a non-symmetric scenario in which the link from the relay to one receiver scales differently than the link from the relay to the other receiver; We also emphasize that while in our work we consider a non-orthogonal scenario, the work \cite{Zhou:13} considered noiseless, orthogonal relay-destination links, and thus, relay transmissions in \cite{Zhou:13} do not interfere with the reception of the desired signal at each receiver. It therefore follows that the GDoF of the ergodic phase fading Z-ICR, studied in the present work, cannot be derived as a special case of GDoF results for Gaussian ICRs derived in \cite{Chaaban:12}-\cite{Zhou:13}.

The rest of this paper is organized as follows: In Section \ref{sec:Model}, we define the system model and describe the notation used throughout this paper. In section \ref{sec:sum-capacity}, we characterize the sum-rate capacity of the ergodic phase fading Z-ICR in the WI regime, and in section \ref{sec:FULLGDoF}, we characterize the maximal GDoF of this channel in the WI regime. Finally, concluding remarks are provided in Section \ref{sec:conclusions}.

\section{Notation and System Model}
\label{sec:Model}
We denote random variables (RVs) with upper-case letters, e.g., $X,Y$, and their realizations with lower-case letters, e.g., $x,y$. We denote the probability density function (p.d.f.) of a continuous RV $X$ with $f_{X}(x)$. Double-stroke letters are used for denoting matrices, e.g., $\Amat$, $\hmat$, with the exception that $\E\{X\}$ denotes the stochastic expectation of $X$.
The element at the $k$'th row and $l$'th column of the matrix $\Amat$ is denoted with $\big[\Amat\big]_{k,l}$. Bold-face letters, e.g., $\Xvec$, denote column vectors, the $i$'th element of a vector $\Xvec$, $i>0,$ is denoted with $X_i$, and $X^j$ denotes the vector $(X_1, X_{2},...,X_j)^T$. Given a complex number $x$, we denote the real and the imaginary parts of $x$ with $\Real\{x\}$ and $\Img\{x\}$, respectively. $x^*$ denotes the conjugate of $x$, $\Xvec^T$ denotes the transpose of $\Xvec$, $\Amat^H$ denotes the Hermitian transpose of $\Amat$, $|\Amat|$ denotes the determinant of $\Amat$, and $\mathds{I}_n$ denotes the $n\times n$ identity matrix. For a complex vector $X^n$, we define an associated real vector by stacking its real and imaginary parts:
 $\bX^{2n}=\big(\Real\{X^n\}^T,\Img\{X^n\}^T\big)^T$. $\Rset$ and $\Cset$ denote the sets of real and of complex numbers, respectively.
 Given two $n\times n$ Hermitian matrices, $\Amat, \Bmat$, we write $\Bmat\preceq\Amat$ if $\Amat-\Bmat$ is positive semidefinite (p.s.d.) and $\Bmat\prec\Amat$ if $\Amat-\Bmat$ is positive definite (p.d.). $\typ(X,Y)$ denotes the set of weakly jointly typical sequences with respect to $f_{X,Y}(x,y)$. We denote the Gaussian distribution with mean $\mu$ and variance $\sigma^2$ with $\N(\mu,\sigma^2)$, and similarly, we denote the circularly symmetric complex Normal distribution with variance $\sigma^2$ with $\CN(0,\sigma^2)$.
For a complex random vector, the covariance matrix and the pseudo-covariance matrix are defined as in \cite[Section II]{Massey:93}.
Given an RV $X$ with $\E\{X\}=0$, $X_{G}$ (i.e., adding a subscript ``G" to the RV) denotes an RV which is distributed according to a circularly symmetric, complex Normal distribution with the same variance as the indicated RV, i.e., $X_G\sim\CN\big(0,\var\{X\}\big)$; similarly, subscript ``$\bar{\mbox{G}}$" is used to denote an RV which is distributed according to a complex Normal distribution with the same  variance as the indicated RV, where the mean is explicitly specified.
We emphasize that RVs with subscript ``$\bar{\mbox{G}}$" {\em are not necessarily circularly symmetric}. We denote $f(\SNR)\doteq \SNR^c$ if $\lim_{{\scriptsize \SNR}\to\infty}\frac{\log f({\scriptsize \SNR})}{\log{\scriptsize \SNR}}=c$, and given $f(\SNR)\doteq \SNR^c$ and $g(\SNR)\doteq \SNR^d$, we write $f(\SNR)\dot{\le}g(\SNR)$ if $c\le d$. Lastly, we note that all logarithms are of base $2$.

The Z-ICR consists of two transmitters, Tx$_1$, Tx$_2$, two receivers, Rx$_1$, Rx$_2$ and a full-duplex relay node. Tx$_k$ sends messages to Rx$_k$, $k\in\{1,2\}$. The relay node receives only the signal transmitted from Tx$_1$ but is received at both destinations simultaneously. The signal received at Rx$_1$ is a combination of the transmissions of Tx$_1$ and of the relay along with interference from Tx$_2$, while the signal received at Rx$_2$ is a combination of the transmissions of Tx$_2$ and of the relay {\em without} interference from Tx$_1$. This channel model is depicted in Fig. \ref{fig:ICR-WI model}. The received signals at Rx$_1$, Rx$_2$ and the relay at time $i$ are denoted with $Y_{1,i}$, $Y_{2,i}$, and $Y_{3,i}$, respectively; the channel inputs from $\Tgood$, $\Tbad$ and the relay at time $i$ are denoted with $X_{1,i}$, $X_{2,i}$ and $X_{3,i}$, respectively. Finally, $H_{lk,i}$ denotes the channel coefficient for the link with input $X_{l,i}$ and output $Y_{k,i}$ at time instance $i$. The relationship between the channel inputs and its outputs can be written as:
\begin{subequations}
\label{eqn:ZICR_model}
\begin{eqnarray}
    \label{eqn:Rx_1_sig}
    Y_{1,i} & = & H_{11,i} X_{1,i} + H_{21,i} X_{2,i} + H_{31,i} X_{3,i} + Z_{1,i}\\
    \label{eqn:Rx_2_sig}
    Y_{2,i} & = & H_{22,i} X_{2,i} + H_{32,i} X_{3,i} + Z_{2,i}\\
    \label{eqn:Relay_sig}
    Y_{3,i} & = & H_{13,i} X_{1,i} + Z_{3,i},
\end{eqnarray}
\end{subequations}
$i = 1, 2, ..., n$, where $Z_1$, $Z_2$ and $Z_3$ are mutually independent RVs, each independent and identically distributed (i.i.d.) over time according to $\CN(0,1)$,
and all noises are independent of the channel inputs and of the channel coefficients. The channel input signals are subject to per-symbol average power constraints: $P_{k,i}\triangleq\E\big\{|X_{k,i}|^2\big\}\le 1$, $k\in\{1,2,3\}$. The receivers and the relay node have instantaneous {\em causal} Rx-CSI on their incoming links, but the transmitters and the relay do not have Tx-CSI on their outgoing links. Under the ergodic phase fading model, the channel coefficients are given by $H_{lk,i} = \sqrt{\SNRA_{lk}} e^{j\Theta_{lk,i}}$, where $\SNRA_{lk} \in \Rset_+$ is a non-negative constant which corresponds to the signal-to-noise ratio for the link $H_{lk,i}$, and  $\Theta_{lk,i}$ is an RV uniformly distributed over $[0,2\pi)$, i.i.d. in time, independent of the other $\Theta_{lk,i}$'s, and independent of the additive noises $Z_{k,i}$ as well as of the transmitted signals, $X_{k,i}$, $k\in\{ 1,2,3\}$. The independence of the $\Theta_{lk,i}$'s implies that the coefficients $H_{lk,i}$'s are also mutually independent, and in addition they are independent in time, and independent of the other parameters of the scenario.

The channel coefficients causally available at Rx$_1$ are represented by $\tH_1\! =\! \big( H_{11}, H_{21}, H_{31}\big)^T\!\!\in\!\Cset^3\triangleq\tilde{\mathfrak{H}}_1$, at Rx$_2$ they are represented by $\tH_2\! =\! \big( H_{22}, H_{32}\big)^T\!\!\in\!\Cset^2\triangleq\tilde{\mathfrak{H}}_2$, and at the relay they are represented by $\tH_{3}=H_{13}\in\Cset\triangleq\tilde{\mathfrak{H}}_3$. Let $\utH=(\tH_1^T,\tH_2^T,\tH_3)^T\!\!\in\!\Cset^6$ be the vector of all channel coefficients, and let $\underline{\SNR}\triangleq\big(\SNRA_{11},\SNRA_{21},\SNRA_{31},\SNRA_{22},\SNRA_{32},\SNRA_{13}\big)$. We now state several definitions:

\begin{figure}
    \centering
  \includegraphics[scale=0.29]{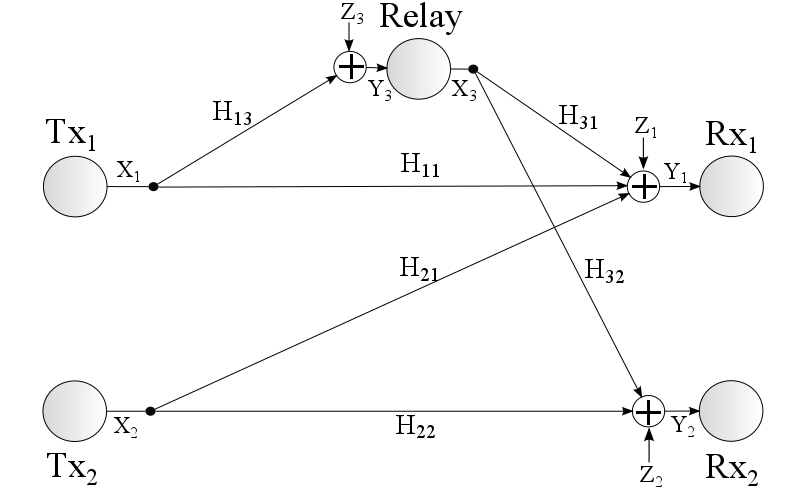}
\caption{\small The ergodic phase fading Z-ICR. The relay node receives  transmissions only from Tx$_1$, but is received at both destinations simultaneously.}
\label{fig:ICR-WI model}
\end{figure}

\begin{definition}
\label{def:code}
    \em{}An $(R_1,R_2,n)$ code for the Z-ICR consists of two message sets, $\M_k \triangleq \big\{1,2,...,2^{n R_k}\big\}$, $k = 1,2$, two encoders at the sources, $e_1, e_2$, employing deterministic mappings; \quad $e_k: \M_k \mapsto \Cset^n, k\in\{1,2\}$, and two decoders at the destinations, $g_1,g_2$;\quad $g_k: \tilde{\mathfrak{H}}_k^n \times \Cset^n \mapsto \M_k$, $k=1,2$.
		Since the relay receives transmissions only from Tx$_1$, the transmitted signal at the relay at time $i$ is generated via a set of $n$ functions $\{t_i(\cdot)\}_{i=1}^n$, such that $x_{3,i} = t_i\big(y_{3}^{i-1},h_{13}^{i-1}\big) \in \Cset$, $i=1,2,...,n$.
\end{definition}

\begin{remark}
    \label{rem:IndiSignals}
    \em{} Note that since the messages at the transmitters are independent and there is no feedback, then the signals transmitted from Tx$_1$ and from Tx$_2$ are necessarily independent as well. Additionally, since the relay receives transmissions only from Tx$_1$, then its transmitted signal is independent of the signal transmitted from Tx$_2$. Combining both observations we can write $f_{\Xvec_1,\Xvec_2,\Xvec_3}(\xvec_1,\xvec_2,\xvec_3)=f_{\Xvec_1,\Xvec_3}(\xvec_1,\xvec_3)\cdot f_{\Xvec_2}(\xvec_2)$. We denote the correlation coefficient between channel inputs $X_1$ and $X_3$ at time index $i$ with $\CORR_i$: $\CORR_i\triangleq\frac{\E\{X_{1,i}X_{3,i}^*\}}{\sqrt{P_{1,i}P_{3,i}}}, 0\le|\CORR_i|\le 1$.
\end{remark}

\begin{definition}
\em{}The average probability of error on an $(R_1, R_2, n)$ code is defined as $\P_e^{(n)} \triangleq \Pr\Big(g_1(\tH_1^n, Y_1^n) \ne M_1 \mbox{ or } g_2(\tH_2^n, Y_2^n) \ne M_2\Big)$, where each message is selected independently and uniformly from its message set.
\end{definition}

\begin{definition}
\em{} A rate pair $(R_1, R_2)$ is called achievable if, for any $\epsilon >0$ and $\delta >0$ there exists some blocklength $n_0(\epsilon,\delta)$, such that for every $n > n_0(\epsilon,\delta)$ there exists an  $(R_1 - \delta, R_2 - \delta ,n)$ code with $\P_e^{(n)} < \epsilon$.
\end{definition}

\begin{definition}
\label{def:capacity}
\em{}The capacity region is defined as the convex hull of all achievable rate pairs.
\end{definition}

\phantomsection
\label{stat1}
The objective of this work is to characterize two performance measures for the ergodic phase fading Z-ICR in the WI regime: The sum-rate capacity, which characterizes the performance at finite SNRs, and the maximal GDoF, which characterizes the performance at asymptotically high SNRs. As these cases are fundamentally different in nature, the WI regime is defined for each performance measure in accordance with the relevant notions in the literature.
In the following we briefly overview the WI conditions for each performance measure, leaving the detailed discussions to the relevant sections.
      \begin{itemize}
          \item In Section \ref{sec:sum-capacity} the sum-rate capacity is characterized for the WI regime, defined in Eqns. \eqref{eq:WI_con2-Thm}.
				Generally speaking, WI in this case occurs when the SNRs of the interfering links, $\SNRA_{32}$ and $\SNRA_{21}$,
				are sufficiently low compared to the SNRs for carrying the desired information. This is in accordance with the acceptable notion of
				the WI regime for sum-rate capacity analysis at finite SNRs, see, e.g., \cite{Shang:09}-\cite{Motahari:09}.
				
          \item  In Section \ref{sec:FULLGDoF} the maximal GDoF is characterized for the WI regime,
				defined in Eqn. \eqref{eq:optimalGDoFCon1}. Generally speaking, WI in this case occurs when the exponents of the SNRs of the interfering links
				are sufficiently smaller than the exponents of the SNRs of the information paths. In section \ref{sec:FULLGDoF}, in the statement of Thm. \ref{thm:WI-GDoF},
				this condition is expressed as $\lambda=\alpha \le \frac{1}{2}$, where $\alpha$ and $\lambda$ denote the exponential scalings of the
				Tx$_2$-Rx$_1$ link and of the relay-Rx$_2$ link, respectively. This definition is in accordance with the acceptable notion of the WI regime
                       for DoF and GDoF analysis, see, e.g., \cite{etkin:08} and  \cite{Chaaban:12}. We note that in \cite{etkin:08} and
                      \cite{Chaaban:12} the WI regime is characterized by $\alpha \le 1$, while
				GDoF optimality for the communications scheme described in the current paper requires a stricter notion of WI, characterized by
				$\lambda=\alpha \le \frac{1}{2}$. However, note that {\em in \cite{etkin:08}, treating interference as noise is GDoF optimal only for $\alpha\le\frac{1}{2}$},
                      which is in agreement with our result.
      \end{itemize}



\section{Finite SNR Analysis: The Sum-Rate Capacity in the WI Regime}
\label{sec:sum-capacity}

\subsection{Preliminaries}
\label{sec:preliminaries}
We begin by presenting several lemmas used in the derivation of the main result of this section. We note that while some of the following lemmas appeared in previous works for {\em real} variables, in the following we extend these lemmas to complex variables. Accordingly, in the appendices we include explicit proofs only for those lemmas whose proofs do not follow directly from the original proofs for real RVs.


\begin{lemma}
\label{lem:lemma1}
    \em Let $\Zvec_1$ and $\Zvec_2$ be a pair of $n$-dimensional, circularly symmetric, complex Normal random vectors, and let $\Xvec$ be an $n$-dimensional complex random vector whose p.d.f. is denoted by $f_{\Xvec}(\xvec)$. Consider the following optimization problem:
    \begin{eqnarray}
        \label{eq:Lemma1OP}
        &&\max_{f_{\Xvec}(\xvec)}\mbox{ } h(\Xvec+\Zvec_1)-h(\Xvec+\Zvec_2)\\
        &&\mbox{subject to:  \phantom{xx} } \tr\big(\cov(\Xvec)\big)\le nP.\nonumber
    \end{eqnarray}
    Then, a circularly symmetric, complex Normal random vector $\Xvec^{\mbox{\tiny Opt}}_G\sim\CN({\bf 0},\mathds{C}_X^{\opt})$ is an optimal solution to the optimization problem in \eqref{eq:Lemma1OP}. Additionally, if $\Zvec_1$ and $\Zvec_2$ have i.i.d. entries, i.e., $\Zvec_k\sim\CN({\bf 0},\gamma_k\mathds{I}_n), \gamma_k\in\Rset^+, k\in\{1,2\}$,  and if it holds that $\gamma_1\le \gamma_2$, then the optimal solution is distributed according to $\Xvec^{\opt}_G\sim\CN\Big({\bf 0},P\cdot\mathds{I}_n\Big)$.
\end{lemma}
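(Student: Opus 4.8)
The plan is to prove the two assertions separately, in both cases first passing to the equivalent real formulation by stacking, i.e., replacing $\Xvec,\Zvec_1,\Zvec_2$ by $\bar{\Xvec}^{2n},\bar{\Zvec}_1^{2n},\bar{\Zvec}_2^{2n}$, so that $h(\Xvec+\Zvec_k)=h(\bar{\Xvec}+\bar{\Zvec}_k)$ and the constraint reads $\tr(\cov(\bar{\Xvec}))\le nP$. Under this map the circular symmetry of $\Zvec_k$ forces $\bar{\Zvec}_k\sim\N({\bf 0},\Kmat_{\bar Z_k})$ with a covariance $\Kmat_{\bar Z_k}$ that commutes with every rotation matrix $R_\phi$ (the $2n\times 2n$ orthogonal matrix that represents scalar multiplication by $e^{j\phi}$), and in the i.i.d. case of the second part $\Kmat_{\bar Z_k}=\tfrac{\gamma_k}{2}\Imat_{2n}$; the power constraint is $R_\phi$-invariant as well. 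For the first assertion I would then invoke the real-variable version of the extremal inequality for a difference of two differential entropies, and for the second the entropy power inequality (EPI).

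For the first assertion, the real extremal inequality -- the Liu--Viswanath extremal inequality specialized to unit multiplier, equivalently the genie/``worst additive noise'' lemmas used in the Gaussian interference-channel sum-rate literature -- yields that $\sup_{\bar{\Xvec}}\big(h(\bar{\Xvec}+\bar{\Zvec}_1)-h(\bar{\Xvec}+\bar{\Zvec}_2)\big)$ subject to $\tr(\cov(\bar{\Xvec}))\le nP$ is attained by a Gaussian $\bar{\Xvec}\sim\N({\bf 0},\Kmat^{\opt})$ (if the cited form uses a matrix constraint $\cov(\bar{\Xvec})\preceq\Smat$, combine it with an outer maximization over $\Smat$, whose maximum is attained by compactness since $\tr\Smat\le nP$). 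It then remains to argue that $\Kmat^{\opt}$ can be taken to correspond to a \emph{circularly symmetric} complex Gaussian, i.e., to be $R_\phi$-invariant. Since the laws of $\bar{\Zvec}_1,\bar{\Zvec}_2$ and the constraint are $R_\phi$-invariant, each $R_\phi\Kmat^{\opt}R_\phi^{T}$ is optimal too, so the rotation average $\bar{\Kmat}=\tfrac1{2\pi}\int_0^{2\pi}R_\phi\Kmat^{\opt}R_\phi^{T}\,d\phi$ is $R_\phi$-invariant (hence block-circulant, i.e., of the required form); to see it is still optimal one uses that the Gaussian objective $\tfrac12\log|\Kmat+\Kmat_{\bar Z_1}|-\tfrac12\log|\Kmat+\Kmat_{\bar Z_2}|$ is \emph{concave} in $\Kmat$ when $\Kmat_{\bar Z_1}\preceq\Kmat_{\bar Z_2}$ (equivalently $\Cmat_{Z_1}\preceq\Cmat_{Z_2}$, which is the case in all uses of the lemma), a fact that reduces by a Hessian computation to the matrix inequality $\tr(\Hmat\Amat\Hmat\Amat)\ge\tr(\Hmat\Bmat\Hmat\Bmat)$ for every symmetric $\Hmat$ and every $\Amat\succeq\Bmat\succ0$. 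This symmetrization step is where I expect the main difficulty to lie: the Gaussian objective is not jointly concave in $\Kmat$ for arbitrary noise covariances, so one must either exploit the ordering of the noises as above, or, in the fully general case, appeal to the rotation-equivariance of the optimality (KKT) conditions to select a rotation-invariant optimizer.

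For the second assertion I would give a direct EPI argument that simultaneously recovers Gaussianity and pins down the covariance. Since $\gamma_1\le\gamma_2$, write $\Zvec_2\stackrel{d}{=}\Zvec_1+\Zvec'$ with $\Zvec'\sim\CN({\bf 0},(\gamma_2-\gamma_1)\Imat_n)$ independent of $(\Xvec,\Zvec_1)$, and set $\Wvec\triangleq\Xvec+\Zvec_1$. Then $h(\Xvec+\Zvec_1)-h(\Xvec+\Zvec_2)=h(\Wvec)-h(\Wvec+\Zvec')=-I(\Zvec';\Wvec+\Zvec')$, so maximizing the objective is the same as minimizing $h(\Wvec+\Zvec')-h(\Wvec)$ (over $\Xvec$ with $h(\Xvec)>-\infty$, as otherwise the objective is $-\infty$ and $\Xvec$ is suboptimal). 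The complex EPI gives $\exp(\tfrac1n h(\Wvec+\Zvec'))\ge\exp(\tfrac1n h(\Wvec))+\exp(\tfrac1n h(\Zvec'))=\exp(\tfrac1n h(\Wvec))+\pi e(\gamma_2-\gamma_1)$, hence $h(\Wvec+\Zvec')-h(\Wvec)\ge n\log\big(1+\pi e(\gamma_2-\gamma_1)\exp(-\tfrac1n h(\Wvec))\big)$, which is decreasing in $h(\Wvec)$. Finally $h(\Wvec)\le\log\det\big(\pi e\,\cov(\Wvec)\big)=n\log(\pi e)+\log\det(\Cmat_X+\gamma_1\Imat_n)\le n\log\big(\pi e(P+\gamma_1)\big)$, the first inequality being the maximum-entropy bound (equality iff $\Wvec$ is circularly symmetric complex Gaussian) and the second being the arithmetic--geometric mean inequality applied to the eigenvalues of $\Cmat_X$ together with $\tr(\Cmat_X)\le nP$ (equality iff $\Cmat_X=P\Imat_n$). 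Combining the two displays, $h(\Wvec+\Zvec')-h(\Wvec)\ge n\log\frac{P+\gamma_2}{P+\gamma_1}$, and $\Xvec^{\opt}_G\sim\CN({\bf 0},P\Imat_n)$ attains equality everywhere -- it maximizes $h(\Wvec)$ and makes $\Wvec,\Zvec'$ white complex Gaussians with proportional covariances, which is exactly the EPI equality condition -- so it is optimal, completing the second part.
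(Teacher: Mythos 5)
Your proposal is correct, and it splits into two halves with different relationships to the paper's proof. For the first assertion you follow essentially the same skeleton as the paper (stack real and imaginary parts, then invoke the real-vector extremal inequality of Liu--Viswanath/Shang et al.\ to get Gaussianity of the optimizer), but where the paper simply cites \cite[Corollary 2]{Shang:09} to assert that the optimal covariance is of the circularly symmetric form, you supply an explicit rotation-averaging argument; your honest caveat is well placed, since the concavity of $\Kmat\mapsto\frac12\log|\Kmat+\Kmat_{\bar Z_1}|-\frac12\log|\Kmat+\Kmat_{\bar Z_2}|$ that makes the symmetrization work does require the noise ordering, which holds in every application of the lemma in this paper (the condition $\SNRA_{21}|\eta_2|^2\le\SNRA_{22}(1-|\CORRN_1|^2)$ is exactly $\gamma_1\le\gamma_2$ for the effective noises in \eqref{eq:Leg1ofI}), so nothing needed for the main results is lost. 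For the second assertion your route is genuinely different and self-contained: the paper obtains the white covariance by setting $\mu=1$ in \cite[Corollary 2]{Shang:09} and tracking the normalization $\tilde P=P/2$, whereas you decompose $\Zvec_2\stackrel{d}{=}\Zvec_1+\Zvec'$, reduce the problem to minimizing $h(\Wvec+\Zvec')-h(\Wvec)$, and close it with the entropy power inequality plus the maximum-entropy and arithmetic--geometric-mean bounds, with $\CN({\bf 0},P\Imat_n)$ meeting all equality conditions. This buys a proof of the quantitative part that does not lean on the external extremal-inequality machinery at all (and as a byproduct gives the explicit optimal value $n\log\frac{P+\gamma_1}{P+\gamma_2}$), at the cost of working only under the ordering $\gamma_1\le\gamma_2$ --- which is precisely the hypothesis of that part of the lemma. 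One cosmetic remark: the restriction to $h(\Xvec)>-\infty$ is unnecessary, since $h(\Wvec)\ge h(\Zvec_1)>-\infty$ for any $\Xvec$ independent of $\Zvec_1$, so the EPI applies without exception.
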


\vspace{-0.3cm}

\begin{proof}
\ifextended
The proof is based on \cite[Theorem 1]{Liu:07} and \cite[Corollary 2]{Shang:09}. A detailed proof is provided in Appendix \ref{app:ProofOfLemma1FULL}.
\fi
\end{proof}


\begin{lemma}
\label{lem:lemma2}
    \em Let $\Zvec$ and $\Wvec$ be a pair of $n$-dimensional, zero-mean, jointly circularly symmetric complex Normal random vectors with i.i.d. entries, s.t. their joint distribution can be written as
    \begin{equation}
    \label{eq:WZJD}
        f_{\Wvec,\Zvec}(\wvec,\zvec)=\prod_{i=1}^n f_{W,Z}(w_i, z_i).
    \end{equation}
    Denote the cross-covariance matrix between $Z_i$ and $W_i$ with
    \begin{eqnarray*}
        \cov(Z_i,W_i)=\left[\begin{array}{cc}
                        \sigma_1^2  & \;\; \CORRN_{12}\\
                        \CORRN_{12}^* & \;\; \sigma_2^2\end{array} \right], \qquad\qquad i\in\{1,2,...,n\},
    \end{eqnarray*}
    where $\sigma_1^2>0$, and $\sigma_2^2>0$. Let $\Vvec$ be an $n$-dimensional, zero-mean, circularly symmetric complex Normal random vector with i.i.d. entries, whose covariance matrix is given by $\E\{\Vvec\Vvec^H\}=\Big(\sigma_1^2-\frac{|\CORRN_{12}|^2}{\sigma_2^2}\Big)\cdot\Imat_n$. If $\Xvec$ is independent of $(\Zvec,\Wvec,\Vvec)$, then
    \begin{equation*}
        h(\Xvec+\Zvec|\Wvec)=h(\Xvec+\Vvec).
    \end{equation*}
\end{lemma}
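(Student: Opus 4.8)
The plan is to split $\Zvec$ into the part predictable from $\Wvec$ and an ``innovation'' that will turn out to have exactly the law of $\Vvec$ while being independent of $\Wvec$; since conditional differential entropy is invariant under subtracting a function of the conditioning variable, this reduces $h(\Xvec+\Zvec|\Wvec)$ directly to $h(\Xvec+\Vvec)$.

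Concretely, first I would compute the conditional law of $\Zvec$ given $\Wvec$. Because $(\Wvec,\Zvec)$ is jointly zero-mean circularly symmetric complex Normal and its density factorizes across the time index as in \eqref{eq:WZJD}, the pair $(Z_i,W_i)$ is independent of $(Z_j,W_j)$ for $j\neq i$, so $\E\{Z_i|\Wvec\}=\E\{Z_i|W_i\}$. For jointly circularly symmetric complex Normal scalars the pseudo-covariances vanish, hence the conditional mean is the linear MMSE estimate $\E\{Z_i|W_i\}=\frac{\CORRN_{12}}{\sigma_2^2}W_i$, and the residual $Z_i'\triangleq Z_i-\frac{\CORRN_{12}}{\sigma_2^2}W_i$ is circularly symmetric complex Normal with variance $\sigma_1^2-\frac{|\CORRN_{12}|^2}{\sigma_2^2}$. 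Being jointly Gaussian with $W_i$ and uncorrelated with it by construction, $Z_i'$ is in fact independent of $W_i$, and by the product structure the vector $\Zvec'\triangleq\Zvec-\frac{\CORRN_{12}}{\sigma_2^2}\Wvec$ is distributed $\CN\!\big({\bf 0},(\sigma_1^2-\frac{|\CORRN_{12}|^2}{\sigma_2^2})\Imat_n\big)$ and is independent of $\Wvec$ --- i.e., it has precisely the same distribution as $\Vvec$.

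I would then write $\Xvec+\Zvec=(\Xvec+\Zvec')+\frac{\CORRN_{12}}{\sigma_2^2}\Wvec$. Since $\frac{\CORRN_{12}}{\sigma_2^2}\Wvec$ is a deterministic function of the conditioning random vector, translating by it does not change the conditional differential entropy, so $h(\Xvec+\Zvec|\Wvec)=h(\Xvec+\Zvec'|\Wvec)$. Next, since $\Xvec$ is independent of $(\Zvec,\Wvec)$ it is also independent of $(\Zvec',\Wvec)$ (the latter being a function of the former); combining this with the independence of $\Zvec'$ and $\Wvec$ established above yields $f_{\Xvec,\Zvec',\Wvec}=f_{\Xvec}f_{\Zvec'}f_{\Wvec}$, so $\Xvec+\Zvec'$ is independent of $\Wvec$ and the conditioning can be dropped: $h(\Xvec+\Zvec'|\Wvec)=h(\Xvec+\Zvec')$. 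Finally, $\Zvec'$ and $\Vvec$ are identically distributed and both independent of $\Xvec$, hence $h(\Xvec+\Zvec')=h(\Xvec+\Vvec)$, which closes the argument.

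The translation-invariance of conditional entropy and the dropping of the conditioning are routine; the step that actually requires care --- and the only genuine departure from the known real-variable version of this lemma --- is the MMSE decomposition of Step one: one must use circular symmetry to discard the conjugate ($W_i^*$) term in the conditional mean, to guarantee that the innovation $\Zvec'$ is itself circularly symmetric, and to verify that its covariance equals exactly $\big(\sigma_1^2-|\CORRN_{12}|^2/\sigma_2^2\big)\Imat_n$ so that it matches $\Vvec$. A secondary, mild subtlety is the bookkeeping that upgrades the hypothesis $\Xvec\perp(\Zvec,\Wvec)$ to the joint independence $(\Xvec,\Zvec')\perp\Wvec$ needed to remove the conditioning.
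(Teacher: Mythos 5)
Your proof is correct and is essentially the same argument as the paper's: both hinge on the Gaussian innovations identity $Z_i \stackrel{d}{=} \frac{\CORRN_{12}}{\sigma_2^2}W_i + (\text{independent } \CN(0,\sigma_1^2-|\CORRN_{12}|^2/\sigma_2^2))$, followed by translation invariance of the conditional entropy and dropping the conditioning. The only (immaterial) difference is direction — the paper synthesizes $(Z^n,W^n)$ from a fresh independent copy $V'^n$ via $Z^n \stackrel{d}{=} V'^n + \frac{\CORRN_{12}}{\sigma_2^2}W^n$, whereas you decompose $Z^n$ into its innovation $Z'^n = Z^n - \frac{\CORRN_{12}}{\sigma_2^2}W^n$.
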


\vspace{-0.3cm}

\begin{proof}
The proof follows similar steps as in the proof of \cite[Lemma 3]{Shang:09}. A detailed proof is provided in Appendix \ref{app:ProofOfLemma2FULL}.
\end{proof}


\begin{lemma}
\label{lem:lemma22}
\em
Let $Z$ and $W$ be a pair of possibly correlated, zero-mean, jointly circularly symmetric complex Normal RVs, and let $\Hvec_{Y}$ and $\Hvec_{S}$ be two $n\times 1$ complex random vectors. Additionally, let $\Xvec$ be an $n\times 1$ complex random vector, and let $Y$ and $S$ be noisy observations of $\Xvec$, s.t.
\begin{subequations}
\label{eq:Lemma22-DefEqs}
\begin{eqnarray}
    Y&=&\Hvec_{Y}^T\Xvec+Z\\
    S&=&\Hvec_{S}^T\Xvec+W.
\end{eqnarray}
\end{subequations}
Consider the sequence of random vectors  $\Xvec^n=(\Xvec_1^T, \Xvec_2^T..., \Xvec_n^T)^T$ and let $\Qmat_{\Xvec_i}$ denote the covariance matrix of the $n \times 1$ vector $\Xvec_i$.
Furthermore, let $Y^n$ and $S^n$ be the corresponding observations when the noise sequences $(Z^n,W^n)$ are i.i.d. in the sense of \eqref{eq:WZJD}. Define $\Hvec=(\Hvec_{Y}^T, \Hvec_{S}^T)^T$, and let
$\Hvec^n=(\Hvec_1^T, \Hvec_2^T,..., \Hvec_n^T)^T$ be an i.i.d. sequence of random vectors, in which each $2n\times 1$ vector element is distributed according to the distribution of the $2n\times 1$ random vector $\Hvec$. Then, we can bound
\begin{equation}
    \label{eq:Lemma22EQ}
    h(Y^n|S^n,\Hvec^n)\le n\cdot h(Y_G|S_G,\Hvec),
\end{equation}
where $Y_G$ and $S_G$ denote the RVs $Y$ and $S$ defined in \eqref{eq:Lemma22-DefEqs}, obtained with $\Xvec$ replaced with $\Xvec_G\sim\CN({\bf 0},\frac{1}{n}\sum_{i=1}^n\Qmat_{\Xvec_i})$.
\end{lemma}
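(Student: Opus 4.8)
\noindent The plan is a single‑letterization in three moves: split the $n$‑letter conditional entropy with the chain rule, establish that a circularly symmetric complex Normal input is optimal letter‑by‑letter, and then pass to the averaged covariance by a concavity argument. For the first move, by the chain rule for differential entropy and because conditioning cannot increase entropy,
\begin{equation*}
h(Y^n|S^n,\Hvec^n)=\sum_{i=1}^n h\big(Y_i\,\big|\,Y^{i-1},S^n,\Hvec^n\big)\le \sum_{i=1}^n h\big(Y_i\,\big|\,S_i,\Hvec_i\big),
\end{equation*}
since $(S_i,\Hvec_i)$ is a sub‑collection of the variables conditioned on in the $i$‑th term on the left; this reduces the problem to bounding each single‑letter quantity $h(Y_i|S_i,\Hvec_i)$.

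For the second move I would condition further on $\{\Hvec_i=\hv\}$, with $\hv=(\hv_Y^T,\hv_S^T)^T$. Given this event, $(Y_i,S_i)$ is a deterministic linear image of $\Xvec_i$ corrupted by the jointly circularly symmetric complex Normal noise $(Z_i,W_i)$; since $\Hvec^n$ is i.i.d.\ in time and $\Xvec_i$ depends on the channel only through $\Hvec^{i-1}$ (hence is independent of $\Hvec_i$), while the noise is independent of $(\Xvec_i,\Hvec_i)$, the conditional second‑order statistics of $(Y_i,S_i)$ given $\Hvec_i=\hv$ depend on $\Xvec_i$ only through $\Qmat_{\Xvec_i}$. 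I would then invoke the complex version of ``Gaussian maximizes conditional differential entropy under a covariance constraint'': writing $h(Y_i|S_i,\Hvec_i=\hv)=h(Y_i-LS_i\,|\,S_i,\Hvec_i=\hv)\le h(Y_i-LS_i\,|\,\Hvec_i=\hv)$, with $L$ the scalar linear‑MMSE coefficient that makes $Y_i-LS_i$ uncorrelated with $S_i$, then upper‑bounding the right‑hand side by the differential entropy of a circularly symmetric complex Normal RV of equal variance \cite{Massey:93}, and finally using that the jointly circularly symmetric Gaussian surrogate of $(Y_i,S_i)$ makes $Y_{i,G}-LS_{i,G}$ independent of $S_{i,G}$, one obtains
\begin{equation*}
h(Y_i|S_i,\Hvec_i=\hv)\le h\big(Y_{i,G}\,\big|\,S_{i,G},\Hvec_i=\hv\big),
\end{equation*}
where $Y_{i,G},S_{i,G}$ are the observations \eqref{eq:Lemma22-DefEqs} with $\Xvec$ replaced by $\Xvec_{i,G}\sim\CN({\bf 0},\Qmat_{\Xvec_i})$. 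Averaging over $\Hvec_i\sim\Hvec$ and writing $\phi(\Qmat)\triangleq h(Y_G|S_G,\Hvec)$ for input $\Xvec_G\sim\CN({\bf 0},\Qmat)$, the first two moves give $h(Y^n|S^n,\Hvec^n)\le\sum_{i=1}^n\phi(\Qmat_{\Xvec_i})$.

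The remaining move, which I expect to be the crux, is to prove that $\phi$ is concave on the cone of p.s.d.\ matrices; granted this, Jensen's inequality yields $\sum_{i=1}^n\phi(\Qmat_{\Xvec_i})\le n\,\phi\big(\tfrac1n\sum_{i=1}^n\Qmat_{\Xvec_i}\big)=n\cdot h(Y_G|S_G,\Hvec)$, which is exactly \eqref{eq:Lemma22EQ}. For a fixed $\Hvec=\hv$, since $(Y_G,S_G)$ is jointly circularly symmetric complex Normal we have $h(Y_G|S_G,\Hvec=\hv)=\log\!\big(\pi e\cdot m(\hv,\Qmat)\big)$, where $m(\hv,\Qmat)$ is the linear‑MMSE error variance of estimating $Y_G$ from $S_G$; explicitly,
\begin{equation*}
m(\hv,\Qmat)=\big(\hv_Y^T\Qmat\hv_Y^*+\E\{|Z|^2\}\big)-\frac{\left|\hv_Y^T\Qmat\hv_S^*+\E\{ZW^*\}\right|^2}{\hv_S^T\Qmat\hv_S^*+\E\{|W|^2\}}.
\end{equation*}
The first term is affine in $\Qmat$, while the subtracted term has the form $|c(\Qmat)|^2/b(\Qmat)$ with $c,b$ affine in $\Qmat$ and $b>0$, hence is jointly convex; therefore $m(\hv,\cdot)$ is concave and positive, $\log(\pi e\cdot m(\hv,\cdot))$ is concave, and taking the expectation over $\Hvec$ preserves concavity, so $\phi$ is concave.

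The points requiring care, all specific to the complex‑valued setting, are: that the circularly symmetric complex Normal law is the differential‑entropy maximizer under a covariance constraint irrespective of its pseudo‑covariance; that $(Y_G,S_G,\Hvec)$ is genuinely jointly circularly symmetric (being a linear function of the proper‑complex input $\Xvec_G$ plus proper‑complex noise), which is what upgrades ``uncorrelated'' to ``independent'' in the second move; and that $\Xvec_i$ is independent of $\Hvec_i$, so that the per‑letter covariance constraint is exactly $\Qmat_{\Xvec_i}$ and the final averaging step is legitimate.
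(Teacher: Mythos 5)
Your proposal is correct, and it reaches \eqref{eq:Lemma22EQ} by a genuinely different final step than the paper. The first move is identical: chain rule plus ``conditioning reduces entropy'' gives $h(Y^n|S^n,\Hvec^n)\le\sum_{i=1}^n h(Y_i|S_i,\Hvec_i)$. From there the paper (following \cite[Lemma 1]{Annapureddy:09}) introduces a uniform time-sharing variable $Q$, drops the conditioning on $Q$, and applies the ``jointly Gaussian maximizes conditional entropy under a covariance constraint'' result (\cite[Lemma 2]{Zahavi:12}) \emph{once}, to the mixture variable $(\Yvec_Q,\Svec_Q)$, whose input covariance is automatically $\frac1n\sum_i\Qmat_{\Xvec_i}$; the averaging over time thus comes for free and no concavity statement is ever made explicit. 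You instead Gaussianize letter-by-letter, obtaining $\sum_i\phi(\Qmat_{\Xvec_i})$, and then must prove concavity of $\phi$ in the input covariance to invoke Jensen; your argument for this — writing $h(Y_G|S_G,\Hvec=\hv)=\log(\pi e\, m(\hv,\Qmat))$ with $m$ the LMMSE error variance, observing that $m$ is an affine term minus a quadratic-over-linear (hence jointly convex) term in $\Qmat$, and composing with the concave increasing $\log$ — is sound, as is your per-letter Gaussianization via the LMMSE decorrelation $Y_i-LS_i$ (which is essentially a re-proof of the cited \cite[Lemma 2]{Zahavi:12}). The trade-off is that the paper's route is shorter and avoids any explicit concavity computation, while yours makes the concave dependence of the conditional entropy on the input covariance transparent, which is arguably more informative. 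One shared caveat, not a gap specific to you: both arguments implicitly use that $\Xvec_i$ is independent of $\Hvec_i$ (so that the conditional covariance of $(Y_i,S_i)$ given $\Hvec_i=\hv$ is determined by $\Qmat_{\Xvec_i}$); you at least state this assumption, whereas the paper leaves it tacit.
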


\vspace{-0.3cm}

\begin{proof}
\ifextended
The proof follows similar steps as of the proof of \cite[Lemma 1]{Annapureddy:09}. A detailed proof is provided in Appendix \ref{app:ProofOfLemma3FULL}.
\fi
\end{proof}


\begin{lemma}
\label{lem:lemma3}
    \em Let $X_1,X_2,Z_1$ and $Z_2$ be zero mean, {\em jointly circularly symmetric} complex Normal RVs s.t. $(X_1,X_2)$ is independent of $(Z_1,Z_2)$\footnote{Joint circular symmetry of $(Z_1,Z_2)$ implies that $\E\{\Real\{Z_1\}\Img\{Z_2\}\} = -\E\{\Img\{Z_1\}\Real\{Z_2\}\}$ and $\E\{\Real\{Z_1\}\Real\{Z_2\}\} = \E\{\Img\{Z_1\}\Img\{Z_2\}\}$.}.
    Let $c_1$ and $c_2$ be a pair of complex constants, and let $Y_1$ and $Y_2$ be defined via
    \begin{eqnarray*}
        Y_1&=&c_1\cdot X_1+c_2\cdot X_2+Z_1\\
        Y_2&=&c_1\cdot X_1+c_2\cdot X_2+Z_2.
    \end{eqnarray*}
    Then, $I(X_1,X_2;Y_1|Y_2)=0$ if and only if $\E\{Z_1Z_2^*\}=\E\{|Z_2|^2\}$.
\end{lemma}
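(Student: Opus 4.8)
The plan is to compute $I(X_1,X_2;Y_1|Y_2)$ explicitly using the fact that all the variables in sight are jointly circularly symmetric complex Normal, so that conditional mutual information reduces to a log-determinant ratio of covariance matrices. Write $U \triangleq c_1 X_1 + c_2 X_2$, which is a zero-mean, circularly symmetric complex Normal RV independent of $(Z_1,Z_2)$; then $Y_1 = U + Z_1$ and $Y_2 = U + Z_2$. Since $(X_1,X_2)$ and hence $U$ are jointly circularly symmetric with $(Z_1,Z_2)$, the triple $(U,Y_1,Y_2)$ is jointly circularly symmetric complex Normal, and $I(X_1,X_2;Y_1|Y_2) = I(U;Y_1|Y_2)$ provided $(c_1,c_2)\neq(0,0)$; the degenerate case $c_1=c_2=0$ makes both sides trivially zero and the stated noise condition is then necessary because $Y_1=Z_1$, $Y_2=Z_2$ forces $Z_1=Z_2$ for $I(\cdot)=0$ to be consistent — I would handle this as a boundary remark.

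Next I would write $I(U;Y_1|Y_2) = h(Y_1|Y_2) - h(Y_1|Y_2,U) = h(Y_1|Y_2) - h(Z_1|Z_2)$, using that given $U$, $Y_1$ depends only on $Z_1$ and $Y_2$ only on $Z_2$. The key observation is that this quantity is nonnegative (it is a mutual information) and equals zero if and only if $Y_1$ is a deterministic (affine) function of $Y_2$ — equivalently, conditioning on $Y_2$ removes all uncertainty in $Y_1$. For jointly Gaussian, circularly symmetric variables, $h(Y_1|Y_2) = h(Z_1|Z_2)$ iff the linear MMSE estimate of $Y_1$ from $Y_2$ is exact, i.e. iff $\var(Y_1 - \widehat{Y_1}(Y_2)) = 0$. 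I would compute the error covariance: letting $\sigma_j^2 = \E\{|Z_j|^2\}$ and $\rho = \E\{Z_1 Z_2^*\}$, and $P_U = \var(U)$, we have $\E\{|Y_1|^2\} = P_U + \sigma_1^2$, $\E\{|Y_2|^2\} = P_U + \sigma_2^2$, $\E\{Y_1 Y_2^*\} = P_U + \rho$, and the MMSE is
\begin{equation*}
  \var(Y_1 \,|\, Y_2) = (P_U + \sigma_1^2) - \frac{|P_U + \rho|^2}{P_U + \sigma_2^2}.
\end{equation*}
Setting this to zero gives $(P_U+\sigma_1^2)(P_U+\sigma_2^2) = |P_U+\rho|^2$.

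The remaining algebraic step is to show this equality forces $\rho = \sigma_2^2$ (and, as a consequence, $\sigma_1^2 = \sigma_2^2$, so that also $\rho = \sigma_1^2$). By Cauchy--Schwarz applied to $U+Z_1$ and $U+Z_2$ we always have $|P_U + \rho|^2 \le (P_U + \sigma_1^2)(P_U + \sigma_2^2)$, with equality iff $U+Z_1$ and $U+Z_2$ are linearly dependent (almost surely proportional). Since both have the same "$U$" component, linear dependence with the common leading term forces the proportionality constant to be $1$, hence $Z_1 = Z_2$ a.s. modulo the part orthogonal to... — more carefully: equality in Cauchy--Schwarz means $Y_1 = \beta Y_2$ a.s. for some complex $\beta$; matching the $U$-components (which are independent of the noises) forces $\beta = 1$, hence $Z_1 = Z_2$ a.s., giving $\rho = \sigma_1^2 = \sigma_2^2 = \E\{|Z_2|^2\}$. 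Conversely, if $\E\{Z_1 Z_2^*\} = \E\{|Z_2|^2\}$, then $\E\{|Z_1 - Z_2|^2\} = \sigma_1^2 - 2\Real\{\rho\} + \sigma_2^2$; here I need the joint circular symmetry hypothesis to conclude $\sigma_1^2 = \sigma_2^2$ as well — indeed $\rho = \sigma_2^2$ real together with Cauchy--Schwarz $|\rho|^2 \le \sigma_1^2\sigma_2^2$ gives $\sigma_2^2 \le \sigma_1^2$, and the symmetric estimate from projecting $Z_2$ onto $Z_1$ would pin down equality; then $Z_1 = Z_2$ a.s., so $Y_1 = Y_2$ a.s. and $I(X_1,X_2;Y_1|Y_2) = 0$. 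The main obstacle I anticipate is precisely this last equivalence: being careful that the scalar Cauchy--Schwarz / equality-condition argument is airtight in the complex circularly symmetric setting (in particular that $\var(Y_1|Y_2)=0$ genuinely forces $Z_1=Z_2$ and not merely $\Real\{\rho\}=\sigma_2^2$), and cleanly ruling out the degenerate cases $P_U = 0$ (i.e. $c_1=c_2=0$) and $\sigma_j^2 = 0$ which the hypotheses of Lemma \ref{lem:lemma3} tacitly exclude or which must be treated separately. Everything else is a routine Gaussian computation invoking that differential entropy of a circularly symmetric complex Normal vector is $\log\big((\pi e)^n |\Qmat|\big)$.
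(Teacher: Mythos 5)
The central step of your argument is wrong. You claim that $I(U;Y_1|Y_2)=h(Y_1|Y_2)-h(Z_1|Z_2)$ ``equals zero if and only if $Y_1$ is a deterministic (affine) function of $Y_2$,'' and you then set $\var(Y_1|Y_2)=0$. Vanishing conditional mutual information means only that $U-Y_2-Y_1$ is a Markov chain, i.e., $h(Y_1|Y_2)=h(Y_1|Y_2,U)=h(Z_1|Z_2)$, which for your covariance computation is the condition
\begin{equation*}
(P_U+\sigma_1^2)-\frac{|P_U+\rho|^2}{P_U+\sigma_2^2}=\sigma_1^2-\frac{|\rho|^2}{\sigma_2^2},
\end{equation*}
not $\var(Y_1|Y_2)=0$. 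The latter is the Cauchy--Schwarz equality case and leads you to the false conclusions $Z_1=Z_2$ a.s.\ and $\sigma_1^2=\sigma_2^2$. A counterexample: let $U,Z_2,N\sim\CN(0,1)$ be mutually independent and $Z_1=Z_2+N$. Then $\E\{Z_1Z_2^*\}=1=\E\{|Z_2|^2\}$ and indeed $I(X_1,X_2;Y_1|Y_2)=0$ (since $Y_1=Y_2+N$ with $N$ independent of $(X_1,X_2,Y_2)$), yet $Z_1\neq Z_2$, $\sigma_1^2=2\neq1=\sigma_2^2$, and $\var(Y_1|Y_2)=\var(N)=1>0$. Your converse direction likewise leans on the claim that $\rho=\sigma_2^2$ forces $\sigma_1^2=\sigma_2^2$, which is false: Cauchy--Schwarz only gives $\sigma_2^2\le\sigma_1^2$, and the example above shows strict inequality can occur while the lemma's conclusion still holds.

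The correct route --- the one the paper takes via \cite[Lemma 7]{Annapureddy:09} --- is to characterize $I(X_1,X_2;Y_1|Y_2)=0$ by the coincidence of the two MMSE estimates $\E\{Y_1|X_1,X_2,Y_2\}$ and $\E\{Y_1|Y_2\}$, compute both explicitly as linear combinations of $c_1X_1+c_2X_2$ and $Y_2$ using joint circular symmetry, and compare coefficients; this yields $\rho=\sigma_2^2$ directly. Alternatively, your own covariance calculation can be salvaged by equating $\var(Y_1|Y_2)$ to $\var(Z_1|Z_2)=\sigma_1^2-|\rho|^2/\sigma_2^2$ rather than to zero: the algebra then collapses to $P_U\,|\sigma_2^2-\rho|^2=0$, which gives the lemma whenever $P_U>0$. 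Your observation that the ``only if'' direction degenerates when $c_1=c_2=0$ (so $P_U=0$ and $I$ vanishes regardless of the noise correlation) is correct, and applies equally to the paper's statement; but note that in that case no ``consistency'' constraint on $(Z_1,Z_2)$ arises --- the equivalence simply fails.
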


\vspace{-0.3cm}

\begin{proof}
The proof is provided in Appendix \ref{app:ProofOfLemma3}.
\end{proof}


\begin{lemma}
\label{lem:lemma6}
\em Let $Z^{n+m}\triangleq\Big(\big({Z^n_1}\big)^T, \big({Z^m_2}\big)^T\Big)^T$, where $Z^n_1$ and $Z^m_2$ are two mutually independent, circularly symmetric complex Normal random vectors of lengths $n$ and $m$, respectively, each with independent entries distributed according to $Z_{1,i}\sim\CN(0,a_{1,i}\gamma_{1}), i\in\{1,2,...,n\}$ and $Z_{2,i}\sim\CN(0,a_{2,i}\gamma_{2}), i\in\{1,2,...,m\}$, where $\gamma_1$, $\gamma_2$, and $a_{k,i}, k\in\{1,2\}$ are positive, real, and finite constants. Let $X^{n+m}\triangleq\Big(\big({X^n_1}\big)^T, \big({X^m_2}\big)^T\Big)^T$ where $X^n_1$ and $X^m_2$ are two complex random vectors of lengths $n$ and $m$, respectively, with finite covariance matrices,
and further let $X^{n+m}$ be mutually independent of $Z^{n+m}$. Then, we have the following limit:
\begin{equation*}
    \lim_{\gamma_2\rightarrow\infty} I\big(X^{n+m};X^{n+m}+Z^{n+m}\big) = I\big(X^n_1;X^n_1+Z^n_1\big).
\end{equation*}
\end{lemma}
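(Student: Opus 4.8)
The plan is to use the chain rule for mutual information to separate the contribution of the first $n$ coordinates of the channel --- which does not involve $\gamma_2$ and already equals the claimed limit --- from a remainder term that I will show vanishes as $\gamma_2\to\infty$. Write $Y^n_1\triangleq X^n_1+Z^n_1$ and $Y^m_2\triangleq X^m_2+Z^m_2$, so that $X^{n+m}+Z^{n+m}=\big((Y^n_1)^T,(Y^m_2)^T\big)^T$. I note first that every differential entropy appearing below is finite: $Y^n_1$, $Y^m_2$ and their concatenation all have densities, being convolutions of distributions with finite covariance with nondegenerate complex Gaussians, hence each is bounded above by the differential entropy of a Gaussian with the same covariance and below by the entropy of the corresponding additive noise. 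The chain rule then gives
\[
  I\big(X^{n+m};X^{n+m}+Z^{n+m}\big)=I\big(X^{n+m};Y^n_1\big)+I\big(X^{n+m};Y^m_2\,\big|\,Y^n_1\big).
\]

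Next I would handle the two terms separately. For the first term, since $Z^n_1$ is independent of $X^{n+m}=\big((X^n_1)^T,(X^m_2)^T\big)^T$, the vector $Y^n_1=X^n_1+Z^n_1$ is conditionally independent of $X^m_2$ given $X^n_1$; hence $X^m_2-X^n_1-Y^n_1$ is a Markov chain and
\[
  I\big(X^{n+m};Y^n_1\big)=I\big(X^n_1;Y^n_1\big)=I\big(X^n_1;X^n_1+Z^n_1\big),
\]
which does not depend on $\gamma_2$ and is precisely the claimed limit. For the second term, I would write it as $h(Y^m_2\mid Y^n_1)-h(Y^m_2\mid X^{n+m},Y^n_1)$. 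Since $Z^m_2$ is independent of $(X^{n+m},Z^n_1)$ and $Y^n_1$ is a deterministic function of $(X^n_1,Z^n_1)$, the vector $Z^m_2$ is independent of $(X^{n+m},Y^n_1)$; as $X^m_2$ is a sub-vector of $X^{n+m}$, translation invariance of differential entropy gives $h(Y^m_2\mid X^{n+m},Y^n_1)=h(Z^m_2)$. Combining this with $h(Y^m_2\mid Y^n_1)\le h(Y^m_2)$ and with nonnegativity of conditional mutual information,
\[
  0\le I\big(X^{n+m};Y^m_2\,\big|\,Y^n_1\big)\le h(Y^m_2)-h(Z^m_2).
\]

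It then remains to prove $h(Y^m_2)-h(Z^m_2)\to 0$. Passing to the $2m$-dimensional stacked real representations (differential entropy being invariant under this identification), let $\Sigma_2$ and $\Gamma_2$ denote the real covariance matrices of $X^m_2$ and $Z^m_2$; since $X^m_2$ and $Z^m_2$ are independent, the covariance of $Y^m_2$ is $\Sigma_2+\Gamma_2$. The Gaussian maximum-entropy inequality gives $h(Y^m_2)\le\frac12\log\big((2\pi e)^{2m}\det(\Sigma_2+\Gamma_2)\big)$, while $h(Z^m_2)=\frac12\log\big((2\pi e)^{2m}\det\Gamma_2\big)$ holds with equality because $Z^m_2$ is Gaussian, so
\[
  h(Y^m_2)-h(Z^m_2)\le\frac12\log\det\big(\Imat_{2m}+\Gamma_2^{-1}\Sigma_2\big).
\]
Because $Z^m_2$ has independent, circularly symmetric entries with $Z_{2,i}\sim\CN(0,a_{2,i}\gamma_2)$, we have $\Gamma_2=\frac{\gamma_2}{2}\Dmat$ with $\Dmat$ a fixed positive-definite diagonal matrix, so $\Gamma_2^{-1}=\frac{2}{\gamma_2}\Dmat^{-1}\to\Omat$; since $\Sigma_2$ is fixed and finite, the upper bound tends to $\frac12\log\det\Imat_{2m}=0$. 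By the squeeze theorem $I(X^{n+m};Y^m_2\mid Y^n_1)\to 0$ as $\gamma_2\to\infty$, and therefore $\lim_{\gamma_2\to\infty}I(X^{n+m};X^{n+m}+Z^{n+m})=I(X^n_1;X^n_1+Z^n_1)$, as claimed.

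The step that I expect to require the most care is the last one: one must ensure that all the differential entropies are finite (which is why the ``convolution with a nondegenerate Gaussian'' observation is made at the outset) and apply the Gaussian upper bound through the $2m$-dimensional \emph{real} covariance matrix rather than a complex covariance, since $X^m_2$ is not assumed circularly symmetric. The remaining steps are routine manipulations with the chain rule and elementary properties of differential entropy.
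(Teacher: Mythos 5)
Your proof is correct and follows essentially the same route as the paper's (which adapts \cite[Lemma 13]{Liu:07}): a chain-rule decomposition isolating $I(X^n_1;X^n_1+Z^n_1)$, a bound on the residual by $I(X^m_2;X^m_2+Z^m_2)=h(Y^m_2)-h(Z^m_2)$, and a Gaussian maximum-entropy/determinant estimate showing this quantity vanishes as $\gamma_2\rightarrow\infty$. The only cosmetic differences are that you use a two-term decomposition and evaluate $I(X^{n+m};Y^n_1)$ exactly via a Markov-chain argument, whereas the paper uses a three-term decomposition and bounds two cross terms separately by $I(X^m_2;X^m_2+Z^m_2)$; your passage to the stacked real covariance is also slightly more self-contained, since it does not invoke the complex-Gaussian extremality result of \cite{Massey:93}.
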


\vspace{-0.3cm}

\begin{proof}
The proof is provided in Appendix \ref{app:ProofOfLemma6}.
\end{proof}


\begin{lemma}
\label{lem:lemma8}
\em Let $\tilde{Z}_{1}^n$ and $\tilde{Z}_{2}^n$ be a pair of possibly correlated, $n$-dimensional circularly symmetric complex Normal random vectors, each with independent entries, i.e., $\tilde{Z}_{k}^n\sim\CN(0,\tilde{\Dmat}^{\fnsz}_{k}), k\in\{1,2\}$, where $\tilde{\Dmat}^{\fnsz}_{k}, k\in\{1,2\}$ are two $n\times n$ diagonal matrices with real and positive entries on their main diagonals. Let $\tilde{\Vmat}_{1}$ and $\tilde{\Vmat}_{2}$ be two $2n \times n$ deterministic complex matrices, s.t. $\tilde{\Vmat}_{k}^H\tilde{\Vmat}_{k}=\tilde{\Dmat}_{k}^{-1}$, where $\tilde{\Dmat}_{k}, k\in\{1,2\}$ are two $n\times n$ diagonal matrices with real and positive entries on their main diagonals. Let $X^{2n}$ be a $2n \times 1$ complex random vector with distribution $f_{X^{2n}}(x^{2n})$, independent of $(\tilde{Z}_{1}^n,\tilde{Z}_{2}^n)$, and let $\bar{X}^{4n}$ be the stacking of the real and imaginary parts of $X^{2n}$, i.e., $\bar{X}^{4n}\triangleq \Big(\big(\Real\{X^{2n}\}\big)^T,\big(\Img\{X^{2n}\}\big)^T\Big)^T$. Consider the following optimization problem:
\begin{equation}
    \label{eq:eq6}
    \max_{f(x^{2n}):\quad\!\!\!\!  \cov(\bar{X}^{4n})\preceq \mathds{S}}\mbox{ } h(\tilde{\Vmat}_{1}^H\cdot X^{2n}+\tilde{Z}_{1}^n)-h(\tilde{\Vmat}_{2}^H\cdot X^{2n}+\tilde{Z}_{2}^n).
    \end{equation}
Then, a {\em zero-mean} complex Normal random vector, $X^{2n}_{{\oG}}$, is an optimal solution for \eqref{eq:eq6}.
\end{lemma}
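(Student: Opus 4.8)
The plan is to recast the optimization in \eqref{eq:eq6} as an instance of the extremal inequality of Lemma~\ref{lem:lemma1}, upgraded from a trace constraint to the positive-semidefinite covariance constraint $\cov(\bar{X}^{4n})\preceq\mathds{S}$. First I would pass to the real-valued representation: writing $\bar{X}^{4n}$ for the stacking of the real and imaginary parts of $X^{2n}$, the complex-linear maps $X^{2n}\mapsto\tilde{\Vmat}_k^H X^{2n}$ become real-linear maps $\bar{X}^{4n}\mapsto\bar{\Vmat}_k\bar{X}^{4n}$, and the circularly symmetric complex Normal vectors $\tilde{Z}_k^n$ become real Gaussian vectors with diagonal covariance matrices. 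Two observations simplify the setup: the \emph{cross}-covariance between $\tilde{Z}_1^n$ and $\tilde{Z}_2^n$ plays no role, since the objective is a difference of two differential entropies that each depend only on a marginal distribution; and the constraint bounds only the covariance while the objective is invariant to translations of $X^{2n}$, so the mean of the maximizer is a free parameter and it suffices to produce a \emph{zero-mean} Gaussian optimizer.

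The key reformulation is that $h(\tilde{\Vmat}_1^H X^{2n}+\tilde{Z}_1^n)-h(\tilde{\Vmat}_2^H X^{2n}+\tilde{Z}_2^n)$ equals $I(X^{2n};\tilde{\Vmat}_1^H X^{2n}+\tilde{Z}_1^n)-I(X^{2n};\tilde{\Vmat}_2^H X^{2n}+\tilde{Z}_2^n)$ plus the constant $h(\tilde{Z}_1^n)-h(\tilde{Z}_2^n)$, which follows since $\tilde{Z}_k^n$ is independent of $X^{2n}$ and hence $h(\tilde{\Vmat}_k^H X^{2n}+\tilde{Z}_k^n\mid X^{2n})=h(\tilde{Z}_k^n)$. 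Thus \eqref{eq:eq6} is equivalent to maximizing a difference of mutual informations under $\cov(\bar{X}^{4n})\preceq\mathds{S}$ — a secrecy-type functional, and the task reduces to showing a Gaussian input is optimal for it, regardless of how the two noise covariances compare (which is consistent with the absence of any ordering hypothesis in the lemma). To carry this out within the toolkit of Section~\ref{sec:preliminaries}, I would reduce it to the additive-noise inequality of Lemma~\ref{lem:lemma1}: using the hypothesis $\tilde{\Vmat}_k^H\tilde{\Vmat}_k=\tilde{\Dmat}_k^{-1}$ one writes $\tilde{\Vmat}_k=\tilde{\Umat}_k\tilde{\Dmat}_k^{-1/2}$ with $\tilde{\Umat}_k^H\tilde{\Umat}_k=\mathds{I}_n$, absorbs the invertible diagonal factors into the outputs and into the (still diagonal) noises at the cost of input-independent additive constants, completes each $\tilde{\Umat}_k$ to a unitary matrix, and appends $n$ auxiliary coordinates carrying independent Normal noise of variance $c$; the augmented observation is then, up to a constant, an additive Gaussian perturbation of $X^{2n}$, and letting $c\to\infty$ recovers the original term. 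This brings \eqref{eq:eq6}, in the limit, to the form $\max_{\cov(\bar{X}^{4n})\preceq\mathds{S}}\,h(\bar{X}^{4n}+\bar{\Gmat}_1)-h(\bar{X}^{4n}+\bar{\Gmat}_2)$ with $\bar{\Gmat}_1,\bar{\Gmat}_2$ zero-mean Gaussian.

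The heart of the argument is therefore to establish this matrix-constrained extremal inequality with a Gaussian maximizer, for which I would follow the technique underlying \cite[Theorem~1]{Liu:07}: the Karush--Kuhn--Tucker conditions of the problem restricted to Gaussian inputs identify the optimal Gaussian covariance $\Qmat^\star\preceq\mathds{S}$ together with the Lagrange multiplier of the constraint; then, for an arbitrary feasible input, one runs the perturbation / channel-enhancement argument — introducing suitable auxiliary Gaussian variables to split the two channels, invoking that a Gaussian maximizes conditional differential entropy under a covariance constraint together with the worst-additive-noise property, and applying the data-processing inequality — to conclude that the Gaussian with covariance $\Qmat^\star$ cannot be beaten. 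Translating back, the optimal $\bar{X}^{4n}$ is a zero-mean real Gaussian vector with covariance $\preceq\mathds{S}$, which corresponds to a zero-mean complex Normal vector $X^{2n}_{\oG}$ that is in general \emph{not} circularly symmetric, since neither $\mathds{S}$ nor the matrices $\tilde{\Vmat}_k$ are assumed to possess the circularly symmetric block structure.

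I expect the principal difficulty to be twofold. First, upgrading Lemma~\ref{lem:lemma1} from its scalar-noise, trace-constrained form to the present matrix-constrained form with the non-scalar (rotated-diagonal) noise covariances produced by the reduction: this requires checking that the KKT / enhancement conditions remain simultaneously solvable, which is exactly where the extremal-inequality machinery must be pushed. Second, justifying the interchange of the $c\to\infty$ limit with the maximization in the dimension-augmentation step, for which a uniform-continuity (or dominated-convergence) estimate on the constant-shifted objective, uniform over the compact feasible set $\{\Qmat:\,\mathbf{0}\preceq\Qmat\preceq\mathds{S}\}$, is needed.
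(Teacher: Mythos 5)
Your proposal is correct and follows essentially the same route as the paper's proof: realification, completion of the $\tilde{\Vmat}_k$'s to matrices with orthogonal columns, dimension augmentation with auxiliary Gaussian noise whose variance is sent to infinity, and the Liu--Viswanath extremal inequality applied to the augmented matrix-constrained problem. The two difficulties you flag are resolved in the paper by invoking \cite[Theorem 1]{Liu:07} directly as a black box (it already covers the matrix-covariance constraint with full-rank Gaussian noises, so no KKT/enhancement rederivation is needed) and by Lemma \ref{lem:lemma6} together with the uniform continuity of $\log\det(\Imat+\Amat)$ over the compact feasible set $0\preceq\Kmat_{\bar{X}}\preceq\mathds{S}$.
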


\vspace{-0.3cm}

\begin{proof}
The proof is provided in Appendix \ref{app:ProofOfLemma8}.
\end{proof}

\vspace{0.5cm}


\subsection{Sum-Rate Capacity in the WI Regime}

Let $\mathcal{C}(\underline{\SNR})$ denote the capacity region of the Z-ICR, for a given $\underline{\SNR}$.
The sum-rate capacity of the ergodic phase fading Z-ICR in the WI regime is characterized in the following theorem:
\begin{theorem}
\label{thm:WI}
Consider the ergodic phase fading Z-ICR with only Rx-CSI, defined in Section \ref{sec:Model}.
If $\underline{\SNR}$ satisfies
    \begin{equation}
         \label{eq:Relay_con2Phase}
         \frac{\SNRA_{11}+\SNRA_{31}}{1+\SNRA_{21}} \le \SNRA_{13},
    \end{equation}
and if there exist two real scalars $\beta_1$ and $\beta_2$ which satisfy $0 \le \beta_1, \beta_2 \le 1$, and
    \begin{subequations}
    \label{eq:WI_con2-Thm}
    \begin{eqnarray}
        \label{eq:WI_con2-Thm-1}
        \SNRA_{32}(1+\SNRA_{21})^2 &\le& \beta_1\bigg(\SNRA_{31}\big(1-\beta_2\big)-2\SNRA_{32}\SNRA_{11}\bigg)\\
        \label{eq:WI_con2-Thm-2}
        \SNRA_{21}(1+\SNRA_{32})^2 &\le& \beta_2 \cdot \SNRA_{22}\big(1-\beta_1\big),
    \end{eqnarray}
    \end{subequations}
   then, the sum-rate capacity of the channel is given by
   \begin{equation}
        \label{eq:sumcapacityPhase}
        \sup_{(R_1,R_2)\in\mathcal{C}(\underline{\footnotesize\SNR})}\big(R_1+R_2\big)=\log\bigg(1+\frac{\SNRA_{11}+\SNRA_{31}}{1+\SNRA_{21}}\bigg)+\log\bigg(1+\frac{\SNRA_{22}}{1+\SNRA_{32}}\bigg),
    \end{equation}
    and it is achieved by $X_k\sim\CN(0,1), k\in\{1,2,3\}$, mutually independent.
\end{theorem}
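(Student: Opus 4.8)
The claim splits into an achievability (lower) bound and a converse (upper) bound that must coincide under \eqref{eq:Relay_con2Phase}--\eqref{eq:WI_con2-Thm}. For achievability I would use block-Markov decode-and-forward (DF) at the relay together with treating interference as noise at both receivers, with all inputs i.i.d.\ $\CN(0,1)$ and mutually independent. Since $Y_3=H_{13}X_1+Z_3$ is an interference-free link of signal-to-noise ratio $\SNRA_{13}$, the relay reliably decodes $M_1$ as long as $R_1\le\log(1+\SNRA_{13})$; at Rx$_1$, treating $X_2$ as noise and using the relay's DF assistance, one supports $R_1\le\log\!\big(1+\tfrac{\SNRA_{11}+\SNRA_{31}}{1+\SNRA_{21}}\big)$, where the i.i.d.\ uniform phases cause the cross term between $H_{11}X_1$ and $H_{31}X_3$ to average to zero, so independent inputs are optimal and no Tx-CSI is needed. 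Condition \eqref{eq:Relay_con2Phase} makes the relay-decoding constraint inactive, so $R_1$ attains the first term of \eqref{eq:sumcapacityPhase}, while Tx$_2$ transmits to Rx$_2$ treating the relay signal $X_3$ as noise, achieving $R_2=\log\!\big(1+\tfrac{\SNRA_{22}}{1+\SNRA_{32}}\big)$, the second term.

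For the converse I would begin from Fano's inequality to obtain $n(R_1+R_2-\epsilon_n)\le I(M_1;Y_1^n,\tH_1^n)+I(M_2;Y_2^n,\tH_2^n)$ and then endow each receiver with carefully designed genie side information. The crucial new ingredient, relative to the relay-free Z-IC converse, is to hand Rx$_1$ the relay's observation $Y_3^n$ (and $\tH_3^n$): because $X_{3,i}$ is a deterministic function of $(Y_3^{i-1},H_{13}^{i-1})$, once $Y_3^n$ is revealed the relay signal $X_3^n$ is known, which disentangles the relay's help to Rx$_1$ from its interference to Rx$_2$ and reduces the Rx$_1$ bound to a combination of $h(Y_3^n|\tH)$, $h(Y_1^n|Y_3^n,\tH)$, and $h(H_{21}X_2^n+Z_1^n|\tH)$; condition \eqref{eq:Relay_con2Phase} is precisely what makes this relay-output genie informative enough for the resulting bound to be tight. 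In addition, I would give Rx$_1$ and Rx$_2$ complex-Normal genie signals (in general not circularly symmetric) that are noisy versions of their respective desired signals, with covariance and pseudo-covariance parameters governed by the scalars $\beta_1,\beta_2$, and designed so that the residual interference-bearing conditional mutual information vanishes by Lemma~\ref{lem:lemma3} (the genie-noise cross-covariances are chosen so that the condition $\E\{Z_1Z_2^*\}=\E\{|Z_2|^2\}$ of Lemma~\ref{lem:lemma3} holds), with Lemma~\ref{lem:lemma6} used to legitimately remove an auxiliary conditioning by sending a nuisance noise variance to infinity.

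Expanding the genie-aided mutual informations into differences of differential entropies, the surviving terms have the form ``$h(\text{desired}+\text{noise}_1)-h(\text{interference}+\text{noise}_2)$'', to which the Gaussian-extremality results apply: Lemmas~\ref{lem:lemma1}, \ref{lem:lemma2} and the matrix-constrained Lemma~\ref{lem:lemma8} force the optimal channel inputs to be complex Normal, after which I would be left with a finite-dimensional maximization over the joint input covariance of $(X_1,X_2,X_3)$ and the correlation $\CORR$ between $X_1$ and $X_3$ (cf.\ Remark~\ref{rem:IndiSignals}). Invoking the ergodic phase-fading structure once more — averaging over the i.i.d.\ uniform phases annihilates every cross-correlation contribution — one concludes that mutually independent, unit-power inputs are optimal, collapsing the bound to exactly the right-hand side of \eqref{eq:sumcapacityPhase}. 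The inequalities \eqref{eq:WI_con2-Thm-1}--\eqref{eq:WI_con2-Thm-2} enter here as the feasibility (positive-semidefiniteness) conditions on the designed genie-noise covariance matrices and as the noise-ordering hypotheses required by Lemmas~\ref{lem:lemma1} and \ref{lem:lemma8}; under them the converse meets the achievable sum-rate.

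I expect the main obstacle to be the genie design itself. The genie must simultaneously be strong enough that the upper bound is tight, structured/weak enough to be removable so the bound reduces to the stated closed form, and compatible with the Gaussian-extremality lemmas; the relay — whose transmitted signal is a noisy function of $X_1$ and appears at \emph{both} receivers, aiding Rx$_1$ while interfering with Rx$_2$ — is what makes constructing such a genie substantially more delicate than in the relay-free Z-IC, and is the reason a specialized family of genie signals parametrized by $\beta_1,\beta_2$ together with the extremality Lemmas~\ref{lem:lemma1}--\ref{lem:lemma8} is needed.
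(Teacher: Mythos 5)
Your overall architecture matches the paper's: DF at the relay plus treating interference as noise for achievability, and for the converse a genie that hands each receiver a noisy, interference-free copy of its \emph{desired} signal whose additive noise is correlated with the receiver's own channel noise, with the correlation magnitudes giving $\beta_1=|\CORRN_1|^2$, $\beta_2=|\CORRN_2|^2$; the ``smart genie'' vanishing condition enforced via Lemma~\ref{lem:lemma3}, Gaussian extremality via Lemmas~\ref{lem:lemma1}, \ref{lem:lemma2} and \ref{lem:lemma8}, and the uniform i.i.d.\ phases forcing mutually independent unit-power inputs are all exactly the paper's steps.

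There is, however, one concrete misstep in your converse. You propose to hand Rx$_1$ the relay observation $Y_3^n$ as the ``crucial new ingredient'' and assert that condition \eqref{eq:Relay_con2Phase} is what makes this relay-output genie tight. Neither is how the argument works, and the step as described would fail. Revealing $Y_3^n$ produces a bound on $R_1$ that contains $I(X_1^n;Y_3^n|\cdot)$-type terms scaling with $\SNRA_{13}$; since \eqref{eq:Relay_con2Phase} \emph{lower}-bounds $\SNRA_{13}$ by $\frac{\SNRA_{11}+\SNRA_{31}}{1+\SNRA_{21}}$, such a bound can only be looser than the claimed sum-capacity and cannot match the achievable rate. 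In the paper, \eqref{eq:Relay_con2Phase} plays no role whatsoever in the converse: it is used only on the achievability side, to guarantee that the relay's decoding constraint $R_1\le I(X_1;Y_3|X_3,\tH_3)=\log(1+\SNRA_{13})$ is inactive. The converse handles the relay's causality simply by enlarging the mutual information, $I(X_1^n;Y_1^n|\tH_1^n)\le I(X_1^n,X_3^n;Y_1^n,S_1^n|\tH_1^n)$, i.e., by treating $X_3^n$ as a fully cooperating input on the transmitter side (while retaining the structural fact that $(X_1^n,X_3^n)$ is independent of $X_2^n$); no relay-output side information is introduced. With that correction, the rest of your outline — the choice of $\eta_k,\CORRN_k$ so that $\eta_1\CORRN_1=1+\SNRA_{21}$ and $\eta_2\CORRN_2=1+\SNRA_{32}$ by Lemma~\ref{lem:lemma3}, and the feasibility of these choices being exactly \eqref{eq:WI_con2-Thm} — is the paper's proof.
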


\begin{remark}
\em Observe that the conditions in \eqref{eq:WI_con2-Thm} are satisfied if $\SNRA_{32}$ and $\SNRA_{21}$ are small compared to $\SNRA_{31}$ and $\SNRA_{22}$, respectively. 
As $\SNRA_{32}$ and $\SNRA_{21}$ correspond to the strengths of the interfering links, conditions \eqref{eq:WI_con2-Thm} correspond to the WI regime.
To make this point more explicit, note that $\SNRA_{32}(1+\SNRA_{21})^2 \ge \SNRA_{32}$, hence \eqref{eq:WI_con2-Thm-1} implies that
$\SNRA_{32} \le {\beta_1\cdot \SNRA_{31}\big(1-\beta_2\big)}$, and similarly \eqref{eq:WI_con2-Thm-2} implies that  $\SNRA_{21} \le \beta_2 \cdot \SNRA_{22}\big(1-\beta_1\big)$.
\end{remark}

\begin{remark}
\label{rem:con7}
\em         Note that condition \eqref{eq:Relay_con2Phase} corresponds to good reception at the relay, in the sense that decoding the message sent by Tx$_1$ at the relay does not
			constrain the information rate from Tx$_1$ to Rx$_1$.
			This condition facilitates the sum-rate optimality of DF, as the constraints on the achievable rates are now only due to
			the rate constraints for reliable decoding at the destinations.
\end{remark}

\begin{remark}
\label{rem:ergodic_compare}
\em{}
        Note that in the ergodic phase fading case,
		the magnitudes of the channel coefficients are constants while the phases of the channel coefficients vary i.i.d. over time and
		are mutually independent across the fading links.  Thus, in the ergodic phase fading model, the channel coefficients induce
		randomly varying phases upon the components of the received signal arriving at each receiver after traveling across the different links.
		Intuitively, having mutually independent and uniformly distributed i.i.d. phases does not allow achieving non-zero correlation between the
		components of the received signal, and consequently implies that there is no loss of optimality in transmitting uncorrelated codewords. In
		particular, if the optimal input distribution is complex Normal, then the absence of correlation between the codebooks implies that
		the optimal codebooks are generated independently of each other. Indeed, in the derivations in the manuscript, it is rigorously proved
		that the optimal channel inputs for the ergodic phase fading Z-ICR are generated according to mutually independent
		complex Normal random variables. The optimality of mutually
		independent channel inputs is one of the fundamental advantages of the communications scheme we use in this manuscript, since it means that
		there is no need for coordinated transmission to optimally benefit from the relay. As will be clarified later, this fact greatly simplifies
        both the achievability scheme as well as the practical incorporation of cooperative transmission in interference networks.
        In contrast, for the no-fading case (commonly referred to as the AWGN channel) both
		the magnitudes and the phases of the channel coefficients are constants. Consequently, in the no-fading channel the correlation between
		the channel inputs is maintained at the received signal components, and hence, the optimal codebooks may be correlated. This fact
		greatly complicates the optimal achievability scheme as well as makes the derivations for the upper bounds significantly more complicated.

\end{remark}

\begin{proof}
The proof of Thm. \ref{thm:WI} consists of the following three steps:
\begin{enumerate}
    \item We derive an upper bound on the sum-rate of the ergodic phase fading Z-ICR by letting each receiver observe an appropriate genie signal.
        In particular, we show that the upper bound is maximized by mutually independent, zero-mean circularly symmetric complex Normal channel inputs, i.i.d. in time3
    \item We characterize an achievable rate region for the Z-ICR by using codebooks generated according to a mutually independent circularly symmetric complex Normal distribution, i.i.d. in time, and by employing the DF scheme at the relay, together with treating the interfering signal as noise at each receiver.
    \item 
        Combining the conditions for the upper bound and for the lower bound we obtain the conditions for characterizing the sum-rate capacity of the ergodic phase fading Z-ICR in the WI regime, and explicitly state the corresponding expressions.\end{enumerate}

In the following subsections, we provide a detailed proof for the above steps: Step 1 is carried out in Section \ref{sec:SUMCAUB}, Step 2 is carried out in Section \ref{sec:achievableRegionIID}, and finally, Step 3 is detailed in Section \ref{sec:SUMCADER}.

\subsection{Step 1 : An Upper Bound on the Sum-Rate Capacity}
\label{sec:SUMCAUB}
The upper bound on the sum-rate capacity of the ergodic phase fading Z-ICR is summarized in the following theorem:
\begin{theorem}
    \label{thm:SRCA-UB}
    Consider the phase fading Z-ICR with only Rx-CSI, defined in Section \ref{sec:Model}. If there are two real scalars $\beta_1$ and $\beta_2$ which satisfy $0\le \beta_1, \beta_2 \le 1$, and
    \begin{subequations}
    \label{eq:WI_con2-Thm2}
    \begin{eqnarray}
        \SNRA_{32}(1+\SNRA_{21})^2 &\le& \beta_1\bigg(\SNRA_{31}\big(1-\beta_2\big)-2\SNRA_{32}\SNRA_{11}\bigg)\\
        \SNRA_{21}(1+\SNRA_{32})^2 &\le& \beta_2 \cdot \SNRA_{22}\big(1-\beta_1\big),
    \end{eqnarray}
    \end{subequations}
    then, the sum-rate capacity is upper bounded by
\begin{equation}
\label{eq:sumrate_OB2Th}
    \sup_{(R_1,R_2)\in\mathcal{C}(\underline{\footnotesize\SNR})}\big(R_1+R_2\big)\le I(X_{1},X_{3};Y_{1}|\tH_1) + I(X_{2};Y_{2}|\tH_2),
\end{equation}
where the mutual information expressions are evaluated with mutually independent, zero mean, circularly symmetric complex Normal channel inputs, distributed according to $X_k\sim\CN(0,1)$.
\end{theorem}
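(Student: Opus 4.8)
The plan is to prove Theorem~\ref{thm:SRCA-UB} by a genie-aided converse, in the spirit of the low-interference sum-capacity converses for the Gaussian interference channel, but adapted to two new features: the relay makes the channel input $X_3^n$ statistically dependent on $X_1^n$, and the ergodic phase fading forces the usual scalar Gaussian extremal inequalities to be replaced by the complex, vector- and matrix-constrained versions collected in Lemmas~\ref{lem:lemma1}--\ref{lem:lemma8}.

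First I would apply Fano's inequality to obtain $n(R_1+R_2-\epsilon_n)\le I(M_1;Y_1^n|\tH_1^n)+I(M_2;Y_2^n|\tH_2^n)$, and use the independence structure of Remark~\ref{rem:IndiSignals} (messages independent; the relay is fed only by $\Tgood$, so $X_3^n$ is independent of $X_2^n$) to reduce each message mutual information to a channel-input mutual information, i.e.\ $I(M_k;Y_k^n|\tH_k^n)=I(X_k^n;Y_k^n|\tH_k^n)$. Since simply revealing $X_3^n$ to Rx$_1$ yields a bound too loose to match the achievable rate, the genie must be a \emph{carefully corrupted} surrogate: Rx$_1$ is handed a side-information sequence $S_1^n$ built from a phase-faded, additively Gaussian-perturbed version of the relay signal $X_3^n$ (the only footprint user~1 leaves at Rx$_2$ being $H_{32}^nX_3^n$), together with whatever is needed for Rx$_1$ to track $X_3^n$; symmetrically, Rx$_2$ is handed $S_2^n$, a corrupted version of $X_2^n$ (the interference it induces at Rx$_1$ being $H_{21}^nX_2^n$). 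The genie-noise covariances carry two free scalars $\beta_1,\beta_2\in[0,1]$, chosen so that Lemma~\ref{lem:lemma3} applies with respect to the noises actually present at the other receiver --- i.e.\ the genie is ``smart'', its extra cross terms contributing zero conditional mutual information --- so that ultimately $R_1$ is controlled by an $I(X_1,X_3;Y_1|\tH_1)$-type expression and $R_2$ by an $I(X_2;Y_2|\tH_2)$-type expression.

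Next I would single-letterize and Gaussianize. Each surviving piece is a difference of conditional differential entropies of ``signal-plus-weak-noise minus signal-plus-strong-noise'' type, to be handled with Lemmas~\ref{lem:lemma1}, \ref{lem:lemma2} and \ref{lem:lemma22} in the scalar conditional steps and with the extremal inequality of Lemma~\ref{lem:lemma8} for the matrix-constrained complex vector step that the multi-letter bound produces. The phase fading is what makes this step close: after averaging the relevant differential entropies over the i.i.d.\ uniform phases, Jensen's inequality against the concave $\log$ annihilates every coherent-combining cross term (the rigorous content of Remark~\ref{rem:ergodic_compare}), so the correlation $\CORR_i$ between $X_{1,i}$ and $X_{3,i}$ can only loosen the bound and the single-letter maximizer must have \emph{mutually independent} components; assembling the per-symbol constraints $P_{k,i}\le1$ into a positive-semidefinite constraint $\cov(\bar{X}^{4n})\preceq\mathds{S}$, Lemma~\ref{lem:lemma8} then certifies optimality of a zero-mean circularly symmetric complex Normal input, which convexity pins to $X_k\sim\CN(0,1)$, i.i.d.\ in time. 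Finally, conditions~\eqref{eq:WI_con2-Thm2} enter as exactly the \emph{feasibility} constraints of this genie design: the ``smart-genie'' noise covariances are simultaneously valid (positive semidefinite) at both receivers for some $\beta_1,\beta_2\in[0,1]$ precisely when $\SNRA_{21},\SNRA_{32}$ are small enough relative to $\SNRA_{22},\SNRA_{31}$, the squared factors $(1+\SNRA_{21})^2$ and $(1+\SNRA_{32})^2$ being the interference-plus-noise variances that set the scale of those covariances.

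I expect the main obstacle to be the Gaussianization step. Unlike the classical setting, the objective is a difference of differential entropies (neither convex nor concave), the relay entangles $X_1^n$ with $X_3^n$, and the problem is complex-valued with a matrix rather than scalar-trace power constraint; the ordinary worst-additive-noise lemma does not close the argument, and one genuinely needs the strengthened complex/vector/matrix extremal inequalities (Lemmas~\ref{lem:lemma1} and \ref{lem:lemma8}) together with the phase-averaging argument to force simultaneously the mutual independence and the circularly symmetric Gaussian form of the optimal inputs. A secondary delicate point is designing the genie noise so that the ``smart-genie'' property holds at both receivers at once --- the tension between those two requirements, mediated by $\beta_1$ and $\beta_2$, is what yields the coupled conditions~\eqref{eq:WI_con2-Thm2}.
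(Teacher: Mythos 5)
Your overall architecture (Fano, a genie at each receiver, telescoping the resulting entropy differences, and then invoking Lemmas~\ref{lem:lemma1}, \ref{lem:lemma2}, \ref{lem:lemma22}, \ref{lem:lemma3} and \ref{lem:lemma8} plus phase-averaging to force i.i.d., mutually independent, circularly symmetric Gaussian maximizers) is the right machinery and matches the paper. However, the genie you propose is the wrong one, and this is a genuine gap rather than a cosmetic difference. You hand Rx$_1$ a corrupted version of the relay's footprint at Rx$_2$ (i.e.\ of $H_{32}^nX_3^n$) and Rx$_2$ a corrupted version of Tx$_2$'s footprint at Rx$_1$ (i.e.\ of $H_{21}^nX_2^n$). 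That is the cross-interference genie, which this paper uses only for the \emph{GDoF} converse in Section~\ref{sec:FULLGDoF}; after telescoping it leaves $h(Y_1^n|S_1^n,\tH_1^n)-h(Z_1^n)+h(Y_2^n|S_2^n,\tH_2^n)-h(Z_2^n)$ and evaluates to $\log\big(1+\SNRA_{21}+\cdot\big)+\log\big(1+\SNRA_{32}+\cdot\big)$, with the interference SNRs appearing \emph{outside} the denominators. At finite SNR this is strictly larger than $I(X_{1G},X_{3G};Y_{1G}|\tH_1)+I(X_{2G};Y_{2G}|\tH_2)$, so it cannot establish \eqref{eq:sumrate_OB2Th}. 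It is also incompatible with the smart-genie step you yourself invoke: Lemma~\ref{lem:lemma3} certifies $I(\cdot;S|Y)=0$ only when $S$ and $Y$ share the \emph{same} signal component $c_1X_1+c_2X_2$ and differ only in correlated noises, which fails when $S_1$ carries $H_{32}X_3$ while $Y_1$ carries $H_{11}X_1+H_{31}X_3$.

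The genie that closes the argument gives each receiver a noisy copy of its \emph{own} desired signal: $S_{1,i}=H_{11,i}X_{1,i}+H_{31,i}X_{3,i}+\eta_1W_{1,i}$ to Rx$_1$ and $S_{2,i}=H_{22,i}X_{2,i}+\eta_2W_{2,i}$ to Rx$_2$, with $(W_{k,i},Z_{k,i})$ jointly circularly symmetric with correlation $\CORRN_k$. The smart-genie conditions from Lemma~\ref{lem:lemma3} then read $\eta_1\CORRN_1=1+\SNRA_{21}$ and $\eta_2\CORRN_2=1+\SNRA_{32}$ (the ``other-user'' signal plus noise, $H_{21}X_{2G}+Z_1$ resp.\ $H_{32}X_{3G}+Z_2$, playing the role of the effective noise in the lemma), while the worst-noise/extremal-inequality steps require $\SNRA_{32}|\eta_1|^2\le\SNRA_{31}(1-|\CORRN_2|^2)-2\SNRA_{32}\SNRA_{11}$ and $\SNRA_{21}|\eta_2|^2\le\SNRA_{22}(1-|\CORRN_1|^2)$. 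Eliminating $\eta_1,\eta_2$ via the smart-genie equalities and setting $\beta_k=|\CORRN_k|^2$ is precisely what produces the squared factors $(1+\SNRA_{21})^2$ and $(1+\SNRA_{32})^2$ in \eqref{eq:WI_con2-Thm2}; they are not, as you suggest, the covariance scale of the genie noise itself. With your genie the $\beta_1,\beta_2$ bookkeeping has nothing to attach to, so the stated conditions would not emerge and the bound would not reduce to the claimed form.
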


\begin{proof}
We use a genie to provide additional information to the receivers. Let $W^n_{1}$ and $W^n_{2}$ be two arbitrarily correlated, circularly symmetric, complex Normal random vectors, each with i.i.d. elements distributed $\CN(0,1)$, such that $f_{W^n_{1},W^n_{2}}(w^n_{1},w^n_{2})=\prod_{i=1}^{n}f_{W_{1},W_{2}}(w_{1,i},w_{2,i})$. In addition, $(W^n_{1},W^n_{2})$ are independent of $(X_1^n, X_2^n, X_3^n)$. For $i\in\{1,2,...,n\}$, and we further let $W_{k,i}$ and $Z_{k,i}$, $k=1,2$ be jointly circularly symmetric with correlation matrix\footnote{Joint circular symmetry of $W_{k,i}$ and $Z_{k,i}$ implies that $\E\{\Real\{W_{k,i}\}\Img\{Z_{k,i}\}\} = -\E\{\Img\{W_{k,i}\}\Real\{Z_{k,i}\}\}$ and $\E\{\Real\{W_{k,i}\}\Real\{Z_{k,i}\}\} = \E\{\Img\{W_{k,i}\}\Img\{Z_{k,i}\}\}$.}:
    \begin{eqnarray*}
        \cov(W_{k,i},Z_{k,i})=\E\left\{\left[\begin{array}{c}
                        W_{k,i}\\
                        Z_{k,i}\end{array} \right]\big[W_{k,i}^*\quad Z_{k,i}^*\big]\right\}=\left[\begin{array}{cc}
                        1 & \CORRN_k\\
                        \CORRN_k^* & 1\end{array} \right], \qquad k\in\{1,2\}.
    \end{eqnarray*}
Note that since $\var(W_k)=\var(Z_k)=1, k=1,2$, then $|\CORRN_k|\le\sqrt{\var(W_k)\cdot\var(Z_k)}=1$. Define the signals
\begin{subequations}
\label{eq:Genie1RV1}
\begin{eqnarray}
\label{eq:S1def}
    S_{1,i}&\triangleq& H_{11,i} X_{1,i}+H_{31,i} X_{3,i}+ \eta_1W_{1,i}\\
\label{eq:S2def}
    S_{2,i}&\triangleq& H_{22,i} X_{2,i}+ \eta_2W_{2,i},
\end{eqnarray}
\end{subequations}
$i\in\{1,2,...,n\}$, where $\eta_1$ and $\eta_2$ are two complex-valued constants determined by the genie. Assume that at time $i$, the genie provides the signals $S_{1,i}$ and $S_{2,i}$
to Rx$_1$ and Rx$_2$, respectively. For an achievable rate pair $(R_1,R_2)$, let $(\Xvec_1,\Xvec_2,\Xvec_3)$ be random vectors of length $n$ representing the statistics of the achievable codebook, and define
\begin{equation}
    \label{eq:eq28}
    q_{k,i}\triangleq \cov\big(X_{k,i}\big) \equiv \var(X_{k,i}), k\in\{1,2,3\}, \qquad \Qmat^{(i)}_{\Xvec}\triangleq\cov(X_{1,i},X_{2,i},X_{3,i}), \qquad i\in\{1,2,...,n\},
\end{equation}
and
\[
        P_k = \frac{1}{n}\sum_{i=1}^n q_{k,i}, \qquad k\in\{1,2,3\}.
\]
It is emphasized that at this point, properness of $\big(\Xvec_1, \Xvec_2, \Xvec_3\big)$ is not assumed.
Next, recall that in Comment \ref{rem:IndiSignals} we concluded that $\Xvec_2$ is independent of $(\Xvec_1,\Xvec_3)$, while $\Xvec_1$ and $\Xvec_3$ may be statistically dependent. It follows that the $n$-letter input distribution for $(\Xvec_1,\Xvec_2,\Xvec_3)$ must satisfy $f_{\Xvec_1,\Xvec_2,\Xvec_3}(\xvec_1,\xvec_2,\xvec_3)=f_{\Xvec_1,\Xvec_3}(\xvec_1,\xvec_3)\cdot f_{\Xvec_2}(\xvec_2)$.
From this observation, we conclude that for $i\in\{1,2,...,n\}$
\begin{subequations}
  \label{eq:QG}
\begin{eqnarray}
\label{eqn:QG_1}
    \Qmat^{(i)}_{\Xvec} & = & \left[\begin{array}{ccc}
                        q_{1,i}\hspace{2mm} & 0\hspace{2mm} &\CORR_i\sqrt{q_{1,i}q_{3,i}}\\
                        0\hspace{2mm}   & q_{2,i}\hspace{2mm} & 0\\
                        \CORR_i^*\sqrt{q_{1,i}q_{3,i}\hspace{2mm}} & 0\hspace{2mm} & q_{3,i} \end{array} \right], \\
    \Qmat_G & \triangleq & \frac{1}{n}\sum_{i=1}^{n}\Qmat^{(i)}_{\Xvec} \nonumber\\
\label{eqn:QG_2}
     & \equiv & \left[\begin{array}{ccc}
                        P_1\hspace{2mm} & 0\hspace{2mm} &\CORR\sqrt{P_{1}P_{3}}\\
                        0\hspace{2mm}   & P_{2}\hspace{2mm} & 0\\
                        \CORR^*\sqrt{P_{1}P_{3}}\hspace{2mm} & 0\hspace{2mm} & P_{3} \end{array} \right],
\end{eqnarray}
\end{subequations}
where $\big|\CORR_i\big| \le 1$  in \eqref{eqn:QG_1}.
Note that by the Cauchy-Schwartz inequality \cite{Steele}
\begin{eqnarray*}
    \left|\frac{1}{n}\sum_{i=1}^n \CORR_i \sqrt{q_{1,i}q_{3,i}}\right|  & \le & \sqrt{\frac{1}{n}\sum_{i=1}^n \left|\CORR_i \sqrt{q_{1,i}}\right|^2}\sqrt{\frac{1}{n}\sum_{i=1}^n \left| \sqrt{q_{3,i}}\right|^2}\\
     & \le &  \sqrt{\frac{1}{n}\sum_{i=1}^n  q_{1,i}}\sqrt{\frac{1}{n}\sum_{i=1}^n q_{3,i}}\\
     & = & \sqrt{P_1P_3},
\end{eqnarray*}
thus $|\CORR|\le 1$ in \eqref{eqn:QG_2}.
Lastly, define the random vector $(X_{1G},X_{2G},X_{3G})^T\sim\CN({\bf 0},\Qmat_G)$.

Let $M_1$ be the transmitted message at Tx$_1$,  $\hat{M}_1$ denote the decoded message at Rx$_1$, and $P_{e,1}^{(n)}$, denote the probability of error in
 decoding $M_1$ at Rx$_1$. The rate $R_1$ can be upper bounded as follows:
\begin{eqnarray*}
    nR_1&=&H(M_1)\\
    &=&H(M_1|Y_1^n,\tH_1^n)+I(M_1;Y_1^n,\tH_1^n)\\
    &\stackrel{(a)}{\le}& 1 + P_{e,1}^{(n)}nR_1+I(M_1;Y_1^n,\tH_1^n)\\
    &\stackrel{(b)}{\le}& 1 + P_{e,1}^{(n)}nR_1+I(X_1^n;Y_1^n,\tH_1^n)\\
    &\stackrel{(c)}{=}& 1 + P_{e,1}^{(n)}nR_1+I(X_1^n;Y_1^n|\tH_1^n)\\
    &\stackrel{(d)}{\le}& 1 + P_{e,1}^{(n)}nR_1+I(X_1^n,X_3^n;Y_1^n,S_1^n|\tH_1^n).
\end{eqnarray*}
Here, (a) follows from Fano's inequality \cite[Thm. 2.10.1]{cover-thomas:it-book}; (b) follows from the data processing inequality \cite[Thm. 2.8.1]{cover-thomas:it-book}, since $M_1-X_1^n-(Y_1^n,\tH_1^n)$ form a Markov chain; (c) follows since the transmitted symbols $X_{1}^n$ are independent of the channel coefficients $\tH_1^n$; and (d) follows from the chain rule of mutual information and since mutual information is non-negative. Next, define $n\epsilon_{1n}\triangleq1 + P_{e,1}^{(n)}nR_1$, and observe that since
$(R_1,R_2)$ is achievable, then $P_{e,1}^{(n)}\rightarrow 0$ for $n\rightarrow\infty$, and therefore $\epsilon_{1n}\rightarrow 0$ as $n\rightarrow\infty$. Hence, we obtain
\begin{eqnarray}
    \hspace{-0.5 cm} n(R_1-\epsilon_{1n})\!\!&\le&\!I(X_1^n,X_3^n;Y_1^n,S_1^n|\tH_1^n)\nonumber\\
    &=&\!I(X_1^n,X_3^n;S_1^n|\tH_1^n)+I(X_1^n,X_3^n;Y_1^n|S_1^n,\tH_1^n)\nonumber\\
    &=&\!h(S_1^n|\tH_1^n)-h(S_1^n|X_1^n,X_3^n,\tH_1^n)+h(Y_1^n|S_1^n,\tH_1^n)-h(Y_1^n|X_1^n,X_3^n,S_1^n,\tH_1^n)\nonumber\\
    &\stackrel{(a)}{\le}&\!h(S_1^n|\tH_1^n)-h(S_1^n|X_1^n,X_3^n,\tH_1^n)+n\cdot h(Y_{1G}|S_{1G},\tH_1)-h(Y_1^n|X_1^n,X_3^n,S_1^n,\tH_1^n)\nonumber\\
    &\stackrel{(b)}{=}&\!h(S_1^n|\tH_1^n)-n\cdot h(S_{1G}|X_{1G},X_{3G},\tH_1)+n\cdot h(Y_{1G}|S_{1G},\tH_1)-h(Y_1^n|X_1^n,X_3^n,S_1^n,\tH_1^n),\label{eq:R1_OB1}
\end{eqnarray}
where $(X_{1G},X_{2G},X_{3G})^T\sim\CN({\bf 0},\Qmat_G)$ with $\Qmat_G$ defined in \eqref{eqn:QG_2}, and $S_{1G}$ is obtained from \eqref{eq:S1def} by replacing $X_{1}$ and $X_{3}$ with $X_{1G}$ and $X_{3G}$, respectively. In the above transitions, (a) follows from Lemma \ref{lem:lemma22} using the assignment $\Xvec=(X_1,X_2,X_3)^T$, $\Hvec_{Y}=(H_{11},H_{21},H_{31})^T$, $\Hvec_{S}=(H_{11},0,H_{31})^T$, and with $\Qmat_{\Xvec_i} \equiv \Qmat^{(i)}_{\Xvec}$ defined in \eqref{eqn:QG_1}, and (b) follows from the transitions detailed below:
\begin{eqnarray*}
    h\big(S_1^n|X_1^n,X_3^n,\tH_1^n\big)
    &=&h\big(\eta_1\cdot(W_1^n)|X_1^n,X_3^n,\tH_1^n\big)\\
    &\stackrel{(c)}{=}&h\big(\eta_1\cdot(W_1^n)\big)\\
    &\stackrel{(d)}{=}&n\cdot h\big(\eta_1W_1\big)\\
    &\stackrel{(e)}{=}&n\cdot h\big(\eta_1W_1|X_{1G},X_{3G},\tH_1\big)\\
    &=&n\cdot h\big(S_{1G}|X_{1G},X_{3G},\tH_1\big),
\end{eqnarray*}
where step (c) follows since $W_1^n$ is independent of $(X_1^n,X_3^n,\tH_1^n)$; step (d) follows since $W_1^n$ has i.i.d. entries, and step (e) is valid for any joint distribution on $(X_{1G},X_{3G})$ independent of $W_1$, and thus, in this step we let $(X_{1G},X_{3G})$ be distributed according to the joint distribution $\CN({\bf 0},\Qmat_{G13})$,
where
\begin{equation*}
\Qmat_{G13}\triangleq\left[\begin{array}{cc}
                        P_1\hspace{2mm} &\CORR\sqrt{P_{1}P_{3}}\\
                        \CORR^*\sqrt{P_{1}P_{3}}\hspace{2mm}  & P_{3} \end{array} \right].
\end{equation*}
 Applying similar steps and identical arguments for $R_2$, we obtain the upper bound:
    \begin{eqnarray}
    \hspace{-0.5 cm}n(R_2-\epsilon_{2n}) &\le& I(X_2^n;Y_2^n,S_2^n|\tH_2^n)\nonumber\\
    &=&I(X_2^n;S_2^n|\tH_2^n)+I(X_2^n;Y_2^n|S_2^n,\tH_2^n)\nonumber\\
    &=&h(S_2^n|\tH_2^n)-h(S_2^n|X_2^n,\tH_2^n)+h(Y_2^n|S_2^n,\tH_2^n)-h(Y_2^n|X_2^n,S_2^n,\tH_2^n)\nonumber\\
    &\le&h(S_2^n|\tH_2^n)-h(S_2^n|X_2^n,\tH_2^n)+n\cdot h(Y_{2G}|S_{2G},\tH_2)-h(Y_2^n|X_2^n,S_2^n,\tH_2^n)\nonumber\\
    &=&h(S_2^n|\tH_2^n)-n\cdot\!h(S_{2G}|X_{2G},\tH_2)+n\cdot h(Y_{2G}|S_{2G},\tH_2)-h(Y_2^n|X_2^n,S_2^n,\tH_2^n).\label{eq:R1_OB2}
\end{eqnarray}
Since the maximizing complex Normal distributions in \eqref{eq:R1_OB1} and \eqref{eq:R1_OB2} are identical and equal to $(X_{1G},X_{2G},X_{3G})^T\sim\CN(\mathbf{0},\Qmat_G)$, then \eqref{eq:R1_OB1} and \eqref{eq:R1_OB2} can be combined into  a single bound on the sum-rate:
\begin{eqnarray}
     &&n(R_1+R_2-\epsilon_{1n}-\epsilon_{2n})\le\nonumber\\
     &&\qquad n\cdot\bigg(h(Y_{1G}|S_{1G},\tH_1)-h(S_{1G}|X_{1G},X_{3G},\tH_1)+h(Y_{2G}|S_{2G},\tH_2)-h(S_{2G}|X_{2G},\tH_2)\bigg)\nonumber\\
     &&\qquad\qquad\qquad\qquad +\bigg(h(S_1^n|\tH_1^n)-h(Y_1^n|X_1^n,X_3^n,S_1^n,\tH_1^n) + h(S_2^n|\tH_2^n)-h(Y_2^n|X_2^n,S_2^n,\tH_2^n)\bigg).
    \label{eq:sumrate_OB1}
\end{eqnarray}

In the following proposition, we identify the maximizing distribution for the first brackets in the right-hand side of \eqref{eq:sumrate_OB1}:
\begin{proposition}
    \label{prop:proposition1}
    The expression in the first brackets in the right-hand side of \eqref{eq:sumrate_OB1} is maximized with mutually independent circularly symmetric complex Normal channel inputs distributed according to $X_k\sim\CN(0,1), k\in\{1,2,3\}$.
\end{proposition}
\begin{proof}
    The proof is provided in Appendix \ref{app:appA}.
\end{proof}

Next, we show that the expression in the second brackets in the right-hand side of \eqref{eq:sumrate_OB1} is also maximized by mutually independent and i.i.d. in time channel inputs, distributed according to $X_{k,i}\sim\CN(0,1), k\in\{1,2,3\}, i\in\{1,2,...,n\}$. Assume that there exists a pair of complex scalars $\CORRN_1$ and $\eta_2$ s.t. $|\CORRN_1|\le 1$ and  $\SNRA_{21}|\eta_2|^2 \le \SNRA_{22}(1-|\CORRN_1|^2)$, and let $V^n_1$ be an $n$-dimensional random vector with i.i.d. elements distributed according to $V_{1,i}\sim\CN(0,1-|\CORRN_1|^2), i\in\{1,2,...,n\}$. Additionally, let $\Hmat_{h_{lk}}^{(n)}$ be an $n\times n$ diagonal matrix s.t. $\big[\Hmat_{h_{lk}}^{(n)}\big]_{i,i}=h_{lk,i}$. Then
\begin{eqnarray}
    &&\hspace{-2 cm}h(S_2^n|\tH_2^n)-h(Y_1^n|X_1^n,X_3^n,S_1^n,\tH_1^n)\nonumber\\
    &=&   \E_{\tH_1^n,\tH_2^n}\Big\{h\big(\Hmat_{h_{22}}^{(n)}X^n_2+\eta_2W^n_2|\tH_2^n=\th_2^n\big)-h\big(\Hmat_{h_{21}}^{(n)}X^n_{2}+Z^n_{1}|W^n_{1},\tH_1^n=\th_1^n\big)\Big\}\nonumber\\
    &\stackrel{(a)}{=}&   \E_{\tH_1^n,\tH_2^n}\Big\{h\big(\Hmat_{h_{22}}^{(n)}X^n_2+\eta_2W^n_2|\tH_2^n=\th_2^n\big)-h\big(\Hmat
    _{h_{21}}^{(n)}X^n_{2}+V^n_{1}|\tH_1^n=\th_1^n\big)\Big\}\nonumber\\
    &\stackrel{(b)}{\le}& \E_{\tH_1,\tH_2}\Big\{n\cdot h\big(h_{22}X_{2G}+\eta_2W_2|\tH_2=\th_2\big)-n\cdot h\big(h_{21}X_{2G}+V_{1}|\tH_1=\th_1\big)\Big\}\nonumber\\
    &\stackrel{(c)}{\le}& n\cdot\bigg(h(H_{22}X_{2G}+\eta_2W_2|\tH_2)-h(H_{21}X_{2G}+V_{1}|\tH_1)\bigg)\bigg|_{\substack{P_1=P_2=P_3=1\\\CORR=0}},\label{eq:Leg1ofI}
\end{eqnarray}
where (a) follows from Lemma \ref{lem:lemma2}; for step (b) we first use \cite[Eq. (13)]{Massey:93}\footnote{For a complex random vector $\Xvec$ and any complex matrix $\Amat$ it holds that $h\big(\Amat\cdot\Xvec\big)= h\big(\Xvec\big)+2\cdot\log\big|\det(\Amat)\big|.$} and obtain that since $\th_1^n$ and $\th_2^n$ are given, then we can write
\begin{eqnarray*}
    h\big(\Hmat_{h_{22}}^{(n)}X^n_2+\eta_2W^n_2|\tH_2^n=\th_2^n\big)&=&h\left(X^n_{2}+\Big(\Hmat_{h_{22}}^{(n)}\Big)^{-1}\eta_2W^n_2\Big|\tH_2^n=\th_2^n\right)+\log\Big(\big(\SNRA_{22}\big)^{n}\Big)\\
    h\big(\Hmat_{h_{21}}^{(n)}X^n_2+V^n_{1}|\tH_1^n=\th_1^n\big)&=&h\left(X^n_{2}+\Big(\Hmat_{h_{21}}^{(n)}\Big)^{-1}V^n_{1}\Big|\tH_1^n=\th_1^n\right)+\log\Big(\big(\SNRA_{21}\big)^{n}\Big).
\end{eqnarray*}
Next, we note that since the magnitudes of channel coefficients for each link are equal, then the noise vectors $\Big(\Hmat_{h_{22}}^{(n)}\Big)^{-1}\eta_2W^n_2$ and $\Big(\Hmat_{h_{21}}^{(n)}\Big)^{-1}V^n_{1}$, each has i.i.d. elements. Step (b) now follows from Lemma \ref{lem:lemma1} which states that if $\SNRA_{21}|\eta_2|^2 \le \SNRA_{22}\cdot(1-|\CORRN_1|^2)$, then, subject to the trace constraint $\tr\big(\cov(X_2^n)\big)\le \sum_{i=1}^n q_{2,i}\equiv nP_2$, we have that $h\left(X_{2}^n+\Big({\Hmat_{h_{22}}^{(n)}}\Big)^{-1}\eta_2(W_2)^n\Big|\tH_2^n=\th_2^n\right)- h\left(X_{2}^n+\Big({\Hmat_{h_{21}}^{(n)}}\Big)^{-1}V_{1}^n\Big|\tH_1^n=\th_1^n\right)$ is maximized by $X_2^n$ distributed according to a circularly symmetric complex Normal
distribution, with i.i.d. elements, each distributed according to $X_{2G}\sim\CN(0,P_2)$. To prove Step (c) recall that $0\le P_1,P_2,P_3\le1$; Step (c) then follows since $\Big(h(H_{22}X_{2G}+\eta_2W_2|\tH_2)-h(H_{21}X_{2G}+V_{1}|\tH_1)\Big)$ does not depend on $(\CORR,P_1,P_3)$ and thus, we can set $\CORR=0, P_1=P_3=1$. Additionally, if $\SNRA_{21}|\eta_2|^2 \le \SNRA_{22}(1-|\CORRN_1|^2)$, then the derivative of the expression in step (b) with respect to $P_2$ is non-negative\footnote{
The derivative is $\frac{\partial}{\partial P_2} \log\left( \frac{{\scriptsize \SNRA}_{22}P_2+|\eta_2|^2}{{\scriptsize \SNRA}_{21}P_2 + (1-|\CORRN_2|^2)} \right)
   =   \frac{{\scriptsize \SNRA}_{21}P_2 + (1-|\CORRN_2|^2)}{ {\scriptsize \SNRA}_{22}P_2+|\eta_2|^2} \cdot \frac{    {\scriptsize \SNRA}_{22}\left({\scriptsize \SNRA}_{21}P_2 + (1-|\CORRN_2|^2)\right)    - {\scriptsize \SNRA}_{21}\left({\scriptsize \SNRA}_{22}P_2+|\eta_2|^2\right)}{\left({\scriptsize \SNRA}_{21}P_2 + (1-|\CORRN_2|^2)\right)^2} =
    \frac{    {\scriptsize \SNRA}_{22}(1-|\CORRN_2|^2)    - {\scriptsize \SNRA}_{21}|\eta_2|^2}{\left({\scriptsize \SNRA}_{22}P_2+|\eta_2|^2\right)\left({\scriptsize \SNRA}_{21}P_2 + (1-|\CORRN_2|^2)\right)} $.
},
and thus, this expression is a non-decreasing function of $P_2$, from which we conclude that it is maximized with $P_2=1$.

Next, assume that there exists a pair of complex scalars $\CORRN_2$ and $\eta_1$ s.t.
\begin{subequations}
\label{eqn:derivation_iii inputs_part9_0}
\begin{equation}
\label{eqn:derivation_iii inputs_part9_1}
|\CORRN_2|\le 1
\end{equation}
and
\begin{equation}
\label{eqn:derivation_iii inputs_part9_2}
 \SNRA_{32}|\eta_1|^2 \le \SNRA_{31}\big(1-|\CORRN_2|^2\big)-2\SNRA_{32}\SNRA_{11},
\end{equation}
\end{subequations}
and let $V^n_2$ be an $n$-dimensional random vector with i.i.d. elements, each distributed according to $V_{2,i}\sim\CN(0,1-|\CORRN_2|^2), i\in\{1,2,...,n\}$. It now follows that
\begin{eqnarray}
    &&\hspace{-2cm}h(S_1^n|\tH_1^n)-h(Y_2^n|X_2^n,S_2^n,\tH_2^n)\nonumber\\
    &\stackrel{(a)}{=}& \E_{\tH_1^n,\tH_2^n}\Big\{ h\big(\Hmat_{h_{11}}^{(n)}X^n_1+\Hmat_{h_{31}}^{(n)}X^n_3+\eta_1W^n_{1}|\tH_1^n=\th_1^n\big)
                -h\big(\Hmat_{h_{32}}^{(n)}X^n_3+Z^n_{2}|W^n_{2},\tH_2^n=\th_2^n\big)\Big\}\nonumber\\
    &\stackrel{(b)}{=}& \E_{\tH_1^n,\tH_2^n}\Big\{ h\big(\Hmat_{h_{11}}^{(n)}X^n_1+\Hmat_{h_{31}}^{(n)}X^n_3+\eta_1W^n_{1}|\tH_1^n=\th_1^n\big)
                -h\big(\Hmat_{h_{32}}^{(n)}X^n_3+V^n_{2}|\tH_2^n=\th_2^n\big)\Big\}\nonumber\\
    &\stackrel{(c)}{\le}& \E_{\tH_1^n,\tH_2^n}\Big\{ h\big(\Hmat_{h_{11}}^{(n)}X^n_{1\bar{G}}+\Hmat_{h_{31}}^{(n)}X^n_{3\bar{G}}+\eta_1W^n_{1}|\tH_1^n=\th_1^n\big)
                -h\big(\Hmat_{h_{32}}^{(n)}X^n_{3\bar{G}}+V^n_{2}|\tH_2^n=\th_2^n\big)\Big\}\nonumber\\
    &\stackrel{(d)}{\le}& n\cdot\bigg( h(H_{11}X_{1G}+H_{31}X_{3G}+\eta_1W_{1}|\tH_1)-h(H_{32}X_{3G}+V_{2}|\tH_2)\bigg)\bigg|_{\substack{P_1=P_2=P_3=1\\\CORR=0}}.\label{eq:Leg33ofI}
\end{eqnarray}
In the above transitions (a) follows from the fact that the relay receives transmissions only from Tx$_1$, which makes $(X^n_1,X^n_3)$ necessarily independent of $X^n_2$, and (b) follows from Lemma \ref{lem:lemma2}.
For (c) we apply Lemma \ref{lem:lemma8} by first setting
        \begin{equation}
            \label{eq:eq13}
            X^{2n}=\Big((X^n_1)^T,(X^n_3)^T\Big)^T,\;\;\; \tilde{\Vmat}_{1}^H=\Big(\Hmat_{h_{11}}^{(n)},\Hmat_{h_{31}}^{(n)}\Big), \;\;\; \tilde{\Vmat}_{2}^H=\Big(\mathds{O}^{(n)},\Hmat_{h_{32}}^{(n)}\Big), \;\;\; \tilde{Z}_1^n\triangleq \eta_1W^n_{1},\;\;\;  \tilde{Z}_2^n\triangleq V^n_{2},
        \end{equation}
        where $\mathds{O}^{(n)}$ is an $n\times n$ matrix in which all entries are equal to zero.
        To determine the matrix $\mathds{S}$ for the application of Lemma \ref{lem:lemma8} we consider
         the random vectors $(\Xvec_1,{\Xvec}_2,{\Xvec}_3)$ {\em corresponding to the achievable code},
        and for $k\in\{1,3\}$, we let $\Xvec_{kR}\triangleq\Real\{\Xvec_{k}\}$ and $\Xvec_{kI}\triangleq\Img\{\Xvec_{k}\}$ be two $n\times 1$ real random vectors,  and $\bar{\Xvec}_k\triangleq(\Xvec_{kR}^T,\Xvec_{kI}^T)^T$ be an $2n \times 1$ real random vector, where $k\in\{1,3\}$. Lastly, we define the random vectors $\bar{\Xvec}_R\triangleq \big(\Xvec_{1R}^T,\Xvec_{3R}^T\big)^T$, $\bar{\Xvec}_I\triangleq \big(\Xvec_{1I}^T,\Xvec_{3I}^T\big)^T$, and $\bar{\Xvec} \triangleq \left((\bar{\Xvec}_R)^T, (\bar{\Xvec}_I)^T\right)^T$, all corresponding to the achievable code.  The matrix $\mathds{S}$ for the application of Lemma \ref{lem:lemma8} is determined via
        \begin{equation*}
          \mathds{S}\triangleq \cov(\bar{\Xvec})=\left[\begin{array}{cc}
                            \Qmat_{RR}    & \Qmat_{RI}\\
                            \Qmat_{RI}^T & \Qmat_{II}\end{array} \right], \qquad \Qmat_{RR}= \cov\big(\bar{\Xvec}_R\big), \qquad \Qmat_{II}= \cov\big(\bar{\Xvec}_I\big), \qquad \Qmat_{RI}=\cov\big(\bar{\Xvec}_R,\bar{\Xvec}_I\big).
        \end{equation*}
       Lastly we note that $\tilde{\Vmat}_{1}^H\cdot\tilde{\Vmat}_{1}=(\SNRA_{11}+\SNRA_{31})\cdot\Imat_{n}$ and $\tilde{\Vmat}_{2}^H\cdot\tilde{\Vmat}_{2}=\SNRA_{32}\cdot\Imat_{n}$, which satisfies the conditions of Lemma \ref{lem:lemma8}. It thus follows that $\Big(\big({X^n_{1\bar{G}}}\big)^T,\big({X^n_{3\bar{G}}}\big)^T\Big)^T=X^{2n}_{\bar{G}}$ is a $2n \times 1$
      zero mean {\em complex} Normal random vector whose covariance matrix satisfies $\cov\bigg(\Big(\big(\Real\{X^{2n}_{\bar{G}}\}\big)^T,\big(\Img\{X^{2n}_{\bar{G}}\}\big)^T\Big)^T\bigg)\preceq\mathds{S}$.
      Consequently, we have that the maximizing $\left(X_{1\oG},X_{3\oG} \right)$,
      obtained from the optimal $X_{\oG}^{2n}$, satisfy for $k = 1, 3$:
      \begin{subequations}\label{eqn:rels_for_stp_e}
        \begin{eqnarray}
        \E\big\{|X_{k\oG,i}|^2\big\} & \le & \E\big\{|X_{k,i}|^2\big\} =1\\
            \tr\big\{\cov(X_{k\oG}^n)\big\} & = & \tr\Big\{\cov\big(\Real\left\{X_{k\oG}^n\right\}\big)\Big\}+\tr\Big\{\cov\big(\Img\left\{X_{k\oG }^n\right\}\big)\Big\}\nonumber\\
            & \le & \tr\big\{\cov(\Xvec_{kR})\big\}+\tr\big\{\cov(\Xvec_{kI})\big\}\nonumber\\
            & = & \sum_{i=1}^{n} q_{k,i}\nonumber\\
            & \le &  n,
        \end{eqnarray}
      \end{subequations}
        where $q_{k,i}$ denotes the variance of {\em complex} symbol $X_{k}$ at time index $i\in\{1,2,...,n\}$, $ k\in\{1,3\}$, in the achievable code.
        Lastly, Step (d) is proved in Appendix \ref{app:appB} using relationships \eqref{eqn:rels_for_stp_e}.

Plugging \eqref{eq:Leg1ofI} and \eqref{eq:Leg33ofI} into the second line of \eqref{eq:sumrate_OB1}, we conclude that if it is possible to choose $\CORRN_1,\CORRN_2,\eta_1$ and $\eta_2$ s.t.
\begin{subequations}
\label{eq:WI-Firsthalf}
\begin{eqnarray}
\label{eq:WI-Firsthalf_1stcond}
    \SNRA_{32}|\eta_1|^2 &\le& \SNRA_{31}\big(1-|\CORRN_2|^2\big)-2\SNRA_{32}\SNRA_{11}\\
    \SNRA_{21}|\eta_2|^2 &\le& \SNRA_{22}\big(1-|\CORRN_1|^2\big),
\end{eqnarray}
\end{subequations}
then the sum-rate is upper bounded by
\begin{eqnarray*}
     &&\!\!\!\!\!n(R_1+R_2-\epsilon_{1n}-\epsilon_{2n})\le\nonumber\\
     &&\qquad n\cdot\bigg(h(Y_{1G}|S_{1G},\tH_1)-h(S_{1G}|X_{1G},X_{3G},\tH_1)+h(Y_{2G}|S_{2G},\tH_2)-h(S_{2G}|X_{2G},\tH_2)\bigg)\nonumber\\
     &&\qquad\qquad\qquad +n\cdot\bigg( h(H_{22}X_{2G}+\eta_2W_2|\tH_2) - h(H_{21}X_{2G}+V_{1}|\tH_1) \nonumber\\
     && \qquad\qquad \qquad \qquad + h(H_{11}X_{1G}+H_{31}X_{3G}+\eta_1W_{1}|\tH_1) - h(H_{32}X_{3G}+V_{2}|\tH_2)\bigg)\nonumber\\
     &&\qquad  \stackrel{(a)}{=} n\cdot\bigg(h(Y_{1G}|S_{1G},\tH_1)-h(S_{1G}|X_{1G},X_{3G},\tH_1)+h(Y_{2G}|S_{2G},\tH_2)- h(S_{2G}|X_{2G},\tH_2)\nonumber\\
     && \qquad \qquad \qquad + h(S_{2G}|\tH_2) - h(H_{11}X_{1G} + H_{21}X_{2G} + H_{31}X_{3G} + Z_{1}|X_{1G}, X_{3G}, W_1,\tH_1) + h(S_{1G}|\tH_1) \nonumber\\
     && \qquad\qquad \qquad \qquad  - h(H_{22}X_{2G} + H_{32}X_{3G}+Z_{2}|X_{2G}, W_2,\tH_2)\bigg)\nonumber\\
     &&\qquad  = n\cdot\bigg(h(Y_{1G}|S_{1G},\tH_1) +I(X_{1G},X_{3G};S_{1G}|\tH_1)+h(Y_{2G}|S_{2G},\tH_2)+I(X_{2G};S_{2G}|\tH_2)\nonumber\\
     && \qquad \qquad\qquad \qquad  - h(Y_{1G}|X_{1G}, X_{3G}, S_{1G},\tH_1)  - h(Y_{2G}|X_{2G}, S_{2G},\tH_2)\bigg)\nonumber\\
     &&\qquad  = n\cdot\Big(I(X_{1G},X_{3G};Y_{1G},S_{1G}|\tH_1)+I(X_{2G};Y_{2G},S_{2G}|\tH_2)\Big),
\end{eqnarray*}
 s.t. all the expressions are evaluated with circularly symmetric complex Normal channel inputs $(X_{1G},X_{2G},X_{3G})^T\sim\CN({\bm 0},\Qmat^{\opt}_{G})$ where
\begin{equation*}
    \Qmat^{\opt}_{G}=\left[\begin{array}{ccc}
                        1\hspace{2mm}   & 0\hspace{2mm} &0\\
                        0\hspace{2mm}   & 1\hspace{2mm} &0\\
                        0\hspace{2mm}   & 0\hspace{2mm} &1\end{array} \right].
\end{equation*}
In the above transitions, (a) follows from Lemma \ref{lem:lemma2}.
It thus follows that for $n$ large enough, the sum-rate capacity is upper-bounded by:
\begin{equation}
\label{eq:sumrate_OB2}
    \sup_{(R_1,R_2)\in\mathcal{C}(\underline{\footnotesize\SNR})} (R_1+R_2) \le I(X_{1G},X_{3G};Y_{1G},S_{1G}|\tH_1)+I(X_{2G};Y_{2G},S_{2G}|\tH_2),
\end{equation}
where $(X_{1G},X_{2G},X_{3G})^T\sim\CN({\bm 0},\Qmat^{\opt}_{G})$. To complete the proof of Theorem \ref{thm:SRCA-UB}, note that
    \begin{eqnarray*}
        I(X_{1G},X_{3G};Y_{1G},S_{1G}|\tH_1)&=&I(X_{1G},X_{3G};Y_{1G}|\tH_1) + I(X_{1G},X_{3G};S_{1G}|Y_{1G},\tH_1)\\        I(X_{2G};Y_{2G},S_{2G}|\tH_2)&=&I(X_{2G};Y_{2G}|\tH_2)+I(X_{2G};S_{2G}|Y_{2G},\tH_2).
    \end{eqnarray*}
Hence, if we find conditions under which
    \begin{subequations}
    \label{eq:sumrate_1}
    \begin{eqnarray}
        \label{eq:sumrate_cond1}
        I(X_{1G},X_{3G};S_{1G}|Y_{1G},\tH_1)&=&0\\
        \label{eq:sumrate_cond2}
        I(X_{2G};S_{2G}|Y_{2G},\tH_2)&=&0,
    \end{eqnarray}
    \end{subequations}
then, when these conditions are satisfied, an upper bound on the sum-rate capacity is given by
\begin{equation}
\label{eq:sumrate_OB223}
    \sup_{(R_1,R_2)\in\mathcal{C}(\underline{\footnotesize\SNR})} (R_1+R_2) \le I(X_{1G},X_{3G};Y_{1G}|\tH_1)+I(X_{2G};Y_{2G}|\tH_2),
\end{equation}
where $(X_{1G},X_{2G},X_{3G})^T\sim\CN({\bm 0},\Qmat^{\opt}_{G})$. To that aim, note that from \eqref{eq:sumrate_cond1} we obtain
    \begin{eqnarray}
        &&I(X_{1G},X_{3G};S_{1G}|Y_{1G},\tH_1)=0\nonumber\\
        &&\Leftrightarrow \E_{\tH_1}\Big\{I(X_{1G},X_{3G};h_{11}X_{1G}+h_{31}X_{3G}+\eta_1W_{1}|h_{11}X_{1G}+h_{31}X_{3G}+h_{21}X_{2G}+Z_1,\tH_1=\th_1)\Big\}=0\nonumber.
    \end{eqnarray}
From Lemma \ref{lem:lemma3} we conclude that this is satisfied if for all values of $h_{lk}$ it holds that
    \begin{eqnarray}
        &&\E\big\{(\eta_1W_{1})(h_{21}X_{2G}+Z_1)^*\big\}=\E\big\{|h_{21}X_{2G}+Z_1|^2\big\}\nonumber\\
        &&\Leftrightarrow \eta_1\CORRN_1=1+\SNRA_{21}.\label{eq:eta1DefRev1}
    \end{eqnarray}
Applying the same arguments to \eqref{eq:sumrate_cond2}, we obtain
    \begin{eqnarray}
        &&I(X_{2G};S_{2G}|Y_{2G},\tH_2)=0\nonumber\\
        &&\Leftrightarrow  \E_{\tH_2}\Big\{I(X_{2G};h_{22}X_{2G}+\eta_2W_{2}|h_{22}X_{2G}+h_{32}X_{3G}+Z_2,\tH_2=\th_2)\Big\}=0\nonumber\\
        &&\stackrel{(a)}{\Leftrightarrow} \E\big\{(\eta_2W_{2})(h_{32}X_{3G}+Z_2)^*\big\}=\E\big\{|h_{32}X_{3G}+Z_2|^2\big\}\nonumber\\
        &&\Leftrightarrow \eta_2\CORRN_2=1+\SNRA_{32},\label{eq:eta2DefRev1}
    \end{eqnarray}
where (a) follows again from Lemma \ref{lem:lemma3}.

\noindent
Combining \eqref{eq:eta1DefRev1} and \eqref{eq:eta2DefRev1} with \eqref{eq:WI-Firsthalf}, we conclude that \eqref{eq:sumrate_OB223} constitutes an upper-bound on the sum-rate capacity if it is possible to construct a genie signal with parameters $\CORRN_1,\CORRN_2,\eta_1$ and $\eta_2$ s.t.
    \begin{eqnarray*}
        \SNRA_{32}|\eta_1|^2 &\le& \SNRA_{31}\big(1-|\CORRN_2|^2\big)-2\SNRA_{32}\SNRA_{11}\\
        \SNRA_{21}|\eta_2|^2 &\le& \SNRA_{22}\big(1-|\CORRN_1|^2\big)\\
        \eta_1\CORRN_1&=&1+\SNRA_{21}\\
        \eta_2\CORRN_2&=&1+\SNRA_{32}.
    \end{eqnarray*}
We note that this can be done if
    \begin{subequations}
    \label{eq:WI_con2}
    \begin{eqnarray}
        \SNRA_{32}(1+\SNRA_{21})^2 &\le& |\CORRN_1|^2\bigg(\SNRA_{31}\big(1-|\CORRN_2|^2\big)-2\SNRA_{32}\SNRA_{11}\bigg)\\
        \SNRA_{21}(1+\SNRA_{32})^2 &\le& |\CORRN_2|^2 \bigg(\SNRA_{22}\big(1-|\CORRN_1|^2\big)\bigg).
    \end{eqnarray}
    \end{subequations}
In conclusion, if we can find two complex scalars $\CORRN_1$ and $\CORRN_2$ s.t. $0\le|\CORRN_1|,|\CORRN_2|\le1$, for which \eqref{eq:WI_con2} is satisfied, then an upper-bound on the sum-rate capacity is given by
\begin{equation}
\label{eq:sumrate_OB2232}
    \sup_{(R_1,R_2)\in\mathcal{C}(\underline{\footnotesize\SNR})} (R_1+R_2) \le  \bigg\{I(X_{1G},X_{3G};Y_{1G}|\tH_1)+I(X_{2G};Y_{2G}|\tH_2)\bigg\},
\end{equation}
where $X_{kG}\sim\CN(0,1), k\in\{1,2,3\}$, mutually independent. The proof of Theorem \ref{thm:SRCA-UB} is completed by identifying
 $\beta_1\equiv |\CORRN_1|^2$ and $\beta_2\equiv |\CORRN_2|^2$.
\end{proof}

\subsection{Step 2: An Achievable Rate Region}
\label{sec:achievableRegionIID}
We next characterize an achievable rate region for the ergodic phase fading Z-ICR. This region is stated in the following proposition:

\begin{proposition}
\label{thm:achievable_region}
Consider the ergodic phase fading Z-ICR with only Rx-CSI, defined in Section \ref{sec:Model}.
Let the channel inputs be generated i.i.d. in time according to $X_k\sim\CN(0,1), k\in\{1,2,3\}$, mutually independent. If it holds that
    \begin{eqnarray}
        \label{eq:Relay_con1}
        I(X_1,X_3;Y_1|\tH_1)\le I(X_1;Y_3|X_3,\tH_3),
    \end{eqnarray}
    then an achievable rate region for the Z-ICR is given by all the non-negative rate pairs $(R_1,R_2)$ satisfying
    \begin{subequations}
    \label{eq:ICR_region}
    \begin{eqnarray}
        R_1 &\le& I(X_1,X_3;Y_1|\tH_1)\\
        R_2 &\le& I(X_2;Y_2|\tH_2).
    \end{eqnarray}
    \end{subequations}
\end{proposition}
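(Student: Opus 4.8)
The plan is to establish \eqref{eq:ICR_region} by a block-Markov decode-and-forward (DF) scheme: the relay fully decodes $\Tgood$'s message from $Y_3$ and forwards it one block later, $\Rgood$ treats $\Tbad$'s signal as noise, and $\Rbad$ treats the relay's signal as noise. Fix a block length $n$ and a number of blocks $B$; $\Tgood$ conveys $B-1$ independent messages $m^{(1)},\dots,m^{(B-1)}$ at rate $R_1$ and $\Tbad$ conveys $B-1$ independent messages at rate $R_2$, so that the effective rates $R_1(B-1)/B$ and $R_2(B-1)/B$ approach $(R_1,R_2)$ as $B\to\infty$. Generate three mutually independent random codebooks with i.i.d.\ $\CN(0,1)$ entries (which meet the per-symbol power constraints exactly): $\{x_1^n(m)\}$ for $\Tgood$, $\{x_2^n(w)\}$ for $\Tbad$, and $\{x_3^n(m)\}$ for the relay. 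In block $b$, $\Tgood$ sends $x_1^n(m^{(b)})$, $\Tbad$ sends $x_2^n(w^{(b)})$, and the relay --- having decoded $m^{(b-1)}$ at the end of block $b-1$, with $m^{(0)}=1$ fixed and $\Tgood$ idle in block $B$ --- sends $x_3^n(m^{(b-1)})$. The use of mutually independent codebooks, with no superposition and no source--relay coordination, is precisely the structural simplification afforded by ergodic phase fading discussed in Comment \ref{rem:ergodic_compare}.

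\emph{Relay and $\Rbad$.} Since $Y_{3,i}=H_{13,i}X_{1,i}+Z_{3,i}$ does not involve $X_3$, at the end of block $b$ the relay, from $Y_3^n$ and its causal CSI $\tH_3^n$, looks for the unique $\hat m^{(b)}$ making $(x_1^n(\hat m^{(b)}),Y_3^n,\tH_3^n)$ jointly typical; this succeeds with high probability when $R_1<I(X_1;Y_3\mid\tH_3)$, and since $X_3$ is independent of $(X_1,Y_3,\tH_3)$ this equals $I(X_1;Y_3\mid X_3,\tH_3)$. Combining hypothesis \eqref{eq:Relay_con1} with the region constraint $R_1\le I(X_1,X_3;Y_1\mid\tH_1)$ shows this relay constraint is automatically met (the strict inequality being absorbed by the usual $\delta$-slack in the definition of achievability). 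For $\Rbad$, the block-$b$ observation is $Y_{2,i}=H_{22,i}X_{2,i}+H_{32,i}X_{3,i}+Z_{2,i}$; because $x_3^n$ has i.i.d.\ $\CN(0,1)$ entries and $H_{32}$ is a uniform phase times $\sqrt{\SNRA_{32}}$, the term $H_{32}X_3+Z_2$ is, given the CSI, i.i.d.\ $\CN(0,1+\SNRA_{32})$ and independent of $X_2$, so ordinary point-to-point typicality decoding of $w^{(b)}$ treating it as noise succeeds with high probability for $R_2<I(X_2;Y_2\mid\tH_2)$.

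\emph{Receiver $\Rgood$.} Here one must use the direct copy of $m^{(b)}$ (present in block $b$ as $x_1^n(m^{(b)})$) together with the relayed copy (present in block $b+1$ as $x_3^n(m^{(b)})$), which is done by forward sliding-window decoding. Assume $m^{(b-1)}$ has already been recovered: in block $b$, $Y_{1,i}=H_{11,i}X_{1,i}+H_{21,i}X_{2,i}+H_{31,i}X_{3,i}+Z_{1,i}$ with the relay sending the known $x_3^n(m^{(b-1)})$, which $\Rgood$ cancels using its CSI; in block $b+1$ the relay sends $x_3^n(m^{(b)})$ while the not-yet-known $x_1^n(m^{(b+1)})$ and the interferer $x_2^n(w^{(b+1)})$ are absorbed into the effective noise. $\Rgood$ declares $\hat m^{(b)}$ to be the unique index for which the block-$b$ residual is jointly typical with $x_1^n(\hat m^{(b)})$ (given $x_3^n(m^{(b-1)})$) and the block-$(b+1)$ output is jointly typical with $x_3^n(\hat m^{(b)})$. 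The two blocks being conditionally independent given the messages and CSI, their error exponents add, so the decoder succeeds with high probability whenever
\[
R_1 \;<\; I(X_1;Y_1\mid X_3,\tH_1) + I(X_3;Y_1\mid\tH_1) \;=\; I(X_1,X_3;Y_1\mid\tH_1),
\]
the last equality being the chain rule for mutual information, valid because $X_1$ and $X_3$ are independent (note also that $I(X_3;Y_1\mid\tH_1)$ is unchanged when the accompanying $X_1$ comes from a different block, since all inputs are i.i.d.). A union bound over all blocks, windows, and the three classes of error events, followed by $n\to\infty$ and then $B\to\infty$, yields \eqref{eq:ICR_region}.

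The only genuinely non-routine part is the $\Rgood$ analysis: arranging the sliding-window (equivalently, joint backward) decoder so that the relayed symbol $x_3^n(m^{(b)})$ in block $b+1$ is combined coherently with the direct symbol $x_1^n(m^{(b)})$ in block $b$, and verifying that the combined rate constraint collapses --- via the chain rule under input independence --- to exactly $I(X_1,X_3;Y_1\mid\tH_1)$ rather than to a strictly smaller quantity such as $I(X_3;Y_1\mid X_1,\tH_1)$. One should also note that, with only causal Rx-CSI, the codebooks are generated independently of the CSI and the channel (including the i.i.d.\ uniform phases) is memoryless, so the CSI may be treated as an extra output coordinate and all joint-typicality computations go through with the $\SNRA$-scaled circularly symmetric Gaussian laws.
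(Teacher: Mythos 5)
Your proposal is correct and implements the same scheme as the paper: block-Markov decode-and-forward at the relay with three mutually independent i.i.d.\ $\CN(0,1)$ codebooks, interference treated as noise at both destinations, point-to-point decoding at Rx$_2$, and the relay-decoding constraint rendered inactive by hypothesis \eqref{eq:Relay_con1}. The only substantive difference is the decoder at Rx$_1$: you use a forward sliding-window decoder (condition on the already-known $x_3^n(m^{(b-1)})$ in block $b$, then use block $b+1$ for the relayed copy of $m^{(b)}$), which yields the split $I(X_1;Y_1|X_3,\tH_1)+I(X_3;Y_1|\tH_1)$, whereas the paper uses backward decoding as in \cite[Appendix A]{Kramer:05}, where $m_{1,b+1}$ is already known, yielding the complementary split $I(X_1;Y_1|\tH_1)+I(X_3;Y_1|X_1,\tH_1)$; both collapse to $I(X_1,X_3;Y_1|\tH_1)$ by the chain rule and the two decoders are interchangeable here. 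One small attribution to fix: the chain-rule identity itself needs no independence of $X_1$ and $X_3$ --- the role of the independent codebook generation is, exactly as in the paper, to make the two joint-typicality error events for a wrong candidate independent, so that their exponents add.
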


\begin{proof}
The achievability is based on the DF strategy at the relay. Fix the blocklength $n$ and the input distribution
$f_{X_1,X_2,X_3}(x_1,x_2,x_3)=f_{X_1}(x_1)\cdot f_{X_2}(x_2)\cdot f_{X_3}(x_3)$, with $X_k\sim\CN(0,1), k=1,2,3$. We employ a transmission scheme in which $B-1$ messages are transmitted using $nB$ channel symbols:

\paragraph{Code Construction}
\label{sec:ICR Code Book}
For each message $m_k \in \mathcal{M}_k, k\in\{1,2\}$ select a codeword $\xvec_k(m_k)$ according to the p.d.f.
$ f_{\Xvec_k}\big(\xvec_k(m_k)\big) =\prod_{i=1}^n f_{X_k}\big(x_{k,i}(m_k)\big)$.
For each $\tilde{m}_1 \in \mathcal{M}_1$ select a codeword $\xvec_3(\tilde{m}_1)$ according to the p.d.f.
$f_{\Xvec_3}\big(\xvec_3(\tilde{m}_1)\big) =\prod_{i=1}^n f_{X_3}\big(x_{3,i}(\tilde{m}_1)\big)$.

\paragraph{Encoding at Block $b$}
\label{sec:ICR Encoding}
At block $b$, Tx$_k$ transmits the message $m_{k,b}$ via the codeword $\xvec_k(m_{k,b}), k\in\{1,2\}$. Let $\hat{m}_{1,b-1}$ denote
the decoded message at the relay at block $b-1$. At block $b$, the relay transmits the codeword $\xvec_3(\hat{m}_{1,b-1})$.
At block $b=1$ the relay transmits the codeword $\xvec_3(1)$, and at block $b=B$, Tx$_1$ and Tx$_2$ transmit the codewords $\xvec_1(1)$ and $\xvec_2(1)$, respectively.

\paragraph{Decoding at the Relay}
\label{sec:ICR relay decoding}
The decoding process at the relay is similar to the one used in \cite[Section VII-D]{Kramer:05}. For decoding $m_{1,b}$, the decoder at the relay looks for a unique, $m_1\in\mathcal{M}_1$ that satisfies:
\begin{equation*}
\label{eq:Decoing at Relay}
    \Big(\xvec_1(m_1),\xvec_3(\hat{m}_{1,b-1}),\yvec_3(b),{\thvec_3}(b)\Big) \in \typ(X_1,X_3,Y_3,\tH_3).
\end{equation*}
From \cite[Eq. (15)]{Kramer:05} it directly follows that the relay can decode reliably if $n$ is large enough, as long as
\begin{equation*}
\label{eq:rate_at_relay}
    R_1 \le I(X_1;Y_3,\tH_3|X_3)\stackrel{(a)}{=}I(X_1;Y_3|X_3,\tH_3),
\end{equation*}
where (a) holds since the channel coefficients are independent of the transmitted symbols.

\paragraph{Decoding at Rx$_1$}
  Rx$_1$ uses a backward block decoding scheme as in \cite[Appendix A]{Kramer:05} while treating the signal from Tx$_2$ as additive noise (recall that the codebooks are generated independently).
  Assume the relay has correctly decoded all the messages $\{m_{1,b}\}_{b=1}^{B-1}$.
  Assuming that Rx$_1$ has correctly decoded $m_{1,b+1}$, then, in order to decode $m_{1,b}$, Rx$_1$ generates the sets:
    \begin{eqnarray*}
        \mathcal{E}_{0,b}  &\triangleq& \Big \{\hat{m}_1\in\mathcal{M}_1: \big(\xvec_1(m_{1,b+1}),\xvec_3(\hat{m}_1),\mathbf{y}_1(b+1), \mathbf{\hvec}_1(b+1)\big) \in \typ(X_1,X_3,Y_1,\tilde{H}_1)\Big\}.\\
        \mathcal{E}_{1,b}  &\triangleq& \Big\{\hat{m}_1\in\mathcal{M}_1: \big( \xvec_1(\hat{m}_1),\mathbf{y}_1(b),\mathbf{\hvec}_1(b)\big) \in \typ(X_1,Y_1,\tilde{H}_1)\Big\}.
    \end{eqnarray*}
Rx$_1$ then decodes $m_{1,b}$ by finding a unique $m_{1}\in \mathcal{E}_{0,b} \cap \mathcal{E}_{1,b}$. Note that since the codewords are independent of each other, error events associated with $\mathcal{E}_{0,b}$ are independent of error events associated with $\mathcal{E}_{1,b}$. Thus, by using standard joint-typicality arguments \cite[Theorem 7.6.1]{cover-thomas:it-book}, it follows that decoding can be done reliably by taking $n$ large enough as long as
    \begin{eqnarray*}
        \label{eq:rate_at_Rx1}
         R_1 & \le & I(X_1;Y_1,\tH_1)+I(X_3;Y_1,\tH_1|X_1)\\
          & = & I(X_1,X_3;Y_1,\tH_1)\\
          & = &  I(X_1,X_3;Y_1|\tH_1).
    \end{eqnarray*}

\paragraph{Decoding at Rx$_2$}
Rx$_2$ treats the signal  from the relay as additive noise. This can be done since the codebooks are generated independently. The decoder at Rx$_2$ is therefore the decoder for PtP channels:
At block $b$ the decoder looks for a unique message $m_{2}\in\mathcal{M}_2$ that satisfies
\[
    \Big(\xvec_2(m_{2}),\yvec_2(b), \tilde{\bf{h}}_2(b)\Big)\in \typ(X_2,Y_2,\tH_2).
\]
It thus follows
from \cite[Thm. 9.1.1]{cover-thomas:it-book} that Rx$_2$ can reliably decode $m_{2,b}$ if $n$ is large enough, as long as
\begin{equation*}
    \label{eq:rate_at_Rx2}
    R_2 \le I(X_2;Y_2,\tH_2)=I(X_2;Y_2|\tH_2).
\end{equation*}
Finally, we observe that if \eqref{eq:Relay_con1} is satisfied, i.e., if $I(X_1,X_3;Y_1|\tH_1)\le I(X_1;Y_3|X_3,\tH_3)$, then the decoder at the relay can reliably decode the signal from Tx$_1$ whenever Rx$_1$ can. Consequently, we conclude that any rate pair inside the region specified in \eqref{eq:ICR_region} is achievable.
\end{proof}

\subsection{Step 3: The Sum-Rate Capacity in the WI Regime}
\label{sec:SUMCADER}
Note that when the conditions in \eqref{eq:WI_con2-Thm2} and \eqref{eq:Relay_con1}  hold (corresponding to conditions \eqref{eq:WI_con2-Thm} and \eqref{eq:Relay_con2Phase} in Thm. \ref{thm:WI}, respectively), then the {\em upper bound} on the sum-rate in \eqref{eq:sumrate_OB2Th} coincides with the {\em achievable} sum-rate obtained from \eqref{eq:ICR_region},
where both sum-rate expressions are evaluated with mutually independent channel inputs, distributed according to $X_k\sim\CN(0,1)$, $k\in\{1,2,3\}$.
This results in a characterization of  the {\em sum-rate capacity} for the ergodic phase fading Z-ICR.
The proof of Theorem \ref{thm:WI} is completed by observing that for mutually independent channel inputs distributed according to $X_k\sim\CN(0,1), k\in\{1,2,3\}$, the mutual information
expressions in \eqref{eq:sumrate_OB2Th} and \eqref{eq:Relay_con1} are explicitly written as
\begin{eqnarray*}
    I(X_{1},X_{3};Y_{1}|\tH_1)&=& \log\bigg(1+\frac{\SNRA_{11}+\SNRA_{31}}{1+\SNRA_{21}}\bigg)\\
    I(X_{2};Y_{2}|\tH_2)&=&\log\bigg(1+\frac{\SNRA_{22}}{1+\SNRA_{32}}\bigg)\\
    I(X_1;Y_3|X_3,\tH_3) &=& \log(1+\SNRA_{13}),
\end{eqnarray*}
from which the explicit expressions \eqref{eq:Relay_con2Phase} and \eqref{eq:sumcapacityPhase} are obtained.
\end{proof}

\subsection{Comments}
\begin{remark}
\label{rem:ZICSRC}
\em{}Consider the scenario in which the relay is off, referred to as the Z-IC  \cite[Theorem 2]{Sason:04}. This scenario can be obtained from the Z-ICR by letting $\SNRA_{31}=\SNRA_{32}=0$. In this case, since decoding at the relay does not constrain the rates, from Theorem \ref{thm:WI} we conclude that if $\SNRA_{21}\le \SNRA_{22}$, then the sum-rate capacity of the
ergodic phase fading Z-IC is given by
\begin{equation}
    \label{eq:ZICSUMRATE}
    \sup_{\left\{\substack{(R_1,R_2)\in\mathcal{C}({\footnotesize\underline{\SNR}),} \\ {\footnotesize \mbox{s.t.}\phantom{x}\SNRA_{31}=\SNRA_{32}=0}}\right\}}\!\!\!\!\!\! (R_1+R_2) =\log\bigg(1+\frac{\SNRA_{11}}{1+\SNRA_{21}}\bigg)+\log\Big(1+\SNRA_{22}\Big)\triangleq C_{\mbox{\footnotesize sum}}^{\mbox{\scriptsize PF-Z-IC}}(\underline{\SNR}),
\end{equation}
which is similar to the sum-rate capacity expression  for the AWGN Z-IC in the WI regime characterized in \cite[Theorem 2]{Sason:04} (although in the current work the channel is
subject to ergodic phase fading).
\end{remark}

\begin{remark}
\label{rem:sum-cap-comp-ZIC}
\em{} An interesting question that arises is whether adding a relay node to the Z-IC increases the sum-rate in the WI regime,  when the interfering signal is treated as noise at each receiver.
In Fig. \ref{fig:sum-rate} we show that the answer to this question is positive.
\begin{figure}[h]
    \centering
    \includegraphics[scale=0.6]{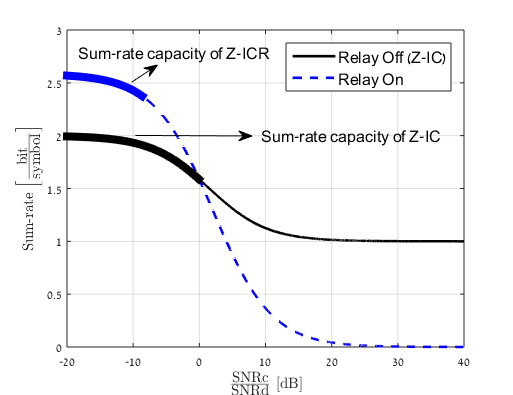}
    \vspace{-0.3 cm}
    \caption{\footnotesize The sum-rates of Proposition \ref{thm:achievable_region} and of \eqref{eq:ZICSUMRATE} for the scenario in which  $\mbox{\footnotesize\sffamily SNR}_{11}=\mbox{\footnotesize\sffamily SNR}_{22}=\mbox{\footnotesize\sffamily SNR}_{31}\triangleq\mbox{\footnotesize\sffamily SNR}_{\mbox{d}}=1$, and $\mbox{\footnotesize\sffamily SNR}_{21}=\mbox{\footnotesize\sffamily SNR}_{32}\triangleq\mbox{\footnotesize\sffamily SNR}_{\mbox{c}}$.}
    \label{fig:sum-rate}
\end{figure}


Fig. \ref{fig:sum-rate} depicts the sum-rate of \eqref{eq:ICR_region} with and without a relay (as discussed in Comment \ref{rem:ZICSRC}, turning off the relay is achieved by setting $\SNRA_{31}=\SNRA_{32}=0$ in Eqns. \eqref{eq:WI_con2-Thm} and \eqref{eq:sumcapacityPhase}),  for Z-ICR scenarios in which condition \eqref{eq:Relay_con1}, or equivalently \eqref{eq:Relay_con2Phase}, is satisfied. We consider a symmetric setting  by letting $\mbox{\sffamily SNR}_{11}=\mbox{\sffamily SNR}_{22}=\mbox{\sffamily SNR}_{31}=\mbox{\sffamily SNR}_{\mbox{\scriptsize d}}$, and $\mbox{\sffamily SNR}_{21}=\mbox{\sffamily SNR}_{32}=\mbox{\sffamily SNR}_{\mbox{\scriptsize c}}$. Thus, $\mbox{\sffamily SNR}_{\mbox{\scriptsize c}}$ and $\mbox{\sffamily SNR}_{\mbox{\scriptsize d}}$ denote the strengths of the interfering links and of the links carrying desired information, respectively, and hence, the relative strength of the interference is given as $\frac{\mbox{\sffamily SNR}_{\mbox{\scriptsize c}}}{\mbox{\sffamily SNR}_{\mbox{\scriptsize d}}}$. It can be seen from the figure that when the interference is sufficiently weak, the relay increases the sum-rate, which follows as the rate increase for Tx$_1$-Rx$_1$ is greater than the rate decrease for Tx$_2$-Rx$_2$.

At interference levels, $\frac{\small\SNRA_{\mbox{\scriptsize c}}}{\small\SNRA_{\mbox{\scriptsize d}}}$, which correspond to the thick lines in each plot, treating the interfering signal as noise is sum-rate optimal, and the resulting achievable sum-rates from \eqref{eq:ICR_region} and \eqref{eq:ZICSUMRATE} correspond to the sum-rate capacities
for the Z-ICR and for the Z-IC, respectively.
Thus, it is evident from Fig. \ref{fig:sum-rate} that in some scenarios, adding a relay node and employing the communications scheme described in the proof of Proposition \ref{thm:achievable_region}, {\em strictly increases} the sum-rate capacity of the ergodic phase fading Z-IC in the WI regime, $C_{\mbox{\footnotesize sum}}^{\mbox{\scriptsize PF-Z-IC}}(\underline{\SNR})$. In particular, we observe that for the symmetric scenario of Fig. \ref{fig:sum-rate},  adding a relay strictly increases the sum-rate {\em capacity} as long as $\frac{\small\SNRA_{\mbox{\scriptsize c}}}{\small\SNRA_{\mbox{\scriptsize d}}}< 0\hspace{1 mm}[\mbox{dB}]$, and for sufficiently weak interference, e.g., $\frac{\small\SNRA_{\mbox{\scriptsize c}}}{\small\SNRA_{\mbox{\scriptsize d}}}< -10\hspace{1 mm}[\mbox{dB}]$, DF at the relay achieves the sum-rate  capacity of the Z-ICR.
\end{remark}

\begin{remark}
\em{}Note that for the set of channel coefficients satisfying \eqref{eq:Relay_con2Phase} and \eqref{eq:WI_con2-Thm}, the sum-rate capacity stated in \eqref{eq:sumcapacityPhase} is an upper bound on the sum-rate capacity of the ergodic phase fading ICR ({\em with both interfering links active}) in the weak interference regime, when the relay node receives transmissions only from Tx$_1$ (as is the case in Theorem \ref{thm:WI}). If, in addition, the relay node receives the transmissions of Tx$_2$, then a new coding strategy must be developed for the WI regime in order to facilitate simultaneous enhancement of the desired signal at both destinations. Finding the optimal scheme and the corresponding sum-rate capacity is currently an open issue that requires further research.
\end{remark}

\begin{remark}
\em{}Fig. \ref{fig:WI1} shows the region of relay locations in the 2D-plane in which DF at the relay achieves the sum-rate capacity of the ergodic phase fading Z-ICR in the WI regime.
This figure was obtained using a channel model in which the  attenuation $\sqrt{\SNRA_{ij}}$ is linked to the distance from node $i$ to node $j$, $d_{ij}$, via $\SNRA_{ij}=\frac{1}{d_{ij}^4}$. This attenuation model corresponds to the two-ray propagation model.
\begin{figure}[h]
    \centering
    \includegraphics[scale=0.5]{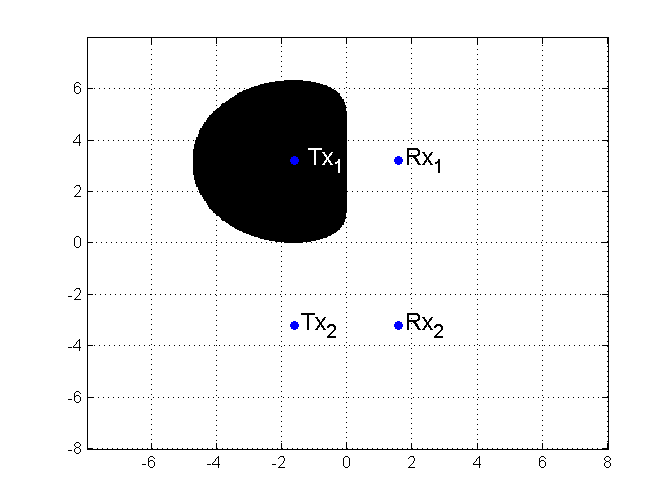}
    \vspace{-0.3 cm}
    \caption{\footnotesize The geographical position of the relay in a 2D-plane where the conditions of Theorem \ref{thm:WI} are satisfied.}
    \label{fig:WI1}
\end{figure}
Note that since the signal from the relay is desired at Rx$_1$ and is treated as noise at Rx$_2$, then for the WI conditions to hold, the relay should be closer to Rx$_1$ (to strengthen the desired signal at Rx$_1$) and farther away from Rx$_2$ (to decrease the interference at Rx$_2$). However, the relay should remain relatively close to Tx$_1$ to allow reliable decoding of the messages from Tx$_1$ at the relay.
\end{remark}

\section{Asymptotic SNR Analysis: The Optimal GDoF in the WI Regime}
\label{sec:FULLGDoF}
In this section, we
characterize the maximal GDoF of the ergodic phase fading Z-ICR in the WI regime. Since GDoF analysis characterizes the performance in the asymptotically high SNR regime, i.e., $\SNR_{lk}\rightarrow\infty$ for all links, then, in order to analyze the effect of different link conditions, we consider a scenario in which the magnitudes of the channel coefficients scale differently as a function of the SNR. Letting $\alpha, \beta, \gamma$ and $\lambda$ be four non-negative real numbers, in this section we consider a  model in which
\begin{subequations}
\begin{eqnarray}
    \label{eq:SNRLine1}
    \SNRA_{11}= &\SNR, \qquad\;\; \SNRA_{22}&=\SNR,\\
    \label{eq:SNRLine2}
    \SNRA_{21}= &\SNR^{\alpha}, \qquad \SNRA_{32}&=\SNR^{\lambda}\\
    \label{eq:SNRLine3}
    \SNRA_{13}= &\SNR^{\gamma}, \qquad \SNRA_{31}&=\SNR^{\beta}.
\end{eqnarray}
\end{subequations}
Observe that the direct links scale as $\SNR$, the interfering links from Tx$_2$ to Rx$_1$, and from the relay to Rx$_2$ scale as $\SNR^{\alpha}$ and $\SNR^{\lambda}$, respectively, and the links on the cooperation path from Tx$_1$ to the relay, and from the relay to Rx$_1$ scale as $\SNR^{\gamma}$ and $\SNR^{\beta}$, respectively. Let $\mathcal{C}(\SNR)$ denote the capacity region of the Z-ICR for a given value of the parameter $\SNR$, and define $C_{\mbox{\footnotesize sum}}(\SNR)\triangleq \max_{(R_1,R_2)\in\mathcal{C}({\scriptsize \SNR})} \big(R_1+R_2\big)$. Then, the GDoF is defined as (see also \cite[Def. 1]{Chaaban:12}):
\begin{equation*}
    \label{eq:gdofDef}
    \mbox{GDoF}\triangleq \lim_{\scriptsize{\SNR}\rightarrow\infty}\frac{C_{\mbox{\scriptsize sum}}(\SNR)}{\log\big(\SNR\big)}.
\end{equation*}
In the following theorem, we characterize the maximal GDoF of the ergodic phase fading Z-ICR in the WI regime:

\begin{theorem}
\label{thm:WI-GDoF}
   Consider the ergodic phase fading Z-ICR with only Rx-CSI, defined in Section II.
   If the interference is symmetric and weak in the sense of
\begin{subequations}
\label{eq:optimalGDoFCon}
\begin{equation}
\label{eq:optimalGDoFCon1}
    \lambda=\alpha \le \frac{1}{2},
\end{equation}
and it also holds that
\begin{equation}
\label{eq:optimalGDoFCon2}
    1+2\alpha < \beta \le \gamma+\alpha,
\end{equation}
\end{subequations}
then the maximal GDoF of the channel is
\begin{equation}
\label{eqn:GDofMaxValue}
    \mbox{\em GDoF}_{\mbox{\em\footnotesize max}}=1+\beta-2\alpha,
\end{equation}
and it is achieved with mutually independent, zero mean complex Normal channel inputs with  positive powers satisfying $0<P_k\le 1$, $k\in\{1,2,3\}$.
\end{theorem}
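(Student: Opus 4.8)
The plan is to prove Theorem~\ref{thm:WI-GDoF} by sandwiching the sum-rate capacity $C_{\mbox{\footnotesize sum}}(\SNR)$ between an achievability bound and a converse bound whose GDoF values both equal $1+\beta-2\alpha$ under conditions \eqref{eq:optimalGDoFCon}, and then using $\mbox{GDoF}=\lim_{\SNR\to\infty}C_{\mbox{\footnotesize sum}}(\SNR)/\log\SNR$. For achievability I would apply Proposition~\ref{thm:achievable_region} with mutually independent, zero-mean, circularly symmetric complex Normal inputs $X_k\sim\CN(0,1)$, i.e.\ DF at the relay together with treating interference as noise at both receivers. Under the scaling \eqref{eq:SNRLine1}--\eqref{eq:SNRLine3}, and because in phase fading the squared magnitudes $|H_{lk}|^2=\SNR_{lk}$ are constants and the phases do not affect these mutual informations, condition \eqref{eq:Relay_con1} becomes $\frac{\SNR+\SNR^\beta}{1+\SNR^\alpha}\le\SNR^\gamma$; since $\beta>1+2\alpha>1+\alpha$ one has $\frac{\SNR+\SNR^\beta}{1+\SNR^\alpha}\le\SNR^{\beta-\alpha}$, and $\SNR^{\beta-\alpha}\le\SNR^\gamma$ by \eqref{eq:optimalGDoFCon2}, so \eqref{eq:Relay_con1} holds for every $\SNR\ge 1$. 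Proposition~\ref{thm:achievable_region} then yields the achievable sum-rate $\log\!\big(1+\frac{\SNR+\SNR^\beta}{1+\SNR^\alpha}\big)+\log\!\big(1+\frac{\SNR}{1+\SNR^\alpha}\big)$; dividing by $\log\SNR$ and letting $\SNR\to\infty$, the first logarithm contributes $\max\{1,\beta\}-\alpha=\beta-\alpha$ (using $\beta>1$) and the second contributes $1-\lambda=1-\alpha$ (using $\lambda=\alpha$), so the achievable GDoF is $1+\beta-2\alpha$. The same exponents arise for any fixed $P_k\in(0,1]$, which is why the theorem permits $0<P_k\le 1$.

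For the converse I would upper bound $C_{\mbox{\footnotesize sum}}(\SNR)$ by genie-aided bounds in the spirit of Theorem~\ref{thm:SRCA-UB}, using genie signals of the form \eqref{eq:Genie1RV1} but now allowing the genie parameters $(\CORRN_1,\CORRN_2,\eta_1,\eta_2)$, equivalently $(\beta_1,\beta_2)$, to depend on $\SNR$. Substituting \eqref{eq:SNRLine1}--\eqref{eq:SNRLine3} into the feasibility conditions \eqref{eq:WI_con2-Thm2} shows that they can be met with suitable fixed $\beta_1,\beta_2\in(0,1)$ provided a term of order $\SNR^{3\alpha}$ is dominated for large $\SNR$ by the right-hand-side terms of orders $\SNR^{\beta}$ and $\SNR$; since $\beta>1+2\alpha>3\alpha$ the former domination is automatic, while the latter amounts to $\alpha<\tfrac13$, and then Theorem~\ref{thm:SRCA-UB} gives $C_{\mbox{\footnotesize sum}}(\SNR)\le I(X_1,X_3;Y_1|\tH_1)+I(X_2;Y_2|\tH_2)$ evaluated for unit-power independent complex Normal inputs, which coincides with the achievable sum-rate above and hence has GDoF $1+\beta-2\alpha$. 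For the complementary range $\alpha\in[\tfrac13,\tfrac12]$ the finite-SNR genie of Theorem~\ref{thm:SRCA-UB} is no longer feasible, so I would derive a second genie-aided bound tailored to the high-SNR regime --- e.g.\ giving Rx$_2$ a more heavily noised genie observation whose noise level is matched to the $\SNR^\alpha$ interference --- and reduce the associated functional optimization to mutually independent, zero-mean complex Normal inputs via the same machinery as in Theorem~\ref{thm:SRCA-UB} (Lemmas~\ref{lem:lemma1}, \ref{lem:lemma2}, \ref{lem:lemma3}, \ref{lem:lemma22}, \ref{lem:lemma8}), then verify that the GDoF of this bound is still at most $1+\beta-2\alpha$. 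The cutoff $\alpha\le\tfrac12$ is exactly the range over which this bound stays tight: for $\alpha>\tfrac12$, treating interference as noise is no longer GDoF-optimal even without a relay~\cite{etkin:08}. Combining the two directions yields $\lim_{\SNR\to\infty}C_{\mbox{\footnotesize sum}}(\SNR)/\log\SNR=1+\beta-2\alpha$, achieved by mutually independent complex Normal inputs with positive powers, which is the assertion of the theorem.

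I expect the main obstacle to be the converse for $\alpha\in[\tfrac13,\tfrac12]$: one must design an $\SNR$-dependent genie that is simultaneously feasible (so that the Gaussian-optimality lemmas apply) and tight enough that the resulting sum-rate upper bound scales as $1+\beta-2\alpha$ rather than something strictly larger, and then push the Lemma~\ref{lem:lemma1}--Lemma~\ref{lem:lemma8} arguments through the rescaled optimization. A secondary technical point is checking that it is the strict inequality $\beta>1+2\alpha$ --- not merely $\beta>1$ --- that makes the feasibility conditions \eqref{eq:WI_con2-Thm2} hold in the high-SNR limit, and that the boundary case $\beta=\gamma+\alpha$ still satisfies \eqref{eq:Relay_con1} with unit powers.
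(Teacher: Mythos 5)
Your achievability half is sound and matches the paper: the DF-plus-treat-interference-as-noise scheme of Proposition~\ref{thm:achievable_region} (or its GDoF version, Proposition~\ref{prop:lb_GDoF}) does give $\min\{\gamma,\beta-\alpha\}+(1-\lambda)=1+\beta-2\alpha$ under \eqref{eq:optimalGDoFCon}, and your verification that \eqref{eq:Relay_con1} holds for all $\SNR\ge 1$ is correct.

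The converse is where there is a genuine gap. You correctly observe that the finite-SNR genie of Theorem~\ref{thm:SRCA-UB} is only feasible asymptotically when $3\alpha<1$ (the condition $\SNRA_{21}(1+\SNRA_{32})^2\le\beta_2\SNRA_{22}(1-\beta_1)$ forces $\SNR^{3\alpha}\lesssim\SNR$), so for $\alpha\in[\tfrac13,\tfrac12]$ --- a substantial part of the regime the theorem covers --- you are left with only a sketch of intent ("a more heavily noised genie observation"), not a bound. The paper's converse does not reuse or rescale the capacity genie at all. It uses a structurally different genie: Rx$_1$ receives $S_{1,i}=H_{32,i}X_{3,i}+Z_{2,i}$ (a noisy copy of the relay's \emph{interference} component as seen at Rx$_2$) and Rx$_2$ receives $S_{2,i}=H_{21,i}X_{2,i}+Z_{1,i}$ (the interference seen at Rx$_1$). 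The point of this choice is an exact cancellation: $h(S_1^n|\utH^n)$ and $h(S_2^n|\utH^n)$ appear with opposite signs in the bounds on $nR_1$ and $nR_2$ (because $h(Y_1^n|X_1^n,X_3^n,S_1^n,\utH^n)=h(S_2^n|\utH^n)$ and symmetrically), so the sum collapses to $h(Y_1^n|S_1^n,\utH^n)+h(Y_2^n|S_2^n,\utH^n)-h(Z_1^n)-h(Z_2^n)$, which is single-letterized and maximized by Gaussian inputs with a conditional-variance computation; the resulting bound $\mbox{GDoF}_1^+=\max\{\alpha+\lambda,\beta,1+\beta-\alpha-\lambda\}$ (valid whenever $\beta>2\lambda+1$, which is exactly where \eqref{eq:optimalGDoFCon2} enters the converse) equals $1+\beta-2\alpha$ for all $\alpha\le\tfrac12$, with no case split at $\alpha=\tfrac13$. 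This is then intersected with a cut-set bound $\max\{2,1+\min\{\beta,\gamma\}\}$. Without this (or an equivalent) construction, your argument does not establish the converse on $[\tfrac13,\tfrac12]$, and the appeal to \cite{etkin:08} for the cutoff at $\alpha=\tfrac12$ is a plausibility remark rather than a proof.
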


\begin{remark}
\label{rem:WIexplainFintieSNR} \em
In the following we intuitively explain the conditions \eqref{eq:optimalGDoFCon} in Thm. \ref{thm:WI-GDoF}.
        Note that \eqref{eq:optimalGDoFCon1} corresponds to the weak interference regime in the sense of  \cite{etkin:08} and  \cite{Chaaban:12},
		namely, that the interfering links are {\em exponentially weaker} than the direct links in the sense that
		$\lim_{{\scriptsize \SNR}\rightarrow\infty}\frac{\SNR^{\alpha}}{\SNR} = 0$. We note that while the results of  \cite{etkin:08} and  \cite{Chaaban:12} for the
		symmetric scenario hold as long as the scaling exponent of the interfering links satisfies $\alpha\le 1$,
		the GDoF optimality result of Thm. \ref{thm:WI-GDoF} requires $\lambda = \alpha \le \frac{1}{2}$, i.e., Thm. \ref{thm:WI-GDoF} requires
		a smaller exponential scaling of the interference strength, compared to the minimal exponential scaling of the interference
        required for WI in \cite{etkin:08} and \cite{Chaaban:12}.
		It follows that the WI regime for the GDoF result of Thm. \ref{thm:WI-GDoF} corresponds to a subset of the WI regime applicable for
		the results of \cite{etkin:08} and  \cite{Chaaban:12}.
        Yet, we note that in \cite{etkin:08}, GDoF optimality of {\em treating interference as noise} was shown to hold only for $\alpha \le \frac{1}{2}$, which is
        in agreement with our characterization (see \cite[Section V-B]{etkin:08}).
		Next, consider \eqref{eq:optimalGDoFCon2}: Observe that \eqref{eq:optimalGDoFCon2} can be written as
		$1 + \alpha < \beta-\alpha \le \gamma$, which is equivalent to the inequality $\SNR^{1+\alpha} < \frac{\SNR^{\beta}}{\SNR^\alpha}\le\SNR^{\gamma}$. Note that
$\frac{\SNR^{\beta}}{\SNR^\alpha}\le\SNR^{\gamma}$ implies that the relay reception is good enough such that the SNR on the incoming link at the relay, $\big($i.e., $\SNR^{\gamma}\big)$ is higher than the  SNR on the link from the relay to Rx$_1$, when interference is treated as additive noise at Rx$_1$ $\big($i.e., $\frac{\SNR^{\beta}}{\SNR^\alpha}\big)$.
The inequality $\SNR^{1+\alpha}<\frac{\SNR^{\beta}}{\SNR^\alpha}$ implies that interference should be weak enough s.t. the SNR on the link from the relay to Rx$_1$, achieved by treating interference as additive noise at Rx$_1$ $\big($i.e., $\frac{\SNR^{\beta}}{\SNR^\alpha}\big)$, will be higher than the SNR of the direct link from
Tx$_1$ to Rx$_1$ augmented by the interference at Rx$_1$, (i.e., $\SNR^{1+\alpha}$). Hence, the second inequality represents an additional weak interference condition.
\end{remark}

\begin{proof}
The proof of Thm. \ref{thm:WI-GDoF} consists of the following steps:
\begin{enumerate}
    \item We derive an upper bound on the GDoF of the ergodic phase fading Z-ICR by combining two bounds: A bound derived using a genie, and a bound obtained by following
    the derivations of the cut-set bound theorem \cite[Thm. 15.10.1]{cover-thomas:it-book}.
    \item We derive a lower bound on the GDoF by considering the communications scheme used in Section \ref{sec:achievableRegionIID}.
    \item We derive conditions on the SNR exponents of the channel coefficients under which our lower bound coincides with the upper bound, thereby, characterizing the maximal GDoF of the
        ergodic phase fading Z-ICR in the weak interference regime, subject to these conditions.
\end{enumerate}
In the following subsections, we provide a detailed proof for the above steps. Specifically, Steps 1 is carried out in Subsection \ref{sec:UB}, Step 2 is carried out in Subsection \ref{sec:IB},  and finally, Step 3 is detailed in Subsection \ref{sec:GDoFOptSec}.

\subsection{An Upper Bound on the Achievable GDoF}
\label{sec:UB}
An upper bound on the achievable GDoF of the Z-ICR is stated in the following theorem:
\begin{theorem}
    \label{thm:GDOFUB}
    Consider the ergodic phase fading Z-ICR with only Rx-CSI, stated in Section \ref{sec:Model}. If $\beta>2\lambda+1$, then an upper bound on the achievable GDoF is given by
    \begin{equation}
        \label{eq:GDoFOB}
	        \mbox{\em GDoF}^+= \min\Big\{\max\big\{2,1+\min\{\beta,\gamma\}\big\},\max\{\alpha+\lambda,\beta,  1+\beta -\alpha - \lambda\}\Big\}.
    \end{equation}
\end{theorem}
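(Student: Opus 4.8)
The plan is to prove \eqref{eq:GDoFOB} as the minimum of two separate upper bounds on $\lim_{\SNR\to\infty}C_{\mbox{\footnotesize sum}}(\SNR)/\log\SNR$: a cut-set bound, which will produce the term $\max\{2,1+\min\{\beta,\gamma\}\}$, and a genie-aided bound, which will produce $\max\{\alpha+\lambda,\beta,1+\beta-\alpha-\lambda\}$. Throughout, every mutual information is controlled up to an additive $o(\log\SNR)$ via $h(\cdot)\le\log\det\big(\pi e\,\cov(\cdot)\big)$ together with the usual single-letterization (conditioning reduces entropy, and concavity of $\log\det$ over the per-symbol input covariances, whose time-averaged powers are at most $1$); since the channel magnitudes are deterministic, the outer expectation over $\utH$ does not change the exponents.

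For the cut-set bound I would follow the derivation of \cite[Thm. 15.10.1]{cover-thomas:it-book} for two cuts around $\Tgood$. The cut isolating $\{\Tgood\}$ gives $nR_1\le I(X_1^n;Y_1^n,Y_3^n\,|\,X_2^n,X_3^n,\utH)+n\epsilon_n$; conditioned on $(X_2^n,X_3^n)$ this is a SIMO observation of $X_1$ at levels $\SNR$ and $\SNR^{\gamma}$, and since both observations see the same input the off-diagonal term cancels the $\SNR^{1+\gamma}$ contribution in the $2\times2$ determinant, so the GDoF contribution is $\le\max\{1,\gamma\}$. The cut isolating $\{\Tgood,\mbox{relay}\}$ gives $nR_1\le I(X_1^n,X_3^n;Y_1^n,Y_2^n\,|\,X_2^n,\utH)+n\epsilon_n$; here the essential point is that $X_3$ enters $Y_1$ (via $H_{31}X_3$) and $Y_2$ (via $H_{32}X_3$) perfectly correlated, so in $\det\cov(Y_1,Y_2\,|\,X_2)$ the $\SNR^{\beta+\lambda}$ terms cancel exactly, leaving $\det\cov(Y_1,Y_2\,|\,X_2)\,\dot{\le}\,\SNR^{\beta}$ (using $\beta>2\lambda+1\ge 1+\lambda$), hence a GDoF contribution $\le\beta$. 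Therefore $R_1$ has GDoF $\le\min\{\beta,\max\{1,\gamma\}\}$; combined with the elementary bound $R_2\le I(X_2^n;Y_2^n\,|\,\tH_2^n)$ and $h(Y_2^n\,|\,X_2^n,\tH_2^n)\ge h(Z_2^n)$, which gives $R_2$ GDoF $\le\max\{1,\lambda\}$, and using $\beta>1$, the sum is GDoF $\le 1+\min\{\beta,\max\{1,\gamma\}\}=\max\{2,1+\min\{\beta,\gamma\}\}$.

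For the genie bound I would reuse the apparatus from the proof of Theorem~\ref{thm:SRCA-UB}. The hypothesis $\beta>2\lambda+1$ forces $\SNRA_{31}>2\SNRA_{32}\SNRA_{11}$, so the constructibility conditions \eqref{eq:WI-Firsthalf} can be met, e.g.\ with $\CORRN_1=\CORRN_2=0$ and any $\eta_1,\eta_2$ such that $\SNRA_{32}|\eta_1|^2\le\SNRA_{31}-2\SNRA_{32}\SNRA_{11}$ and $\SNRA_{21}|\eta_2|^2\le\SNRA_{22}$; Lemmas~\ref{lem:lemma1}, \ref{lem:lemma2}, \ref{lem:lemma8} and Proposition~\ref{prop:proposition1} then yield, exactly as in the chain of steps leading to \eqref{eq:sumrate_OB2}, the single-letter bound
\begin{equation*}
    \sup_{(R_1,R_2)\in\mathcal{C}(\SNR)}(R_1+R_2)\le I(X_{1G},X_{3G};Y_{1G},S_{1G}\,|\,\tH_1)+I(X_{2G};Y_{2G},S_{2G}\,|\,\tH_2),
\end{equation*}
with $X_{kG}\sim\CN(0,1)$ mutually independent and $S_1,S_2$ as in \eqref{eq:Genie1RV1} (note that, in contrast to Theorem~\ref{thm:WI}, we do not impose \eqref{eq:WI_con2-Thm2}, since we do not attempt to discard the $S$-terms). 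Inserting the scaling \eqref{eq:SNRLine1}--\eqref{eq:SNRLine3} and writing $|\eta_1|^2\doteq\SNR^{e_1}$ with $e_1\le\beta-\lambda$ and $|\eta_2|^2\doteq\SNR^{e_2}$ with $e_2\le 1-\alpha$, the fact that $Y_1$ and $S_1$ share the clean part $H_{11}X_1+H_{31}X_3$ makes $\det\cov(Y_1,S_1)\doteq\SNR^{\beta+\max\{e_1,\alpha\}}$ while $h(Y_1,S_1\,|\,X_1,X_3)\doteq\SNR^{\alpha+e_1}$, so $I(X_{1G},X_{3G};Y_{1G},S_{1G}\,|\,\tH_1)\doteq\SNR^{\beta-\min\{e_1,\alpha\}}$, minimized over the admissible $e_1$ to $\SNR^{\beta-\alpha}$ when $\alpha+\lambda\le\beta$ and to $\SNR^{\lambda}$ otherwise; an identical computation gives $I(X_{2G};Y_{2G},S_{2G}\,|\,\tH_2)\doteq\SNR^{1-\min\{e_2,\lambda\}}$, minimized to $\SNR^{1-\lambda}$ when $\alpha+\lambda\le1$ and to $\SNR^{\alpha}$ otherwise. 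Adding the exponents in the three regimes $\alpha+\lambda\le1$, $1<\alpha+\lambda\le\beta$ and $\alpha+\lambda>\beta$ yields GDoF $\le\max\{\alpha+\lambda,\beta,1+\beta-\alpha-\lambda\}$, and combining with the cut-set bound establishes \eqref{eq:GDoFOB}.

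The cut-set manipulations and the final case analysis are routine; I expect the main obstacle to be the genie bound — specifically, checking that the chain of Lemmas~\ref{lem:lemma1}--\ref{lem:lemma8} and Proposition~\ref{prop:proposition1} still delivers the displayed single-letter bound under only the constructibility conditions \eqref{eq:WI-Firsthalf} (and not the stronger \eqref{eq:WI_con2-Thm2}), and then carrying out the GDoF optimization over $(\eta_1,\eta_2)$ while carefully tracking the determinant cancellations that collapse the exponents to $\beta-\alpha$ and $1-\lambda$.
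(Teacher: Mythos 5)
Your cut-set half has a concrete gap in the treatment of $R_2$. You bound $R_2\le I(X_2^n;Y_2^n|\tH_2^n)$ and lower-bound $h(Y_2^n|X_2^n,\tH_2^n)\ge h(Z_2^n)$, which only yields $R_2\,\dot{\le}\,\log\big(\SNR^{\max\{1,\lambda\}}\big)$, and you then silently replace $\max\{1,\lambda\}$ by $1$ when forming the sum. The hypothesis of the theorem is only $\beta>2\lambda+1$; it does not force $\lambda\le 1$. Take $\lambda=2$, $\beta=6$, $\gamma=1$, $\alpha=0$: your cut-set exponent is $\max\{1,2\}+\min\{6,\max\{1,1\}\}=3$ and your genie exponent is $6$, so you prove $\mbox{GDoF}\le 3$, whereas the theorem asserts $\mbox{GDoF}^+=\min\{\max\{2,1+\min\{6,1\}\},6\}=2$. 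The fix is to place the relay on the receiving side of the cut, $\Sset=\{\mbox{Tx}_2,\mbox{Rx}_1\}$, $\Sset^c=\{\mbox{Relay},\mbox{Tx}_1,\mbox{Rx}_2\}$, so that the cut-set bound conditions on $X_3^n$ and gives $R_2\le I(X_{2};Y_{2},Y_{3}|X_{1},X_{3},\utH)\le\log(1+\SNRA_{22})$, with exponent exactly $1$ regardless of $\lambda$; this is what the paper does. Your two cuts for $R_1$ are sound: the cancellation you invoke for the cut isolating $\{\mbox{Tx}_1,\mbox{Relay}\}$ does hold because the ``signal'' part of $\det\cov(Y_1,Y_2|X_2)$ is at most $\SNRA_{11}\SNRA_{32}=\SNR^{1+\lambda}\,\dot{\le}\,\SNR^{\beta}$, though the paper reaches the same exponent more simply by putting Rx$_2$ inside the cut so that only $Y_1$ is observed.

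On the genie half you take a genuinely different route from the paper, and it appears to work. The paper does not reuse the Theorem \ref{thm:SRCA-UB} genie; it defines $S_{1,i}=H_{32,i}X_{3,i}+Z_{2,i}$ and $S_{2,i}=H_{21,i}X_{2,i}+Z_{1,i}$ precisely so that $h(S_1^n|\utH{}^n)$ and $h(S_2^n|\utH{}^n)$ cancel when the two rate bounds are summed, leaving only $h(Y_1^n|S_1^n,\utH{}^n)+h(Y_2^n|S_2^n,\utH{}^n)-h(Z_1^n)-h(Z_2^n)$; the hypothesis $\beta>2\lambda+1$ then enters to show the resulting per-letter expression is monotone in the powers, so no residual conditional mutual informations need be evaluated. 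Your route keeps the Theorem \ref{thm:SRCA-UB} genie, correctly observes that \eqref{eq:sumrate_OB2} needs only the constructibility conditions \eqref{eq:WI-Firsthalf} (satisfiable with $\CORRN_1=\CORRN_2=0$ and $|\eta_1|^2\doteq\SNR^{\beta-\lambda}$, $|\eta_2|^2\doteq\SNR^{1-\alpha}$ for large $\SNR$, since $\beta>\lambda+1$), and then optimizes the genie noise levels; your exponent bookkeeping and three-regime case analysis do reproduce $\max\{\alpha+\lambda,\beta,1+\beta-\alpha-\lambda\}$. The price is having to evaluate the non-vanishing terms $I(X_{1G},X_{3G};S_{1G}|Y_{1G},\tH_1)$ and $I(X_{2G};S_{2G}|Y_{2G},\tH_2)$ that the paper's telescoping construction avoids, and you should also invoke Lemma \ref{lem:lemma22} (not only Lemmas \ref{lem:lemma1}, \ref{lem:lemma2}, \ref{lem:lemma8}) in re-deriving the single-letter bound.
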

\begin{proof}
The upper bound is obtained as a combination of two bounds: The first bound is derived using a genie, and the second bound is derived by following the derivation of the cut-set theorem \cite[Thm. 15.10.1]{cover-thomas:it-book}.
\subsubsection{An Upper Bound Using a Genie}
Consider the following genie signals:
\begin{eqnarray*}
        S_{1,i}&=&H_{32,i} X_{3,i} + Z_{2,i}\\
        S_{2,i}&=&H_{21,i} X_{2,i} + Z_{1,i},
\end{eqnarray*}
$i\in\{1,2,...,n\}$. Suppose that a genie provides $\{S_{1,i}\}_{i=1}^n$ to Rx$_1$ and $\{S_{2,i}\}_{i=1}^n$ to Rx$_2$, i.e., the genie provides to Rx$_2$ an interference-free, noisy version of its desired signal as it is received at Rx$_1$, and to Rx$_1$ it provides a noisy version of the relay signal component observed at Rx$_2$. Let $M_k$ denote the message transmitted from Tx$_k$, and let $\hat{M}_k$ denote the decoded message at Rx$_k$. Additionally, let $P_{e,k}^{(n)}$ denote the probability of error in the estimation of $M_k$ at Rx$_k$ and define $n\epsilon_{kn}\triangleq 1+P_{e,k}^{(n)}nR_k, k\in\{1,2\}$. Then, for an achievable rate pair $(R_1,R_2)$, we obtain:
\begin{eqnarray}
    nR_1&=&H(M_1)\nonumber\\
    &=&H(M_1)-H(M_1|Y_1^n,\utH{}^n)+H(M_1|Y_1^n,\utH{}^n)\nonumber\\
    &\stackrel{(a)}{\le}&I(M_1;Y_1^n,\utH{}^n)+n\epsilon_{1n}\nonumber\\
    &\stackrel{(b)}{\le}&I(X_1^n;Y_1^n,\utH{}^n)+n\epsilon_{1n}\nonumber\\
    &=&I(X_1^n;\utH{}^n)+I(X_1^n;Y_1^n|\utH{}^n)+n\epsilon_{1n}\nonumber\\
    &\stackrel{(c)}{=}&I(X_1^n;Y_1^n|\utH{}^n)+n\epsilon_{1n}\nonumber\\
    &\stackrel{(d)}{\le}&I(X_1^n;Y_1^n|\utH{}^n)+I(X_3^n;Y_1^n|\utH{}^n,X_1^n)+I(X_1^n,X_3^n;S_1^n|\utH{}^n,Y_1^n)+n\epsilon_{1n}\nonumber\\
    &=&I(X_1^n,X_3^n;Y_1^n,S_1^n|\utH^n)+n\epsilon_{1n}\nonumber\\
    &=&I(X_1^n,X_3^n;S_1^n|\utH{}^n)+I(X_1^n,X_3^n;Y_1^n|S_1^n,\utH{}^n)+n\epsilon_{1n}\nonumber\\
    &=&h(S_1^n|\utH{}^n)-h(S_1^n|X_1^n,X_3^n,\utH{}^n)+h(Y_1^n|S_1^n,\utH{}^n)-h(Y_1^n|X_1^n,X_3^n,S_1^n,\utH{}^n)+n\epsilon_{1n}\nonumber\\
    &\stackrel{(e)}{=}&h(S_1^n|\utH{}^n)-h(Z_2^n)+h(Y_1^n|S_1^n,\utH{}^n)-h(S_2^n|\utH{}^n)+n\epsilon_{1n},\label{eq:eq4}
\end{eqnarray}
where (a) follows from Fano's inequality \cite[Thm. 2.10.1]{cover-thomas:it-book}, (b) follows from the data processing inequality \cite[Thm. 2.8.1]{cover-thomas:it-book} as $M_1-X_1^n-(Y_1^n,\tilde{H}_1^n)$ forms a Markov chain, (c) follows since channel inputs $X_1^n$ are independent of the channel coefficients $\tilde{H}_1^n$, (d) follows since mutual information is nonnegative, and (e) follows since
\begin{eqnarray*}
    h(Y_1^n|X_1^n,X_3^n,S_1^n,\utH{}^n)&=&h(Y_1^n|X_1^n,X_3^n,Z_2^n,\utH{}^n)\\
    &=&h\big(\{H_{21,i}X_{2,i}+ Z_{1,i}\}_{i=1}^n|X_1^n,X_3^n,Z_2^n,\utH{}^n\big)\\
    &\stackrel{(f)}{=}&h\big(\{H_{21,i}X_{2,i}+ Z_{1,i}\}_{i=1}^n|\utH{}^n\big)\\
    &\equiv& h(S_2^n|\utH{}^n),
\end{eqnarray*}
where (f) follows since $X_1^n$ and $X_3^n$ are independent of $X_2^n$, which follows since the message sets at the sources are mutually independent, and since the relay receives transmissions only from Tx$_1$. Similarly, for $R_2$ we have
\begin{eqnarray}
    nR_2&\le&I(X_2^n;Y_2^n,S_2^n|\utH{}^n)+n\epsilon_{2n}\nonumber\\
    &=&I(X_2^n;S_2^n|\utH{}^n)+I(X_2^n;Y_2^n|S_2^n,\utH{}^n)+n\epsilon_{2n}\nonumber\\
    &=&h(S_2^n|\utH{}^n)-h(S_2^n|X_2^n,\utH{}^n)+h(Y_2^n|S_2^n,\utH{}^n)-h(Y_2^n|X_2^n,S_2^n,\utH{}^n)+n\epsilon_{2n}\nonumber\\
    &=&h(S_2^n|\utH{}^n)-h(Z_1^n)+h(Y_2^n|S_2^n,\utH{}^n)-h(S_1^n|\utH{}^n)+n\epsilon_{2n}.\label{eq:eq5}
\end{eqnarray}
Let $R_{\mbox{\scriptsize sum}}=R_1+R_2$, denote $\CORR_i\triangleq\frac{\E\{X_{1,i}X_{3,i}^*\}}{\sqrt{P_{1,i}P_{3,i}}}$, where $|\CORR_i|\le 1$, and define $\theta_i\triangleq\arg\{h_{11,i}h_{31,i}^*\CORR_i\}$.
 Then, by combining \eqref{eq:eq4} and \eqref{eq:eq5} we obtain
\begin{eqnarray}
    &&\hspace{-0.6 cm}n\big(R_{\mbox{\scriptsize sum}}-\epsilon_{1n}-\epsilon_{2n}\big)\nonumber\\
    &\le& h(S_1^n|\utH{}^n)-h(Z_2^n)+h(Y_1^n|S_1^n,\utH{}^n)-h(S_2^n|\utH{}^n)+ h(S_2^n|\utH{}^n)-h(Z_1^n)+h(Y_2^n|S_2^n,\utH{}^n)-h(S_1^n|\utH{}^n)\nonumber\\
    &=&h(Y_1^n|S_1^n,\utH{}^n)-h(Z_1^n)+h(Y_2^n|S_2^n,\utH{}^n)-h(Z_2^n)\nonumber\\
    &=&\sum_{i=1}^n h(Y_{1,i}|Y_1^{i-1},S_1^n,\utH{}^n)-h(Z_1^n)+ \sum_{i=1}^n h(Y_{2,i}|Y_2^{i-1},S_2^n,\utH{}^n)-h(Z_2^n)\nonumber\\
    &\stackrel{(a)}{\le}&\sum_{i=1}^n h(Y_{1,i}|S_{1,i},\utH_i)-h(Z_1^n)+ \sum_{i=1}^n h(Y_{2,i}|S_{2,i},\utH_i) -h(Z_2^n)\nonumber\\
    &\stackrel{(b)}{=}&\sum_{i=1}^n \E_{\utH_i}\big\{h(Y_{1,i}|S_{1,i},\utH_i=\uth_i)\big\}-n\cdot h(Z_1)+ \sum_{i=1}^n \E_{\utH_i}\big\{h(Y_{2,i}|S_{2,i},\utH_i=\uth_i)\big\}-n\cdot h(Z_2)\nonumber\\
    &\stackrel{(c)}{\le}&\sum_{i=1}^n \E_{\utH_i}\bigg\{\max_{\substack{0\le|\CORR_i|\le 1\\ P_{k,i}\le1, k\in\{1,2,3\}}} h(Y_{1G,i}|S_{1G,i},\utH_i=\uth_i)\bigg\}-n\cdot h(Z_1)\nonumber\\
    &&\qquad\qquad\qquad\qquad\qquad+ \sum_{i=1}^n  \E_{\utH_i}\bigg\{\max_{\substack{0\le|\CORR_i|\le 1\\ P_{k,i}\le1, k\in\{1,2,3\}}} h(Y_{2G,i}|S_{2G,i},\utH_i=\uth_i)\bigg\}-n\cdot h(Z_2)\nonumber\\
    &=&\sum_{i=1}^n \E_{\utH_i}\bigg\{\max_{\substack{0\le|\CORR_i|\le 1\\ P_{k,i}\le1, k\in\{1,2,3\}}} \log\big(2\pi\cdot\cov(Y_{1G,i}|S_{1G,i},\uth_i)\big)\bigg\}-\sum_{i=1}^n\log(2\pi)\nonumber\\
    &&\qquad\qquad\qquad\qquad\qquad+\sum_{i=1}^n \E_{\utH_i}\bigg\{\max_{\substack{0\le|\CORR_i|\le 1\\ P_{k,i}\le1, k\in\{1,2,3\}}} \log\big(2\pi\cdot\cov(Y_{2G,i}|S_{2G,i},\uth_i)\big)\bigg\}-\sum_{i=1}^n\log(2\pi)\nonumber\\
    &\stackrel{(d)}{=}&\sum_{i=1}^n \E_{\utH_i}\Bigg\{\max_{\substack{0\le|\CORR_i|\le 1\\ P_{k,i}\le1, k\in\{1,2,3\}}} \log\bigg(\var(Y_{1G,i}|\uth_i)-\frac{\big|\E\{Y_{1G,i}S_{1G,i}^*|\uth_i\}\big|^2}{\var(S_{1G,i}|\uth_i)}\bigg)\Bigg\}\nonumber\\
    &&\qquad\qquad\qquad\qquad\qquad+\sum_{i=1}^n \E_{\utH_i}\Bigg\{\max_{\substack{0\le|\CORR_i|\le 1\\ P_{k,i}\le1, k\in\{1,2,3\}}} \log\bigg(\var(Y_{2G,i}|\uth_i)-\frac{\big|\E\{Y_{2G,i}S_{2G,i}^*|\uth_i\}\big|^2}{\var(S_{2G,i}|\uth_i)}\bigg)\Bigg\}\nonumber\\
    &=&\sum_{i=1}^n  \E_{\utH_i}\Bigg\{\max_{\substack{0\le|\CORR_i|\le 1\\ P_{k,i}\le1, k\in\{1,2,3\}}} \!\!\!\! \log\bigg(1\!+\!P_{2,i}|h_{21,i}|^2+\frac{P_{1,i}|h_{11,i}|^2\!+\!P_{3,i}|h_{31,i}|^2\!+\!P_{1,i}P_{3,i}|h_{11,i}|^2|h_{32,i}|^2(1\!-\!|\CORR_i|^2)}{1+P_{3,i}|h_{32,i}|^2}\nonumber\\
&&\qquad\qquad\qquad\qquad\qquad\qquad\qquad\qquad\qquad\qquad\qquad\qquad\qquad\qquad +\frac{2|h_{11,i}||h_{31,i}^*|\sqrt{P_{1,i}P_{3,i}}|\CORR_i|\cos(\theta_i)}{1+P_{3,i}|h_{32,i}|^2}\bigg)\Bigg\}\nonumber\\
&& \qquad\qquad\qquad\qquad\qquad\qquad\qquad + \sum_{i=1}^n \E_{\utH_i}\Bigg\{ \max_{P_{k,i}\le1, k\in\{1,2,3\}} \log\bigg(1+P_{3,i}|h_{32,i}|^2+\frac{P_{2,i}|h_{22,i}|^2}{1+P_{2,i}|h_{21,i}|^2}\bigg)\Bigg\}\nonumber
\end{eqnarray}
\begin{eqnarray}
    \hspace{-0.8 cm}&\stackrel{(e)}{\le}&\sum_{i=1}^n \E_{\utH_i}\Bigg\{\! \max_{P_{k,i}\le1, k\in\{1,2,3\}} \log\!\bigg(1\!+\!P_{2,i}|h_{21,i}|^2\!+\!\frac{P_{1,i}|h_{11,i}|^2\!+\!P_{3,i}|h_{31,i}|^2\!+\!P_{1,i}P_{3,i}|h_{11,i}|^2|h_{32,i}|^2\!}{1+P_{3,i}|h_{32,i}|^2}\nonumber\\
    &&\quad\qquad\qquad\qquad\qquad\qquad\qquad\qquad\qquad\qquad\qquad\qquad\qquad\qquad\qquad\qquad +\frac{2|h_{11,i}||h_{31,i}^*|\sqrt{P_{1,i}P_{3,i}}}{1+P_{3,i}|h_{32,i}|^2}\bigg)\Bigg\}\nonumber\\
    && \qquad\qquad\qquad\qquad\qquad\quad+ \! \sum_{i=1}^n \E_{\utH_i}\Bigg\{ \max_{P_{k,i}\le1, k\in\{1,2,3\}} \log\bigg(1+P_{3,i}|h_{32,i}|^2+\frac{P_{2,i}|h_{22,i}|^2}{1+P_{2,i}|h_{21,i}|^2}\bigg)\!\Bigg\}\label{eq:MaxPGDoF}\\
    &\stackrel{(f)}{\le}&\sum_{i=1}^n  \E_{\utH_i}\Bigg\{ \log\bigg(1\!+\!|h_{21,i}|^2+\frac{|h_{11,i}|^2\!+\!|h_{31,i}|^2\!+\!|h_{11,i}|^2|h_{32,i}|^2\!+\!2|h_{11,i}||h_{31,i}^*|}{1+|h_{32,i}|^2}\bigg)\Bigg\}\nonumber\\
    &&\qquad\qquad\qquad\qquad\qquad\qquad\qquad\qquad \qquad+ \sum_{i=1}^n  \E_{\utH_i}\Bigg\{ \log\bigg(1+|h_{32,i}|^2+\frac{|h_{22,i}|^2}{1+|h_{21,i}|^2}\bigg)\Bigg\}\nonumber\\
    &\stackrel{(g)}{=}&\sum_{i=1}^n\log\bigg(1+\SNRA_{21}+\frac{\SNRA_{11}+\SNRA_{31}+\SNRA_{11}\SNRA_{32}+2\sqrt{\SNRA_{11}\SNRA_{31}}}{1+\SNRA_{32}}\bigg)\nonumber\\  &&\qquad\qquad\qquad\qquad\qquad\qquad\qquad\qquad\qquad\qquad\qquad +\sum_{i=1}^n \log\bigg(1+\SNRA_{32}+\frac{\SNRA_{22}}{1+\SNRA_{21}}\bigg)\nonumber\\
    &=&n \log\bigg(1+\SNRA_{21}+\frac{\SNRA_{11}+\SNRA_{31}+\SNRA_{11}\SNRA_{32}+2\sqrt{\SNRA_{11}\SNRA_{31}}}{1+\SNRA_{32}}\bigg)\nonumber\\
    &&\qquad\qquad\qquad\qquad\qquad\qquad\qquad\qquad\qquad\qquad\qquad +n\log\bigg(1+\SNRA_{32}+\frac{\SNRA_{22}}{1+\SNRA_{21}}\bigg),\nonumber
\end{eqnarray}

\vspace{3mm}

\noindent
where step (a) follows since conditioning reduces entropy, (b) follows since $Z_{1,i}$ and $Z_{2,i}$ are i.i.d. in time, (c) follows from \cite[Lemma 2]{Zahavi:12}, which states that given the set of channel coefficients at time $i$, $\uth_i$, then $h(Y_{k,i}|S_{k,i},\utH_i=\uth_i)$ is maximized with $Y_{k,i}$ and $S_{k,i}, k\in\{1,2\}$ distributed according to the zero-mean, circularly symmetric jointly proper complex Normal distribution with the covariance matrix $\cov(Y_{kG,i},S_{kG,i}|\uth_i)=\cov(Y_{k,i},S_{k,i}|\uth_i), k\in\{1,2\}$. Note that in this step the maximizing $v_i$, $P_{1,i}$, $P_{2,i}$, $P_{3,i}$, are generally functions of $\uth_i$.
 Step (d) follows from the direct application of the expression for the conditional covariance of jointly complex Normal RVs \cite[Section. VI, Eq. (6.5)]{Valerie:81}, (e) follows since $0\le|\CORR_i|\le 1$ and $-1\le\cos(\theta_i)\le 1$ and since the logarithm function is a monotonically increasing function of its argument, (f) follows since both sums of logarithmic functions in \eqref{eq:MaxPGDoF} are maximized by $P_{1,i} = P_{2,i} = P_{3,i} = 1$, $i\in\{1,2,...,n\}$. To see this point we consider each of the sums separately:

\begin{enumerate}

\phantomsection
\label{phn:expln1}
\item  Begin by considering the first logarithmic term in \eqref{eq:MaxPGDoF}: We now show that the expression
    \begin{equation}
    \label{eqn:first_log_term_in39}
    \frac{P_{1,i}|h_{11,i}|^2\!+\!P_{3,i}|h_{31,i}|^2\!+\!P_{1,i}P_{3,i}|h_{11,i}|^2|h_{32,i}|^2\!+2|h_{11,i}||h_{31,i}^*|\sqrt{P_{1,i}P_{3,i}}}{1+P_{3,i}|h_{32,i}|^2}
    \end{equation}
    which appears in the first summation of \eqref{eq:MaxPGDoF} increases monotonically with respect to both $P_{1,i}$ and $P_{3,i}$.
    To that aim we note that from inspecting the expression \eqref{eqn:first_log_term_in39} it is evident that it increases monotonically with respect to $P_{1,i}$, for any $P_{3,i}\ge 0$.
    Next, for any fixed $0\le P_{1,i} \le 1$, we  differentiate \eqref{eqn:first_log_term_in39} with respect to $P_{3,i}$ and obtain:
    \begin{eqnarray*}
      &&\hspace{-0.5 cm}\frac{\partial}{\partial P_{3,i}} \left\{\frac{P_{1,i}|h_{11,i}|^2\!+\!P_{3,i}|h_{31,i}|^2\!+\! P_{1,i}P_{3,i}|h_{11,i}|^2|h_{32,i}|^2\!+2|h_{11,i}||h_{31,i}^*|\sqrt{P_{1,i}P_{3,i}}}{1+P_{3,i}|h_{32,i}|^2}\right\}=\\
      &&\hspace{+5 cm} \frac{|h_{31,i}|^2+|h_{11,i}||h_{31,i}^*|\frac{\sqrt{P_{1,i}}}{\sqrt{P_{3,i}}}-|h_{11,i}||h_{31,i}^*||h_{32,i}|^2\sqrt{P_{1,i}}\sqrt{P_{3,i}}}{\big(1+P_{3,i}|h_{32,i}|^2\big)^2}.
    \end{eqnarray*}
        From this expression, we note that as $0\le P_{1,i}, P_{3,i} \le 1$, then the above derivative is positive if
				$|h_{31,i}|^2 > |h_{11,i}||h_{31,i}^*||h_{32,i}|^2$, or equivalently, if
        $\SNRA_{31}>\SNRA_{32}\sqrt{\SNRA_{11}\SNRA_{31}}$, which is satisfied if $\beta > 2\lambda+1$.
 We conclude that  \eqref{eqn:first_log_term_in39} increases monotonically with respect to $0\le P_{1,i},P_{3,i}\le 1$. This conclusion,
				combined with the facts that the expression in the logarithm in the first summation in \eqref{eq:MaxPGDoF} is monotone increasing in $P_{2,i}$,
                and that the logarithm function itself is monotone increasing,
				leads to the conclusion that if $\beta > 2\lambda+1$, then the first logarithmic expression in \eqref{eq:MaxPGDoF} monotonically increases with
				respect to $P_{1,i}$, $P_{2,i}$ and $P_{3,i}$, hence it is maximized by setting $P_{1,i} = P_{2,i} = P_{3,i} = 1$.
\smallskip

\item Now consider the second term in \eqref{eq:MaxPGDoF}: The function $\frac{a_1x}{1+b_1x}$ monotonically increases with respect to $x$ as long as $a_1, b_1  > 0$ and thus, letting $a_1=|h_{22,i}|^2$ and $b_1=|h_{21,i}|^2$, we conclude that $\frac{P_{2,i}|h_{22,i}|^2}{1+P_{2,i}|h_{21,i}|^2}$ increases with respect to $P_{2,i}$.
It also immediately follows that
			\[
			 \left. \log\bigg(1+P_{3,i}|h_{32,i}|^2+\frac{P_{2,i}|h_{22,i}|^2}{1+P_{2,i}|h_{21,i}|^2}\bigg)\right|_{P_{2,i}=1}
				= \log\bigg(1+P_{3,i}|h_{32,i}|^2+\frac{|h_{22,i}|^2}{1+|h_{21,i}|^2}\bigg)
			\]
is maximized by $P_{3,i}=1.$

\end{enumerate}

\vspace{3mm}

\noindent
We conclude that if $\beta > 2\lambda+1$ then \eqref{eq:MaxPGDoF} is maximized when all nodes transmit at their maximum available power: $P_{1,i}=P_{2,i}=P_{3,i}=1, i\in\{1,2,...,n\}$. Finally, step (g) follows since in the ergodic phase fading model, the magnitudes of the channel coefficients are constants and do not depend on the time index, and therefore the expectation can be omitted. Observe that as $(R_1,R_2)$ is achievable, then for $k\in\{1,2\}$, $P_{e,k}^{(n)}\rightarrow 0$ as $n\rightarrow\infty$, and hence, $\epsilon_{kn}\rightarrow 0$ as $n\rightarrow\infty$. We therefore conclude that the sum-rate is asymptotically bounded by
\begin{eqnarray*}
    R_{\mbox{\scriptsize sum}} &\le&\log\bigg(1+\SNRA_{21}+\frac{\SNRA_{11}+\SNRA_{31}+\SNRA_{11}\SNRA_{32}+2\sqrt{\SNRA_{11}\SNRA_{31}}}{1+\SNRA_{32}}\bigg)\\
&& \hspace{8.3 cm}\quad+\log\bigg(1+\SNRA_{32}+\frac{\SNRA_{22}}{1+\SNRA_{21}}\bigg)\\
    &\stackrel{(a)}{\doteq}& \log\Big(\SNR^{\max\{\alpha,1,\beta-\lambda\}}\Big) + \log\Big(\SNR^{\max\{\lambda,1-\alpha\}}\Big),
\end{eqnarray*}
where (a) follows since $\max\{1,\beta\}\ge \frac{1+\beta}{2}$. We note that if $\beta > 2\lambda+1$,
then $\beta - \lambda > 1 + \lambda \ge 1$, hence, $\max\{\alpha,1,\beta-\lambda\} = \max\{\alpha,\beta-\lambda\}$, and
\[
  \max\{\alpha,\beta-\lambda\}+\max\{\lambda,1-\alpha\}=\max\{\alpha+\lambda,\beta, 1, 1+\beta -\alpha - \lambda\}.
\]
 Therefore, if $\beta > 2\lambda+1$  then the genie-aided GDoF upper bound is given by
\begin{equation}
    \label{eq:GDoF1}
    \mbox{GDoF}_1^+=\max\{\alpha+\lambda,\beta,  1+\beta -\alpha - \lambda\}.
\end{equation}

\subsubsection{An Upper Bound Based on the Cut-Set Theorem}
\label{subsec:upperboundcutset}
We derive three rate bounds
following along the lines of the proof of the cut-set theorem \cite[Thm. 15.10.1]{cover-thomas:it-book}. First, we derive an upper bounds on R$_1$ by considering the
cut $\Sset=\{\mbox{Tx}_1, \mbox{Relay}, \mbox{Rx}_2\}, \Sset^c=\{\mbox{Tx}_2,\mbox{Rx}_1\}$, i.e., allowing full cooperation between
$\mbox{Tx}_1$ and the $\mbox{Relay}$. For this cut we obtain
\begin{eqnarray}
        nR_1&=& H(M_1)\nonumber\\
        &\stackrel{(a)}{\le}& I(M_1;Y_1^n,\utH{}^n|M_2)+n\epsilon_{1n}\nonumber\\
        &=& \sum_{i=1}^n \Big[h(Y_{1,i},\utH_i|Y_{1}^{i-1},\utH{}^{i-1}, M_2) - h(Y_{1,i},\utH_i|Y_{1}^{i-1},\utH{}^{i-1}, M_1,M_2)\Big]+n\epsilon_{1n\nonumber}\\
        &=& \sum_{i=1}^n \Big[h(\utH_i|Y_{1}^{i-1},\utH{}^{i-1}, M_2) + h(Y_{1,i}|Y_{1}^{i-1},\utH{}^{i}, M_2)\nonumber\\
        &&\qquad\qquad\qquad\qquad
        - h(\utH_i|Y_{1}^{i-1},\utH{}^{i-1}, M_1,M_2)-h(Y_{1,i}|Y_{1}^{i-1},\utH{}^{i}, M_1,M_2)\Big]+n\epsilon_{1n}\nonumber\\
        &\stackrel{(b)}{=}& \sum_{i=1}^n\Big[h(\utH_i) + h(Y_{1,i}|Y_{1}^{i-1},\utH{}^{i}, M_2)\nonumber\\
        &&\qquad\qquad\qquad\qquad
        - h(\utH_i)-h(Y_{1,i}|Y_{1}^{i-1},\utH{}^{i}, M_1,M_2)\Big]+n\epsilon_{1n}\nonumber\\
        &\stackrel{(c)}{\le}& \sum_{i=1}^n\Big[ h(Y_{1,i}|Y_{1}^{i-1}, \utH{}^{i},M_2,X_{2,i}) - h(Y_{1,i}|Y_{1}^{i-1},\utH{}^{i}, M_1,M_2,X_{1,i},X_{2,i},X_{3,i})\Big]+n\epsilon_{1n}\nonumber\\
        &\stackrel{(d)}{\le}& \sum_{i=1}^n \Big[h(Y_{1,i}|X_{2,i},\utH_i) - h(Y_{1,i}|X_{1,i},X_{2,i},X_{3,i},\utH_i)\Big]+n\epsilon_{1n}\nonumber\\
        &=& \sum_{i=1}^n I(X_{1,i},X_{3,i};Y_{1,i}|X_{2,i},\utH_i)+ n\epsilon_{1n}\nonumber\\
        &=& \sum_{i=1}^n \E_{\utH_i}\Big\{ I(X_{1,i},X_{3,i};Y_{1,i}|X_{2,i},\utH_i = \uth_i)\Big\}+ n\epsilon_{1n}\nonumber\\
        &\stackrel{(e)}{\le}& \sum_{i=1}^n \E_{\utH_i}\bigg\{ \max_{\substack{0\le|\CORR_i|\le 1\\P_{k,i}\le1, k\in\{1,2,3\}}} I(X_{1G,i},X_{3G,i};Y_{1G,i}|X_{2G,i},\utH_i = \uth_i)\bigg\}+ n\epsilon_{1n}\nonumber\\
        &\stackrel{(f)}{\le}& \sum_{i=1}^n \E_{\utH_i}\Big\{ \log\Big(1+|h_{11,i}|^2+|h_{31,i}|^2\Big)\Big\}+ n\epsilon_{1n}\nonumber\\
        &=& \sum_{i=1}^n \log\Big(1+\SNRA_{11}+\SNRA_{31}\Big)+ n\epsilon_{1n}\nonumber\\
        &=& n \cdot \log\Big(1+\SNRA_{11}+\SNRA_{31}\Big)+ n\epsilon_{1n},\label{eq:eq6G}
\end{eqnarray}
where (a) follows from Fano's inequality \cite[Thm. 2.10.1]{cover-thomas:it-book} and since the messages from Tx$_1$ and Tx$_2$ are drawn independently, (b) follows since channel coefficients are i.i.d. in time and are independent of the channel inputs, of the noise, and of the messages sent by the sources, (c) follows since $X_{2,i}$ is a deterministic function of $M_2$ and since adding conditioning decreases the differential entropy, (d) follows since adding conditioning can only decrease entropy, and since the channel outputs at time $i$ depend only on the channel inputs and the channel coefficients at time $i$. To prove step (e) first note that
\begin{equation*}
  h(Y_{1,i}|X_{2,i},\utH_i) - h(Y_{1,i}|X_{1,i},X_{2,i},X_{3,i},\utH_i)=h(Y_{1,i}|X_{2,i},\utH_i) - h(Z_{1,i}).
\end{equation*}
Step (e) then follows from \cite[Lemma 2]{Zahavi:12} which states that the conditional entropy is maximized by jointly circularly symmetric complex normal channel inputs with covariance matrix $\cov(X_{2,i},Y_{1,i})$. Note that we can write
\[
	(X_{2,i},Y_{1,i}) = \left[ \begin{array}{cccc}
	0 & 1 & 0 & 0\\
	H_{11,i} \;& \; H_{21,i} \; & \; H_{31,i} \; & \; 1
	\end{array}\right]
	\left[\begin{array}{c}
	X_{1,i}\\ X_{2,i} \\ X_{3,i} \\ Z_{1,i}
	\end{array}\right].
\]
As the pair $(X_{2,i},Y_{1,i})$ is a linear transformation of a random vector, and as in addition, $(X_{2,i},Y_{1,i})$ is distributed according to a zero mean, jointly complex Gaussian
distribution, we conclude that the joint distribution of $(X_{2G,i},Y_{1G,i})$ with the covariance matrix  $\cov(X_{2,i},Y_{1,i})$  is obtained by letting $(X_{1,i}, X_{2,i},X_{3,i}, Z_{1,i})$ be a jointly complex Gaussian random vector, which, in turn is obtained when $(X_{1,i}, X_{2,i},X_{3,i})$ is a jointly complex Normal vector with covariance matrix $\cov(X_{1,i}, X_{2,i},X_{3,i})$.
Finally, step (f) follows from \cite[Eqn. (A.10)]{Zahavi:12}.

Next, by using the cut $\Sset=\{\mbox{Tx}_1,\mbox{Rx}_2\}, \Sset^c=\{\mbox{Relay},\mbox{Tx}_2,\mbox{Rx}_1\}$, i.e., by allowing full cooperation between Rx$_1$ and the relay, we obtain an additional upper bound on $R_1$. This bound is expressed as:
\begin{eqnarray}
        nR_1&=& H(M_1)\nonumber\\
        &\le& I(M_1;Y_1^n,\utH^n|M_2)+n\epsilon_{1n}\nonumber\\
        &\le& I(M_1;Y_1^n,Y_3^n,\utH^n|M_2)+n\epsilon_{1n}\nonumber\\
        &=& \sum_{i=1}^n \Big[ h(Y_{1,i},Y_{3,i},\utH_i|Y_{1}^{i-1},Y_{3}^{i-1},\utH^{i-1}, M_2) - h(Y_{1,i},Y_{3,i},\utH_i|Y_{1}^{i-1},Y_{3}^{i-1},\utH^{i-1}, M_1,M_2)\Big]+n\epsilon_{1n}\nonumber\\
        &=& \sum_{i=1}^n\Big[ h(\utH_i|Y_{1}^{i-1},Y_{3}^{i-1},\utH^{i-1}, M_2) + h(Y_{1,i},Y_{3,i}|Y_{1}^{i-1},Y_{3}^{i-1},\utH^{i}, M_2)\nonumber\\
        &&\qquad\qquad
        - h(\utH_i|Y_{1}^{i-1},Y_{3}^{i-1},\utH^{i-1}, M_1,M_2)-h(Y_{1,i},Y_{3,i}|Y_{1}^{i-1},Y_{3}^{i-1},\utH^{i}, M_1,M_2)\Big]+n\epsilon_{1n}\nonumber\\
        &\stackrel{(a)}{\le}& \sum_{i=1}^n\Big[ h(Y_{1,i},Y_{3,i}|Y_{1}^{i-1}, Y_{3}^{i-1}, \utH^{i},M_2,X_{2,i},X_{3,i})\nonumber\\
        &&\qquad\qquad- h(Y_{1,i},Y_{3,i}|Y_{1}^{i-1},Y_{3}^{i-1},\utH^{i}, M_1,M_2,X_{1,i},X_{2,i},X_{3,i})\Big]+n\epsilon_{1n}\nonumber\\
        &\le& \sum_{i=1}^n\Big[ h(Y_{1,i},Y_{3,i}|X_{2,i},X_{3,i},\utH_i) - h(Y_{1,i},Y_{3,i}|X_{1,i},X_{2,i},X_{3,i},\utH_i)\Big]+n\epsilon_{1n}\nonumber\\
        &=&   \sum_{i=1}^n I(X_{1,i};Y_{1,i},Y_{3,i}|X_{2,i},X_{3,i},\utH_i)+ n\epsilon_{1n}\nonumber\\
        &=& \sum_{i=1}^n \E_{\utH_i}\Big\{ I(X_{1,i};Y_{1,i},Y_{3,i}|X_{2,i},X_{3,i},\utH_i=\uth_i)\Big\}+ n\epsilon_{1n}\nonumber\\
        &\le& \sum_{i=1}^n \E_{\utH_i}\bigg\{\max_{\substack{0\le|\CORR_i|\le 1\\P_{k,i}\le1, k\in\{1,2,3\}}} I(X_{1G,i};Y_{1G,i},Y_{3G,i}|X_{2G,i},X_{3G,i},\utH_i=\uth_i)\bigg\}+ n\epsilon_{1n}\nonumber
\end{eqnarray}
\begin{eqnarray}
        &\stackrel{(b)}{\le}& \sum_{i=1}^n \E_{\utH_i}\Big\{\log\Big(1+|h_{11,i}|^2+|h_{13,i}|^2\Big)\Big\}+ n\epsilon_{1n}\hspace{7cm}\nonumber\\
        &=& \sum_{i=1}^n \log\Big(1+\SNRA_{11}+\SNRA_{13}\Big)+ n\epsilon_{1n}\nonumber\\
        &=& n \cdot \log\Big(1+\SNRA_{11}+\SNRA_{13}\Big)+ n\epsilon_{1n},\label{eq:eq7G}
\end{eqnarray}
where (a) follows since $M_2$ deterministically determines $X_2^n$, $(Y_3^{i-1},\tH_3^{i-1})$ deterministically determine $X_{3,i}$ and since conditioning reduces entropy, and (b) follows from \cite[Eqn. (A.5)]{Zahavi:12}.

Lastly, we use the cut  $\Sset=\{\mbox{Tx}_2,\mbox{Rx}_1\}$, $\Sset^c=\{\mbox{Relay},\mbox{Tx}_1,\mbox{Rx}_2\}$ to obtain an upper bound on $R_2$:
\begin{eqnarray}
        nR_2&=& H(M_2)\nonumber\\
        &\le& I(M_2;Y_2^n,\utH^n|M_1)+n\epsilon_{2n}\nonumber\\
        &\le& I(M_2;Y_2^n,Y_3^n,\utH^n|M_1)+n\epsilon_{2n}\nonumber\\
        &=& \sum_{i=1}^n\Big[ h(Y_{2,i},Y_{3,i},\utH_i|Y_{2}^{i-1},Y_{3}^{i-1},\utH^{i-1}, M_1) - h(Y_{2,i},Y_{3,i},\utH_i|Y_{2}^{i-1},Y_{3}^{i-1},\utH^{i-1}, M_1,M_2)\Big]+n\epsilon_{2n}\nonumber\\
        &=& \sum_{i=1}^n \Big[h(\utH_i|Y_{2}^{i-1},Y_{3}^{i-1},\utH^{i-1}, M_1) + h(Y_{2,i},Y_{3,i}|Y_{2}^{i-1},Y_{3}^{i-1},\utH^{i}, M_1) \nonumber\\
        &&\qquad\qquad- h(\utH_i|Y_{2}^{i-1},Y_{3}^{i-1},\utH^{i-1}, M_1,M_2)-h(Y_{2,i},Y_{3,i}|Y_{2}^{i-1},Y_{3}^{i-1},\utH^{i}, M_1,M_2)\Big]+n\epsilon_{2n}\nonumber\\
        &\le& \sum_{i=1}^n\Big[ h(Y_{2,i},Y_{3,i}|Y_{2}^{i-1}, Y_{3}^{i-1} \utH^{i},M_1,X_{1,i},X_{3,i})\nonumber\\
        &&\qquad\qquad- h(Y_{2,i},Y_{3,i}|Y_{2}^{i-1},Y_{3}^{i-1}, \utH^{i}, M_1,M_2,X_{1,i},X_{2,i},X_{3,i})\Big]+n\epsilon_{2n}\nonumber\\
        &\le& \sum_{i=1}^n \Big[h(Y_{2,i},Y_{3,i}|X_{1,i},X_{3,i},\utH_i) - h(Y_{2,i},Y_{3,i}|X_{1,i},X_{2,i},X_{3,i},\utH_i)\Big]+n\epsilon_{2n}\nonumber\\
        &=& \sum_{i=1}^n I(X_{2,i};Y_{2,i},Y_{3,i}|X_{1,i},X_{3,i},\utH_i)+ n\epsilon_{2n}\nonumber\\
        &=& \sum_{i=1}^n \Big[I(X_{2,i};Y_{3,i}|X_{1,i},X_{3,i},\utH_i)+ I(X_{2,i};Y_{2,i}|X_{1,i},X_{3,i},Y_{3,i},\utH_i)\Big]+ n\epsilon_{2n}\nonumber\\
        &\stackrel{(a)}{=}& \sum_{i=1}^n I(X_{2,i};Y_{2,i}|X_{1,i},X_{3,i},Y_{3,i},\utH_i)+ n\epsilon_{2n}\nonumber\\
        &=& \sum_{i=1}^n I(X_{2,i};Y_{2,i}|X_{1,i},X_{3,i},Z_{3,i},\utH_i)+ n\epsilon_{2n}\nonumber\\
        &\stackrel{(b)}{=}& \sum_{i=1}^n I(X_{2,i};Y_{2,i}|X_{3,i},\utH_i)+ n\epsilon_{2n}\nonumber\\
        &=& \sum_{i=1}^n \E_{\utH_i}\Big\{ I(X_{2,i};Y_{2,i}|X_{3,i},\utH_i=\uth_i)\Big\}+ n\epsilon_{2n}\nonumber
\end{eqnarray}
\begin{eqnarray}
        &\le& \sum_{i=1}^n \E_{\utH_i}\bigg\{\max_{\substack{0\le|\CORR_i|\le 1\\P_{k,i}\le1, k\in\{1,2,3\}}} I(X_{2G,i};Y_{2G,i}|X_{3G,i},\utH_i=\uth_i)\bigg\}+ n\epsilon_{2n}\hspace{4cm}\nonumber\\
        &=& \sum_{i=1}^n \E_{\utH_i}\Big\{ \log\Big(1+|h_{22,i}|^2\Big)\Big\}+ n\epsilon_{2n}\nonumber\\
        &=& \sum_{i=1}^n \log\Big(1+\SNRA_{22}\Big)+ n\epsilon_{2n}\nonumber\\
        &=& n \cdot \log\Big(1+\SNRA_{22}\Big)+ n\epsilon_{2n},\label{eq:eq8G}
\end{eqnarray}
where (a) follows since the signal $X_{2,i}$ is independent of $(X_{1,i},X_{3,i},Z_{3,i},\utH_i)$, and thus
\begin{equation*}
    I(X_{2,i};Y_{3,i}|X_{1,i},X_{3,i},\utH_i)=I(X_{2,i};Z_{3,i}|X_{1,i},X_{3,i},\utH_i)=0,
\end{equation*}
and (b) follows since $Y_{2,i}$ is a function of only $X_{2,i}, X_{3,i}, H_{22,i}, H_{32,i}$, and $Z_{2,i}$, and thus, given $X_{3,i}$, $Y_{2,i}$ is independent of $(X_{1,i}, Z_{3,i})$.
Since, for $n\rightarrow\infty$ we have $\epsilon_{kn}\rightarrow 0, k\in\{1,2\}$, then by combining \eqref{eq:eq6G}-\eqref{eq:eq8G} we obtain
\begin{eqnarray*}
    R_{\mbox{\scriptsize sum}}&\le&\min\Big\{ \log\Big(1+\SNRA_{11}+\SNRA_{31}\Big),\log\Big(1+\SNRA_{11}+\SNRA_{13}\Big)\Big\} + \log\Big(1+\SNRA_{22}\Big)\\
    &=&\min\Big\{\log\Big(1+\SNR+\SNR^{\beta}\Big),\log\Big(1+\SNR+\SNR^{\gamma}\Big)\Big\}+\log\Big(1+\SNR\Big)\\
    &\doteq&\min\Big\{\log\Big(\SNR^{\max\{1,\beta\}}\Big),\log\Big(\SNR^{\max\{1,\gamma\}}\Big)\Big\}+\log\Big(\SNR\Big).
\end{eqnarray*}
Thus, the cut-set based GDoF upper bound is given by:
\begin{equation}
    \label{eq:GDoF2}
    \mbox{GDoF}_2^+=1+\min\big\{\max\{1,\beta\},\max\{1,\gamma\}\big\}=\max\big\{2,1+\min\{\beta,\gamma\}\big\}.
\end{equation}
Note that \eqref{eq:GDoF2} holds for any relationship between $\alpha, \beta, \gamma$ and $\lambda$.
We conclude that an upper bound on the GDoF of the Z-ICR is given by the minimum of \eqref{eq:GDoF1} and \eqref{eq:GDoF2}, which coincides with \eqref{eq:GDoFOB}.
\end{proof}

\subsection{A Lower Bound on the Achievable GDoF}
\label{sec:IB}
A lower bound on the achievable GDoF of the ergodic phase fading Z-ICR is stated in the following proposition:
\begin{proposition}
\label{prop:lb_GDoF}
    Consider the ergodic phase fading Z-ICR defined in Section \ref{sec:Model}. The GDoF of this channel is lower bounded by
    \begin{equation}
        \label{eq:AchGDoF}
        \mbox{\em GDoF}^-=\min\Big\{\gamma, \max\big\{(1-\alpha)^+,(\beta-\alpha)^+\big\}\Big\}+ (1-\lambda)^+.
    \end{equation}
\end{proposition}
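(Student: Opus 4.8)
The bound $\mathrm{GDoF}^-$ is obtained by specializing the decode‑and‑forward (DF) achievability scheme of Section~\ref{sec:achievableRegionIID} to the scaling model \eqref{eq:SNRLine1}--\eqref{eq:SNRLine3} and then computing the high‑SNR slope of the resulting achievable sum‑rate. Since Proposition~\ref{prop:lb_GDoF} carries no hypothesis on $(\alpha,\beta,\gamma,\lambda)$, the first task is to state an achievable region that does not require condition \eqref{eq:Relay_con1}. Inspecting the proof of Proposition~\ref{thm:achievable_region}, condition \eqref{eq:Relay_con1} is invoked only in its very last step, to guarantee that the relay can decode the message of $\mathrm{Tx}_1$ whenever $\mathrm{Rx}_1$ can. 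Dropping \eqref{eq:Relay_con1} while keeping the same codebook construction, block‑Markov encoding, forward (joint‑typicality) decoding at the relay, and backward decoding at $\mathrm{Rx}_1$, the relay‑decoding constraint $R_1\le I(X_1;Y_3|X_3,\tH_3)$ and the $\mathrm{Rx}_1$‑decoding constraint $R_1\le I(X_1,X_3;Y_1|\tH_1)$ are now simply imposed \emph{simultaneously}, while $\mathrm{Rx}_2$ still treats the relay signal as noise. This yields: with mutually independent channel inputs $X_k\sim\CN(0,1)$, $k\in\{1,2,3\}$, i.i.d. in time, every non‑negative pair $(R_1,R_2)$ with $R_1\le \min\{I(X_1,X_3;Y_1|\tH_1),\,I(X_1;Y_3|X_3,\tH_3)\}$ and $R_2\le I(X_2;Y_2|\tH_2)$ is achievable, so
\begin{equation*}
  C_{\mathrm{sum}}(\SNR)\ \ge\ \min\big\{I(X_1,X_3;Y_1|\tH_1),\ I(X_1;Y_3|X_3,\tH_3)\big\}+I(X_2;Y_2|\tH_2).
\end{equation*}

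Next I would evaluate the three mutual informations under the scaling \eqref{eq:SNRLine1}--\eqref{eq:SNRLine3} with unit powers, using the closed forms already computed at the end of the proof of Theorem~\ref{thm:WI}:
\begin{align*}
  I(X_1,X_3;Y_1|\tH_1) &= \log\Big(1+\tfrac{\SNR+\SNR^{\beta}}{1+\SNR^{\alpha}}\Big),\\
  I(X_1;Y_3|X_3,\tH_3) &= \log\big(1+\SNR^{\gamma}\big),\\
  I(X_2;Y_2|\tH_2) &= \log\Big(1+\tfrac{\SNR}{1+\SNR^{\lambda}}\Big).
\end{align*}
Dividing by $\log\SNR$ and letting $\SNR\to\infty$ gives $I(X_1,X_3;Y_1|\tH_1)/\log\SNR\to(\max\{1,\beta\}-\alpha)^+$, $I(X_1;Y_3|X_3,\tH_3)/\log\SNR\to\gamma$, and $I(X_2;Y_2|\tH_2)/\log\SNR\to(1-\lambda)^+$. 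Because $x\mapsto(x-\alpha)^+$ is non‑decreasing, $(\max\{1,\beta\}-\alpha)^+=\max\{(1-\alpha)^+,(\beta-\alpha)^+\}$.

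Finally, combining the displayed lower bound on $C_{\mathrm{sum}}(\SNR)$ with these limits (using continuity of $\min$ and of sums),
\begin{equation*}
  \mathrm{GDoF}=\lim_{\SNR\to\infty}\frac{C_{\mathrm{sum}}(\SNR)}{\log\SNR}\ \ge\ \min\big\{\gamma,\ \max\{(1-\alpha)^+,(\beta-\alpha)^+\}\big\}+(1-\lambda)^+=\mathrm{GDoF}^-,
\end{equation*}
which is the claim. Replacing the unit powers by arbitrary fixed $P_k\in(0,1]$ changes each mutual information by an additive $O(1)$ term and hence does not affect the GDoF, consistently with the power statement in Theorem~\ref{thm:WI-GDoF}.

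The only step that is not entirely mechanical is the first one: one must confirm that discarding condition \eqref{eq:Relay_con1} from Proposition~\ref{thm:achievable_region} still produces a valid scheme in which $R_1$ equals the minimum of the relay‑decoding rate and the $\mathrm{Rx}_1$‑decoding rate, i.e., that the forward decoding at the relay and the backward decoding at $\mathrm{Rx}_1$ remain independently valid and that imposing both constraints jointly suffices. The remaining steps are the (already performed) evaluation of Gaussian mutual informations together with elementary bookkeeping of positive parts.
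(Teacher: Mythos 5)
Your proposal is correct and follows essentially the same route as the paper: the paper likewise drops condition \eqref{eq:Relay_con1} and works directly with the region $R_1\le\min\{I(X_1,X_3;Y_1|\tH_1),I(X_1;Y_3|X_3,\tH_3)\}$, $R_2\le I(X_2;Y_2|\tH_2)$ obtained from the same DF/backward-decoding scheme, then evaluates the Gaussian mutual informations under the scaling \eqref{eq:SNRLine1}--\eqref{eq:SNRLine3} and reads off the exponents exactly as you do. The only cosmetic difference is that the paper carries general powers $0<P_k\le 1$ through the evaluation rather than setting $P_k=1$ and appealing to an $O(1)$ argument.
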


\phantomsection
\label{phn:proofProp3}

\begin{proof}
We use a communications scheme similar to the communications scheme of Section \ref{sec:achievableRegionIID}: The transmitters use mutually independent codebooks generated according to the i.i.d. (in time) complex Normal distributions: $X_{k,i}\sim\CN(0,P_k), k\in\{1,2,3\}, i\in\{1,2,...,n\}$, $0 < P_k \le 1$, $k\in\{1,2,3\}$. Encoding is based on the DF scheme at the relay, and for decoding we use a backward decoding scheme at Rx$_1$, and a PtP decoding rule at Rx$_2$, where both receivers treat the additive interference as noise.
Repeating the analysis in the proof of Prop. \ref{thm:achievable_region} it follows that this
coding scheme results in the following achievable rate region for the Z-ICR:
\begin{subequations}
\label{eq:ZICR_region}
\begin{eqnarray}
    R_1 &\le& \min\big\{I(X_1,X_3;Y_1|\tH_1),I(X_1;Y_3|X_3,\tH_3)\big\}\\
    R_2 &\le& I(X_2;Y_2|\tH_2).
\end{eqnarray}
\end{subequations}
Explicitly evaluating the mutual information expressions in \eqref{eq:ZICR_region} for the Gaussian p.d.f. on $(X_1,X_2,X_3)$ specified above, we arrive at
     \begin{eqnarray}
        I(X_1,X_3;Y_1|\tH_1)&=& \E_{\tH_1}\big\{I(X_1,X_3;Y_1|\tH_1=\th_1)\big\}\nonumber\\
        &=&\E_{\tH_1}\bigg\{\log\Big(1+\frac{P_1|h_{11}|^2+P_3|h_{31}|^2}{1+P_2|h_{21}|^2}\Big)\bigg\}\nonumber\\
        &=& \log\Big(1+\frac{P_1\SNR+P_3\SNR^{\beta}}{1+P_2\SNR^{\alpha}}\Big)\nonumber\\
        &\doteq& \log\Big(\SNR^{\max\{(1-\alpha)^+, (\beta-\alpha)^+\}}\Big),\label{eq:eq1}\\
        I(X_1;Y_3|X_3,\tH_3)&=&\E_{\tH_3}\big\{I(X_1;Y_3|X_3,\tH_3=\th_3)\big\}\nonumber\\
        &=&\E_{\tH_3}\big\{\log(1+P_1|h_{13}|^2)\big\}\nonumber\\
        &=&\log\Big(1+P_1\cdot\SNR^{\gamma}\Big)\nonumber\\
        &\doteq&\log\Big(\SNR^{\gamma}\Big)\label{eq:eq2C}\\
        I(X_2;Y_2|\tH_2)&=& \E_{\tH_2}\big\{I(X_2;Y_2|\tH_2=\th_2)\big\}\nonumber\\
        &=&\E_{\tH_2}\bigg\{\log\Big(1+\frac{P_2|h_{22}|^2}{1+P_3|h_{32}|^2}\Big)\bigg\}\nonumber\\
        &=&\log\bigg(1+\frac{P_2\cdot\SNR}{1+P_3\cdot\SNR^{\lambda}}\bigg)\nonumber\\
        &\doteq&\log\bigg(\SNR^{(1-\lambda)^+}\bigg).\label{eq:eq3}
    \end{eqnarray}
Combining \eqref{eq:eq1}-\eqref{eq:eq3}, we obtain the achievable GDoF stated in \eqref{eq:AchGDoF}.
\end{proof}

\subsection{The Optimality of Treating Interference as Noise}
\label{sec:GDoFOptSec}
In this section, we derive conditions on the SNR exponents of the channel coefficients under which the lower bound in \eqref{eq:AchGDoF} coincides with the upper bound in \eqref{eq:GDoFOB}, thereby characterizing the maximal GDoF for the ergodic phase fading Z-ICR in the WI regime. Consider \eqref{eq:AchGDoF} and note that if $\alpha\le \frac{1}{2}$ and $1+2\alpha< \beta \le\gamma+\alpha$, then it follows that $\gamma\ge 1$, $\beta\ge 1$,
$\max\big\{(1-\alpha)^+,(\beta-\alpha)^+\big\}=\beta-\alpha$, and $\min\{\gamma,\beta-\alpha\}=\beta-\alpha$. If it additionally holds that $\lambda=\alpha$ then \eqref{eq:AchGDoF} results in $\mbox{GDoF}^-=1+\beta-2\alpha$.

Next, consider \eqref{eq:GDoFOB} and note that if $\lambda=\alpha$, $1+2\alpha < \beta,$ and $\alpha \le \frac{1}{2}$, then
\[
  \max\{\alpha+\lambda, \beta,  1+\beta -\alpha - \lambda\} = \max\{2\alpha, \beta,  1+\beta -2\alpha \}= \max\{ \beta,  1+\beta -2\alpha \} =1+\beta -2\alpha
\]
and thus, \eqref{eq:GDoFOB} specializes to
\begin{equation*}
    \mbox{GDoF}^+=\min\Big\{\max\big\{2,1+\min\{\beta,\gamma\}\big\},\hspace{1mm} 1+\beta-2\alpha \Big\}.
\end{equation*}

Next, from \eqref{eq:optimalGDoFCon2} we conclude that $\gamma \ge 1$ and $\beta > 1+2\alpha\ge1$, and hence, $\max\big\{2,1+\min\{\beta,\gamma\}\big\}=1+\min\{\beta,\gamma\}$. Lastly, the condition $\beta\le\gamma+\alpha$ implies that $\beta\le\min\{\beta,\gamma\}+2\alpha$, i.e., $1+\beta-2\alpha\le1+\min\{\beta,\gamma\}$, and thus, it follows that $\mbox{GDoF}^+=1+\beta-2\alpha$. We conclude that if \eqref{eq:optimalGDoFCon} is satisfied, then \eqref{eq:GDoFOB} coincides with \eqref{eq:AchGDoF} and both are equal to $1+\beta-2\alpha$, thereby characterizing the maximal GDoF for the ergodic phase fading Z-ICR subject to \eqref{eq:optimalGDoFCon}.
The maximizing input distribution follows directly from the input distribution used in the proof of the lower bound in Prop. \ref{prop:lb_GDoF}, namely
$X_{k}\sim\CN(0,P_k)$, $0 < P_k \le 1$, $k\in\{1,2,3\}$.
\end{proof}

\subsection{Discussion}

\begin{figure}[!t]
\centering
    \hspace{-0.4cm}
    \subfloat[$\beta=\gamma=2$]{\includegraphics[scale=0.5]{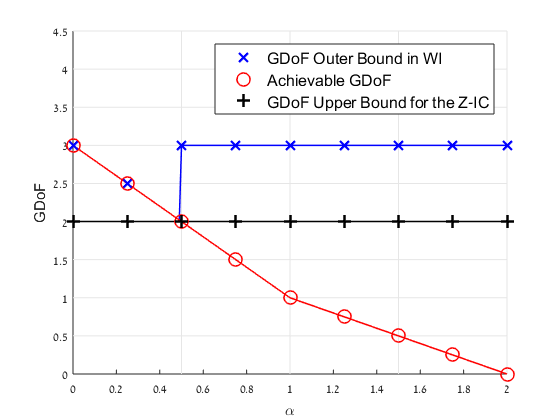}}
\hspace{-0.1cm}
    \subfloat[$\beta=\gamma=1.2$]{\includegraphics[scale=0.5]{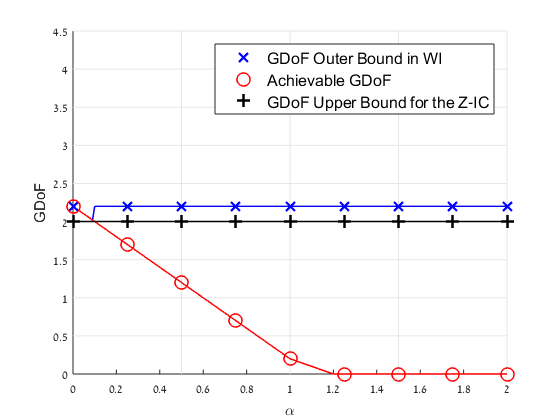}}
    \hspace{-0.4cm}
    \caption{\small The upper bound on the GDoF of \eqref{eq:GDoFOB}, and the achievable GDoF of \eqref{eq:AchGDoF} for the phase fading Z-ICR, together with the GDoF upper bound for the Z-IC given in \eqref{eq:ZICUB}}
    \label{fig:fig1}\vspace{-0.7 cm}
\end{figure}

\begin{remark}
\label{rem:GDoFImproveZIC}
    \em{} Consider the ergodic phase fading Z-IC:
    An upper bound on the achievable sum-rate for this channel is  obtained   by applying cut-set theorem \cite[Thm. 15.10.1]{cover-thomas:it-book}:
        \begin{eqnarray}
        R^{\mbox{\footnotesize PF-Z-IC}}_{\mbox{\scriptsize sum}}(\SNR) & \le & \max_{\left\{\substack{f(x_1)f(x_2):\\ \E\big\{|X_k|^2\big\}\le1, \;k\in\{1,2\}}\right\}}\Big\{\! I(X_{1};Y_{1}|X_{2},\utH)\! +\! I(X_{2};Y_{2}|\utH)\!\Big\}\nonumber\\
        \label{eq:ZICUB}
        & \doteq & \log(\SNR)+\log(\SNR).
    \end{eqnarray}
    It follows that the GDoF for this channel is upper bounded by $2$. Comparing the GDoF upper bound of the phase fading Z-IC with the lower bound on the GDoF of the phase fading Z-ICR stated in \eqref{eq:AchGDoF}, we note that if $1\le \beta \le \gamma+\alpha$, then for the ergodic
		phase fading Z-ICR we have $\mbox{GDoF}_{\mbox{\tiny Z-ICR}}^{-}=(\beta-\alpha)^++ (1-\lambda)^+$. Hence, when $2 < (\beta-\alpha)^++(1-\lambda)^+$ the relay node {\em strictly increases} the GDoF of the ergodic phase fading Z-IC {\em even in scenarios in which the relay receives transmissions only from one of the transmitters}, and the interference is treated as noise at both destinations. In Fig. \ref{fig:fig1}, the GDoF upper bound and the GDoF lower bound for the Z-ICR, as well as the upper bound on the GDoF of the Z-IC, are plotted vs. $\alpha$ for two sets of $(\beta,\gamma)$: $\beta=\gamma=2$ and $\beta=\gamma=1.2$, subject to ergodic phase fading.
Observe that the GDoF for the Z-ICR is {\em strictly} greater than that of the Z-IC for  $\beta=\gamma=2$, when  $\alpha< 0.5$ and for $\beta=\gamma=1.2$, when $\alpha< 0.1$.
Fig. \ref{fig:fig1} also clearly demonstrates the GDoF optimality of treating interference as noise in the WI regime.
\end{remark}

\begin{remark}
\label{commt:mutualIndep}
    \em{} Note that from Theorems \ref{thm:WI} and \ref{thm:WI-GDoF} we conclude that mutually independent channel inputs achieve both the sum-rate capacity and the maximal GDoF of the ergodic phase fading Z-ICR in the weak interference regime. Hence, using the communications scheme described in Section \ref{sec:achievableRegionIID}, there is no need for coordinating the codebooks of Tx$_1$ and of the relay to achieve optimality in both perspectives (capacity and GDoF). This observation suggests that when adding a relay to the interference network considered in this manuscript,
		the transmission scheme at the sources should remain unchanged, and that only the receivers should be modified to take advantage of the relay transmissions when decoding the messages
from the sources, in order to improve performance. This conclusion substantially simplifies adding relay nodes to existing wireless communications networks, and provides a strong support for user cooperation for interference management in the weak interference regime.
\end{remark}

\begin{remark}
\label{comm:achifinit}
\em{} From the derivation of the achievable GDoF in Section \ref{sec:FULLGDoF}, it directly follows that the maximal GDoF of the ergodic phase fading Z-ICR can be achieved with channel inputs generated according to mutually independent i.i.d Gaussians {\em with any arbitrary non-zero power},
and it is not necessary to use the maximal power $P_k=1$, $k=1,2,3$, for generating the channel inputs. Note, however, that the  technical derivation of the GDoF upper bound does require $P_k=1$, $k=1,2,3$,
because we first upper bound the rate at {\em any} SNR and then take $\SNRA \rightarrow \infty$. Yet, the achievability scheme can obtain the maximal GDoF when the nodes transmit with any finite positive powers, as long as
the conditions of Thm. \ref{thm:WI-GDoF} are satisfied.
\end{remark}

\section{Conclusions}
\label{sec:conclusions}
In this paper, we studied the two major performance measures of the ergodic phase fading Z-ICR: The sum-rate capacity and the GDoF. We focused on scenarios in which the interference is weak and the relay receives transmissions only from $\Tgood$. We first characterized the sum-rate capacity of the ergodic phase fading Z-ICR in the WI regime. This is the first capacity result for the Z-ICR in the WI regime in which the relay power is finite. Next, we explained why GDoF analysis is relevant for this channel model although the fading process is ergodic, and then characterized the maximal GDoF for this channel in the weak interference regime. To the best of our knowledge, this is the first time that GDoF analysis is carried out for a fading scenario. For both performance measures, optimal performance was achieved by treating the interfering signal as additive noise at the
destination receivers, in combination with using the DF strategy at the relay.

Our results show that adding a relay to the Z-IC enhances both its sum-capacity and GDoF compared to communications without a relay. Combined with our previous results on fading ICRs in the SI regime \cite{Zahavi:12}, \cite{Dabora:12}, and \cite{Zahavi:15}, we conclude that there is a very strong motivation for employing relay nodes for interference management in both the WI regime as well as in the strong interference regime. Additionally, the fact that the optimal channel inputs are mutually independent both in the strong interference regime and in the weak interference regime, further motivates incorporating relay nodes into existing wireless networks. The results in this paper constitute a starting point for studying the combination of cooperation and interference in the WI regime.

\newpage
\appendices
\setcounter{equation}{0}
\numberwithin{equation}{section}

\section{Proof of Lemma \ref{lem:lemma1}}
\label{app:ProofOfLemma1FULL}
The proof is based on \cite[Theorem 1]{Liu:07} and \cite[Corollary 2]{Shang:09}. Note that since $\Zvec_1$ and $\Zvec_2$ are circularly symmetric complex Normal random vectors, then they can be written as $\tilde{\Zvec}_k\triangleq(\Zvec_{kR}^T,\Zvec_{kI}^T)^T, k=1,2$, where $\Zvec_{kR}$ and $\Zvec_{kI}$ are two mutually independent i.i.d., $n$-dimensional real Gaussian random vectors which represent the real and imaginary parts of $\Zvec_k$, respectively. It follows that the noise vectors $\tilde{\Zvec}_1$ and $\tilde{\Zvec}_2$ are two $2n$-dimensional Gaussian random vectors with i.i.d. entries. Similarly, consider $\tilde{\Xvec}\triangleq(\Xvec_R^T,\Xvec_I^T)^T$ where $\Xvec_R$ and $\Xvec_I$ are the real and imaginary parts of $\Xvec$, respectively. Using these new definitions, we obtain
     \begin{eqnarray*}
        h(\Xvec+\Zvec_1)-h(\Xvec+\Zvec_2) &=& h(\Xvec_R+\Zvec_{1R},\Xvec_I+\Zvec_{1I})-h(\Xvec_R+\Zvec_{2R},\Xvec_I+\Zvec_{2I})\nonumber\\
        &=& h(\tilde{\Xvec}+\tilde{\Zvec}_1)-h(\tilde{\Xvec}+\tilde{\Zvec}_2).
    \end{eqnarray*}
It follows that the maximization problem in \eqref{eq:Lemma1OP} can be rewritten in the following equivalent form:
    \begin{eqnarray}
        \label{eq:Lemma1OP2}
        &&\max_{f(\tilde{\xvec})}\mbox{ } h(\tilde{\Xvec}+\tilde{\Zvec}_1)-h(\tilde{\Xvec}+\tilde{\Zvec}_2)\\
        &&\mbox{subject to } \tr\big(\cov(\tilde{\Xvec})\big)\le nP,\nonumber
    \end{eqnarray}
where we note that the constraint in the new maximization problem is due to the fact that $\tr\big(\cov(\Xvec)\big)=\tr\big(\cov(\tilde{\Xvec})\big)$. Next, note that in \cite[Section III]{Shang:09} it is stated that a Gaussian random vector is the optimal solution to \eqref{eq:Lemma1OP2} (see the discussion beneath Eq. (26) in \cite{Shang:09}). Additionally, note that from \cite[Lemma. 1]{Zahavi:12} it follows that for a random vector $\Xvec$, the zero-mean, random vector $\Xvec^{\rm zm}\triangleq\Xvec-\E\{\Xvec\}$ has the same entropy as $\Xvec$. Thus, from these two observations, we obtain that a zero-mean complex Normal random vector is the optimal solution to the original maximization problem in \eqref{eq:Lemma1OP}. Finally, note that in \cite[Corollary 2]{Shang:09}, it is further stated that the optimal solution to \eqref{eq:Lemma1OP2} should have a diagonal covariance matrix of the form $\tilde{P}\Imat$ for some positive real scalar $\tilde{P}$. I.e., the optimal solution for the maximization problem in \eqref{eq:Lemma1OP} should be further {\em circularly symmetric}. Finally, setting $\mu=1$ in \cite[Corollary 2]{Shang:09}, we have that if $\gamma_1\le \gamma_2$, then, since we consider the trace constraint of the form $\tr\big(\cov(\tilde{\Xvec})\big)\le nP$, we obtain $\tilde{P}=\frac{1}{2n}nP$. Hence, we conclude that the optimal solution to \eqref{eq:Lemma1OP} for the scenario where $\gamma_1\le \gamma_2$ is $\Xvec^{\mbox{\tiny Opt}}_G\sim\CN\Big(0,P\cdot\Imat_n\Big)$. This completes the proof of Lemma \ref{lem:lemma1}.
\tend

\section{Proof of Lemma \ref{lem:lemma2}}
\label{app:ProofOfLemma2FULL}
We follow the same steps as in the proof of \cite[Lemma 3]{Shang:09} using circularly symmetric complex Normal RVs instead of real-valued Normal RVs. Consider a complex RV $V'$ that has the same marginal distribution as $V$ but is independent of $(Z,W)$ and $X$. Then, we obtain
    \begin{eqnarray*}
        h(X^n+Z^n|W^n)&=&\int_{\Cset^n} f_{W^n}(w^n) h(X^n+Z^n|W^n=w^n) dw^n\\
                &\stackrel{(a)}{=}&\int_{\Cset^n} f_{W^n}(w^n) h(X^n+V'^n+\frac{\CORRN_{12}}{\sigma_2^2}W^n|W^n=w^n) dw^n\\
                &=&\int_{\Cset^n} f_{W^n}(w^n) h(X^n+V'^n|W^n=w^n) dw^n\\
                &=& h(X^n+V'^n)\\
                &=& h(X^n+V^n),
    \end{eqnarray*}
where (a) follows from the fact that $(V'^n+\frac{\CORRN_{12}}{\sigma_2^2}W^n, W^n)$ has the same joint distribution as $(Z^n,W^n)$. This can be shown using the fact that $V'$ is independent of $W$, and thus, for $k\in\{1,2,...,n\}$ it follows that
    \begin{eqnarray*}
        \E\left\{\Big|V'_k+\frac{\CORRN_{12}}{\sigma_2^2}W_k\Big|^2\right\}&=&\E\{|V'_k|^2\}+\E\left\{\Big|\frac{\CORRN_{12}}{\sigma_2^2}W_k\Big|^2\right\}=
        \Big(\sigma_1^2-\frac{|\CORRN_{12}|^2}{\sigma_2^2}\Big)+\frac{|\CORRN_{12}|^2}{\sigma_2^2}=\sigma_1^2\\
        \E\left\{\Big(V'_k+\frac{\CORRN_{12}}{\sigma_2^2}W_k\Big)\cdot W_k^*\right\}&=&\CORRN_{12}.
    \end{eqnarray*}
As $Z^n,W^n$ and $V'^n$ have i.i.d. elements, then $(V'^n+\frac{\CORRN_{12}}{\sigma_2^2}W^n, W^n)$ has the same mean and the same covariance matrix as $(Z^n,W^n)$. Since $Z^n$, $W^n$, and $V'^n$ are all complex Normal RVs, then the fact the first and the second moments are identical implies that $(V'^n+\frac{\CORRN_{12}}{\sigma_2^2}W^n, W^n)$ has the same joint distribution as $(Z^n,W^n)$. This completes the proof.
\tend

\section{Proof of Lemma \ref{lem:lemma22}}
\label{app:ProofOfLemma3FULL}
We follow similar steps as in the proof of \cite[Lemma 1]{Annapureddy:09}, the only difference being that we use circularly symmetric complex Normal RVs instead of real-valued Normal RVs. Let $Q$ be a time sharing random variable taking values from $1$ to $n$ with equal probability. Let ${\Xvec}_G\sim\CN(0,\frac{1}{n}\sum_{i=1}^{n}\Qmat_{\Xvec_i})$,
and let ${\Yvec}_G$ and ${\Svec}_G$ be the corresponding $\Yvec$ and $\Svec$. Then
\begin{eqnarray*}
    h(\Yvec^n|\Svec^n,\Hmat^n)
    &=&\sum_{i=1}^n h(\Yvec_i|\Yvec^{i-1},\Svec^n,\Hmat^n)\\
    &\stackrel{(a)}{\le}&\sum_{i=1}^n h(\Yvec_i|\Svec_i,\Hmat_i)\\
    &=&n\cdot h(\Yvec_Q|\Svec_Q,\Hmat_Q,Q)\\
    &\stackrel{(b)}{\le}&n\cdot h(\Yvec_Q|\Svec_Q,\Hmat_Q)\\
    &=&n\cdot\E_{\Hmat_Q}\big\{h(\Yvec_Q|\Svec_Q,\hmat_Q)\big\}\\
    &\stackrel{(c)}{=}&n\cdot\E_{\Hmat}\big\{h(\Yvec_Q|\Svec_Q,\hmat)\big\}\\
    &\stackrel{(d)}{\le}&n\cdot\E_{\Hmat}\big\{h({\Yvec}_G|{\Svec}_G,\hmat)\big\}\\
    &=&n\cdot h({\Yvec}_G|{\Svec}_G,\Hmat),
\end{eqnarray*}
where steps (a) and (b) follow since conditioning reduces entropy, (c) follows since the distribution of $\Hmat_Q$ does not depends on $Q$. To prove step (d) note that from \cite[Lemma. 1]{Zahavi:12} it follows that for a random vector $\Xvec$, the zero-mean, random vector $\Xvec^{\rm zm}\triangleq\Xvec-\E\{\Xvec\}$ has the same entropy as $\Xvec$, and hence, since $h({\Yvec}_Q|{\Svec}_Q,\hmat)=h({\Yvec}_Q,{\Svec}_Q|\hmat)-h({\Svec}_Q|\hmat)$, then
\begin{equation*}
  h({\Yvec}_Q|{\Svec}_Q,\hmat)=h({\Yvec}_Q-\E\{{\Yvec}_Q\}|{\Svec}_Q-\E\{{\Svec}_Q\},\hmat).
\end{equation*}
Thus, we can consider only zero mean RVs to further upper bound $h(\Yvec_Q|\Svec_Q,\hmat)$ from step (c). Step (d) then follows from \cite[Lemma 2]{Zahavi:12}.
\tend

\section{Proof of Lemma \ref{lem:lemma3}}
\label{app:ProofOfLemma3}
    The proof is based on the proof of \cite[Lemma 8]{Annapureddy:09}. 
    Recall that a circularly symmetric, complex Normal random vector can be represented as a random vector with double the length, whose components are real, jointly Gaussian RVs. It follows that, as \cite[Lemma 7]{Annapureddy:09} is stated for real Gaussian random vectors,\footnote{\cite[Lemma 7]{Annapureddy:09} states that for the real Gaussian random vectors $\Xvec$, $\Yvec$, and $\Svec$, the following three statements are equivalent: (1) $I(\Xvec;\Svec|\Yvec)=0$, (2) $\Xvec-\Yvec-\Svec$ form a Markov chain, and (3) $\hat{\Svec}(\Xvec,\Yvec)$, the MMSE estimate of $\Svec$ given $(\Xvec,\Yvec)$, is equal to $\hat{\Svec}(\Yvec)$ the MMSE estimate of $\Svec$ given $\Yvec$.} it holds also for  circularly symmetric complex Normal random vectors.
    Letting $\Xvec\triangleq (X_1,X_2)^T$ then
     from \cite[Lemma 7]{Annapureddy:09} we obtain the following equivalence for $(\Xvec, Y_1, Y_2)$:
    \begin{equation}
    \label{eq:AppA3}
    I(\Xvec;Y_1|Y_2)=0 \Leftrightarrow \E\{Y_1|\Xvec,Y_2\}=\E\{Y_1|Y_2\}.
    \end{equation}
    The MMSE estimate of $Y_1$ based on $(\Xvec,Y_2)$is given by
    \begin{eqnarray}
        \E\{Y_1|\Xvec,Y_2\}
        &=&\E\{Y_1|\Xvec,Z_2\}\nonumber\\
        &=&c_1\cdot X_1+c_2\cdot X_2+\E\{Z_1|Z_2\}\nonumber\\
        &\stackrel{(a)}{=}&c_1\cdot X_1+c_2\cdot X_2+\frac{\E\{Z_1Z_2^*\}}{\E\{|Z_2|^2\}}Z_2\nonumber\\
        &=&\frac{\E\{Z_1Z_2^*\}}{\E\{|Z_2|^2\}}Y_2+\left(1-\frac{\E\{Z_1Z_2^*\}}{\E\{|Z_2|^2\}}\right)(c_1\cdot X_1+c_2\cdot X_2).\label{eq:AppA1}
    \end{eqnarray}
    Here (a) follows from \cite[Theorem 23.7.4]{lapidoth:book} which states that for zero-mean, jointly Normal real random vectors $\Qvec_1$ and $\Qvec_2$, the conditional expectation of $\Qvec_1$ given $\Qvec_2$ can be obtained as
	$\E\{\Qvec_1|\Qvec_2\}=\frac{\E\{\Qvec_1\Qvec_2^T\}}{\E\{\Qvec_2\Qvec_2^T\}}\Qvec_2$. Then, the formula $\E\{Z_1|Z_2\} = \frac{\E\{Z_1Z_2^*\}}{\E\{|Z_2|^2\}}Z_2$ is obtained
by letting $\Qvec_k = (\Real\{Z_k\}, \Img\{Z_k\})^T$ and noting that joint circular symmetry of $(Z_1,Z_2)$ implies that $\E\{\Real\{Z_1\}\Img\{Z_2\}\} = -\E\{\Img\{Z_1\}\Real\{Z_2\}\}$ and $\E\{\Real\{Z_1\}\Real\{Z_2\}\} = \E\{\Img\{Z_1\}\Img\{Z_2\}\}$. Computing $E\{Y_1|Y_2\}$ explicitly we obtain
    \begin{equation}
        \E\{Y_1|Y_2\}=\frac{\E\{Y_1Y_2^*\}}{\E\{|Y_2|^2\}}Y_2=\frac{\E\{(c_1\cdot X_1+c_2\cdot X_2) (c_1\cdot X_1+c_2\cdot X_2)^*\}+\E\{Z_1Z_2^*\}}{\E\{(c_1\cdot X_1+c_2\cdot X_2)(c_1\cdot X_1+c_2\cdot X_2)^*\}+\E\{|Z_2|^2\}}Y_2.\label{eq:AppA2}
    \end{equation}
    Comparing \eqref{eq:AppA1} and \eqref{eq:AppA2} we conclude that $\E\{Y_1|\Xvec,Y_2\}=\E\{Y_1|Y_2\}$, iff $\E\{Z_1Z_2^*\}=\E\{|Z_2|^2\}$. Hence, from \eqref{eq:AppA3} we obtain that $I(\Xvec;Y_1|Y_2)=0$ iff $\E\{Z_1Z_2^*\}=\E\{|Z_2|^2\}$.
\tend

The lemma also can be proved by explicitly using the real vector representation of complex Normal vectors as follows:
 Let 
    $Y_{kR}=\Real\{Y_k\}$, $Y_{kI}=\Img\{Y_k\}$, $Z_{kR}=\Real\{Z_k\}$, $Z_{kI}=\Img\{Z_k\}$,
    $D = c_1 \cdot  X_1   +   c_2 \cdot  X_2$,  $D_R =\Real\{D\}$, and $D_I=\Img\{D\}$. Additionally, denote $\bY_k=(Y_{kR}, Y_{kI})^T$, $\bZ_k=(Z_{kR}, Z_{kI})^T$, $k=1,2$,  $\bD = (D_R, D_I)^T$, and
    $\Dmat = \left[\begin{array}{cc} D_R^2 & \;\; D_R D_I\\ D_ID_R & \;\; D_I^2 \end{array} \right]$. Finally, define $\Xvec_R = \Real\{\Xvec\}$, $\Xvec_I = \Img\{\Xvec\}$,
    $\bXvec = \big(\Xvec_R^T,\Xvec_I^T\big)^T$, and $\Zmat = \frac{1}{\var(Z_2)}\left[\begin{array}{cc} 2\E\left\{Z_{1R}Z_{2R}\right\} & \;\; 2\E\left\{Z_{1R}Z_{2I}\right\} \\ -2\E\left\{Z_{1R}Z_{2I}\right\} & \;\; 2\E\left\{Z_{1R}Z_{2R}\right\} \end{array} \right]$.
    Lastly, note that since $Z_1$ and $Z_2$ are jointly circularly symmetric complex Normal, then by definition,
    \begin{eqnarray*}
        \E\left\{ \big(Z_{1R}+jZ_{1I}\big)\big(Z_{2R}+jZ_{2I}\big) \right\} = \E\Big\{\big(Z_{1R}Z_{2R} - Z_{1I}Z_{2I}\big) +j\big(Z_{1I}Z_{2R}+Z_{1R}Z_{2I}\big)\Big\} = 0,
    \end{eqnarray*}
    hence,
    \begin{equation}
    \label{eqn:appndxeqn_jointcircsymm}
       \E\Big\{ Z_{1R}Z_{2R} \Big\} =     \E\Big\{Z_{1I}Z_{2I}\Big\}, \quad \mbox{    and    } \quad \E\Big\{Z_{1I}Z_{2R}\Big\} = -\E\Big\{Z_{1R}Z_{2I}\Big\} = 0.
    \end{equation}

    With these definitions,
    for the MMSE estimate of $Y_1$ based on $(\Xvec,Y_2)$ we write
    \begin{eqnarray}
        \E\{Y_1|\Xvec,Y_2\}
        & & \!\! = \E\{\bY_1|\bXvec,\bY_2\}\nonumber\\
        & &\!\! = \E\{\bY_1|\bXvec,\bZ_2\}\nonumber\\
        & & \!\! \stackrel{(a)}{=} \bD + \E\{\bZ_1|\bZ_2\}\nonumber\\
        & & \!\! = \bD + \E\{\bZ_1\cdot\bZ_2^T\}\left(\E\{\bZ_2\cdot\bZ_2^T\}\right)^{-1}\bZ_2\nonumber\\
        & & \!\! = \bD + \E\left\{\left[\begin{array}{cc}Z_{1R}Z_{2R} & \;\; Z_{1R}Z_{2I} \\ Z_{1I}Z_{2R} & \;\; Z_{1I}Z_{2I} \end{array} \right]\right\}\cdot\left(\E\left\{\left[\begin{array}{cc}Z_{2R}^2 & \;\; Z_{2R}Z_{2I} \\ Z_{2I}Z_{2R} & \;\; Z_{2I}^2 \end{array} \right]\right\}\right)^{-1}\bZ_2\nonumber\\
    & & \!\! \stackrel{(b)}{=} \bD + \E\left\{\left[\begin{array}{cc}Z_{1R}Z_{2R} & \;\; Z_{1R}Z_{2I} \\ -Z_{1R}Z_{2I} & \;\; Z_{1R}Z_{2R} \end{array} \right]\right\}\cdot\left(\left[\begin{array}{cc}\frac{1}{2}\var(Z_2) & \;\; 0 \\ 0 & \;\; \frac{1}{2}\var(Z_2) \end{array} \right]\right)^{-1}\bZ_2\nonumber\\
    & & \!\! = \bD +  \frac{1}{\var(Z_2)}\left[\begin{array}{cc} 2\E\left\{Z_{1R}Z_{2R}\right\} & \;\; 2\E\left\{Z_{1R}Z_{2I}\right\} \\ -2\E\left\{Z_{1R}Z_{2I}\right\} & \;\; 2\E\left\{Z_{1R}Z_{2R}\right\} \end{array} \right]\bZ_2\nonumber\\
    & & \!\!= \bD + \Zmat \cdot \bZ_2\nonumber\\
        & & \!\!= \bD \left(\Imat_2 - \Zmat\right) + \Zmat \cdot \bY_2.\label{eq:AppA1}
    \end{eqnarray}
    Here, (a) follows from \cite[Theorem 23.7.4]{lapidoth:book} by using the real vector representation for the complex RVs, and (b) follows from the conditions on the cross-correlations in the statement of the lemma.

    Computing $E\{Y_1|Y_2\}$ explicitly we obtain
    \begin{eqnarray}
       \E\{Y_1|Y_2\}
        &  & = \E\{\bY_1|\bY_2\}\nonumber\\
        &  & = \E\left\{\big(\bD+\bZ_1\big)\cdot\big(\bD+\bZ_2\big)^T \right\} \cdot\left(\E\left\{ \big(\bD+\bZ_2\big)\cdot\big(\bD+\bZ_2\big)^T\right\}\right)^{-1}\bY_2\nonumber\\
        &  & = \E\left\{\bD \cdot \bD^T+\bZ_1\cdot\bZ_2^T \right\} \cdot\left(\E\left\{ \bD \cdot \bD^T +\bZ_2\cdot\bZ_2^T\right\}\right)^{-1}\bY_2\nonumber\\
        &  & = \left(\frac{1}{\var(Z_2)}2\E\left\{\Dmat\right\}+\Zmat\right)\cdot\left(\frac{1}{\var(Z_2)}2\E\left\{ \Dmat\right\} +\Imat_2\right)^{-1}\bY_2.
        \label{eq:AppA2}
    \end{eqnarray}
    Comparing \eqref{eq:AppA1} and \eqref{eq:AppA2} we conclude that $\E\{Y_1|\Xvec,Y_2\}=\E\{Y_1|Y_2\}$, if and only if
    $\Zmat=\Imat_2$, namely
    \begin{eqnarray*}
        \E\left\{Z_{1R}Z_{2R}\right\} & \stackrel{(a)}{=} & \E\left\{Z_{1I}Z_{2I}\right\} = \frac{1}{2}\var(Z_2)\\
        \E\left\{Z_{1R}Z_{2I}\right\} & \stackrel{(b)}{=} & \E\left\{Z_{2R}Z_{1I}\right\} \stackrel{(c)}{=} 0.
    \end{eqnarray*}
    Noting that relationships (a), (b), and (c) hold by joint circular symmetry, as shown in \eqref{eqn:appndxeqn_jointcircsymm}, and also
      that $\E\{|Z_2|^2\} = \var(Z_2)$, and  $\E\{Z_1\cdot Z_2^*\} = \E\big\{(Z_{1R}Z_{2R} + Z_{1I}Z_{2I}) + j\cdot ( -Z_{1R}Z_{2I} + Z_{2R}Z_{1I})\big\}$, we conclude that subject to the conditions of the lemma,  $\Zmat=\Imat_2$ is equivalent to $\E\{|Z_2|^2\} =\E\{Z_1\cdot Z_2^*\} $.

\bigskip
To complete  the derivations we show that the expression $\frac{\E\{(c_1\cdot X_1+c_2\cdot X_2) (c_1\cdot X_1+c_2\cdot X_2)^*\}+\E\{Z_1Z_2^*\}}{\E\{(c_1\cdot X_1+c_2\cdot X_2)(c_1\cdot X_1+c_2\cdot X_2)^*\}+\E\{|Z_2|^2\}}Y_2$ is equivalent to $\left(\frac{1}{\var(Z_2)}2\E\left\{\Dmat\right\}+\Zmat\right)\cdot\left(\frac{1}{\var(Z_2)}2\E\left\{ \Dmat\right\} +\Imat_2\right)^{-1}\bY_2$.
First note that $D = c_1\cdot X_1+c_2\cdot X_2$ is a circularly symmetric complex Normal scalar, hence $\E\{D_R^2\} =\E\{ D_I^2\big\}$ and $\E\{D_ID_R\} = -\E\{D_RD_I\}=0$. Thus:
\begin{eqnarray*}
   \E\{(c_1\cdot X_1+c_2\cdot X_2) (c_1\cdot X_1+c_2\cdot X_2)^*\} & = & \E\Big\{\big(D_R+jD_I\big)\big(D_R-jD_I\big)\Big\}\\
& = & \E\Big\{\big(D_R^2 + D_I^2\big) +j\big(D_ID_R-D_RD_I\big)\Big\}\\
& = & 2\E\Big\{D_R^2\Big\}.
\end{eqnarray*}
Next, from the p.d.f. of $Z_1$ we further have $\E\{|Z_1|^2\} = 1$, which implies that $\E\Big\{ Z_{1R}^2\Big\} =     \E\Big\{Z_{1I}^2\Big\} = \frac{1}{2}$, and $\E\Big\{Z_{1I}Z_{1R}\Big\} = -\E\Big\{Z_{1R}Z_{1I}\Big\} = 0$.
Thus, we obtain
\begin{eqnarray*}
\E\{Z_1Z_2^*\} & = & \E\Big\{\big(Z_{1R}+jZ_{1I}\big)\big(Z_{2R}-jZ_{2I}\big)\Big\}\\
&  = &  \E\Big\{\big(Z_{1R}Z_{2R} + Z_{1I}Z_{2I}\big) +j\big(Z_{1I}Z_{2R}-Z_{1R}Z_{2I}\big)\Big\}\\
&  = &  2\E\Big\{Z_{1R}Z_{2R}\Big\}.
\end{eqnarray*}
Combining the above derivations we can evaluate
\begin{eqnarray*}
&& \frac{\E\{(c_1\cdot X_1+c_2\cdot X_2) (c_1\cdot X_1+c_2\cdot X_2)^*\}+\E\{Z_1Z_2^*\}}{\E\{(c_1\cdot X_1+c_2\cdot X_2)(c_1\cdot X_1+c_2\cdot X_2)^*\}+\E\{|Z_2|^2\}}Y_2
 =\frac{2\E\Big\{D_R^2\Big\} + 2\E\Big\{Z_{1R}Z_{2R}\Big\}}{2\E\Big\{D_R^2\Big\}  + 1}(Y_{2R} +jY_{2I}).
\end{eqnarray*}
Note that plugging $\E\Big\{ Z_{1R}^2\Big\} =     \E\Big\{Z_{1I}^2\Big\} = \frac{1}{2}$, $\E\Big\{Z_{1I}Z_{2R}\Big\} = -\E\Big\{Z_{1R}Z_{2I}\Big\} = 0$, $\E\{D_R^2\} =\E\{ D_I^2\big\}$ and $\E\{D_ID_R\} = -\E\{D_RD_I\}=0$ in
$\left(\frac{1}{\var(Z_2)}2\E\left\{\Dmat\right\}+\Zmat\right)\cdot\left(\frac{1}{\var(Z_2)}2\E\left\{ \Dmat\right\} +\Imat_2\right)^{-1}\bY_2$, we obtain that both $Y_{2R}$ and $Y_{2I}$ are multiplied with the same
coefficients, which are equal to $\frac{2\E\Big\{D_R^2\Big\} + 2\E\Big\{Z_{1R}Z_{2R}\Big\}}{2\E\Big\{D_R^2\Big\}  + 1}$, which complete the proof of equivalence of the two approaches.


\section{Proof of Lemma \ref{lem:lemma6}}
\label{app:ProofOfLemma6}
We follow the same approach as the proof of \cite[Lemma 13]{Liu:07}. Let $\Xvec_1=X_1^n$, $\Xvec_2=X_2^m$, $\Zvec_1=Z_1^n$, $\Zvec_2=Z_2^m$,
$\Xvec=(\Xvec_1^T,\Xvec_2^T)^T$, and let $\Zvec=(\Zvec_1^T,\Zvec_2^T)^T$. By the chain rule of mutual information we have
\begin{equation*}
    I(\Xvec;\Xvec+\Zvec) = I(\Xvec_1;\Xvec_1+\Zvec_1) +  I(\Xvec_1;\Xvec_2+\Zvec_2|\Xvec_1+\Zvec_1)+ I(\Xvec_2;\Xvec+\Zvec|\Xvec_1).
\end{equation*}
In the following, we will show that
\begin{equation}
\label{eqnappB:desired}
    \lim_{\gamma_2\rightarrow\infty}I(\Xvec_1;\Xvec_2+\Zvec_2|\Xvec_1+\Zvec_1)=\lim_{\gamma_2\rightarrow\infty}I(\Xvec_2;\Xvec+\Zvec | \Xvec_1)=0,
\end{equation}
which will complete the proof. Starting with $I(\Xvec_1;\Xvec_2+\Zvec_2|\Xvec_1+\Zvec_1)$, we note that
\begin{eqnarray*}
    I(\Xvec_1;\Xvec_2+\Zvec_2|\Xvec_1+\Zvec_1)&=&h(\Xvec_2+\Zvec_2|\Xvec_1+\Zvec_1)-h(\Xvec_2+\Zvec_2|\Xvec_1,\Zvec_1)\\
    &=&h(\Xvec_2+\Zvec_2|\Xvec_1+\Zvec_1)-h(\Xvec_2+\Zvec_2|\Xvec_1,\Zvec_1)\\
    &\stackrel{(a)}{\le}&h(\Xvec_2+\Zvec_2)-h(\Xvec_2+\Zvec_2|\Xvec_1,\Zvec_1)\\
    &\stackrel{(b)}{=}&  h(\Xvec_2+\Zvec_2)-h(\Xvec_2+\Zvec_2|\Xvec_1)\\
    &=&I(\Xvec_1;\Xvec_2+\Zvec_2)\\
    &\stackrel{(c)}{\le}& I(\Xvec_2;\Xvec_2+\Zvec_2),
\end{eqnarray*}
where (a) follows since conditioning does not increase entropy \cite[Theorem 2.6.5]{cover-thomas:it-book}, (b) follows since $\Zvec_1$ is independent of $(\Zvec_2, \Xvec_1, \Xvec_2)$, and (c) follows since $I(\Xvec_1;\Xvec_2+\Zvec_2)\le I(\Xvec_1,\Xvec_2 ;\Xvec_2+\Zvec_2)$, and
\begin{eqnarray}
    I(\Xvec_1,\Xvec_2 ;\Xvec_2+\Zvec_2)  &=& I(\Xvec_2;\Xvec_2+\Zvec_2) +  I(\Xvec_1;\Xvec_2+\Zvec_2 | \Xvec_2)\nonumber \\
    &=& I(\Xvec_2;\Xvec_2+\Zvec_2) +  h(\Xvec_2+\Zvec_2|\Xvec_2) -  h(\Xvec_2+\Zvec_2|\Xvec_1, \Xvec_2)\nonumber\\
    &=& I(\Xvec_2;\Xvec_2+\Zvec_2) +  h(\Zvec_2) -  h(\Zvec_2)\nonumber\\
		\label{eqnapx:equiv1a}
    &=& I(\Xvec_2;\Xvec_2+\Zvec_2).
\end{eqnarray}
Next, note that
\begin{eqnarray}
    I(\Xvec_2;\Xvec+\Zvec|\Xvec_1)  &=& I(\Xvec_2;\Xvec_2+\Zvec_2|\Xvec_1) +  I(\Xvec_2;\Xvec_1+\Zvec_1|\Xvec_1,\Xvec_2+\Zvec_2)\nonumber \\
    &=& I(\Xvec_2;\Xvec_2+\Zvec_2|\Xvec_1) +  h(\Xvec_1+\Zvec_1|\Xvec_1,\Xvec_2+\Zvec_2) -  h(\Xvec_1+\Zvec_1|\Xvec_1,\Xvec_2,\Xvec_2+\Zvec_2)\nonumber\\
    &=& I(\Xvec_2;\Xvec_2+\Zvec_2|\Xvec_1) +  h(\Zvec_1) -  h(\Zvec_1)\nonumber\\
    &=& I(\Xvec_2;\Xvec_2+\Zvec_2|\Xvec_1) \nonumber\\
    &=& h(\Xvec_2+\Zvec_2|\Xvec_1) - h(\Xvec_2+\Zvec_2|\Xvec_1,\Xvec_2) \nonumber\\
    &\le& h(\Xvec_2+\Zvec_2) - h(\Zvec_2) \nonumber\\
		\label{eqnapx:equiv1b}
    &=& I(\Xvec_2;\Xvec_2+\Zvec_2).
\end{eqnarray}
In conclusion, from \eqref{eqnapx:equiv1a} and \eqref{eqnapx:equiv1b} we obtain that
\begin{eqnarray*}
    I(\Xvec_1;\Xvec_2+\Zvec_2|\Xvec_1+\Zvec_1) &\le& I(\Xvec_2;\Xvec_2+\Zvec_2)\\
    I(\Xvec_2;\Xvec+\Zvec|\Xvec_1) &\le& I(\Xvec_2;\Xvec_2+\Zvec_2).
\end{eqnarray*}
Finally, let $\Dmat_2$ denote a diagonal matrix with $\{a_{2,i}\}_{i=1}^m, |a_{2,i}|<\infty, \forall i, 1\le i\le m$ on its diagonal, where $\{a_{2,i}\}_{i=1}^m$ are defined in the statement of the lemma. The proof is then completed by noting that for a given input covariance matrix $\Kmat_{X_2}$, a circularly symmetric complex Normal $\Xvec_2\sim\CN(0,\Kmat_{X_2})$ maximizes $I(\Xvec_2;\Xvec_2+\Zvec_2)$ \cite[Theorem 2]{Massey:93}, hence
\begin{equation*}
    \lim_{\gamma_2\rightarrow\infty}I(\Xvec_2;\Xvec_2+\Zvec_2)\le \lim_{\gamma_2\rightarrow\infty} \Big(\log(|\gamma_2\cdot\Dmat_2+\Kmat_{X_2}|) - \log(|\gamma_2\cdot\Dmat_2|)\Big) = \lim_{\gamma_2\rightarrow\infty} \log(|\Imat+\gamma_2^{-1}\cdot\Dmat_2^{-1}\cdot\Kmat_{X_2}|)=0.
\end{equation*}
It thus follows that
\begin{equation*}
    \lim_{\gamma_2\rightarrow\infty}I(\Xvec_1;\Xvec_2+\Zvec_2|\Xvec_1+\Zvec_1)= \lim_{\gamma_2\rightarrow\infty}I(\Xvec_2;\Xvec+\Zvec|\Xvec_1)=0,
\end{equation*}
which results in the desired equality \eqref{eqnappB:desired}.
\tend


\section{Proof of Lemma \ref{lem:lemma8}}
\label{app:ProofOfLemma8}
We follow steps similar to those used in the proof of \cite[Corollary 6]{Liu:07}, generalizing the derivations to hold for {\em complex vectors}. Using the notation of \cite[Section III.A]{Massey:93}, define the $4n\times 2n$ real matrix $\Vmat_{kk}$ as
\begin{equation*}
\Vmat_{kk}\triangleq\Bigg[\begin{array}{cc}
\Real\{\tilde{\Vmat}_k\} & -\Img\{\tilde{\Vmat}_k\}\\
\Img\{\tilde{\Vmat}_k\}  & \Real\{\tilde{\Vmat}_k\}\end{array}\Bigg] \qquad k\in\{1,2\}.
\end{equation*}
Next, let $\Xvec=X^{2n}$, and define the $4n\times 1$ real-valued vector $\bar{\Xvec}\triangleq(\Xvec_R^T,\Xvec_I^T)^T$, where $\Xvec_R$ and $\Xvec_I$ are the real and the imaginary parts of $\Xvec$, respectively. Note that similarly to \cite[Section III-A]{Massey:93}, by using this notation we have $\Vmat^T_{kk}\bar{\Xvec}=\Big(\Real\{\tilde{\Vmat}_k^H\Xvec\}^T, \Img\{\tilde{\Vmat}_k^H\Xvec\}^T\Big)^T$; also note that this notation preserves the orthogonality among the columns of $\Vmat_{kk}$ s.t. $\Vmat_{kk}^T\Vmat_{kk}=\Dmat_{kk}^{-1}=\bigg[\begin{array}{cc}
\tilde{\Dmat}^{-1}_{k}& \Omat_{n\times n}\\\Omat_{n \times n}& \tilde{\Dmat}^{-1}_{k}\end{array}\bigg], k\in\{1,2\}$, where $\Dmat_{kk}^{-1}\in\Rset^{2n\times 2n}$. Therefore, we can find two $4n \times 2n$ matrices $\Vmat_{12}$ and $\Vmat_{21}$ s.t. $\Vmat_{1}\triangleq(\Vmat_{11},\Vmat_{12})$ and $\Vmat_{2}\triangleq(\Vmat_{21},\Vmat_{22})$ are two $4n \times 4n$  matrices with orthogonal columns and hence, they can be written as $\Vmat_k=\Umat_k\cdot\Cmat_k$, where $\Umat_k\triangleq(\Umat_{k1},\Umat_{k2}), k\in\{1,2\}$ is a $4n\times 4n$  matrix with orthonormal columns, $\Umat_{lk}, l,k\in\{1,2\}$ are four $4n\times 2n$ matrices with orthonormal columns,
and $\Cmat_k, k\in\{1,2\}$ is a $4n\times 4n$ diagonal matrix whose elements are given by:
\begin{equation}
\label{EqnAppxLem6:C_k_Def}
    \Cmat_k\triangleq\left[\begin{matrix}
    \sqrt{\sum_{j=1}^{4n} \big|[\Vmat_k]_{j,1}\big|^2} &  0  & \ldots & 0\\
    0  &  \sqrt{\sum_{j=1}^{4n} \big|[\Vmat_k]_{j,2}\big|^2} & \ldots & 0\\
    \vdots & \vdots & \ddots & \vdots\\
    0  &   0       &\ldots & \sqrt{\sum_{j=1}^{4n} \big|[\Vmat_k]_{j,4n}\big|^2}
    \end{matrix}\right].
\end{equation}
Let $\Cmat_{kk}$, $k\in\{1,2\}$ be two $2n\times 2n$ diagonal matrices whose elements are given by
$\big[\Cmat_{11}\big]_{i,i}=\big[\Cmat_{1}\big]_{i,i}$, and $\big[\Cmat_{22}\big]_{i,i}=\big[\Cmat_{2}\big]_{2n+i,2n+i}, i\in\{1,2,3,...,2n\}$.
With these assignments we can represent $\Cmat_k=\left[\begin{matrix} \Cmat_{1k} & \Omat_{2n \times 2n} \\ \Omat_{2n \times 2n} & \Cmat_{2k}\end{matrix}\right]$, where
$\Cmat_{12}$  and $\Cmat_{21}$ are defined to satisfy \eqref{EqnAppxLem6:C_k_Def}.
Using the above definitions we can write $\Vmat_{kk}^T=\Cmat_{kk}^T\cdot\Umat_{kk}^T, k\in\{1,2\}$.
Recall  the definition of $\Dmat_{kk}^{-1}\in \Rset_{++}^{2n\times 2n}$:
\begin{equation*}
  \Dmat_{kk}^{-1}=\Vmat_{kk}^T\Vmat_{kk}=\Cmat_{kk}^T\cdot\Umat_{kk}^T\cdot\Umat_{kk}\cdot\Cmat_{kk}=\Cmat_{kk}^T\Cmat_{kk}=\Cmat_{kk}\Cmat_{kk}=\Cmat_{kk}^2, \quad k\in\{1,2\}.
\end{equation*}
Thus, from \cite[Proposition 8.1.2 and Lemma 8.2.1]{MatrixMathematics} it follows that we can write $\Dmat_{kk}^{-\frac{1}{2}}=\Cmat_{kk}=\Cmat_{kk}^T, k\in\{1,2\}$. Using these definitions, we obtain
\begin{equation}
\label{eq:eq16}
\Vmat_{kk}^T=\Cmat_{kk}^T\cdot\Umat_{kk}^T=\Dmat_{kk}^{-\frac{1}{2}}\cdot\Umat_{kk}^T, \qquad k\in\{1,2\}.
\end{equation}
Next, let $\Dmat_{12},\Dmat_{21}, \bar{\Dmat}_{11}$ and $\bar{\Dmat}_{22}$ be four arbitrary $2n\times 2n$ dimensional diagonal matrices with real positive elements and define the following $4n\times 4n$ matrices:
\begin{subequations}
\label{eqnC:Def_Dmatrices}
\begin{eqnarray}
    & & \Dmat_1\triangleq\left[\begin{array}{cc}
    \Dmat_{11} & \Omat_{2n \times 2n}\\
    \Omat_{2n \times 2n}     & \Dmat_{12}\end{array}\right],\quad \Dmat_2\triangleq\left[\begin{array}{cc}
    \Dmat_{21} & \Omat_{2n \times 2n}\\
    \Omat_{2n \times 2n}     & \Dmat_{22}\end{array}\right],\\
    & & \bar{\Dmat}_1\triangleq\left[\begin{array}{cc}
    \bar{\Dmat}_{11} & \Omat_{2n \times 2n}\\
    \Omat_{2n \times 2n}     & \Dmat_{12}\end{array}\right],\quad \bar{\Dmat}_2\triangleq\left[\begin{array}{cc}
    \Dmat_{21} & \Omat_{2n \times 2n}\\
    \Omat_{2n \times 2n}     & \bar{\Dmat}_{22}\end{array}\right].
\end{eqnarray}
\end{subequations}
Here, $\Dmat_{lk}$ is a diagonal matrix with real positive elements defined by the vector
 $\bm{d}_{lk}=(d_{lk,1}, d_{lk,2},..., d_{lk,2n}), lk\in\{12,21,11,22\}$ on
its diagonal. Similarly, $\bar{\Dmat}_{kk}$ is a diagonal matrix with real positive elements defined by the vector $\bar{\bm{d}}_{kk}=(\bar{d}_{kk,1}, \bar{d}_{kk,2},..., \bar{d}_{kk,2n}), k\in\{1,2\}$ on its diagonal.

For $k\in\{1,2\}$ set $\Kmat_{\bar{Z}_k} \triangleq \Umat_k\cdot(\bar{\Dmat}_{k}\cdot\Dmat_{k})\cdot\Umat_k^T$,  let $\bar{\Zvec}_k$ be a $4n\times 1$ real-valued random vector distributed according to $\N\big(0,\Kmat_{\bar{Z}_k}\big)$, and consider the following optimization problem:
\begin{eqnarray}
    \label{eq:Lemma6CovOpt}
    &&\max_{f(\bar{\xvec})}\mbox{ } h(\bar{\Xvec}+\bar{\Zvec}_1)-h(\bar{\Xvec}+\bar{\Zvec}_2)\\
    &&\mbox{subject to: \phantom{x} } \cov(\bar{\Xvec})\preceq\mathds{S},\nonumber
\end{eqnarray}
where maximization is carried out over all $4n\times 1$ real vectors $\bar{\Xvec}$. Denote $\cov(\bar{\Xvec})\equiv\Kmat_{\bar{X}}$. From \cite[Theorem 1]{Liu:07} it directly follows that a Gaussian random vector $\bar{\Xvec}$ is an optimal solution to this problem. Furthermore, from \cite[Lemma 1]{Zahavi:12}, we obtain that for any pair of complex random vectors $\bar{\Xvec}$ and $\bar{\Xvec}^{{\rm zm}}\triangleq\bar{\Xvec}-\E\{\bar{\Xvec}\}$, it holds that $h(\bar{\Xvec})=h(\bar{\Xvec}^{{\rm zm}})$. Thus, we conclude that a {\em zero-mean} Gaussian random vector is the optimal solution to \eqref{eq:Lemma6CovOpt}. Hence,
\begin{equation}
    \label{eq:eq9}
    \max_{f(\bar{\xvec}): \quad\!\!\!\!\cov(\bar{\Xvec})\preceq\mathds{S}}\mbox{ } h(\bar{\Xvec}+\bar{\Zvec}_1)-h(\bar{\Xvec}+\bar{\Zvec}_2) = \max_{\Kmat_{\bar{X}}:\quad\!\!\!\! 0\preceq\Kmat_{\bar{X}}\preceq\mathds{S}} \Big\{\frac{1}{2}\log\big(|\Kmat_{\bar{X}}+\Kmat_{\bar{Z}_1}|\big)-\frac{1}{2}\log\big(|\Kmat_{\bar{X}}+\Kmat_{\bar{Z}_2}|\big)\Big\}.
\end{equation}
Adding $h(\bar{\Zvec}_2)-h(\bar{\Zvec}_1)$ to both sides of \eqref{eq:eq9} we obtain
\begin{equation}
    \label{eq:eq9C3}
    \max_{f(\bar{\xvec}): \quad\!\!\!\!\cov(\bar{\Xvec})\preceq\mathds{S}}\mbox{ } I(\bar{\Xvec};\bar{\Xvec}+\bar{\Zvec}_1)-I(\bar{\Xvec};\bar{\Xvec}+\bar{\Zvec}_2) = \max_{\Kmat_{\bar{X}}: \quad\!\!\!\!0\preceq\Kmat_{\bar{X}}\preceq\mathds{S}} \Big\{\frac{1}{2}\log\big(|\Imat_{4n}+\Kmat^{-1}_{\bar{Z}_1}\Kmat_{\bar{X}}|\big)-\frac{1}{2}\log\big(|\Imat_{4n}+\Kmat^{-1}_{\bar{Z}_2}\Kmat_{\bar{X}}|\big)\Big\}.
\end{equation}

Next, define $\Umat^T_k\bar{\Zvec}_k\triangleq \hat{\Zvec}_k \equiv (\hat{\Zvec}_{k1}^T,\hat{\Zvec}_{k2}^T)^T$, where $\hat{\Zvec}_{kl}$, $(k, l)\in\{1, 2\}\times\{1, 2\}$ are four $2n\times 1$ vectors. Note that $\Umat_k$ is an orthogonal matrix, i.e., $\Umat^T_k\Umat_k=\Imat_{4n}$, and thus the covariance matrix of $\hat{\Zvec}_k$ is given by
\begin{equation}
    \label{eq:hatZCovMat}
    \Kmat_{\hat{Z}_k}=\Umat^T_k\cdot\Kmat_{\bar{Z}_k}\cdot\Umat_k=\Umat^T_k\cdot\Umat_k\cdot(\bar{\Dmat}_{k}\cdot\Dmat_k)\cdot\Umat^T_k\cdot\Umat_k=(\bar{\Dmat}_{k}\cdot\Dmat_k), \qquad k\in\{1,2\}.
\end{equation}
Since $\hat{\Zvec}_{k}$, is a linear transformation of a Gaussian vector $\bar{\Zvec}_k$, it is a Gaussian random vector. Hence,
it follows from \eqref{eq:hatZCovMat} that $\hat{\Zvec}_{k1}$ and $\hat{\Zvec}_{k2}$ are mutually independent for $k\in\{1,2\}$. Next, note that for any real random vector $\Xvec$ and any real matrix $\Amat$ it holds that (see \cite[Eqn. (13)]{Massey:93}):
\begin{equation}
\label{eq:eq22}
h\big(\Amat\cdot\Xvec\big)= h\big(\Xvec\big)+\log\big|\det(\Amat)\big|.
\end{equation}
Hence, since  any orthogonal matrix is invertible, then we can write
\begin{subequations}
\label{EqnApx_Lem2:MI_equiv}
\begin{eqnarray}
    I(\bar{\Xvec};\bar{\Xvec}+\bar{\Zvec}_1)&=& I(\Umat^T_1\bar{\Xvec};\Umat^T_1\bar{\Xvec}+\Umat^T_1\bar{\Zvec}_1)\\
    I(\bar{\Xvec};\bar{\Xvec}+\bar{\Zvec}_2)&=& I(\Umat^T_2\bar{\Xvec};\Umat^T_2\bar{\Xvec}+\Umat^T_2\bar{\Zvec}_2).
\end{eqnarray}
\end{subequations}
Using Lemma \ref{lem:lemma6}, it follows that
\begin{eqnarray*}
    \lim_{\bm{d}_{12}\rightarrow\infty} I(\bar{\Xvec};\bar{\Xvec}+\bar{\Zvec}_1)&=&\lim_{\bm{d}_{12}\rightarrow\infty} I(\Umat^T_1\bar{\Xvec};\Umat^T_1\bar{\Xvec}+\Umat^T_1\bar{\Zvec}_1)=I(\Umat^T_{11}\bar{\Xvec};\Umat^T_{11}\bar{\Xvec}+\hat{\Zvec}_{11})\\
    \lim_{\bm{d}_{21}\rightarrow\infty} I(\bar{\Xvec};\bar{\Xvec}+\bar{\Zvec}_2)&=&\lim_{\bm{d}_{21}\rightarrow\infty} I(\Umat^T_2\bar{\Xvec};\Umat^T_2\bar{\Xvec}+\Umat^T_2\bar{\Zvec}_2)=I(\Umat^T_{22}\bar{\Xvec};\Umat^T_{22}\bar{\Xvec}+\hat{\Zvec}_{22}),
\end{eqnarray*}
where we use  $\bm{d}_{lk}\rightarrow\infty$ to imply that all the elements of $\bm{d}_{lk}$ go to infinity, i.e.,  $d_{lk,1}, d_{lk,2},..., d_{lk,2n}\rightarrow\infty$. Thus,
\begin{equation}
    \label{eq:eq7}
    \lim_{\substack{\bm{d}_{12}\rightarrow\infty\\ \bm{d}_{21}\rightarrow\infty}}\Big(I(\bar{\Xvec};\bar{\Xvec}+\bar{\Zvec}_1)-I(\bar{\Xvec};\bar{\Xvec}+\bar{\Zvec}_2)\Big)= I(\Umat^T_{11}\bar{\Xvec};\Umat^T_{11}\bar{\Xvec}+\hat{\Zvec}_{11})-I(\Umat^T_{22}\bar{\Xvec};\Umat^T_{22}\bar{\Xvec}+\hat{\Zvec}_{22}).
\end{equation}
Note that for an $n\times m$ matrix $\Amat$ and an $m\times n$ matrix $\Bmat$, Sylvester's determinant theorem \cite[Page 271]{Pozrikidis:14} states that $|\Imat_n+\Amat\Bmat|=|\Imat_m+\Bmat\Amat|$, and that given two square matrices $\Amat$ and $\Bmat$, it holds that $(\Amat\cdot\Bmat)^{-1}=\Bmat^{-1}\Amat^{-1}$, thus, due to the continuity of $\log(\Imat+\Amat)$ over the semidefinite $\Amat$ we obtain from \eqref{eq:eq9C3} (see, e.g., \cite[Eq. (164)]{Liu:07})
\begin{eqnarray}
&&\hspace{-0.9cm}\lim_{\substack{\bm{d}_{12}\rightarrow\infty\\\bm{d}_{21}\rightarrow\infty}} \Big(\log\big(|\Imat_{4n}+\Kmat^{-1}_{\bar{Z}_1}\Kmat_{\bar{X}}|\big)-\log\big(|\Imat_{4n}+\Kmat^{-1}_{\bar{Z}_2}\Kmat_{\bar{X}}|\big)\Big)\nonumber\\
&=&\lim_{\substack{\bm{d}_{12}\rightarrow\infty\\\bm{d}_{21}\rightarrow\infty}} \Big(\log\big(|\Imat_{4n}+\Umat_1\cdot (\bar{\Dmat}_{1}\cdot\Dmat_{1})^{-1}\cdot\Umat_1^T\cdot\Kmat_{\bar{X}}|\big)-\log\big(|\Imat_{4n}+\Umat_2\cdot (\bar{\Dmat}_{2}\cdot\Dmat_{2})^{-1}\cdot\Umat_2^T\cdot\Kmat_{\bar{X}}|\big)\Big)\nonumber\\
&=&\log\left(\left|\Imat_{4n}+\Umat_1\cdot
\left[\begin{matrix}
(\bar{\Dmat}_{11}\cdot\Dmat_{11})^{-1}\;\; & \Omat_{2n \times 2n}\\
\Omat_{2n \times 2n} & \Omat_{2n \times 2n} \end{matrix}\right]
\cdot\Umat_1^T\cdot\Kmat_{\bar{X}}\right|\right)\nonumber\\
&& \qquad \qquad\qquad \qquad
-\log\left(\left|\Imat_{4n}
	+\Umat_2\cdot \left[\begin{matrix}
\Omat_{2n \times 2n} & \Omat_{2n \times 2n}\\
\Omat_{2n \times 2n} & \;\;(\bar{\Dmat}_{11}\cdot\Dmat_{11})^{-1}  \end{matrix}\right]
\cdot\Umat_2^T\cdot\Kmat_{\bar{X}}\right|\right)\nonumber\\
& =&\log\Big(\big|\Imat_{4n}+\Umat_{11}\big(\bar{\Dmat}_{11}\cdot\Dmat_{11}\big)^{-1}\Umat^T_{11}\Kmat_{\bar{X}}\big|\Big)-\log\Big(\big|\Imat_{4n}+\Umat_{22}\big(\bar{\Dmat}_{22}\cdot\Dmat_{22}\big)^{-1}\Umat^T_{22}\Kmat_{\bar{X}}\big|\Big)\nonumber\\
\label{eq:eq24}&=&
\log\Big(\big|\Imat_{2n}+\big(\bar{\Dmat}_{11}\cdot\Dmat_{11}\big)^{-1}\Umat^T_{11}\Kmat_{\bar{X}}\Umat_{11}\big|\Big)-\log\Big(\big|\Imat_{2n}+\big(\bar{\Dmat}_{22}\cdot\Dmat_{22}\big)^{-1}\Umat^T_{22}\Kmat_{\bar{X}}\Umat_{22}\big|\Big).
\end{eqnarray}
Moreover, the convergence of \eqref{eq:eq24} is uniform in $\Kmat_{\bar{X}}$, because the continuity of
$\log\big(|\Imat+\Amat|\big)$ over $\Amat$ is uniform, and $\Umat_k^T\cdot\Kmat_{\bar{X}}\cdot\Umat_k, k\in\{1,2\}$ is bounded for $0\preceq \Kmat_{\bar{X}}\preceq \mathds{S}$, in the sense that $0\preceq \Umat_k^T\cdot \Kmat_{\bar{X}}\cdot \Umat_k\preceq \Umat_k^T\cdot \mathds{S}\cdot \Umat_k$.
We thus have\footnote{For uniform convergence $\lim_{y\rightarrow 0} f_y(x) = f(x)$ (see {\tt http://www2.math.umd.edu/~czaja/chap1.pdf} ): For any $\epsilon <0$,  $\exists \delta > 0$ s.t.  $|f_y(x) - f(x)|<\epsilon,\;\; \forall y < \delta, \;\; \forall x\in \Rset$. Hence, $\exists
\delta>0$ s.t. $\max_{x \in \Rset} |f_y(x) -  f(x)|\le \epsilon, \;\; \forall y<\delta$. Thus, $\exists \delta>0$, s.t. $\forall y<\delta$,
 $|\max_{x \in \Rset} f_y(x) - \max_{x\in \Rset} f(x)| \le \max_{x \in \Rset} |f_y(x) -  f(x)| \le\epsilon$. Thus,
$\lim_{y\rightarrow 0} \max_{x \in \Rset} f_y(x) = \max_{x\in \Rset} f(x)$.} (see e.g. \cite[Eq. (165)]{Liu:07})
\begin{eqnarray}
&&\hspace{-0.9cm}\lim_{\substack{\bm{d}_{12}\rightarrow\infty\\\bm{d}_{21}\rightarrow\infty}} \Bigg( \max_{\Kmat_{\bar{X}}:\;0\preceq\Kmat_{\bar{X}}\preceq\mathds{S}} \bigg\{\log\Big(\big|\Imat_{4n}+\Kmat^{-1}_{\bar{Z}_1}\Kmat_{\bar{X}}\big|\Big)-\log\Big(\big|\Imat_{4n}+\Kmat^{-1}_{\bar{Z}_2}\Kmat_{\bar{X}}\big|\Big)\bigg\} \Bigg) =\nonumber\\
\label{eq:eq8}&&\hspace{-0.3cm}\max_{\Kmat_{\bar{X}}:\; 0\preceq\Kmat_{\bar{X}}\preceq\mathds{S}}
\bigg\{\log\Big(\big|\Imat_{2n}+\big(\bar{\Dmat}_{11}\cdot\Dmat_{11}\big)^{-1}\Umat^T_{11}\Kmat_{\bar{X}}\Umat_{11}\big|\Big)
        -\log\Big(\big|\Imat_{2n}+\big(\bar{\Dmat}_{22}\cdot\Dmat_{22}\big)^{-1}\Umat^T_{22}\Kmat_{\bar{X}}\Umat_{22}\big|\Big)\bigg\}.
\end{eqnarray}
Now, 
using \eqref{eq:eq9C3}, \eqref{eq:eq7} 
and \eqref{eq:eq8}   we obtain
\begin{eqnarray*}
&&\max_{f(\bar{\xvec}): \quad\!\!\!\!\cov(\bar{\Xvec})\preceq\mathds{S}} I(\Umat^T_{11}\bar{\Xvec};\Umat^T_{11}\bar{\Xvec}+\hat{\Zvec}_{11})-I(\Umat^T_{22}\bar{\Xvec};\Umat^T_{22}\bar{\Xvec}+\hat{\Zvec}_{22})
=\\
&&\qquad\qquad\max_{\Kmat_{\bar{X}}: \quad\!\!\!\!0\preceq\Kmat_{\bar{X}}\preceq\mathds{S}}
\bigg\{\frac{1}{2}\log\Big(\big|\Imat_{2n}+\big(\bar{\Dmat}_{11}\cdot\Dmat_{11}\big)^{-1}\Umat^T_{11}\Kmat_{\bar{X}}\Umat_{11}\big|\Big)
-\frac{1}{2}\log\Big(\big|\Imat_{2n}+\big(\bar{\Dmat}_{22}\cdot\Dmat_{22}\big)^{-1}\Umat^T_{22}\Kmat_{\bar{X}}\Umat_{22}\big|\Big)\bigg\},
\end{eqnarray*}
or equivalently,
\begin{eqnarray*}
&&\max_{f(\bar{\xvec}): \quad\!\!\!\!\cov(\bar{\Xvec})\preceq\mathds{S}} h(\Umat^T_{11}\bar{\Xvec}+\hat{\Zvec}_{11})-h(\Umat^T_{22}\bar{\Xvec}+\hat{\Zvec}_{22})
=\\
&&\qquad\qquad\max_{\Kmat_{\bar{X}}: \quad\!\!\!\!0\preceq\Kmat_{\bar{X}}\preceq\mathds{S}}
\bigg\{\frac{1}{2}\log\Big((\pi e)^n\big|\bar{\Dmat}_{11}\cdot\Dmat_{11}+\Umat^T_{11}\Kmat_{\bar{X}}\Umat_{11}\big|\Big)-\frac{1}{2}\log\Big((\pi e)^n\big|\bar{\Dmat}_{22}\cdot\Dmat_{22}+\Umat^T_{22}\Kmat_{\bar{X}}\Umat_{22}\big|\Big)\bigg\}.
\end{eqnarray*}
Recall that $\Dmat_{kk}, k=1,2$ is a diagonal matrix with real and positive entries. Thus, from \cite[Proposition 8.1.2 and Lemma 8.2.1]{MatrixMathematics} we have that $\Dmat_{kk}^{-1}, k\in\{1,2\}$ can be written as $\Dmat_{kk}^{-1}=\Amat^2$, where $\Amat$ is a p.d. matrix defined as $\Amat\triangleq\Dmat_{kk}^{-\frac{1}{2}}$. Using \eqref{eq:eq22} once more, we obtain for $k = 1,2$
\begin{equation*}
h(\Umat^T_{kk}\bar{\Xvec}+\hat{\Zvec}_{kk})=h(\Dmat_{kk}^{-\frac{1}{2}}\Umat^T_{kk}\bar{\Xvec}+\Dmat_{kk}^{-\frac{1}{2}}\hat{\Zvec}_{kk})-\log(|\Dmat_{kk}^{-\frac{1}{2}}|).
\end{equation*}
Since both ${\Dmat}_{kk}$ and $\bar{\Dmat}_{kk}$ are diagonal matrices then, $\Dmat^{-\frac{1}{2}}_{kk}\cdot\bar{\Dmat}_{kk}\cdot\Dmat_{kk}\cdot\Dmat^{-\frac{1}{2}}_{kk}=\bar{\Dmat}_{kk}\cdot\Dmat^{-\frac{1}{2}}_{kk}\cdot\Dmat_{kk}\cdot\Dmat^{-\frac{1}{2}}_{kk}=\bar{\Dmat}_{kk}$, and hence, we obtain that
\begin{eqnarray*}
&&\hspace{-0.3cm}\max_{f(\bar{\xvec}): \quad\!\!\!\!\cov(\bar{\Xvec})\preceq\mathds{S}} h(\Dmat_{11}^{-\frac{1}{2}}\Umat^T_{11}\bar{\Xvec}+\Dmat_{11}^{-\frac{1}{2}}\hat{\Zvec}_{11})-h(\Dmat_{22}^{-\frac{1}{2}}\Umat^T_{22}\bar{\Xvec}+\Dmat_{22}^{-\frac{1}{2}}\hat{\Zvec}_{22})=\\
&&\quad\max_{\Kmat_{\bar{X}}: \quad\!\!\!\!0\preceq\Kmat_{\bar{X}}\preceq\mathds{S}}
\bigg\{\frac{1}{2}\log\Big((\pi e)^n\big|\bar{\Dmat}_{11}+\Dmat_{11}^{-\frac{1}{2}}\Umat^T_{11}\Kmat_{\bar{X}}\Umat_{11}\Dmat_{11}^{-\frac{1}{2}}\big|\Big)-\frac{1}{2}\log\Big((\pi e)^n\big|\bar{\Dmat}_{22}+\Dmat_{22}^{-\frac{1}{2}}\Umat^T_{22}\Kmat_{\bar{X}}\Umat_{22}\Dmat_{22}^{-\frac{1}{2}}\big|\Big)\bigg\}.
\end{eqnarray*}
Recall that from \eqref{eq:eq16} we have $\Vmat^T_{ll}=\Dmat_{ll}^{-\frac{1}{2}}\Umat^T_{ll}$, and define $\Zvec_{ll}\triangleq\Dmat_{ll}^{-\frac{1}{2}}\hat{\Zvec}_{ll}, l\in\{1,2\}$. Additionally, from \eqref{eq:hatZCovMat} we conclude that $\hat{\Zvec}_{ll}\sim\N({\bm 0},\bar{\Dmat}_{ll}\cdot\Dmat_{ll}), l\in\{1,2\}$. Hence, since $\bar{\Dmat}_{ll}$ and $\Dmat_{ll}$ are positive and real diagonal matrices, then $\Zvec_{ll}$ is distributed according to $\Zvec_{ll}\sim\N({\bm 0},\bar{\Dmat}_{ll})$. Thus,
\begin{eqnarray}
&&\hspace{-1cm}\max_{f(\bar{\xvec}): \quad\!\!\!\!\cov(\bar{\Xvec})\preceq\mathds{S}} h(\Vmat^T_{11}\bar{\Xvec}+\Zvec_{11})-h(\Vmat^T_{22}\bar{\Xvec}+\Zvec_{22})=\nonumber\\
&&\qquad\max_{\Kmat_{\bar{X}}: \quad\!\!\!\!0\preceq\Kmat_{\bar{X}}\preceq\mathds{S}}
\bigg\{\frac{1}{2}\log\Big((\pi e)^n\big|\bar{\Dmat}_{11}+\Vmat^T_{11}\Kmat_{\bar{X}}\Vmat_{11}\big|\Big)-\frac{1}{2}\log\Big((\pi e)^n\big|\bar{\Dmat}_{22}+\Vmat^T_{22}\Kmat_{\bar{X}}\Vmat_{22}\big|\Big)\bigg\}.      \label{eq:eq21}
\end{eqnarray}
The proof of Lemma \ref{lem:lemma8} is completed by recalling that the elements of $\bar{\Dmat}_{kk}, k\in\{1,2\}$ are chosen arbitrarily and thus, \eqref{eq:eq21} holds for any p.d. $\bar{\Dmat}_{kk}$,
and in particular for
$\bar{\Dmat}_{kk} \triangleq \left[\begin{array}{cc} \frac{1}{2}\tilde{\Dmat}^{\fnsz}_{k} & \;\Omat_{n \times n}\; \\ \;\Omat_{n \times n}\; & \frac{1}{2}\tilde{\Dmat}^{\fnsz}_{k}\end{array}\right], \;\; k\in\{1,2\}$,
and by noting that a Gaussian random vector $\bar{\Xvec}$ achieves the r.h.s. of \eqref{eq:eq21} with equality, i.e., a zero-mean complex Normal $\Xvec$ is an optimal solution to \eqref{eq:eq6}. This completes the proof of Lemma \ref{lem:lemma8}.
\tend

\phantomsection
\label{eqn:endC}


\section{Proof of Proposition \ref{prop:proposition1}}
\label{app:appA}
First note that from the construction of the genie signals, it follows that the entropy expressions $h(S_{1G}|X_{1G},X_{3G},\tH_1)$ and $h(S_{2G}|X_{2G},\tH_2)$ do not depend on $\CORR$ and on $(P_1,P_2,P_3)$. Next, recall that $(X_{1G},X_{2G},X_{3G})^T\sim\CN({\bf 0},\Qmat_G)$ and consider $h(Y_{2G}|S_{2G},\tH_2)$: Defining $\theta_{\eta_2^*\CORRN_2^*}\triangleq\arg\{\eta_2^*\CORRN_2^*\}$ we can write:
\begin{eqnarray*}
  &&\hspace{-1.2 cm} h(Y_{2G}|S_{2G},\tH_2)-\log(\pi e)\\
  &\stackrel{(a)}{=}&\E_{\tH_2}\bigg\{\log\Big(\var(Y_{2G})-\frac{|\E\{Y_{2G}S_{2G}^*\}|^2}{\var(S_{2G})}\Big)\bigg\}\\
  &=&\E_{\tH_2}\bigg\{\log\Big(1+\SNRA_{22}P_2+\SNRA_{32}P_3-\frac{|\SNRA_{22}P_2+\eta_2^*\CORRN_2^*|^2}{\SNRA_{22}P_2+|\eta_2|^2}\Big)\bigg\}\\
  &=&\E_{\tH_2}\bigg\{\log\Big(1+\SNRA_{22}P_2+\SNRA_{32}P_3-\frac{\SNRA_{22}^2P_2^2+|\eta_2^*|^2|\CORRN_2^*|^2+2|\eta_2^*||\CORRN_2^*|\cos(\theta_{\eta_2^*\CORRN_2^*})\SNR_{22}P_2}{\SNRA_{22}P_2+|\eta_2|^2}\Big)\bigg\}\\
  &\triangleq&\E_{\tH_2}\Big\{\log\Big(f_1(P_2)\Big)\Big\},
\end{eqnarray*}
where (a) follows from a direct calculation via $h(Y_{2G}|S_{2G},\tH_2) = h(Y_{2G},S_{2G}|\tH_2)-h(S_{2G}|\tH_2)$, followed by applying \cite[Theorem 23.7.4]{lapidoth:book}. Observe that $f_1(P_2)$ is independent of $(\CORR, P_1)$, and that it increases with respect to $P_3$. Additionally, since $(Y_{2G}, S_{2G})$ are jointly circularly symmetric complex Normal when $\tH_2=\th_2$ is given, we note that
\begin{eqnarray*}
  \frac{\partial f_1(P_2)}{\partial P_2} &=& \frac{\big(\SNRA_{22}|\eta_2|^2-2|\eta_2^*||\CORRN_2^*|\cos(\theta_{\eta_2^*\CORRN_2^*})\SNRA_{22}\big)\cdot\big(\SNRA_{22}P_2+|\eta_2|^2\big)}{\big(\SNRA_{22}P_2+|\eta_2|^2\big)^2}\\
  &&\qquad-\frac{\big(\SNRA_{22}P_2|\eta_2|^2-|\eta_2^*|^2|\CORRN_2^*|^2-2|\eta_2^*||\CORRN_2^*|\cos(\theta_{\eta_2^*\CORRN_2^*})\SNRA_{22}P_2\big)\cdot\SNRA_{22}}{\big(\SNRA_{22}P_2+|\eta_2|^2\big)^2}\\
  &=& \frac{\SNRA_{22}\cdot \Big(|\eta_2|^4+|\eta^*_2|^2|\CORRN_2^*|^2-2|\eta_2|^2|\eta_2^*||\CORRN_2^*|\cos(\theta_{\eta_2^*\CORRN_2^*})\Big)}{\big(\SNRA_{22}P_2+|\eta_2|^2\big)^2}\\
  &\ge&\frac{\SNRA_{22}\Big(|\eta_2^*|^2-|\eta_2^*||\CORRN_2^*|\Big)^2}{{\big(\SNRA_{22}P_2+|\eta_2|^2\big)^2}}\\
  &\ge& 0.
\end{eqnarray*}
We therefore conclude that $h(Y_{2G}|S_{2G},\tH_2)$ is maximized with $P_2=P_3=1$, and that setting $\CORR=0$ and $P_1=1$ does not affect the value of $h(Y_{2G}|S_{2G},\tH_2)$.

\phantomsection
\label{phn:derviationdiffP2}

Next, consider $h(Y_{1G}|S_{1G},\tH_1)$, and define
\begin{subequations}
\label{eq:eqcs}
\begin{eqnarray}
  c_1 &\triangleq& \SNRA_{11}P_1+ \SNRA_{31}P_3\\
  c_2 &\triangleq& 2\sqrt{\SNRA_{11}P_1\SNRA_{31}P_3} \\
  c_3 &\triangleq& 1+\SNRA_{21}P_2 \\
  c_4 &\triangleq& |\eta_1^*|^2 \\
  c_5 &\triangleq& |\eta_1^*||\CORRN_1^*|\\
  \theta_1 &\triangleq& \arg(h_{11}h_{31}^*v)\\
  \theta_2 &\triangleq& \arg(\eta_1^*\CORRN_1^*).
\end{eqnarray}
\end{subequations}
With these definitions we can write
\begin{eqnarray*}
  &&\hspace{-0.7 cm}h(Y_{1G}|S_{1G},\tH_1)-\log(\pi e)\\
  &=&\E_{\tH_1}\bigg\{\log\bigg(\var(Y_{1G})-\frac{|\E\{Y_{1G}S_{1G}^*\}|^2}{\var(S_{1G})}\bigg)\bigg\}\\
  &=&\frac{1}{2\pi}\int_{\theta_1=0}^{2\pi}\!\!\left(\log\bigg(2\pi\cdot\Big( c_3+\frac{c_1c_4+c_2c_4\cos(\theta_1)|v|-c_5^2-(c_1+c_2\cos(\theta_1)|v|)\cdot2c_5\cos(\theta_2)}{c_1+c_2\cos(\theta_1)|v|+c_4}\Big)\bigg)\!-\log(2\pi)\right)\mbox{d}\theta_1\\
  &\stackrel{(a)}{=}&\log\bigg(2\pi\cdot\frac{c_3c_4 + c_1(c_3 + c_4) - c_5^2 - c_2 c_3 |v| - c_2 c_4 |v| -
 2 c_5 (c_1 - c_2 |v|)\cos(\theta_2)}{c_1 + c_4 - c_2 |v|}\bigg)-\log(2\pi)\\
  &\triangleq&\log\big( f_2(|v|) \big),
\end{eqnarray*}
where (a) follows from explicit analytical calculation\footnote{\label{fnt:techrepref} See details in the Appendix on Pgs. \number\numexpr\getpagerefnumber{pginc_start}+1\relax--\number\numexpr\getpagerefnumber{pginc_end}-1\relax.}.
Next, differentiating $f_2(|v|)$ with respect to $|v|$ we obtain
\begin{equation*}
  \frac{\partial f_2(|v|)}{\partial |v|} = -\frac{ c_2\cdot \Big(c_4^2 + c_5^2 - 2 c_4 c_5 \cos(\theta_2)\Big)}{(c_1 + c_4 - c_2 |v|)^2},
\end{equation*}
and we note that since both $c_4$ and $c_5$ are non-negative real numbers, then $0\le(c_4-c_5)^2\le \Big(c_4^2 + c_5^2 - 2 c_4 c_5 \cos(\theta_2)\Big)$. Thus, as $c_2$ is positive, then the derivative of $f_2(|v|)$ with respect to $|v|$ is non-positive, i.e., $f_2(|v|)$ is a non-increasing function of $|v|$, and hence, it is maximized at $|v|=0$. Next, setting $|v|=0$ in $f_2(|v|)$ we obtain:
\begin{equation*}
  f_2(|v|)\Big|_{|v|=0} = \frac{c_3c_4 + c_1(c_3 + c_4) - c_5^2  - 2 c_5 c_1\cos(\theta_2)}{c_1 + c_4}\triangleq f_3(P_1,P_2,P_3).
\end{equation*}
Note that $f_3(P_1,P_2,P_3)$ is a monotonically increasing function of $c_3$ and  is independent of $c_2$. Additionally, note that since both $c_4$ and $c_5$ are nonnegative real numbers, then
\begin{equation*}
  \frac{\partial f_3(P_1,P_2,P_3)}{\partial c_1} = \frac{c_4^2+c_5^2-2c_4c_5\cos(\theta_2)}{(c_1 + c_4)^2} \ge 0.
\end{equation*}
We therefore conclude that $f_3(P_1,P_2,P_3)$ is a non-decreasing function of $c_1$. From the definition of $(c_1,c_2,c_3)$ in \eqref{eq:eqcs}, we conclude that $f_3(P_1,P_2,P_3)$ is non-decreasing with respect to $P_1, P_2$ and $P_3$. In conclusion, $h(Y_{1G}|S_{1G},\tH_1)$ is maximized with $v=0$ and $P_1=P_2=P_3=1$. This completes the proof of Proposition \ref{prop:proposition1}.
\tend


\section{Proving the Maximizing Distribution for Equation \eqref{eq:Leg33ofI} in Theorem \ref{thm:SRCA-UB} is i.i.d.}
\label{app:appB}

The objective of this appendix is to derive the maximum value and  characterize the associated maximizing distribution, for the expectation:
\begin{eqnarray}
    \label{eq:NewAppEeq0}
    &&{\E}_{{\tH}^n_1,{\tH}^n_2}\left\{h\bigg({\Hmat}^{(n)}_{h_{31}}X^n_{3\oG}+{\Hmat}^{(n)}_{h_{11}}X^n_{1\oG}+{\eta }_1\cdot W^n_1\big|{\tH}^n_1= {\th}^n_1\right)
-h\left({\Hmat}^{(n)}_{h_{32}}X^n_{3\oG}+V^n_2\big|{\tH}^n_2={\th}^n_2\right)\bigg\},
\end{eqnarray}
where $W^n_1\sim \mathcal{C}\mathcal{N}\left({\bm 0},{\Imat}_n\right)$, and $V^n_2\sim \mathcal{C}\mathcal{N}\left({\bm 0},\big(1-{\left|v_2\right|}^2\big){\Imat}_n\right)$.

\noindent
Let ${\hmat}^{(n)}_{h_{kl}}$ denote a realization of ${\Hmat}^{(n)}_{h_{kl}}$, where $(k,l)\in\big\{(1,1),(3,1),(3,2)\big\}$.
Now, consider
\[
    h\left({\hmat}^{(n)}_{h_{31}}X^n_{3\oG}+{\hmat}^{(n)}_{h_{11}}X^n_{1\oG}+{\eta }_1\cdot W^n_1\big|{\tH}^n_1={\th}^n_1\right),
\]
and define
the $2n \times 2n$ real matrices
\begin{subequations}
  \begin{eqnarray}
    \label{eqn:appE_def_h31}
    {\hmat}_{31}&\triangleq& \left[ \begin{array}{cc}
    \RealS\left\{{\hmat}^{(n)}_{h_{31}}\right\} &\;\;\; -\ImagS\left\{{\hmat}^{(n)}_{h_{31}}\right\} \\
    \ImagS\left\{{\hmat}^{(n)}_{h_{31}}\right\} &\;\;\; \RealS\left\{{\hmat}^{(n)}_{h_{31}}\right\} \end{array}
    \right],\\
    \label{eqn:appE_def_h11}
     {\hmat}_{11}&\triangleq& \left[ \begin{array}{cc}
    \RealS\left\{{\hmat}^{(n)}_{h_{11}}\right\} &\;\;\; -\ImagS\left\{{\hmat}^{(n)}_{h_{11}}\right\} \\
    \ImagS\left\{{\hmat}^{(n)}_{h_{11}}\right\} &\;\;\; \RealS\left\{{\hmat}^{(n)}_{h_{11}}\right\} \end{array}
    \right],\\
    {\oX}^{2n}_{k\oG}&=&\left(\left(\RealS{\left\{X^n_{k\oG}\right\}}\right)^T,\left(\ImagS{\left\{X^n_{k\oG}\right\}}\right)^T\right)^T,\ \ k\in \left\{1,3\right\}. \nonumber
\end{eqnarray}
\end{subequations}

\noindent
Since $X^n_{1\oG}$ and $X^n_{3\oG}$ are two zero-mean  complex jointly Gaussian random vectors, then $\left({\oX}^{2n}_{1\oG},\ {\oX}^{2n}_{3\oG}\right)$ are zero mean real jointly  Gaussian vectors. Note that since ${\hmat}^{(n)}_{h_{31}}$ is a diagonal matrix with  ${\left[{\hmat}^{(n)}_{h_{31}}\right]}_{i,i}=h_{31,i}$, then $\RealS\left\{{\hmat}^{(n)}_{h_{31}}\right\}$ and $\ImagS\left\{{\hmat}^{(n)}_{h_{31}}\right\}$ are  diagonal
\ifextended
matrices, and thus $\RealS\left\{{\hmat}^{(n)}_{h_{31}}\right\}={\left(\RealS\left\{{\hmat}^{(n)}_{h_{31}}\right\}\right)}^T$, $\ImagS\left\{{\hmat}^{(n)}_{h_{31}}\right\}={\left(\ImagS\left\{{\hmat}^{(n)}_{h_{31}}\right\}\right)}^T$, and $\RealS\left\{{\hmat}^{(n)}_{h_{31}}\right\}\ImagS\left\{{\hmat}^{(n)}_{h_{31}}\right\}=\ImagS\left\{{\hmat}^{(n)}_{h_{31}}\right\}\RealS\left\{{\hmat}^{(n)}_{h_{31}}\right\}$.
\else
matrices.
\fi
Next, define ${\theta }_{31,i}\triangleq {\arg \left\{h_{31,i}\right\}\ }$, and write $h_{31,i}={\sqrt{\SNR}}_{31}\cdot e^{j{\theta }_{31,i}}$. Additionally,
for $k\in \left\{1,3\right\}$ we define the $n\times n$ real diagonal matrices ${\Cmat}_{k1}$ and ${\Smat}_{k1}$ whose diagonal elements are given by ${\left[{\Cmat}_{k1}\right]}_{i,i}={\cos \left({\theta }_{k1,i}\right)\ },\ {\left[{\Smat}_{k1}\right]}_{i,i}={\sin \left({\theta }_{k1,i}\right)\ }$. With these definitions we write
\[\RealS\left\{{\hmat}^{(n)}_{h_{k1}}\right\}=\sqrt{\SNR_{k1}}\cdot {\Cmat}_{k1},\ \ \ \ \ \ImagS\left\{{\hmat}^{(n)}_{h_{k1}}\right\}\ =\sqrt{\SNR_{k1}}\cdot {\Smat}_{k1}.\]
Lastly, we define the $n\times n$ real diagonal matrix ${\Lmat}_{31}$, with non-negative elements via
\begin{equation*}
    {\Lmat}_{31}\triangleq {\hmat}^{(n)}_{h_{31}}\cdot \left({\hmat}^{(n)}_{h_{31}}\right)^H= {\left(\RealS\left\{{\hmat}^{(n)}_{h_{31}}\right\}\right)}^2+{\left(\ImagS\left\{{\hmat}^{(n)}_{h_{31}}\right\}\right)}^2=\SNR_{31}{\left({\Cmat}_{31}\right)}^2+\SNR_{31}{\left({\Smat}_{31}\right)}^2=\SNR_{31}{\Imat}_n.
\end{equation*}

Next, recall that ${\hmat}_{31}$ is the realization of ${\Hmat}_{31}$ corresponding to ${\Hmat}^{(n)}_{h_{31}}$. With these definitions \ifextended we have:
\begin{eqnarray*}
    {\hmat}^T_{31}{\hmat}_{31}&=&{\left[ \begin{array}{cc}
    \RealS\left\{{\hmat}^{(n)}_{h_{31}}\right\} & -\ImagS\left\{{\hmat}^{(n)}_{h_{31}}\right\} \\
    \ImagS\left\{{\hmat}^{(n)}_{h_{31}}\right\} & \RealS\left\{{\hmat}^{(n)}_{h_{31}}\right\} \end{array}
    \right]}^T\left[ \begin{array}{cc}
    \RealS\left\{{\hmat}^{(n)}_{h_{31}}\right\} & -\ImagS\left\{{\hmat}^{(n)}_{h_{31}}\right\} \\
    \ImagS\left\{{\hmat}^{(n)}_{h_{31}}\right\} & \RealS\left\{{\hmat}^{(n)}_{h_{31}}\right\} \end{array}
    \right]\\
    &=&\left[ \begin{array}{cc}
    {\left(\RealS\left\{{\hmat}^{(n)}_{h_{31}}\right\}\right)}^T & {\left(\ImagS\left\{{\hmat}^{(n)}_{h_{31}}\right\}\right)}^T \\
    -{\left(\ImagS\left\{{\hmat}^{(n)}_{h_{31}}\right\}\right)}^T & {\left(\RealS\left\{{\hmat}^{(n)}_{h_{31}}\right\}\right)}^T \end{array}
    \right]\left[ \begin{array}{cc}
    \RealS\left\{{\hmat}^{(n)}_{h_{31}}\right\} & -\ImagS\left\{{\hmat}^{(n)}_{h_{31}}\right\} \\
    \ImagS\left\{{\hmat}^{(n)}_{h_{31}}\right\} & \RealS\left\{{\hmat}^{(n)}_{h_{31}}\right\} \end{array}
    \right]\\
    &\stackrel{(a)}{=}&\left[ \begin{array}{cc}
    \RealS\left\{{\hmat}^{(n)}_{h_{31}}\right\} & \ImagS\left\{{\hmat}^{(n)}_{h_{31}}\right\} \\
    -\ImagS\left\{{\hmat}^{(n)}_{h_{31}}\right\} & \RealS\left\{{\hmat}^{(n)}_{h_{31}}\right\} \end{array}
    \right]\left[ \begin{array}{cc}
    \RealS\left\{{\hmat}^{(n)}_{h_{31}}\right\} & -\ImagS\left\{{\hmat}^{(n)}_{h_{31}}\right\} \\
    \ImagS\left\{{\hmat}^{(n)}_{h_{31}}\right\} & \RealS\left\{{\hmat}^{(n)}_{h_{31}}\right\} \end{array}
    \right]\\
    &=&\left[ \begin{array}{cc}
    {\left(\RealS\left\{{\hmat}^{(n)}_{h_{31}}\right\}\right)}^2+{\left(\ImagS\left\{{\hmat}^{(n)}_{h_{31}}\right\}\right)}^2 & {\Omat}_{n\times n} \\
    {\Omat}_{n\times n} & {\left(\RealS\left\{{\hmat}^{(n)}_{h_{31}}\right\}\right)}^2+{\left(\ImagS\left\{{\hmat}^{(n)}_{h_{31}}\right\}\right)}^2 \end{array}
    \right]\\
    &\equiv& \left[ \begin{array}{cc}
    {\Lmat}_{31} & {\Omat}_{n\times n} \\
    {\Omat}_{n\times n}& {\Lmat}_{31} \end{array}
    \right]\\
    &=&\SNR_{31}\cdot {\Imat}_{2n},
\end{eqnarray*}
where (a) follows as ${\hmat}^{(n)}_{h_{31}}$ is a diagonal matrix. It\else it \fi follows that
\ifextended
\begin{equation*}
    \left[ \begin{array}{cc}
    {\left({\Lmat}_{31}\right)}^{-\frac{1}{2}} & {\Omat}_{n\times n} \\
    {\Omat}_{n\times n} & {\left({\Lmat}_{31}\right)}^{-\frac{1}{2}} \end{array}
    \right]{\hmat}^T_{31}{\hmat}_{31}\left[ \begin{array}{cc}
    {\left({\Lmat}_{31}\right)}^{-\frac{1}{2}} & {\Omat}_{n\times n} \\
    {\Omat}_{n\times n} & {\left({\Lmat}_{31}\right)}^{-\frac{1}{2}} \end{array}
    \right]={\left(\SNR_{31}\right)}^{-\frac{1}{2}}\cdot {\Imat}^T_{2n}\cdot {\hmat}^T_{31}{\hmat}_{31}{{\cdot \Imat}_{2n}\cdot \left(\SNR_{31}\right)}^{-\frac{1}{2}}={\Imat}_{2n}.
\end{equation*}
\else
$  \left[ \begin{array}{cc}
    {\left({\Lmat}_{31}\right)}^{-\frac{1}{2}} & {\Omat}_{n\times n} \\
    {\Omat}_{n\times n} & {\left({\Lmat}_{31}\right)}^{-\frac{1}{2}} \end{array}
    \right]{\hmat}^T_{31}{\hmat}_{31}\left[ \begin{array}{cc}
    {\left({\Lmat}_{31}\right)}^{-\frac{1}{2}} & {\Omat}_{n\times n} \\
    {\Omat}_{n\times n} & {\left({\Lmat}_{31}\right)}^{-\frac{1}{2}} \end{array}
    \right]={\Imat}_{2n}$.
\fi
\ifextended
Letting ${\Umat}_{31}\triangleq\frac{1}{\sqrt{\SNR_{31}}}{\hmat}_{31}$, the above equality can be written as ${\Umat}^T_{31}{\Umat}_{31}={\Imat}_{2n}$, hence, we conclude that  ${\Umat}_{31}$ is a $2n\times 2n$ orthogonal matrix, and that the matrix ${\hmat}_{31}$ can now be written as ${\hmat}_{31}=\sqrt{\SNR_{31}}\cdot {\Umat}_{31}$, where
\begin{equation*}
    {\Umat}_{31}=\left[ \begin{array}{cc}
    {\Cmat}_{31} & {-\Smat}_{31} \\
    {\Smat}_{31} & {\Cmat}_{31} \end{array}
    \right].
\end{equation*}
Similarly we define ${\Umat}_{11}=\left[ \begin{array}{cc}
{\Cmat}_{11} & {-\Smat}_{11} \\
{\Smat}_{11} & {\Cmat}_{11} \end{array}
\right]$.
\else
Letting ${\Umat}_{31}\triangleq\frac{1}{\sqrt{\SNR_{31}}}{\hmat}_{31} = \left[ \begin{array}{cc}
    {\Cmat}_{31} & {-\Smat}_{31} \\
    {\Smat}_{31} & {\Cmat}_{31} \end{array}
    \right]$, the above equality can be written as ${\Umat}^T_{31}{\Umat}_{31}={\Imat}_{2n}$, hence we conclude that  ${\Umat}_{31}$ is a $2n\times 2n$ orthogonal matrix, and that the matrix ${\hmat}_{31}$ can now be written as ${\hmat}_{31}=\sqrt{\SNR_{31}}\cdot {\Umat}_{31}$. Similarly we define ${\Umat}_{11}=\left[ \begin{array}{cc}
{\Cmat}_{11} & {-\Smat}_{11} \\
{\Smat}_{11} & {\Cmat}_{11} \end{array}
\right]$.
\fi
 Next, consider ${\eta }_1\cdot W^n_1$. Begin by writing:
\begin{eqnarray}
    \!\!\!\!\!\!\!\!\!\!\!\!\!\!\!{\eta }_1\cdot W^n_1&=&
    \big(\RealS\left\{{\eta }_1\right\}+j\cdot \ImagS\left\{{\eta }_1\right\}\big)\big(\RealS\left\{W^n_1\right\}+j\cdot \ImagS\left\{W^n_1\right\}\big)\nonumber\\
    \label{eqnE:defweta}
    &=&\big(\RealS\left\{{\eta }_1\right\}\RealS\left\{W^n_1\right\}-\ImagS\left\{{\eta }_1\right\}\ImagS\left\{W^n_1\right\}\big)+j\cdot \big(\RealS\left\{{\eta }_1\right\}\ImagS\left\{W^n_1\right\}+\ImagS\left\{{\eta }_1\right\}\RealS\left\{W^n_1\right\}\big),
\end{eqnarray}
and define ${\oW}^{2n}_{\eta}\triangleq {\left(\big(\RealS{\left\{{\eta }_1\cdot W^n_1\right\}}\big)^T,\big(\ImagS{\left\{{\eta }_1\cdot W^n_1\right\}\big)}^T\right)}^T$ and ${\oW}^{2n}_1\triangleq \left(\big(\RealS\left\{W^n_1\right\}\big)^T,\big(\ImagS\left\{W^n_1\right\}\big)^T\right)^T$.
Since $W^n_1$ is an i.i.d. circularly symmetric complex Gaussian vector, where each element has a unit variance, then we conclude that\footnote{For complex Normal $\Zvec=\Xvec+j\Yvec$, where $\Xvec$ and $\Yvec$ are real vectors, define $\mu=\E\{\Zvec\}, \Gmat=\E\{(\Zvec-\mu) (\Zvec-\mu)^H\}$, and $\Jmat=\E\{(\Zvec-\mu) (\Zvec-\mu)^T\}$. Then, $(\Xvec^T,\Yvec^T )^T$ is a real jointly Gaussian  vector with covariance matrix $\cov(\Xvec,\Yvec)=\frac{1}{2} \ImagS\{-\Gmat+\Jmat\}$. For a circularly symmetric complex Normal vector $\Zvec$ we have $\mu=0, \Jmat=\Omat$, and $\cov(\Xvec)=\cov(\Yvec)=\frac{1}{2} \RealS\{\Gmat\}$. When $\Zvec$ is circularly symmetric complex Normal with i.i.d. elements, each has a unit variance, then $\Gmat=\Imat_n$ is real, hence, $\cov(\Xvec,\Yvec)=0$, and from joint Gaussianity it follows that the real and imaginary parts of $\Zvec$ are mutually independent, see http://www.rle.mit.edu/rgallager/documents/CircSymGauss.pdf\phantom{l}.}  ${\oW}^{2n}_1$ is a real jointly Normal vector with covariance matrix
\ifextended
\cite[Lemma 5]{Massey:93}:
\begin{equation*}
    \left[\  \begin{array}{cc}
    \cov\big(\RealS\left\{W^n_1\right\},\RealS\left\{W^n_1\right\}\big) & \cov\big(\RealS\left\{W^n_1\right\},\ImagS\left\{W^n_1\right\}\big) \\
    \cov\big(\ImagS\left\{W^n_1\right\},\RealS\left\{W^n_1\right\}\big) & \cov\big(\ImagS\left\{W^n_1\right\},\ImagS\left\{W^n_1\right\}\big) \end{array}
    \right]\stackrel{(a)}{=}\left[ \begin{array}{cc}
    {\frac{1}{2}\Imat}_n & {\Omat}_{n\times n} \\
    {\Omat}_{n\times n} & \frac{1}{2}{\Imat}_n \end{array}
    \right]=\frac{1}{2}{\Imat}_{2n},
\end{equation*}
where (a) follows since for circularly symmetric complex Normal RVs then\footnote{http://www.rle.mit.edu/rgallager/documents/CircSymGauss.pdf}
\begin{eqnarray*}
    \cov\big(\RealS\left\{W^n_1\right\},\RealS\left\{W^n_1\right\}\big)&=&\cov\big(\ImagS\left\{W^n_1\right\},\ImagS\left\{W^n_1\right\}\big)=\frac{1}{2}\RealS\big\{\cov\left(W^n_1,W^n_1\right)\big\}\\
    \cov\big(\RealS\left\{W^n_1\right\},\ImagS\left\{W^n_1\right\}\big)&=&-\cov\big(\ImagS\left\{W^n_1\right\},\RealS\left\{W^n_1\right\}\big)=-\frac{1}{2}\ImagS\big\{\cov\left(W^n_1,W^n_1\right)\big\},
\end{eqnarray*}
\and we note that as $W^n_1$ has i.i.d. elements then $\cov\left(W^n_1,W^n_1\right) = \Imat_{n}$ is a real matrix, thus  $\ImagS\left\{\cov\left(W^n_1,W^n_1\right)\right\}={\Omat}_{n\times n}$.
\else
$\cov({\oW}^{2n}_1)=\frac{1}{2}\Imat_{2n}$, see \cite[Lemma 5]{Massey:93}.
\fi
\ifextended
Next, recall from \eqref{eqnE:defweta} that
\begin{eqnarray*}
    \RealS\left\{{\eta }_1\cdot W^n_1\right\}&=&\RealS\left\{{\eta }_1\right\}\RealS\left\{W^n_1\right\}-\ImagS\left\{{\eta }_1\right\}\ImagS\left\{W^n_1\right\}\\
    \ImagS\left\{{\eta }_1\cdot W^n_1\right\}&=&\RealS\left\{{\eta }_1\right\}\ImagS\left\{W^n_1\right\}+\ImagS\left\{{\eta }_1\right\}\RealS\left\{W^n_1\right\}.
\end{eqnarray*}
It thus follows that  $\RealS\left\{{\eta }_1\cdot W^n_1\right\}$ is a zero-mean Normal RV whose covariance matrix is
\begin{eqnarray*}
    & &\!\!\!\!\!\!\!\! \!\!\!\!\E\left\{\RealS\left\{{\eta }_1\cdot W^n_1\right\}{\big(\RealS\left\{{\eta }_1\cdot W^n_1\right\}\big)}^T\right\}=\\
    & = & \E\left\{\big(\RealS\left\{{\eta }_1\right\}\RealS\left\{W^n_1\right\}-\ImagS\left\{{\eta }_1\right\}\ImagS\left\{W^n_1\right\}\big){\big(\RealS\left\{{\eta }_1\right\}\RealS\left\{W^n_1\right\}-\ImagS\left\{{\eta }_1\right\}\ImagS\left\{W^n_1\right\}\big)}^T\right\}\\
    &=&{\big(\RealS\left\{{\eta }_1\right\}\big)}^2\E\left\{\RealS\left\{W^n_1\right\}{\big(\RealS\left\{W^n_1\right\}\big)}^T\right\}+{\big(\ImagS\left\{{\eta }_1\right\}\big)}^2\ \left\{\ImagS\left\{W^n_1\right\}{\big(\ImagS\left\{W^n_1\right\}\big)}^T\right\}\\
    &=&{\big(\RealS\left\{{\eta }_1\right\}\big)}^2\frac{1}{2}{\Imat}_n+{\big(\ImagS\left\{{\eta }_1\right\}\big)}^2\frac{1}{2}{\Imat}_n\\
    &=&{\left|{\eta }_1\right|}^2\frac{1}{2}{\Imat}_n,
\end{eqnarray*}
hence $\RealS\left\{{\eta }_1\cdot W^n_1\right\}\sim \mathcal{N}\left({\bm 0},\ {\left|{\eta }_1\right|}^2\frac{1}{2}{\Imat}_n\right)$. Repeating the derivation for $\ImagS\left\{{\eta }_1\cdot W^n_1\right\}=\RealS\left\{{\eta }_1\right\}\ImagS\left\{W^n_1\right\}+\ImagS\left\{{\eta }_1\right\}\RealS\left\{W^n_1\right\}$, it directly follows that  $\ImagS\left\{{\eta }_1\cdot W^n_1\right\}\sim \mathcal{N}\left({\bm 0},{\left|{\eta }_1\right|}^2\frac{1}{2}{\Imat}_n\right)$. Finally we compute
\begin{eqnarray*}
    & &\!\!\!\!\!\!\!\! \!\!\!\!\E\left\{\RealS\left\{{\eta }_1\cdot W^n_1\right\}{\big(\ImagS\left\{{\eta }_1\cdot W^n_1\right\}\big)}^T\right\}\\
    &=& \E\left\{\big(\RealS\left\{{\eta }_1\right\}\RealS\left\{W^n_1\right\}-\ImagS\left\{{\eta }_1\right\}\ImagS\left\{W^n_1\right\}\big)\cdot {\big(\RealS\left\{{\eta }_1\right\}\ImagS\left\{W^n_1\right\}+\ImagS\left\{{\eta }_1\right\}\RealS\left\{W^n_1\right\}\big)}^T\right\}\\
    &=&\E\Big\{\RealS\left\{{\eta }_1\right\}\RealS\left\{W^n_1\right\}\RealS\left\{{\eta }_1\right\}{\big(\ImagS\left\{W^n_1\right\}\big)}^T+\RealS\left\{{\eta }_1\right\}\RealS\left\{W^n_1\right\}\ImagS\left\{{\eta }_1\right\}{\big(\RealS\left\{W^n_1\right\}\big)}^T\\
    & &\qquad-\ImagS\left\{{\eta }_1\right\}\ImagS\left\{W^n_1\right\}\RealS\left\{{\eta }_1\right\}{\big(\ImagS\left\{W^n_1\right\}\big)}^T-\ImagS\left\{{\eta }_1\right\}\ImagS\left\{W^n_1\right\}\ImagS\left\{{\eta }_1\right\}{\big(\RealS\left\{W^n_1\right\}\big)}^T\Big\}\\
    &=&\E\Big\{\big(\RealS{\left\{{\eta }_1\right\}}\big)^2\RealS\left\{W^n_1\right\}{\big(\ImagS\left\{W^n_1\right\}\big)}^T+\RealS\left\{{\eta }_1\right\}\ImagS\left\{{\eta }_1\right\}\RealS\left\{W^n_1\right\}{\big(\RealS\left\{W^n_1\right\}\big)}^T\\
    & &\qquad-\ImagS\left\{{\eta }_1\right\}\RealS\left\{{\eta }_1\right\}\ImagS\left\{W^n_1\right\}{\left(\ImagS\left\{W^n_1\right\}\right)}^T-{\left(\ImagS\left\{{\eta }_1\right\}\right)}^2\ImagS\left\{W^n_1\right\}{\left(\RealS\left\{W^n_1\right\}\right)}^T\Big\}\\
    &=&\E\left\{\RealS\left\{{\eta }_1\right\}\ImagS\left\{{\eta }_1\right\}\RealS\left\{W^n_1\right\}{\big(\RealS\left\{W^n_1\right\}\big)}^T-\ImagS\left\{{\eta }_1\right\}\RealS\left\{{\eta }_1\right\}\ImagS\left\{W^n_1\right\}{\big(\ImagS\left\{W^n_1\right\}\big)}^T\right\}\\
    &=&\RealS\left\{{\eta }_1\right\}\ImagS\left\{{\eta }_1\right\}\E\left\{\RealS\left\{W^n_1\right\}{\big(\RealS\left\{W^n_1\right\}\big)}^T-\ImagS\left\{W^n_1\right\}{\big(\ImagS\left\{W^n_1\right\}\big)}^T\right\}\\
    &=&\RealS\left\{{\eta }_1\right\}\ImagS\left\{{\eta }_1\right\}\left(\frac{1}{2}\Imat_n - \frac{1}{2}\Imat_n\right)\\
    &=&{\Omat}_{n\times n},
\end{eqnarray*}
\else
Next, from \eqref{eqnE:defweta} it directly follows that $\RealS\left\{{\eta }_1\cdot W^n_1\right\}\sim \mathcal{N}\left({\bm 0},\ {\left|{\eta }_1\right|}^2\frac{1}{2}{\Imat}_n\right)$
and $\ImagS\left\{{\eta }_1\cdot W^n_1\right\}\sim \mathcal{N}\left({\bm 0},{\left|{\eta }_1\right|}^2\frac{1}{2}{\Imat}_n\right)$,
\fi
and consequently, ${\oW}^{2n}_{\eta} \sim \mathcal{N}\left({\bm 0},\frac{{\left|{\eta }_1\right|}^2}{2}{\Imat}_{2n}\right)$.
Defining ${\oV}^{2n}_2 = \left(\big(\RealS\{V_2^n\}\big)^T,\big(\ImagS\{V_2^n\}\big)^T \right)^T$, we similarly conclude ${\oV}^{2n}_2\sim \mathcal{N}\left({\bm 0},\frac{{1-\left|v_2\right|}^2}{2}{\Imat}_{2n}\right).$
Lastly, write
\begin{eqnarray*}
    {\hmat}^{(n)}_{h_{k1}}X^n_{k\oG} & = & \left(\RealS\left\{{\hmat}^{(n)}_{h_{k1}}\right\}\RealS\left\{X^n_{k\oG}\right\}-\ImagS\left\{{\hmat}^{(n)}_{h_{k1}}\right\}\ImagS\left\{X^n_{k\oG}\right\}\right)\\
    & & \qquad\qquad  +j\cdot \left(\RealS\left\{{\hmat}^{(n)}_{h_{k1}}\right\}\ImagS\left\{X^n_{k\oG}\right\}+\ImagS\left\{{\hmat}^{(n)}_{h_{k1}}\right\}\RealS\left\{X^n_{k\oG}\right\}\right).
\end{eqnarray*}
From the above assignments it also follows that  ${\hmat}^{(n)}_{h_{k1}}X^n_{k\oG}$ is statistically equivalent to  ${\hmat}_{k1}{\oX}^{2n}_{k\oG}$, $k\in \left\{1,3\right\}$, and that
${\hmat}^{(n)}_{h_{32}}X^n_{3\oG}$ is statistically equivalent to  ${\hmat}_{32}{\oX}^{2n}_{3\oG}$.
Hence,
\begin{eqnarray}
\label{eqnE:DefEquivDiffEnt}
&  &   \dsE_{{\tH}^n_1,{\tH}^n_2}\left\{ h\left({\Hmat}^{(n)}_{h_{31}}X^n_{3\oG}+{\Hmat}^{(n)}_{h_{11}}X^n_{1\oG}+{\eta }_1\cdot W^n_1\big|{\tH}^n_1 = {\th}^n_1\right)\right\}\nonumber\\
& & \qquad \qquad\qquad \qquad
    =\dsE_{{\tH}^n_1}\left\{h\left({\hmat}_{31}{\oX}^{2n}_{3\oG}+{\hmat}_{11}{\oX}^{2n}_{1\oG}+{\oW}^{2n}_{\eta}\big| {\tH}^n_1= {\th}^n_1\right)\right\}.
\end{eqnarray}

Denote the $2n\times 2n$ covariance matrices for ${\oX}^{2n}_{3\oG}$ and ${\oX}^{2n}_{1\oG}$ by ${\Kmat}_{{\oX}_{3\oG}}$ and ${\Kmat}_{{\oX}_{1\oG}}$, respectively, and let ${\Kmat}_{{\oX}_{13\oG}}\triangleq \E\left\{{\oX}^{2n}_{1\oG}\cdot {\left({\oX}^{2n}_{3\oG}\right)}^T\right\}$. Next, as $X^n_{1\oG}$ and $X^n_{3\oG}$ are complex jointly  Gaussian and both are independent of $W^n_1$, it follows that  given
${\tH}^n_1= {\th}^n_1$, then ${\hmat}_{31}{\oX}^{2n}_{3\oG}+{\hmat}_{11}{\oX}^{2n}_{1\oG}+{\oW}^{2n}_{\eta}$ is a jointly Gaussian real random vector, whose covariance matrix, $\cov\left({\hmat}_{31}{\oX}^{2n}_{3\oG}+{\hmat}_{11}{\oX}^{2n}_{1\oG}+{\oW}^{2n}_{\eta}\right)\triangleq \Tmat_1$, is given by:
\begin{equation*}
    {\Tmat}_1={\hmat}_{31}{\Kmat}_{{\oX}_{3\oG}}{\hmat}^T_{31}+{\hmat}_{11}{\Kmat}_{{\oX}_{1\oG}}{\hmat}^T_{11}+{\hmat}_{11}{\Kmat}_{{\oX}_{13\oG}}{\hmat}^T_{31}+{\hmat}_{31}{\Kmat}^T_{{\oX}_{13\oG}}{\hmat}^T_{11}+\frac{{\left|{\eta }_1\right|}^2}{2}{\Imat}_{2n},
\end{equation*}
hence, its  differential entropy is given by \cite[Pg. 1296]{Massey:93}:
\begin{equation*}
    h\left({\hmat}_{31}{\oX}^{2n}_{3\oG}+{\hmat}_{11}{\oX}^{2n}_{1\oG}+{\oW}^{2n}_{\eta}\big|{\tH}^n_1= {\th}^n_1\right)=\frac{1}{2}\log \det \left({\left(2\pi \right)}^{2n}{\ \Tmat}_1\right).
\end{equation*}

Note that the expectation over ${\tH}^n_1$
\ifextended in \eqref{eqnE:DefEquivDiffEnt}:
\begin{eqnarray*}
    & & {\E}_{{\tH}^n_1,{\tH}^n_2}\left\{h\left({\hmat}^{(n)}_{h_{31}}X^n_{3\oG}+{\hmat}^{(n)}_{h_{11}}X^n_{1\oG}+{\eta }_1\cdot W^n_1\big|{\tH}^n_1={\th}^n_1\right)\right\}=\\
    & & \qquad \qquad \qquad \qquad \qquad {\E}_{{\tH}^n_1}\left\{h\left({\hmat}^{(n)}_{h_{31}}X^n_{3\oG}+{\hmat}^{(n)}_{h_{11}}X^n_{1\oG}+{\eta }_1\cdot W^n_1\big|{\tH}^n_1={\th}^n_1\right)\right\},
\end{eqnarray*}
\else
in \eqref{eqnE:DefEquivDiffEnt}
\fi
is taken with respect to the phases of the channel coefficients, which are mutually independent over the links, i.i.d. over time, and distributed uniformly over $\left[0,2\pi \right)$. Since $\cos(x+\pi)=-\cos(x)$ and $\sin (x+\pi)=-\sin(x)$, then replacing ${\hmat}_{11}$ with ${-\hmat}_{11}$ is equivalent to shifting the phases of all time realizations of $H_{11}$, i.e., all coefficients in ${\hmat}^{(n)}_{h_{11}}$ by $\pi $. As the complex exponential is periodic with a period of $2\pi $, and the expectation spans a continuous intervals of $2\pi $ radians, a constant phase shift to all elements of the diagonal matrix ${\hmat}^{(n)}_{11}$ does not affect the expectation, see proof of \cite[Thm. 8]{Kramer:05}, and consequently:
\begin{eqnarray}
    & & {\E}_{{\tH}^n_1}\left\{h\left({\hmat}^{(n)}_{h_{31}}X^n_{3\oG}+{\hmat}^{(n)}_{h_{11}}X^n_{1\oG}+{\eta }_1\cdot W^n_1\big|{\tH}^n_1={\th}^n_1\right)\right\}\nonumber\\
    & & \qquad \qquad \qquad \qquad \qquad = {\E}_{{\tH}^n_1}\left\{h\left({\hmat}^{(n)}_{h_{31}}X^n_{3\oG}-{\hmat}^{(n)}_{h_{11}}X^n_{1\oG}+{\eta }_1\cdot W^n_1\big|{\tH}^n_1={\th}^n_1\right)\right\}\nonumber\\
    & & \qquad \qquad \qquad \qquad \qquad = {\E}_{{\tH}^n_1}\left\{h\left({\hmat}_{31} \oX^{2n}_{3\oG}- {\hmat}_{11}\oX^{2n}_{1\oG}+ \oW^{2n}_{\eta}\big|{\tH}^n_1={\th}^n_1\right)\right\}.\label{eq:AppENeweq1}
\end{eqnarray}
Let the covariance matrix for ${\hmat}_{31} \oX^{2n}_{3\oG}- {\hmat}_{11}\oX^{2n}_{1\oG}+ \oW^{2n}_{\eta}$ be denoted with $\Tmat_2 \triangleq \cov\left({\hmat}_{31} \oX^{2n}_{3\oG}- {\hmat}_{11}\oX^{2n}_{1\oG}+\oW^{2n}_{\eta} \right)$. $\Tmat_2$ can be written as
\begin{eqnarray*}
    {\Tmat}_2 & = & {\hmat}_{31}{\Kmat}_{{\oX}_{3\oG}}{\hmat}^T_{31}+{\hmat}_{11}{\Kmat}_{{\oX}_{1\oG}}{\hmat}^T_{11}
		-{\hmat}_{11}{\Kmat}_{{\oX}_{13\oG}}{\hmat}^T_{31}-{\hmat}_{31}{\Kmat}^T_{{\oX}_{13\oG}}{\hmat}^T_{11}+\frac{{\left|{\eta }_1\right|}^2}{2}{\Imat}_{2n}.
\end{eqnarray*}

\noindent
Using $\Tmat_1$ and $\Tmat_2$  we can write
\begin{eqnarray*}
    & & \hspace{-2cm}{\E}_{{\tH}^n_1}\left\{h\left({\hmat}^{(n)}_{h_{31}}X^n_{3\oG}+{\hmat}^{(n)}_{h_{11}}X^n_{1\oG}+{\eta }_1\cdot W^n_1\big|{\tH}^n_1={\th}^n_1\right)\right\}\\
    &=&\frac{1}{2}{\E}_{{\tH}^n_1}\left\{h\left({\hmat}^{(n)}_{h_{31}}X^n_{3\oG}+{\hmat}^{(n)}_{h_{11}}X^n_{1\oG}+{\eta }_1\cdot W^n_1\big|{\tH}^n_1={\th}^n_1\right)\right\} \\
    & & \qquad \qquad +\frac{1}{2}{\E}_{{\tH}^n_1}\left\{h\left({\hmat}^{(n)}_{h_{31}}X^n_{3\oG}-{\hmat}^{(n)}_{h_{11}}X^n_{1\oG}+{\eta }_1\cdot W^n_1\big|{\tH}^n_1={\th}^n_1\right)\right\}\\
    &=&\frac{1}{2}{\E}_{{\tH}^n_1}\left\{\frac{1}{2}\log\det \left({\left(2\pi \right)}^{2n}{\ \Tmat}_1\right)\right\}+\frac{1}{2}{\E}_{{\tH}^n_1}\left\{\frac{1}{2}\log\det \left({\left(2\pi \right)}^{2n}{\ \Tmat}_2\right)\right\}\\
    &\stackrel{(a)}{\le}& {\E}_{{\tH}^n_1}\left\{\frac{1}{2}\log\det \left(\frac{1}{2}{\left(2\pi \right)}^{2n}{\Tmat}_1+\frac{1}{2}{\left(2\pi \right)}^{2n}{\Tmat}_2\right)\right\}\\
    &=&{\E}_{{\tH}^n_1}\left\{\frac{1}{2}\log\det \left({\left(2\pi \right)}^{2n}\right)+\frac{1}{2}\log\det \bigg({\hmat}_{31}{\Kmat}_{{\oX}_{3\oG}}{\hmat}^T_{31}+{\hmat}_{11}{\Kmat}_{{\oX}_{1\oG}}{\hmat}^T_{11}+\frac{{\left|{\eta }_1\right|}^2}{2}{\Imat}_{2n}\bigg)\right\},
\end{eqnarray*}

\noindent
where the inequality (a) follows from the concavity of the logdet function in space of p.d symmetric matrices \cite[Section 3.1.5]{ConvexOpt}. Observe that the inequality (a) is obtained with equality when ${\Kmat}_{{\oX}_{13\oG}}={\Omat}_{2n \times 2n}$. As ${\oX}^{2n}_{3\oG}$ and ${\oX}^{2n}_{1\oG}$, are jointly Gaussian, then zero cross-correlation implies ${\oX}^{2n}_{3\oG}$ and ${\oX}^{2n}_{1\oG}$ are mutually independent. Note that ${\E}_{{\tH}^n_1,{\tH}^n_2}\left\{h\left({\hmat}^{(n)}_{h_{32}}X^n_{3\oG}+V^n_2\big|{\tH}^n_2={\th}^n_2\right)\right\}={\E}_{{\tH}^n_2}\left\{h\left({\hmat}^{(n)}_{h_{32}}X^n_{3\oG}
+V^n_2\big|{\tH}^n_2={\th}^n_2\right)\right\}$ is not affected by the correlation between ${\oX}^{2n}_{3\oG}$ and ${\oX}^{2n}_{1\oG}$, hence, we conclude that \eqref{eq:NewAppEeq0} is maximized by mutually independent ${\oX}^{2n}_{3\oG}$ and ${\oX}^{2n}_{1\oG}$, and henceforth we shall proceed with this assumption.

Next, define $\hat{\Kmat} \triangleq \cov\left\{{\hmat}_{31}{\oX}^{2n}_{3\oG}+{\hmat}_{11}{\oX}^{2n}_{1\oG}\Big|{\tH}^n_1={\th}^n_1\right\}={\hmat}_{31}{\Kmat}_{{\oX}_{3\oG}}{\hmat}^T_{31}+{\hmat}_{11}{\Kmat}_{{\oX}_{1\oG}}{\hmat}^T_{11}$. As ${\Kmat}_{{\oX}_{3\oG}}$ and ${\Kmat}_{{\oX}_{1\oG}}$ are both covariance matrices for real random vectors, they are both
\ifextended
symmetric\footnote{Note that this follows since for any two RVs, $X$ and $Y$ we have that $\E\{X\cdot Y\}=\E\{Y\cdot X\}$.},
\else
symmetric,
\fi
hence, ${\hat{\Kmat}}$ and $\cov\left({\hmat}_{32}{\oX}^{2n}_{3\oG}\big|{\tH}^n_2={\th}^n_2\right)\triangleq {\hmat}_{32}{\Kmat}_{{\oX}_{3\oG}}{\hmat}^T_{32}$ are both real symmetric matrices whose eigenvalues are real and non-negative.
For a square  real matrix $\Bmat$, we use $\eig\left(\Bmat\right)$ to denote a diagonal matrix whose diagonal elements are the eigenvalues of $\Bmat$, when the eigenvectors are orthogonal and have a unit norm, and the eigenvalues appear in ascending order in $\eig\left(\Bmat\right)$, i.e., $\eig\left(\Bmat\right)_{i,i} \le \eig\left(\Bmat\right)_{k,k}$ for $i \le k$,
\cite[Pg. 549]{Meyer}.
In the following we will refer only to eigenvalue decompositions of this form.
We next write the eigenvalue representations for ${\hmat}_{32}{\Kmat}_{{\oX}_{3\oG}}{\hmat}^T_{32}$ and  for ${\hat{\Kmat}}$ as ${\hmat}_{32}{\Kmat}_{{\oX}_{3\oG}}{\hmat}^T_{32}={\hat{\Umat}}_{32\oG}{\hat{\Dmat}}_{32\oG}{\hat{\Umat}}^T_{32\oG}$, and ${\hat{\Kmat}}={\hat{\Umat}}{\hat{\Dmat}}{\hat{\Umat}}^T$.
\noindent
As the determinant of an orthogonal matrix is 1, we can write
\begin{eqnarray*}
    & &\hspace{-2cm}h\left({\hmat}^{(n)}_{h_{31}}X^n_{3\oG}+{\hmat}^{(n)}_{h_{11}}X^n_{1\oG}+{\eta }_1\cdot W^n_1\big|{\tH}^n_1={\th}^n_1\right)-h\left({\hmat}^{(n)}_{h_{32}}X^n_{3\oG}+V^n_2\big|{\tH}^n_2={\th}^n_2\right)\\
    &= &\frac{1}{2}\log \det \left({\hat{\Kmat}}+\frac{{\left|{\eta }_1\right|}^2}{2}{\Imat}_{2n}\right)-\frac{1}{2}\log\det \left({\hmat}_{32}{\Kmat}_{{\oX}_{3\oG}}{\hmat}^T_{32}+\frac{1-{\left|{\CORRN}_2\right|}^2}{2}{\Imat}_{2n}\right)\\
    \ifextended
    &=&\frac{1}{2}\log\det \left({\hat{\Umat}}{\hat{\Dmat}}{\hat{\Umat}}^T+\frac{{\left|{\eta }_1\right|}^2}{2}{\Imat}_{2n}\right)-\frac{1}{2}\log\det \left({\hat{\Umat}}_{32\oG}{\hat{\Dmat}}_{32\oG}{\hat{\Umat}}^T_{32\oG}+\frac{1-{\left|{\CORRN}_2\right|}^2}{2}{\Imat}_{2n}\right)\\
    \fi
    &=&\frac{1}{2}\log\det \left({\hat{\Dmat}}+\frac{{\left|{\eta }_1\right|}^2}{2}{\Imat}_{2n}\right)-\frac{1}{2}\log\det \left({\hat{\Dmat}}_{32\oG}+\frac{1-{\left|{\CORRN}_2\right|}^2}{2}{\Imat}_{2n}\right).
\end{eqnarray*}
Next, we state the following basic fact: For a real symmetric matrix $\Amat$, and a real orthogonal matrix $\Qmat$ then $\eig\left(\Amat\right)=\eig\left(\Qmat\Amat{\Qmat}^T\right)$. Note that for orthogonal $\Qmat$ then $\Amat$ and $\Qmat\Amat{\Qmat}^T$ are called similar and the simple fact stated above is also referred to as ``similarity preserves eigenvalues'', see \cite[Pg. 508 and Pg. 549]{Meyer}.

\ifextended
This fact can be easily confirmed as the eigenvalue decomposition for an $n \times n$ real symmetric matrix can be written as $\Amat=\Vmat\Dmat{\Vmat}^T$, where $\Vmat$ is an orthogonal matrix and $\Dmat$ is a square real diagonal matrix, s.t.  ${\left[\Dmat\right]}_{i,i}\le {\left[\Dmat\right]}_{k,k},\ \forall 1\le i\le k\le n$. Then, $\Qmat\Amat{\Qmat}^T=\Qmat\Vmat\Dmat{\Vmat}^T{\Qmat}^T$. Since $\Qmat$ and $\Vmat$ are orthogonal it immediately follows that $\Qmat\Vmat{\left(\Qmat\Vmat\right)}^T=\Qmat\Vmat{\Vmat}^T{\Qmat}^T=\Imat$, thus $\Qmat\Vmat$ is orthogonal. Hence, for $\Qmat\Amat{\Qmat}^T$, we conclude that  $\Qmat\Vmat\Dmat{\left(\Qmat\Vmat\right)}^T$ is the eigenvalue decomposition of $\Qmat\Amat{\Qmat}^T$, in which the eigenvectors are orthogonal and the eigenvalues appear in ascending order, and the corresponding eigenvalues matrix is $\Dmat$.
\fi

\noindent
As ${\Kmat}_{{\oX}_{3\oG}}$ is a real $2n\times 2n$ symmetric matrix, we can express ${\Kmat}_{{\oX}_{3\oG}}={\Amat}_{3\oG}{\Dmat}_{3\oG}{\Amat}^T_{3\oG}$, where ${\Amat}_{3\oG}$ is an orthogonal
matrix and ${\Dmat}_{3\oG}$ is a diagonal matrix, whose diagonal elements are in ascending order:
$d_{3\oG,i} \triangleq {\left[\Dmat_{3\oG}\right]}_{i,i}\le {\left[\Dmat_{3\oG}\right]}_{k,k},\ \forall 1\le i\le k\le 2n$. Next, using similarity we write
\begin{equation*}
    {\hat{\Dmat}}_{32\oG}=\eig\left({\hmat}_{32}{\Kmat}_{{\oX}_{3\oG}}{\hmat}^T_{32}\right)=\eig\left(\sqrt{\SNR_{32}}{\Umat}_{32}{\Amat}_{3\oG}{\Dmat}_{3\oG}{\Amat}^T_{3\oG}{\Umat}^T_{32}\sqrt{\SNR_{32}}\right)=\SNR_{32}{\Dmat}_{3\oG}.
\end{equation*}
Note that
\begin{equation}
    \label{eq:NewAppEeq2}
    {\hmat}_{32}{\Dmat}_{3\oG}{\hmat}^T_{32}=\sqrt{\SNR_{32}}{\Umat}_{32}{\Dmat}_{3\oG}{\Umat}^T_{32}\sqrt{\SNR_{32}}={\Umat}_{32}{\hat{\Dmat}}_{32\oG}{\Umat}^T_{32}.
\end{equation}
From \eqref{eq:NewAppEeq2} we obtain that
\ifextended
\begin{eqnarray}
    \frac{1}{2}\log\det \left({\hmat}_{32}{\Kmat}_{{\oX}_{3\oG}}{\hmat}^T_{32}+\frac{1-{\left|{\CORRN}_2\right|}^2}{2}{\Imat}_{2n}\right)
    &=&\frac{1}{2}\log\det \left({\hat{\Dmat}}_{32\oG}+\frac{1-{\left|{\CORRN}_2\right|}^2}{2}{\Imat}_{2n}\right)\nonumber\\
    &=&  \frac{1}{2}\log\det \left(\SNR_{32}{\Dmat}_{3\oG}+\frac{1-{\left|{\CORRN}_2\right|}^2}{2}{\Imat}_{2n}\right).\label{eq:NewAppEeq3}
\end{eqnarray}
\else\begin{eqnarray}
    \frac{1}{2}\log\det \left({\hmat}_{32}{\Kmat}_{{\oX}_{3\oG}}{\hmat}^T_{32}+\frac{1-{\left|{\CORRN}_2\right|}^2}{2}{\Imat}_{2n}\right)
    &=&  \frac{1}{2}\log\det \left(\SNR_{32}{\Dmat}_{3\oG}+\frac{1-{\left|{\CORRN}_2\right|}^2}{2}{\Imat}_{2n}\right).\label{eq:NewAppEeq3}
\end{eqnarray}
\fi
From \eqref{eq:NewAppEeq2} and \eqref{eq:NewAppEeq3} we conclude that given ${\Kmat}_{{\oX}_{3\oG}}$, the same differential entropy can be achieved by replacing ${\oX}^{2n}_{3\oG}$ with an ${\tilde{X}}^{2n}_{3\oG}\sim \mathcal{N}\left({\bm 0},{\Dmat}_{3\oG\ }\right).$ We also recall that from the power constraint $P_{k,i}\triangleq\E\big\{|X_{k,i}|^2\big\}\le 1$, $k\in\{1,2,3\}$  we have $\tr\left\{{\Dmat}_{3\oG}\right\}=\tr\left\{{\Kmat}_{{\oX}_{3\oG}}\right\}\le n.$

Next, using matrix similarity we write
\begin{eqnarray}
    & & \eig\left({\hat{\Kmat}} + \frac{{\left|{\eta }_1\right|}^2}{2}{\Imat}_{2n}\right)\nonumber\\
    & & =\eig\left({\hmat}_{31}{\Kmat}_{{\oX}_{3\oG}}{\hmat}^T_{31}+{\hmat}_{11}{\Kmat}_{{\oX}_{1\oG}}{\hmat}^T_{11} + \frac{{\left|{\eta }_1\right|}^2}{2}{\Imat}_{2n}\right)\nonumber\\
    & & =\eig\left(\SNR_{31}{\Umat}_{31}{\Amat}_{3\oG}{\Dmat}_{3\oG}{\Amat}^T_{3\oG}{\Umat}^T_{31}+\SNR_{11}{\Umat}_{11}{\Kmat}_{{\oX}_{1\oG}}{\Umat}^T_{11}+ \frac{{\left|{\eta }_1\right|}^2}{2}{\Imat}_{2n}\right)\nonumber\\
    & & = \eig\left(\!{\Umat}_{31}{\Amat}_{3\oG}\!\left(\!\SNR_{31}{\Dmat}_{3\oG}+
    \SNR_{11}{\Amat}^T_{3\oG}{\Umat}^T_{31}{\Umat}_{11}{\Kmat}_{{\oX}_{1\oG}}{\Umat}^T_{11}{\Umat}_{31}{\Amat}_{3\oG}+ \frac{{\left|{\eta }_1\right|}^2}{2}{\Amat}^T_{3\oG}{\Umat}^T_{31}{\Imat}_{2n}{\Umat}_{31}{\Amat}_{3\oG}\!\right)\!{\Amat}^T_{3\oG}{\Umat}^T_{31}\!\right)
    \nonumber\\
    \label{eqnE:eigen2}
    & & = \eig\left(\SNR_{31}{\Dmat}_{3\oG}+\SNR_{11}{\Amat}^T_{3\oG}{\Umat}^T_{31}{\Umat}_{11}{\Kmat}_{{\oX}_{1\oG}}{\Umat}^T_{11}{\Umat}_{31}{\Amat}_{3\oG}+ \frac{{\left|{\eta }_1\right|}^2}{2}{\Imat}_{2n}\right).
\end{eqnarray}

Let $\avec_{3\oG,i}\triangleq{\left[{\Amat}_{3\oG}\right]}_{\left(1\dots 2n\right),i}$ denote that $i$'th column of ${\Amat}_{3\oG}$. Applying Hadamard's inequality we obtain
\begin{eqnarray}
    & &\det \Bigg(\SNR_{11}{\Amat}^T_{3\oG}{\Umat}^T_{31}{\Umat}_{11}{\Kmat}_{{\oX}_{1\oG}}{\Umat}^T_{11}{\Umat}_{31}{\Amat}_{3\oG}
 +\SNR_{31}{\Dmat}_{3\oG}+\frac{{\left|{\eta }_1\right|}^2}{2}{\Imat}_{2n}\Bigg)\nonumber\\
   \label{eqnE:Hadamardbound}
    & &\qquad \qquad\le\prod^{2n}_{i=1}\Bigg(\SNR_{11}{\left(\avec_{3\oG,i}\right)}^T
    {\Umat}^T_{31}{\Umat}_{11}{\Kmat}_{{\oX}_{1\oG}}{\Umat}^T_{11}{\Umat}_{31}\avec_{3\oG,i}  +\SNR_{31}d_{3\oG,i}+\frac{{\left|{\eta }_1\right|}^2}{2}\Bigg).
\end{eqnarray}
Combining \eqref{eq:NewAppEeq3}, \eqref{eqnE:eigen2}, \eqref{eqnE:Hadamardbound} and the fact that if matrices have the same eigenvalues their determinants are identical, we obtain the following upper bound on
%
\eqref{eq:NewAppEeq0}:
\begin{eqnarray}
    & &\hspace{-1.5cm} {\E}_{{\tH}^n_1,{\tH}^n_2}\bigg\{h\left({\Hmat}^{(n)}_{h_{31}}X^n_{3\oG}+{\Hmat}^{(n)}_{h_{11}}X^n_{1\oG}+{\eta }_1\cdot W^n_1\big|{\tH}^n_1={\th}^n_1\right)  -h\left({\Hmat}^{(n)}_{h_{32}}X^n_{3\oG}+V^n_2\big|{\tH}^n_2={\th}^n_2\right)\bigg\}\nonumber\\
    &\le& {\E}_{{\tH}^n_1,{\tH}^n_2}\Bigg\{\sum^{2n}_{i=1}\frac{1}{2}\Bigg(\log \left(\SNR_{11}{\left(\avec_{3\oG,i}\right)}^T{\Umat}^T_{31}{\Umat}_{11}{\Kmat}_{{\oX}_{1\oG}}{\Umat}^T_{11}{\Umat}_{31}\avec_{3\oG,i} +\SNR_{31}d_{3\oG,i}+\frac{{\left|{\eta }_1\right|}^2}{2}\right)\nonumber \\
    & &\qquad \qquad \qquad \qquad \qquad \qquad \qquad -\log \left(\SNR_{32}d_{3\oG,i}+\frac{1-{\left|{\CORRN}_2\right|}^2}{2}\right)\ \Bigg)\Bigg\}\nonumber\\
     &\stackrel{(a)}{\le} &\sum^{2n}_{i=1}\Bigg(\frac{1}{2}\log \left(\SNR_{11}{\left(\avec_{3\oG,i}\right)}^T{\E}_{{\tH}^n_1,{\tH}^n_2}\left\{{\Umat}^T_{31}{\Umat}_{11}{\Kmat}_{{\oX}_{1\oG}}{\Umat}^T_{11}{\Umat}_{31}\right\}\avec_{3\oG,i} +\SNR_{31}d_{3\oG,i}+\frac{{\left|{\eta }_1\right|}^2}{2}\right)\nonumber \\
     \label{eqnE:upperbound onE1_stepa}
     & &\qquad \qquad \qquad \qquad \qquad \qquad \qquad-\frac{1}{2}\log \left(\SNR_{32}d_{3\oG,i}+\frac{1-{\left|{\CORRN}_2\right|}^2}{2}\right)\Bigg)
\end{eqnarray}
where (a) is due to the fact that the logarithm function is a concave function and the application of Jensen's inequality.

\noindent
Next, we define $\Gmat\triangleq {\Umat}^T_{31}{\Umat}_{11}{\Kmat}_{{\oX}_{1\oG}}{\Umat}^T_{11}{\Umat}_{31}$,  and compute the expectation
\[
\bar{\Gmat}={\E}_{{\tH}^n_1,{\tH}^n_2}\big\{\Gmat\big\}={\E}_{{\tH}^n_1,{\tH}^n_2}\Big\{{\Umat}^T_{31}{\Umat}_{11}{\Kmat}_{{\oX}_{1\oG}}{\Umat}^T_{11}{\Umat}_{31}\Big\}.
\]
To that aim, define
\begin{eqnarray*}
    {\Kmat}^{RR}_{{\oX}_{1\oG}}&=&\E\left\{\RealS\left\{X^n_{1\oG}\right\}  \cdot\big(\RealS{\left\{X^n_{1\oG}\right\}}\big)^T\right\},\\
    {\Kmat}^{RI}_{{\oX}_{1\oG}} &= &\E\left\{\RealS\left\{X^n_{1\oG}\right\}\cdot\big(\ImagS{\left\{X^n_{1\oG}\right\}}\big)^T\right\},\\ {\Kmat}^{IR}_{{\oX}_{1\oG}}&=&\E\left\{\ImagS\left\{X^n_{1\oG}\right\}  \cdot\big(\RealS{\left\{X^n_{1\oG}\right\}}\big)^T\right\},\\ {\Kmat}^{II}_{{\oX}_{1\oG}}&=&\E\left\{\ImagS\left\{X^n_{1\oG}\right\}  \cdot\big(\ImagS{\left\{X^n_{1\oG}\right\}}\big)^T\right\},
\end{eqnarray*}
and recall the definitions  ${\Umat}_{31}=\left[ \begin{array}{cc}
{\Cmat}_{31} & {-\Smat}_{31} \\
{\Smat}_{31} & {\Cmat}_{31} \end{array}
\right]$ and $\ {\Umat}_{11}=\left[ \begin{array}{cc}
{\Cmat}_{11} & {-\Smat}_{11} \\
{\Smat}_{11} & {\Cmat}_{11} \end{array}
\right]$. Using these definitions we write
\ifextended
\begin{equation*}
    {\Umat}^T_{31}{\Umat}_{11}=\left[ \begin{array}{cc}
    {\Cmat}_{31} & {\Smat}_{31} \\
    -{\Smat}_{31} & {\Cmat}_{31} \end{array}
    \right]\left[ \begin{array}{cc}
    {\Cmat}_{11} & {-\Smat}_{11} \\
    {\Smat}_{11} & {\Cmat}_{11} \end{array}
    \right]=\left[ \begin{array}{cc}
    {\Cmat}_{31}{\Cmat}_{11}+{\Smat}_{31}{\Smat}_{11} & -{\Cmat}_{31}{\Smat}_{11}+{\Smat}_{31}{\Cmat}_{11} \\
    -{\Smat}_{31}{\Cmat}_{11}+{\Cmat}_{31}{\Smat}_{11} & {\Smat}_{31}{\Smat}_{11}+{\Cmat}_{31{\Cmat}_{11}} \end{array}
    \right].
\end{equation*}
Hence,
\begin{eqnarray*}
    & &\hspace{-0.5cm}{\Umat}^T_{31}{\Umat}_{11}{\Kmat}_{{\oX}_{1\oG}}{\Umat}^T_{11}{\Umat}_{31}\\
    &=&\left[ \begin{array}{cc}
    {\Cmat}_{31}{\Cmat}_{11}+{\Smat}_{31}{\Smat}_{11} & -{\Cmat}_{31}{\Smat}_{11}+{\Smat}_{31}{\Cmat}_{11} \\
    -{\Smat}_{31}{\Cmat}_{11}+{\Cmat}_{31}{\Smat}_{11} & {\Smat}_{31}{\Smat}_{11}+{\Cmat}_{31}{\Cmat}_{11} \end{array}
    \right]\left[ \begin{array}{cc}
    {\Kmat}^{RR}_{{\oX}_{1\oG}} & {\Kmat}^{RI}_{{\oX}_{1\oG}} \\
    {\Kmat}^{IR}_{{\oX}_{1\oG}} & {\Kmat}^{II}_{{\oX}_{1\oG}} \end{array}
    \right]\left[ \begin{array}{cc}
    {\Cmat}_{31}{\Cmat}_{11}+{\Smat}_{31}{\Smat}_{11} & -{\Smat}_{31}{\Cmat}_{11}+{\Cmat}_{31}{\Smat}_{11} \\
    -{\Cmat}_{31}{\Smat}_{11}+{\Smat}_{31}{\Cmat}_{11} & {\Smat}_{31}{\Smat}_{11}+{\Cmat}_{31}{\Cmat}_{11} \end{array}
    \right]\\
    &=&\left[ \begin{array}{cc}
    \left({\Cmat}_{31}{\Cmat}_{11}+{\Smat}_{31}{\Smat}_{11}\right){\Kmat}^{RR}_{{\oX}_{1\oG}}+\left(-{\Cmat}_{31}{\Smat}_{11}+{\Smat}_{31}{\Cmat}_{11}\right){\Kmat}^{IR}_{{\oX}_{1\oG}} & \left({\Cmat}_{31}{\Cmat}_{11}+{\Smat}_{31}{\Smat}_{11}\right){\Kmat}^{RI}_{{\oX}_{1\oG}}+\left(-{\Cmat}_{31}{\Smat}_{11}+{\Smat}_{31}{\Cmat}_{11}\right){\Kmat}^{II}_{{\oX}_{1\oG}} \\
    \left(-{\Smat}_{31}{\Cmat}_{11}+{\Cmat}_{31}{\Smat}_{11}\right){\Kmat}^{RR}_{{\oX}_{1\oG}}+\left({\Smat}_{31}{\Smat}_{11}+{\Cmat}_{31}{\Cmat}_{11}\right){\Kmat}^{IR}_{{\oX}_{1\oG}} & \left(-{\Smat}_{31}{\Cmat}_{11}+{\Cmat}_{31}{\Smat}_{11}\right){\Kmat}^{RI}_{{\oX}_{1\oG}}+\left({\Smat}_{31}{\Smat}_{11}+{\Cmat}_{31}{\Cmat}_{11}\right){\Kmat}^{II}_{{\oX}_{1\oG}} \end{array}
    \right]\\
    & &\left[ \begin{array}{cc}
    {\Cmat}_{31}{\Cmat}_{11}+{\Smat}_{31}{\Smat}_{11} & -{\Smat}_{31}{\Cmat}_{11}+{\Cmat}_{31}{\Smat}_{11} \\
    -{\Cmat}_{31}{\Smat}_{11}+{\Smat}_{31}{\Cmat}_{11} & {\Smat}_{31}{\Smat}_{11}+{\Cmat}_{31}{\Cmat}_{11} \end{array}
    \right]
\end{eqnarray*}
Using the above derivations, we write the matrix $\Gmat$ as a block matrix: $\Gmat={\Umat}^T_{31}{\Umat}_{11}{\Kmat}_{{\oX}_{1\oG}}{\Umat}^T_{11}{\Umat}_{31}=\left[ \begin{array}{cc}
{\Gmat}_{11} & {\Gmat}_{12} \\
{\Gmat}_{21} & {\Gmat}_{22} \end{array}
\right]$, where
\begin{eqnarray*}
    {\Gmat}_{11}&=&\left({\Cmat}_{31}{\Cmat}_{11}+{\Smat}_{31}{\Smat}_{11}\right){\Kmat}^{RR}_{{\oX}_{1\oG}}\left({\Cmat}_{31}{\Cmat}_{11}+{\Smat}_{31}{\Smat}_{11}\right)+\left(-{\Cmat}_{31}{\Smat}_{11}+{\Smat}_{31}{\Cmat}_{11}\right){\Kmat}^{IR}_{{\oX}_{1\oG}}\left({\Cmat}_{31}{\Cmat}_{11}+{\Smat}_{31}{\Smat}_{11}\right)\\
    & & \qquad \qquad +\left({\Cmat}_{31}{\Cmat}_{11}+{\Smat}_{31}{\Smat}_{11}\right){\Kmat}^{RI}_{{\oX}_{1\oG}}\left(-{\Cmat}_{31}{\Smat}_{11}+{\Smat}_{31}{\Cmat}_{11}\right)\\
    & & \qquad \qquad +\left(-{\Cmat}_{31}{\Smat}_{11}+{\Smat}_{31}{\Cmat}_{11}\right){\Kmat}^{II}_{{\oX}_{1\oG}}\left(-{\Cmat}_{31}{\Smat}_{11}+{\Smat}_{31}{\Cmat}_{11}\right)\\
    {\Gmat}_{22}&=&\left(-{\Smat}_{31}{\Cmat}_{11}+{\Cmat}_{31}{\Smat}_{11}\right){\Kmat}^{RR}_{{\oX}_{1\oG}}\left(-{\Smat}_{31}{\Cmat}_{11}+{\Cmat}_{31}{\Smat}_{11}\right)+\left({\Smat}_{31}{\Smat}_{11}+{\Cmat}_{31}{\Cmat}_{11}\right){\Kmat}^{IR}_{{\oX}_{1\oG}}\left(-{\Smat}_{31}{\Cmat}_{11}+{\Cmat}_{31}{\Smat}_{11}\right)\\
    & & \qquad \qquad +\left(-{\Smat}_{31}{\Cmat}_{11}+{\Cmat}_{31}{\Smat}_{11}\right){\Kmat}^{RI}_{{\oX}_{1\oG}}\left({\Smat}_{31}{\Smat}_{11}+{\Cmat}_{31}{\Cmat}_{11}\right)\\
    & & \qquad \qquad +\left({\Smat}_{31}{\Smat}_{11}+{\Cmat}_{31}{\Cmat}_{11}\right){\Kmat}^{II}_{{\oX}_{1\oG}}\left({\Smat}_{31}{\Smat}_{11}+{\Cmat}_{31}{\Cmat}_{11}\right)
		\end{eqnarray*}
	\begin{eqnarray*}
    {\Gmat}_{21}&=&\left(-{\Smat}_{31}{\Cmat}_{11}+{\Cmat}_{31}{\Smat}_{11}\right){\Kmat}^{RR}_{{\oX}_{1\oG}}\left({\Cmat}_{31}{\Cmat}_{11}+{\Smat}_{31}{\Smat}_{11}\right)+\left({\Smat}_{31}{\Smat}_{11}+{\Cmat}_{31}{\Cmat}_{11}\right){\Kmat}^{IR}_{{\oX}_{1\oG}}\left({\Cmat}_{31}{\Cmat}_{11}+{\Smat}_{31}{\Smat}_{11}\right)\\
    & & \qquad \qquad +\left(-{\Smat}_{31}{\Cmat}_{11}+{\Cmat}_{31}{\Smat}_{11}\right){\Kmat}^{RI}_{{\oX}_{1\oG}}\left(-{\Cmat}_{31}{\Smat}_{11}+{\Smat}_{31}{\Cmat}_{11}\right)\\
    & & \qquad \qquad +\left({\Smat}_{31}{\Smat}_{11}+{\Cmat}_{31}{\Cmat}_{11}\right){\Kmat}^{II}_{{\oX}_{1\oG}}\left(-{\Cmat}_{31}{\Smat}_{11}+{\Smat}_{31}{\Cmat}_{11}\right)\\
    {\Gmat}_{12}&=&\left({\Cmat}_{31}{\Cmat}_{11}+{\Smat}_{31}{\Smat}_{11}\right){\Kmat}^{RR}_{{\oX}_{1\oG}}\left(-{\Smat}_{31}{\Cmat}_{11}+{\Cmat}_{31}{\Smat}_{11}\right)+\left(-{\Cmat}_{31}{\Smat}_{11}+{\Smat}_{31}{\Cmat}_{11}\right){\Kmat}^{IR}_{{\oX}_{1\oG}}\left(-{\Smat}_{31}{\Cmat}_{11}+{\Cmat}_{31}{\Smat}_{11}\right)\\
    & & \qquad \qquad +\left({\Cmat}_{31}{\Cmat}_{11}+{\Smat}_{31}{\Smat}_{11}\right){\Kmat}^{RI}_{{\oX}_{1\oG}}\left({\Smat}_{31}{\Smat}_{11}+{\Cmat}_{31}{\Cmat}_{11}\right)\\
    & & \qquad \qquad +\left(-{\Cmat}_{31}{\Smat}_{11}+{\Smat}_{31}{\Cmat}_{11}\right){\Kmat}^{II}_{{\oX}_{1\oG}}\left({\Smat}_{31}{\Smat}_{11}+{\Cmat}_{31}{\Cmat}_{11}\right)\\
                &=&-{\Cmat}_{31}{\Cmat}_{11}{\Kmat}^{RR}_{{\oX}_{1\oG}}{\Smat}_{31}{\Cmat}_{11}+{\Cmat}_{31}{\Cmat}_{11}{\Kmat}^{RR}_{{\oX}_{1\oG}}{\Cmat}_{31}{\Smat}_{11}-{\Smat}_{31}{\Smat}_{11}{\Kmat}^{RR}_{{\oX}_{1\oG}}{\Smat}_{31}{\Cmat}_{11}+{\Smat}_{31}{\Smat}_{11}{\Kmat}^{RR}_{{\oX}_{1\oG}}{\Cmat}_{31}{\Smat}_{11}\\
                & & \qquad \qquad +{\Cmat}_{31}{\Smat}_{11}{\Kmat}^{IR}_{{\oX}_{1\oG}}{\Smat}_{31}{\Cmat}_{11}-{\Cmat}_{31}{\Smat}_{11}{\Kmat}^{IR}_{{\oX}_{1\oG}}{\Cmat}_{31}{\Smat}_{11}-{\Smat}_{31}{\Cmat}_{11}{\Kmat}^{IR}_{{\oX}_{1\oG}}{\Smat}_{31}{\Cmat}_{11}\\
                & & \qquad \qquad +{\Smat}_{31}{\Cmat}_{11}{\Kmat}^{IR}_{{\oX}_{1\oG}}{\Cmat}_{31}{\Smat}_{11}+{\Cmat}_{31}{\Cmat}_{11}{\Kmat}^{RI}_{{\oX}_{1\oG}}{\Smat}_{31}{\Smat}_{11}+{\Cmat}_{31}{\Cmat}_{11}{\Kmat}^{RI}_{{\oX}_{1\oG}}{\Cmat}_{31}{\Cmat}_{11}\\
                & & \qquad \qquad +{\Smat}_{31}{\Smat}_{11}{\Kmat}^{RI}_{{\oX}_{1\oG}}{\Smat}_{31}{\Smat}_{11}+{\Smat}_{31}{\Smat}_{11}{\Kmat}^{RI}_{{\oX}_{1\oG}}{\Cmat}_{31}{\Cmat}_{11}-{\Cmat}_{31}{\Smat}_{11}{\Kmat}^{II}_{{\oX}_{1\oG}}{\Smat}_{31}{\Smat}_{11}\\
                & & \qquad \qquad -{\Cmat}_{31}{\Smat}_{11}{\Kmat}^{II}_{{\oX}_{1\oG}}{\Cmat}_{31}{\Cmat}_{11}+{\Smat}_{31}{\Cmat}_{11}{\Kmat}^{II}_{{\oX}_{1\oG}}{\Smat}_{31}{\Smat}_{11}+{\Smat}_{31}{\Cmat}_{11}{\Kmat}^{II}_{{\oX}_{1\oG}}{\Cmat}_{31}{\Cmat}_{11}.
\end{eqnarray*}
\else
 the product as a block matrix $\Gmat={\Umat}^T_{31}{\Umat}_{11}{\Kmat}_{{\oX}_{1\oG}}{\Umat}^T_{11}{\Umat}_{31}=\left[ \begin{array}{cc}
{\Gmat}_{11} & {\Gmat}_{12} \\
{\Gmat}_{21} & {\Gmat}_{22} \end{array}
\right]$.
\fi

To compute ${\bar{\Gmat}}_{12}\triangleq {\E}_{{\tH}^n_1,{\tH}^n_2}\left\{{\Gmat}_{12}\right\}$, where averaging is carried out over the phase variables $\left\{\theta_{31,l}, \theta_{11,l}\right\}_{l=1}^n$,
\ifextended
we note that unless a term has two identical matrices in the product, its expectation is necessarily zero. We first write explicitly
\else
we first write explicitly
\fi
\begin{eqnarray*}
\ifextended
    {\bar{\Gmat}}_{12} &\triangleq & {\E}_{{\tH}^n_1,{\tH}^n_2}\left\{{\Gmat}_{12}\right\}\\
\else
{\bar{\Gmat}}_{12}
\fi
    &=&{\E}_{{\tH}^n_1,{\tH}^n_2}\left\{-{\Cmat}_{31}{\Smat}_{11}{\Kmat}^{IR}_{{\oX}_{1\oG}}{\Cmat}_{31}{\Smat}_{11}-{\Smat}_{31}{\Cmat}_{11}{\Kmat}^{IR}_{{\oX}_{1\oG}}{\Smat}_{31}{\Cmat}_{11}+{\Cmat}_{31}{\Cmat}_{11}{\Kmat}^{RI}_{{\oX}_{1\oG}}{\Cmat}_{31}{\Cmat}_{11}+{\Smat}_{31}{\Smat}_{11}{\Kmat}^{RI}_{{\oX}_{1\oG}}{\Smat}_{31}{\Smat}_{11}\right\}.
\end{eqnarray*}
Next, we note that the products of the matrices ${\Cmat}_{lm}$ and ${\Smat}_{uv}$ matrices are diagonal matrices.
Thus, for $l\neq m$ ${\left[{\bar{\Gmat}}_{12}\right]}_{l,m}$ correspond to averaging over phases from different times and are thus zero.
It therefore remains to consider the diagonal elements of ${\bar{\Gmat}}_{12}$:
\begin{eqnarray*}
    {\left[{\bar{\Gmat}}_{12}\right]}_{l,l}&=&{\E}_{{\tH}^n_1,{\tH}^n_2}\Bigg\{-{\cos \left({\theta }_{31,l}\right)\ }{\sin \left({\theta }_{11,l}\right)\ }{\left[{\Kmat}^{IR}_{{\oX}_{1\oG}}\right]}_{l,l}{\cos \left({\theta }_{31,l}\right)\ }{\sin \left({\theta }_{11,l}\right)\ }\\
        & &\qquad\qquad\qquad\qquad-{\sin \left({\theta }_{31,l}\right)\ }{\cos \left({\theta }_{11,l}\right)\ }{\left[{\Kmat}^{IR}_{{\oX}_{1\oG}}\right]}_{l,l}{\sin \left({\theta }_{31,l}\right)\ }{\cos \left({\theta }_{11,l}\right)\ }\\
        & &\qquad\qquad\qquad\qquad+{\cos \left({\theta }_{31,l}\right)\ }{\cos \left({\theta }_{11,l}\right)\ }{\left[{\Kmat}^{RI}_{{\oX}_{1\oG}}\right]}_{l,l}{\cos \left({\theta }_{31,l}\right)\ }{\cos \left({\theta }_{11,l}\right)\ }\\
        & &\qquad\qquad\qquad\qquad+{\sin \left({\theta }_{31,l}\right)\ }{\sin \left({\theta }_{11,l}\right)\ }{\left[{\Kmat}^{RI}_{{\oX}_{1\oG}}\right]}_{l,l}{\sin \left({\theta }_{31,l}\right)\ }{\sin \left({\theta }_{11,l}\right)\ }\Bigg\}\\
    &=&{\E}_{{\tH}^n_1,{\tH}^n_2}\Bigg\{-{\cos^{2} \left({\theta }_{31,l}\right)\ }{\sin^{2} \left({\theta }_{11,l}\right)\ }{\left[{\Kmat}^{IR}_{{\oX}_{1\oG}}\right]}_{l,l}-{\sin^{2} \left({\theta }_{31,l}\right)\ }{\cos^{2} \left({\theta }_{11,l}\right)\ }{\left[{\Kmat}^{IR}_{{\oX}_{1\oG}}\right]}_{l,l}\\
    & &\qquad\qquad\qquad\qquad  +{\cos^{2} \left({\theta }_{31,l}\right)\ }{\cos^{2} \left({\theta }_{11,l}\right)\ }{\left[{\Kmat}^{RI}_{{\oX}_{1\oG}}\right]}_{l,l}+{\sin^{2} \left({\theta }_{31,l}\right)\ }{\sin^{2} \left({\theta }_{11,l}\right)\ }{\left[{\Kmat}^{RI}_{{\oX}_{1\oG}}\right]}_{l,l}\Bigg\}\\
    &=&-\frac{1}{4}{\left[{\Kmat}^{IR}_{{\oX}_{1\oG}}\right]}_{l,l}-\frac{1}{4}{\left[{\Kmat}^{IR}_{{\oX}_{1\oG}}\right]}_{l,l}+\frac{1}{4}{\left[{\Kmat}^{RI}_{{\oX}_{1\oG}}\right]}_{l,l}+\frac{1}{4}{\left[{\Kmat}^{RI}_{{\oX}_{1\oG}}\right]}_{l,l}.
\end{eqnarray*}
Lastly, we note that ${\Kmat}^{IR}_{{\oX}_{1\oG}}={\left({\Kmat}^{RI}_{{\oX}_{1\oG}}\right)}^T$, hence ${\left[{\Kmat}^{IR}_{{\oX}_{1\oG}}\right]}_{l,l}={\left[{\Kmat}^{RI}_{{\oX}_{1\oG}}\right]}_{l,l}$, and consequently ${\left[{\bar{\Gmat}}_{12}\right]}_{l,l}=0$. We conclude that ${\bar{\Gmat}}_{12}=\Omat_{n\times n}$. Similarly, we obtain ${\bar{\Gmat}}_{21}\triangleq {\E}_{{\tH}^n_1,{\tH}^n_2}\left\{{\Gmat}_{21}\right\}=\Omat_{n\times n}$.
\ifextended
Next, we compute
${\bar{\Gmat}}_{11}\triangleq {\E}_{{\tH}^n_1,{\tH}^n_2}\left\{{\Gmat}_{11}\right\}$: First, we write explicitly:
\begin{eqnarray*}
    {\Gmat}_{11}&=&\left({\Cmat}_{31}{\Cmat}_{11}+{\Smat}_{31}{\Smat}_{11}\right){\Kmat}^{RR}_{{\oX}_{1\oG}}\left({\Cmat}_{31}{\Cmat}_{11}+{\Smat}_{31}{\Smat}_{11}\right)+\left(-{\Cmat}_{31}{\Smat}_{11}+{\Smat}_{31}{\Cmat}_{11}\right){\Kmat}^{IR}_{{\oX}_{1\oG}}\left({\Cmat}_{31}{\Cmat}_{11}+{\Smat}_{31}{\Smat}_{11}\right)\\
    & & \qquad \qquad +\left({\Cmat}_{31}{\Cmat}_{11}+{\Smat}_{31}{\Smat}_{11}\right){\Kmat}^{RI}_{{\oX}_{1\oG}}\left(-{\Cmat}_{31}{\Smat}_{11}+{\Smat}_{31}{\Cmat}_{11}\right)\\
    & & \qquad \qquad +\left(-{\Cmat}_{31}{\Smat}_{11}+{\Smat}_{31}{\Cmat}_{11}\right){\Kmat}^{II}_{{\oX}_{1\oG}}\left(-{\Cmat}_{31}{\Smat}_{11}+{\Smat}_{31}{\Cmat}_{11}\right)\\
                &=&{\Cmat}_{31}{\Cmat}_{11}{\Kmat}^{RR}_{{\oX}_{1\oG}}{\Cmat}_{31}{\Cmat}_{11}+{\Cmat}_{31}{\Cmat}_{11}{\Kmat}^{RR}_{{\oX}_{1\oG}}{\Smat}_{31}{\Smat}_{11}+{\Smat}_{31}{\Smat}_{11}{\Kmat}^{RR}_{{\oX}_{1\oG}}{\Cmat}_{31}{\Cmat}_{11}+{\Smat}_{31}{\Smat}_{11}{\Kmat}^{RR}_{{\oX}_{1\oG}}{\Smat}_{31}{\Smat}_{11}\\
                & & \qquad \qquad -{\Cmat}_{31}{\Smat}_{11}{\Kmat}^{IR}_{{\oX}_{1\oG}}{\Cmat}_{31}{\Cmat}_{11}-{\Cmat}_{31}{\Smat}_{11}{\Kmat}^{IR}_{{\oX}_{1\oG}}{\Smat}_{31}{\Smat}_{11}+{\Smat}_{31}{\Cmat}_{11}{\Kmat}^{IR}_{{\oX}_{1\oG}}{\Cmat}_{31}{\Cmat}_{11}\\
                & & \qquad \qquad +{\Smat}_{31}{\Cmat}_{11}{\Kmat}^{IR}_{{\oX}_{1\oG}}{\Smat}_{31}{\Smat}_{11}-{\Cmat}_{31}{\Cmat}_{11}{\Kmat}^{RI}_{{\oX}_{1\oG}}{\Cmat}_{31}{\Smat}_{11}+{\Cmat}_{31}{\Cmat}_{11}{\Kmat}^{RI}_{{\oX}_{1\oG}}{\Smat}_{31}{\Cmat}_{11}\\
                & & \qquad \qquad -{\Smat}_{31}{\Smat}_{11}{\Kmat}^{RI}_{{\oX}_{1\oG}}{\Cmat}_{31}{\Smat}_{11}+{\Smat}_{31}{\Smat}_{11}{\Kmat}^{RI}_{{\oX}_{1\oG}}{\Smat}_{31}{\Cmat}_{11}+{\Cmat}_{31}{\Smat}_{11}{\Kmat}^{II}_{{\oX}_{1\oG}}{\Cmat}_{31}{\Smat}_{11}\\
                & & \qquad \qquad -{\Cmat}_{31}{\Smat}_{11}{\Kmat}^{II}_{{\oX}_{1\oG}}{\Smat}_{31}{\Cmat}_{11}-{\Smat}_{31}{\Cmat}_{11}{\Kmat}^{II}_{{\oX}_{1\oG}}{\Cmat}_{31}{\Smat}_{11}+{\Smat}_{31}{\Cmat}_{11}{\Kmat}^{II}_{{\oX}_{1\oG}}{\Smat}_{31}{\Cmat}_{11}.
\end{eqnarray*}
\else
Finally, we compute
${\bar{\Gmat}}_{11}\triangleq {\E}_{{\tH}^n_1,{\tH}^n_2}\left\{{\Gmat}_{11}\right\}$.
\fi
Again, we note that unless a term has two identical matrices in the product, its expectation is necessarily zero:
\begin{eqnarray*}
\ifextended
    {\bar{\Gmat}}_{11}&\triangleq& {\E}_{{\tH}^n_1,{\tH}^n_2}\left\{{\Gmat}_{11}\right\}\\
\else
    {\bar{\Gmat}}_{11}
\fi
    &=&{\E}_{{\tH}^n_1,{\tH}^n_2}\left\{{\Cmat}_{31}{\Cmat}_{11}{\Kmat}^{RR}_{{\oX}_{1\oG}}{\Cmat}_{31}{\Cmat}_{11}+{\Smat}_{31}{\Smat}_{11}{\Kmat}^{RR}_{{\oX}_{1\oG}}{\Smat}_{31}{\Smat}_{11}+{\Cmat}_{31}{\Smat}_{11}{\Kmat}^{II}_{{\oX}_{1\oG}}{\Cmat}_{31}{\Smat}_{11}+{\Smat}_{31}{\Cmat}_{11}{\Kmat}^{II}_{{\oX}_{1\oG}}{\Smat}_{31}{\Cmat}_{11}\right\}.
\end{eqnarray*}
Similarly to the computation of ${\bar{\Gmat}}_{12}$, we note that the products of the matrices ${\Cmat}_{lm}$ and ${\Smat}_{uv}$ matrices are diagonal matrices. Thus, for $l\neq m$ ${\left[{\bar{\Gmat}}_{11}\right]}_{l,m}$ correspond to averaging over phases from different times and are thus zero. Hence,
it remains to consider the diagonal elements of ${\bar{\Gmat}}_{11}$:
\begin{eqnarray*}
    {\left[{\bar{\Gmat}}_{11}\right]}_{l,l}&=&{\E}_{{\tH}^n_1,{\tH}^n_2}\Big\{{\cos \left({\theta }_{31,l}\right)\ }{\cos \left({\theta }_{11,l}\right)\ }{\left[{\Kmat}^{RR}_{{\oX}_{1\oG}}\right]}_{l,l}{\cos \left({\theta }_{31,l}\right)\ }{\cos \left({\theta }_{11,l}\right)\ }\\
        & &\qquad\qquad\qquad\qquad+{\sin \left({\theta }_{31,l}\right)\ }{\sin \left({\theta }_{11,l}\right)\ }{\left[{\Kmat}^{RR}_{{\oX}_{1\oG}}\right]}_{l,l}{\sin \left({\theta }_{31,l}\right)\ }{\sin \left({\theta }_{11,l}\right)\ }\\
        & &\qquad\qquad\qquad\qquad+{\cos \left({\theta }_{31,l}\right)\ }{\sin \left({\theta }_{11,l}\right)\ }{\left[{\Kmat}^{II}_{{\oX}_{1\oG}}\right]}_{l,l}{\cos \left({\theta }_{31,l}\right)\ }{\sin \left({\theta }_{11,l}\right)\ }\\
        & &\qquad\qquad\qquad\qquad+{\sin \left({\theta }_{31,l}\right)\ }{\cos \left({\theta }_{11,l}\right)\ }{\left[{\Kmat}^{II}_{{\oX}_{1\oG}}\right]}_{l,l}{\sin \left({\theta }_{31,l}\right)\ }{\cos \left({\theta }_{11,l}\right)\ }\Big\}\\
\ifextended
    &=&{\E}_{{\tH}^n_1,{\tH}^n_2}\Big\{{\cos^{2} \left({\theta }_{31,l}\right)\ }{\cos^{2} \left({\theta }_{11,l}\right)\ }{\left[{\Kmat}^{RR}_{{\oX}_{1\oG}}\right]}_{l,l}\\
        & &\qquad\qquad\qquad\qquad+{\sin^{2} \left({\theta }_{31,l}\right)\ }{\sin^{2} \left({\theta }_{11,l}\right)\ }{\left[{\Kmat}^{RR}_{{\oX}_{1\oG}}\right]}_{l,l}\\
        & &\qquad\qquad\qquad\qquad+{\cos^{2} \left({\theta }_{31,l}\right)\ }{\sin^{2} \left({\theta }_{11,l}\right)\ }{\left[{\Kmat}^{II}_{{\oX}_{1\oG}}\right]}_{l,l}\\
        & &\qquad\qquad\qquad\qquad+{\sin^{2} \left({\theta }_{31,l}\right)\ }{\cos^{2} \left({\theta }_{11,l}\right)\ }{\left[{\Kmat}^{II}_{{\oX}_{1\oG}}\right]}_{l,l}\Big\}\\
\fi
    &=&\frac{1}{4}{\left[{\Kmat}^{RR}_{{\oX}_{1\oG}}\right]}_{l,l}+\frac{1}{4}{\left[{\Kmat}^{RR}_{{\oX}_{1\oG}}\right]}_{l,l}+\frac{1}{4}{\left[{\Kmat}^{II}_{{\oX}_{1\oG}}\right]}_{l,l}+\frac{1}{4}{\left[{\Kmat}^{II}_{{\oX}_{1\oG}}\right]}_{l,l}\\
    &=&\frac{1}{2}\left({\left[{\Kmat}^{RR}_{{\oX}_{1\oG}}\right]}_{l,l}+{\left[{\Kmat}^{II}_{{\oX}_{1\oG}}\right]}_{l,l}\right)\\
    &\le& \frac{1}{2},
\end{eqnarray*}
where the last inequality follows from the per-symbol power constraint $P_{k,i}\triangleq\E\big\{|X_{k,i}|^2\big\}\le 1$, $k\in\{1,2,3\}$, and the condition on the optimal covariance matrix
of Lemma \ref{lem:lemma8}.
Following similar steps we obtain that ${\bar{\Gmat}}_{22}$ is a diagonal matrix with non-negative elements, each smaller than $\frac{1}{2}$.

We conclude that ${\E}_{{\tH}^n_1,{\tH}^n_2}\left\{{\Umat}^T_{31}{\Umat}_{11}{\Kmat}_{{\oX}_{1\oG}}{\Umat}^T_{11}{\Umat}_{31}\right\}={\E}_{{\tH}^n_1,{\tH}^n_2}\left\{\left[ \begin{array}{cc}
{\Gmat}_{11} & {\Gmat}_{12} \\
{\Gmat}_{21} & {\Gmat}_{22} \end{array}
\right]\right\}=\bar{\Gmat}$ is a diagonal matrix whose diagonal elements are all non-negative and less than  $\frac{1}{2}$:
\begin{equation*}
    0\le {\left[\bar{\Gmat}\right]}_{k,k}\triangleq {\E}_{{\tH}^n_1,{\tH}^n_2}\left\{{\left[\Gmat\right]}_{k,k}\right\}\le \frac{1}{2}.
\end{equation*}
Lastly, observe that as $\bar{\Gmat}$ is a diagonal matrix, we can write
\begin{eqnarray*}
    \left(\avec_{3\oG,i}\right)^T
    & & {\E}_{{\tH}^n_1,{\tH}^n_2}\left\{{\Umat}^T_{31}{\Umat}_{11}{\Kmat}_{{\oX}_{1\oG}}{\Umat}^T_{11}{\Umat}_{31}\right\}\avec_{3\oG,i}\\
    & &\qquad \qquad \qquad \qquad \qquad \qquad \equiv {\left(\avec_{3\oG,i}\right)}^T\cdot \bar{\Gmat}\cdot \avec_{3\oG,i}\\
    & &\qquad \qquad \qquad \qquad \qquad \qquad = \sum^{2n}_{k=1}{\left({\left[{\Amat}_{3\oG}\right]}_{k,i}\right)}^2{\bar{\Gmat}}_{k,k}.
\end{eqnarray*}

Proceeding with the bounding, we define
\begin{equation*}
d_{m,i}\triangleq \sum^{2n}_{k=1}{{\left({\left[{\Amat}_{3\oG}\right]}_{k,i}\right)}^2{\bar{\Gmat}}_{k,k}}.
\end{equation*}
The facts that $0\le{\left[\bar{\Gmat}\right]}_{k,k}\le \frac{1}{2}$ and that ${\Amat}_{3\oG}$ is orthogonal, imply that
\begin{equation*}
    d_{m,i}\triangleq \sum^{2n}_{k=1}{{\left({\left[{\Amat}_{3\oG}\right]}_{k,i}\right)}^2{\bar{\Gmat}}_{k,k}}\le \frac{1}{2}\sum^{2n}_{k=1}{{\left({\left[{\Amat}_{3\oG}\right]}_{k,i}\right)}^2}=\frac{1}{2},
\end{equation*}
and also that $d_{m,i}\ge 0$. Using the above definition of $d_{m,i}$ in \eqref{eqnE:upperbound onE1_stepa}, we can upper bound \eqref{eq:NewAppEeq0} as
\begin{eqnarray*}
& & {\E}_{{\tH}^n_1,{\tH}^n_2}\left\{h\left({\hmat}^{(n)}_{h_{31}}X^n_{3\oG}+{\hmat}^{(n)}_{h_{11}}X^n_{1\oG}+{\eta }_1\cdot W^n_1\big|{\tH}^n_1={\th}^n_1\right)-h\left({\hmat}^{(n)}_{h_{32}}X^n_{3\oG}+V^n_2\big|{\tH}^n_2={\th}^n_2\right)\right\} \\
 & &\qquad\qquad\qquad \le\sum^{2n}_{i=1}{\left({\frac{1}{2}\log \left(\SNR_{11}d_{m,i}\ +\SNR_{31}d_{3\oG,i}+\frac{{\left|{\eta }_1\right|}^2}{2}\right)\ }-\frac{1}{2}{\log \left(\SNR_{32}d_{3\oG,i}+\frac{1-{\left|{\CORRN}_2\right|}^2}{2}\right)\ }\right),}
\end{eqnarray*}
where $0\le d_{m,i}\le \frac{1}{2}\ $ and $\sum^{2n}_{i=1}{d_{3\oG,i}}\equiv \tr\left\{{\Dmat}_{3\oG}\right\}\le n$, $d_{3\oG,i}\ge 0$.

We conclude that we can write the upper bound on \eqref{eq:NewAppEeq0} as follows:
\begin{eqnarray}
    & &\hspace{-0.8cm} {\E}_{{\tH}^n_1,{\tH}^n_2}\bigg\{h\left({\Hmat}^{(n)}_{h_{31}}X^n_{3\oG}+{\Hmat}^{(n)}_{h_{11}}X^n_{1\oG}+{\eta }_1\cdot W^n_1\big|{\tH}^n_1={\th}^n_1\right)  - h\left({\Hmat}^{(n)}_{h_{32}}X^n_{3\oG}+V^n_2\big|{\tH}^n_2={\th}^n_2\right)\bigg\}\nonumber\\
    \label{eqn:AppE_stmnt_optim}
    & &\le \!\!\!\mathop{\max}_{\left\{\!\substack{{\left\{d_{3\oG,i}\right\}}^{2n}_{i=1},\\{\left\{d_{m,i}\right\}}^{2n}_{i=1}}\!\right\}}\! \sum^{2n}_{i=1}\!\frac{1}{2}\Bigg(\!{\log \left(\SNR_{11}d_{m,i} \!+\!\SNR_{31}d_{3\oG,i}\!+\!\frac{{\left|{\eta }_1\right|}^2}{2}\!\right) }
     -{\log \left(\SNR_{32}d_{3\oG,i}+\frac{1-{\left|{\CORRN}_2\right|}^2}{2}\right)}\!\Bigg),\\
&&\hspace{0.9cm}\mbox{subject to:}\mbox{ \phantom{x}}     \sum^{2n}_{i=1}{d_{3\oG,i}} \le n, \qquad d_{3\oG,i}\ge0,\ \  0\le d_{m,i}\le \frac{1}{2},\ \  i=1,2,\dots ,2n.\nonumber
\end{eqnarray}

\bigskip

\phantomsection
\label{phn:defforappndxE}
Next, defining ${\oN}_1\triangleq \frac{{\left|{\eta }_1\right|}^2}{2}$ and ${\oN}_2\triangleq \frac{1-{\left|{\CORRN}_2\right|}^2}{2}$, we can write
\begin{eqnarray*}
    & &\hspace{-3cm}{\log \left(\SNR_{11}d_{m,i}\ +\SNR_{31}d_{3\oG,i}+{\oN}_1\right) }-{\log \left(\SNR_{32}d_{3\oG,i}+{\oN}_2\right)\ }\\
    &&\qquad\qquad\qquad = {\log \left(\frac{\SNR_{11}d_{m,i}\ +\SNR_{31}d_{3\oG,i}+{\oN}_1}{\SNR_{32}d_{3\oG,i}+{\oN}_2}\right)\ }\ \\
    && \qquad \qquad\qquad=\frac{1}{{\ln 2\ }}{\ln \left(\frac{\SNR_{11}d_{m,i}\ +\SNR_{31}d_{3\oG,i}+{\oN}_1}{\SNR_{32}d_{3\oG,i}+{\oN}_2}\right)}.
\end{eqnarray*}
The function $\frac{{\scriptsize \SNR}_{11} d_{m,i}\ +{\scriptsize \SNR}_{31} d_{3\oG,i}+{\oN}_1}{{\scriptsize \SNR}_{32} d_{3\oG,i}+{\oN}_2}$ is a linear fractional function in $\left(d_{m,i},d_{3\oG,i}\right)$ with positive denominator, and is thus a quasilinear function, see \cite[Example 3.32]{ConvexOpt}. Consequently, depending on $d_{m,i}$, it is either a monotone increasing function of $d_{3\oG,i}$ or a monotone decreasing function of $d_{3\oG,i}$. To find this threshold, we differentiate $\frac{{\scriptsize \SNR}_{11} d_{m,i}\ +{\scriptsize \SNR}_{31} d_{3\oG,i}+{\oN}_1}{{\scriptsize \SNR}_{32} d_{3\oG,i}+{\oN}_2}$ w.r.t $d_{3\oG,i}$:
\begin{eqnarray*}
    & &\hspace{-2cm}\frac{\partial }{\partial d_{3\oG,i}}\left\{\frac{\SNR_{11}d_{m,i}\ +\SNR_{31}d_{3\oG,i}+{\oN}_1}{\SNR_{32}d_{3\oG,i}+{\oN}_2}\right\}\\
    &=&\frac{\SNR_{31}\left(\SNR_{32}d_{3\oG,i}+{\oN}_2\right)-\SNR_{32}\left(\SNR_{11}d_{m,i}\ +\SNR_{31}d_{3\oG,i}+{\oN}_1\right)}{{\left(\SNR_{32}d_{3\oG,i}+{\oN}_2\right)}^2}\\
    &=&\frac{{\oN}_2\cdot \SNR_{31}-\SNR_{32}\SNR_{11}d_{m,i}-\SNR_{32}\cdot {\oN}_1}{{\left(\SNR_{32}d_{3\oG,i}+{\oN}_2\right)}^2}.
\end{eqnarray*}

For  $\frac{{\scriptsize \SNR}_{11}d_{m,i}\ +{\scriptsize \SNR}_{31}d_{3\oG,i}+{\oN}_1}{{\scriptsize \SNR}_{32}d_{3\oG,i}+{\oN}_2}$ to be monotone increasing in $d_{3\oG,i}$, the derivative must be positive. This occurs if
\begin{eqnarray*}
    & &{\oN}_2\cdot \SNR_{31}-\SNR_{32}\SNR_{11}d_{m,i}-\SNR_{32}\cdot {\oN}_1>0.
\end{eqnarray*}
As the feasible $d_{m,i}$ must satisfy $d_{m,i}\le \frac{1}{2}$, it  directly follows from \eqref{eqn:derivation_iii inputs_part9_2} that indeed for all feasible $d_{m,i}$ it holds that
\begin{equation}
\label{eqnE_conditions for opt}
    \!\!\!\SNRA_{32}|\eta_1|^2\! < \!\SNRA_{31}\big(1\!-\!|\CORRN_2|^2\big)\!-\!2d_{m,i}\SNRA_{32}\SNRA_{11}.
\end{equation}
 Therefore, we conclude that for all feasible $d_{m,i}$, the objective increases with $d_{3\oG,i}$, hence, the objective is concave in the feasible region. It thus follows that there is a unique solution to the optimization problem \eqref{eqn:AppE_stmnt_optim} (see \cite[Chapter 3.4]{ConvexOpt}), and that this solution corresponds to the global maxima.
Next, in order to maximize the upper bound \eqref{eqn:AppE_stmnt_optim},
we define ${\mathcal{N}}_2\triangleq \left\{1,2,\dots 2n\right\}$ and rewrite the problem as
\begin{eqnarray*}
    \mathop{\max}_{{\left\{d_{3\oG,i}\right\}}^{2n}_{i=1},{\left\{d_{m,i}\right\}}^{2n}_{i=1}} &&\sum^{2n}_{i=1}{\frac{1}{2}\Big({\log \left(\SNR_{11}d_{m,i}\ +\SNR_{31}d_{3\oG,i}+{\oN}_1\right)\ }-{\log \left(\SNR_{32}d_{3\oG,i}+{\oN}_2\right)\ }\Big)}\ \\
    && \\
   && \hspace{-2cm}\mbox{subject to: \phantom{xx}} \sum^{2n}_{i=1}{d_{3\oG,i}}-n\le 0,\ \ \ \ -d_{3\oG,i}\le 0,\ \ \ -d_{m,i}\le 0,\ \ d_{m,i}-\frac{1}{2}\le 0,\ \ \ i\in {\mathcal{N}}_2.
\end{eqnarray*}
First, note that the objective is monotone increasing with $d_{m,i}$, hence it is maximized by
letting $d_{m,i}=\frac{1}{2},\ \  i\in {\mathcal{N}}_2$. Thus, the optimization problem can be written as
\begin{eqnarray*}
    \mathop{\max}_{{\left\{d_{3\oG,i}\right\}}^{2n}_{i=1}} &&\sum^{2n}_{i=1}{\frac{1}{2}\left({\log \left(\SNR_{11}\frac{1}{2}\ \ +\SNR_{31}d_{3\oG,i}+{\oN}_1\right)\ }-{\log \left(\SNR_{32}d_{3\oG,i}+{\oN}_2\right)\ }\right)}\\
    && \\
     && \hspace{-1.5cm}\mbox{subject to: \phantom{xx}} \sum^{2n}_{i=1}{d_{3\oG,i}}-n\le 0,\ \ \ \ -d_{3\oG,i}\le 0,\ \ \ \ \ i\in {\mathcal{N}}_2.
\end{eqnarray*}

\phantomsection
\label{phn:KKT}
\noindent
The Lagrangian for the above optimization function is
\begin{eqnarray*}
    L\left({\left\{d_{3\oG,i},{\varphi }_{3,i}\right\}}^{2n}_{i=1},{\psi }_3\right)&\triangleq& \sum^{2n}_{i=1}{\frac{1}{2}\Bigg({\log \left(\SNR_{11}\frac{1}{2}\ +\SNR_{31}d_{3\oG,i}+{\oN}_1\right)\ }-{\log \left(\SNR_{32}d_{3\oG,i}+{\oN}_2\right)}\Bigg)}\ \\
    & &\qquad\qquad\qquad\qquad -{\psi }_3\left(\sum^{2n}_{i=1}{d_{3\oG,i}}-n\right)-\sum^{2n}_{i=0}{{\varphi }_{1,i}(-d_{m,i})}.
\end{eqnarray*}
To find the optimal solution, we first write the Karush–-Kuhn–-Tucker (KKT)  necessary conditions for optimality  \cite[Chapter 5.5.3]{ConvexOpt}, and solve for the maximizing
$\left(\left\{d_{3\oG,i},{\varphi }_{3,i}\right\}_{i=1}^{2n},{\psi }_3\right)$. As \eqref{eqnE_conditions for opt} guarantees  concavity of the objective function, the optimal solution is the unique
maximum. The KKT conditions for the above problem are:\begin{subequations}
\begin{eqnarray}
\label{eqn:KKT_diff_final}
    &&\frac{\partial }{\partial d_{3\oG,i}}L\left({\left\{d_{3\oG,i},{\varphi }_{3,i},{\psi }_3\right\}}^{2n}_{i=1}\right)=0\\
\label{eqn:KKT_slack_final}
     &&{\varphi }_{3,i}\left(-d_{3\oG,i}\right)=0,\ \ \ {\varphi }_{3,i}\ge 0, \ \ \ i\in {\mathcal{N}}_2,\\
     &&{\psi }_3\left(\sum^{2n}_{i=1}{d_{3\oG,i}}-n\right)=0,\ \ \ \ \ \ {\psi }_3\ge 0.
\end{eqnarray}
\end{subequations}
Evaluating  \eqref{eqn:KKT_diff_final} explicitly we obtain
\begin{eqnarray*}
    & & \frac{\partial }{\partial d_{3\oG,i}}L\left({\left\{d_{3\oG,i},{\varphi }_{3,i}\right\}}^{2n}_{i=1},{\psi }_3\right)\\
		& &\qquad\qquad  =\frac{1}{{\ln 2\ }}\cdot \ \frac{\SNR_{32}d_{3\oG,i}+{\oN}_2}{\SNR_{11}\frac{1}{2}\ +\SNR_{31}d_{3\oG,i}+{\oN}_1}\frac{{\oN}_2\cdot \SNR_{31}-\SNR_{32}\SNR_{11}\frac{1}{2}-\SNR_{32}\cdot {\oN}_1}{{\left(\SNR_{32}d_{3\oG,i}+{\oN}_2\right)}^2}-{\psi }_3+{\varphi }_{3,i}\\
    & &\qquad\qquad  =\frac{1}{{\ln 2\ }}\cdot \frac{{\oN}_2\cdot \SNR_{31}-\SNR_{32}\SNR_{11}\frac{1}{2}-\SNR_{32}\cdot {\oN}_1}{\left(\SNR_{11}\frac{1}{2}\ +\SNR_{31}d_{3\oG,i}+{\oN}_1\right)\left(\SNR_{32}d_{3\oG,i}+{\oN}_2\right)}-{\psi }_3+{\varphi }_{3,i}\\
		&&\qquad\qquad =0
\end{eqnarray*}
As the objective is a monotone increasing function of $d_{3\oG,i}$,
and the first derivative decreases as $d_{3\oG,i}$ increases,
then $d_{3\oG,i}>0$, and thus, from \eqref{eqn:KKT_slack_final} it follows that ${\varphi }_{3,i}=0$.

From 
Eqn. \eqref{eqnE_conditions for opt},
it follows that ${\oN}_2\cdot \SNR_{31}-\SNR_{32}\SNR_{11}\frac{1}{2}-\SNR_{32}\cdot {\oN}_1> 0$.
With ${\varphi }_{3,i}=0$ and $d_{3\oG,i}> 0$, $\forall i\in {\mathcal{N}}_2$, we obtain
\begin{equation*}
    {\psi }_3=\frac{1}{{\ln 2\ }}\cdot \frac{{\oN}_2\cdot \SNR_{31}-\SNR_{32}\SNR_{11}\frac{1}{2}-\SNR_{32}\cdot {\oN}_1}{\left(\SNR_{11}\frac{1}{2}\ +\SNR_{31}d_{3\oG,i}+{\oN}_1\right)\left(\SNR_{32}d_{3\oG,i}+{\oN}_2\right)}>0.
\end{equation*}
It remains to see what values of $d_{3\oG,i}$  can be considered. From the above equation we conclude that the denominator has to be a positive constant:
\begin{equation*}
    \left(\SNR_{11}\frac{1}{2}\ +\SNR_{31}d_{3\oG,i}+{\oN}_1\right)\left(\SNR_{32}d_{3\oG,i}+{\oN}_2\right)=C_0.
\end{equation*}
Hence, we obtain the quadratic equation for $d_{3\oG,i}$:
\begin{eqnarray*}
    & &\SNR_{32}\cdot \SNR_{31}\cdot d^2_{3\oG,i}+\left(\SNR_{32}\cdot {\oN}_1+\frac{1}{2}\SNR_{32}\cdot \SNR_{11}+\SNR_{31}\cdot {\oN}_2\right)d_{3\oG,i}\\
& &\qquad\qquad\qquad\qquad\qquad\qquad\qquad\qquad\qquad\qquad\qquad\qquad+{\oN}_2\left(\frac{1}{2}\SNR_{11}\ +{\oN}_1\right)-C_0=0,
\end{eqnarray*}
from which we conclude that since the coefficient of $d_{3\oG,i}$ is positive, then the equation has at most one positive root.
It thus follows that for $i\in {\mathcal{N}}_2$, $d_{3\oG,i}=d_3$ is a constant, and hence, from the sum condition $\sum^{2n}_{i=1}{d_{3\oG,i}}=n$ we conclude that $d_{3\oG,i}=d_3=\frac{1}{2},\ \ i\in {\mathcal{N}}_2$.
The maximal objective is therefore:
\begin{eqnarray}
    & &\sum^{2n}_{i=1}{\frac{1}{2}\Bigg({\log \left(\SNR_{11}\frac{1}{2} +\SNR_{31}\frac{1}{2}+{\oN}_1\right)}-{\log \left(\SNR_{32}\frac{1}{2}+{\oN}_2\right) }\Bigg)}=\nonumber\\
    & &\qquad \qquad \qquad n\bigg({\log \left(\SNR_{11} +\SNR_{31}+{\left|{\eta }_1\right|}^2\right) }-{\log \left(\SNR_{32}+\left(1-{\left|v_2\right|}^2\right)\right) }\bigg).\label{eq:NewAppEmaxObj}
\end{eqnarray}
We note that the maximum of the objective can be achieved with equality by letting ${\oX}^{2n}_{3\oG}$ and ${\oX}^{2n}_{1\oG}$  be mutually independent real Gaussian vectors with i.i.d. elements, each with zero mean and variance of $\frac{1}{2}$. With this assignment ${\Kmat}_{{\oX}_{1\oG}}={\Kmat}_{{\oX}_{3\oG}}=\frac{1}{2}{\Imat}_{2n}$. The resulting complex variables are circularly symmetric complex Normal $X^n_{1\oG},X^n_{3\oG}\sim \mathcal{C}\mathcal{N}\left({\bm 0},{\Imat}_n\right)$.
\ifextended
Accordingly the difference of entropies is
\begin{eqnarray*}
    & & \hspace{-2cm} {\E}_{{\tH}^n_1,{\tH}^n_2}\bigg\{h\left({\hmat}^{(n)}_{h_{31}}X^n_{3\oG}+{\hmat}^{(n)}_{h_{11}}X^n_{1\oG}+{\eta }_1\cdot W^n_1\big|{\tH}^n_1={\th}^n_1\right)-h\left({\hmat}^{(n)}_{h_{32}}X^n_{3\oG}+V^n_2\big|{\tH}^n_2={\th}^n_2\right)\bigg\}\\
    &=&{\E}_{{\tH}^n_1,{\tH}^n_2}\Bigg\{\log\det \left({\hmat}^{(n)}_{h_{31}}{\Imat}_n{\left({\hmat}^{(n)}_{h_{31}}\right)}^H+{\hmat}^{(n)}_{h_{11}}{\Imat}_n{\left({\hmat}^{(n)}_{h_{11}}\right)}^H+{\left|{\eta }_1\right|}^2{\Imat}_n\right)\\
    & &\qquad\qquad\qquad\qquad\qquad\qquad\qquad\qquad\qquad-\log\det \left({\hmat}^{(n)}_{h_{32}}{\Imat}_n{\left({\hmat}^{(n)}_{h_{32}}\right)}^H+\left(1-{\left|v_2\right|}^2\right){\Imat}_n\right)\Bigg\}\\
    &=&{\E}_{{\tH}^n_1,{\tH}^n_2}\bigg\{\log\det \left(\SNR_{31}\cdot {\Imat}_n+\SNR_{11}\cdot {\Imat}_n+{\left|{\eta }_1\right|}^2{\Imat}_n\right)\\
    & &\qquad\qquad\qquad\qquad\qquad\qquad\qquad\qquad\qquad-\log\det \left(\SNR_{3n}\cdot {\Imat}_n+\left(1-{\left|v_2\right|}^2\right){\Imat}_n\right)\bigg\}\\
    &=&\log\det \left(\SNR_{31}\cdot {\Imat}_n+\SNR_{11}\cdot {\Imat}_n+{\left|{\eta }_1\right|}^2{\Imat}_n\right)-\log\det \left(\SNR_{3n}\cdot {\Imat}_n+\left(1-{\left|v_2\right|}^2\right){\Imat}_n\right)\\
    &=&n\Bigg({\log \left(\SNR_{11}\ \ +\SNR_{31}+{\left|{\eta }_1\right|}^2\right)\ }-{\log \left(\SNR_{32}+\Big(1-{\left|v_2\right|}^2\Big)\right)\ }\Bigg),
\end{eqnarray*}
which indeed coincides with the maximum of the objective in \eqref{eq:NewAppEmaxObj}.
\fi

\phantomsection
\label{endE}

\label{pginc_start}

\setboolean{@twoside}{false}
\includepdf[pages=-, offset=0 0, pagecommand={}]{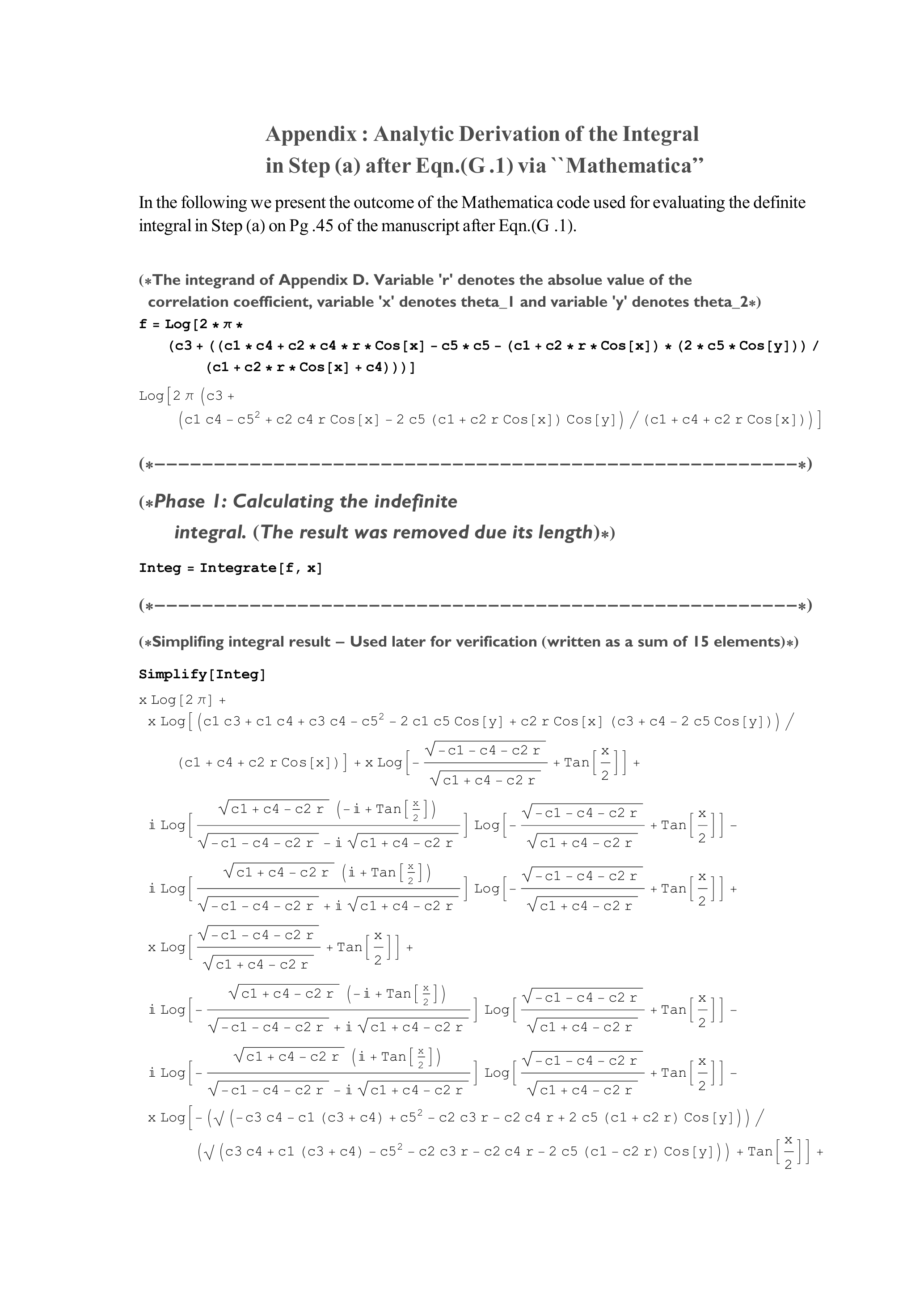}
\label{pginc_end}


\end{document}